\def\thm@space@setup{\thm@preskip=\parskip \thm@postskip=0pt
}
\definecolor{buy}{RGB}{182,18,49}
\definecolor{sell}{HTML}{82e057}
\definecolor{lightblue}{HTML}{77CDFF}
\tikzset{
	vx/.style={circle,fill,black,minimum size=7pt,inner sep=0pt,outer sep=0pt},
	comp/.style={circle,line width=2pt,draw=black,minimum size=0.6cm,inner sep=0pt, outer sep=1pt},
	pac/.style={circle,line width=0pt,minimum size=0.2cm,inner sep=0pt, outer sep=-2pt},
	edge/.style={draw=black,line width=1.7pt},
	ledge/.style={draw=black!20!white,line width=1.4pt,dashed},
	mypath/.style={draw=black!20!white,line width=1.4pt},
}
\newtheorem{theorem}{Theorem}
\newtheorem{corollary}[theorem]{Corollary}
\newtheorem{lemma}[theorem]{Lemma}
\newtheorem{fact}[theorem]{Fact}
\newtheorem{proposition}[theorem]{Proposition}
\theoremstyle{definition}
\newtheorem{definition}{Definition}
\newcommand{\opt}{\mathsf{opt}}
\newcommand{\Reduce}{\mathsf{RED}}
\newcommand{\reduce}{\mathsf{red}}
\newcommand{\OPT}{\mathsf{OPT}}
\newcommand{\ALG}{\mathsf{ALG}}
\newcommand{\hG}{\ensuremath{\bm\hat{G}}\xspace}
\newcommand{\cT}{\ensuremath{\mathcal{T}}\xspace}
\newcommand{\cThree}{\ensuremath{\mathcal{C}_3}\xspace}
\newcommand{\cFour}{\ensuremath{\mathcal{C}_4}\xspace}
\newcommand{\cFive}{\ensuremath{\mathcal{C}_5}\xspace}
\newcommand{\cSix}{\ensuremath{\mathcal{C}_6}\xspace}
\newcommand{\cSeven}{\ensuremath{\mathcal{C}_7}\xspace}
\newcommand{\typA}{\ensuremath{\mathsf{A}}\xspace}
\newcommand{\typB}{\ensuremath{\mathsf{B}}\xspace}
\newcommand{\typBi}{\ensuremath{\mathsf{B1}}\xspace}
\newcommand{\typBii}{\ensuremath{\mathsf{B2}}\xspace}
\newcommand{\typC}{\ensuremath{\mathsf{C}}\xspace}
\newcommand{\typCi}{\ensuremath{\mathsf{C1}}\xspace}
\newcommand{\typCii}{\ensuremath{\mathsf{C2}}\xspace}
\newcommand{\typCiisubba}{\ensuremath{\mathsf{C2(i)}}\xspace}
\newcommand{\typCiisubb}{\ensuremath{\mathsf{C2(ii)}}\xspace}
\newcommand{\typCiisubc}{\ensuremath{\mathsf{C2(iii)}}\xspace}
\newcommand{\typCiii}{\ensuremath{\mathsf{C3}}\xspace}
\newcommand{\typesThreeVC}{\cT_3}
\newcommand{\typesTwoVC}{\cT_2}
\DeclarePairedDelimiter{\abs}{\lvert}{\rvert}
\newcommand{\cost}{\mathrm{cost}\xspace}
\newcommand{\credit}{\mathrm{cr}\xspace}
\newcommand{\myrightarrow}[1]{\mathrel{\raisebox{-1.3pt}{$\xrightarrow{\mbox{\relsize{-3} $#1$}}$}}}
\title{Two-Edge Connectivity via Pac-Man Gluing}
\author{
Mohit Garg\thanks{Indian Institute of Science, Bengaluru, India. Supported by a Walmart fellowship.} \and 
Felix Hommelsheim\thanks{Faculty of Mathematics and Computer Science, University of Bremen, Germany.} \and 
Alexander Lindermayr\footnotemark[2]
}
\date{}
\begin{document}

\maketitle
\begin{abstract}
    We study the 2-edge-connected spanning subgraph (2-ECSS) problem: Given a graph $G$, compute a connected subgraph $H$ of $G$ with the minimum number of edges such that $H$ is spanning, i.e., $V(H) = V(G)$, and $H$ is 2-edge-connected, i.e., $H$ remains connected upon the deletion of any single edge, if such an $H$ exists. The $2$-ECSS problem is known to be NP-hard. In this work, we provide a polynomial-time $(\frac 5 4 + \varepsilon)$-approximation for the problem for an arbitrarily small $\varepsilon>0$, improving the previous best approximation ratio of $\frac{13}{10}+\varepsilon$.

    Our improvement is based on two main innovations: First, we reduce solving the problem on general graphs to solving it on structured graphs with high vertex connectivity. This high vertex connectivity ensures the existence of a 4-matching across any bipartition of the vertex set with at least 10 vertices in each part. Second, we exploit this property in a later gluing step, where isolated 2-edge connected components need to be merged without adding too many edges. Using the 4-matching property, we can repeatedly glue a huge component (containing at least 10 vertices) to other components. This step is reminiscent of the Pac-Man game, where a Pac-Man (a huge component) consumes all the dots (other components) as it moves through a maze. These two innovations lead to a significantly simpler algorithm and analysis for the gluing step compared to the previous best approximation algorithm, which required a long and tedious case analysis.
\end{abstract}

\thispagestyle{empty}

\newpage

\thispagestyle{empty}

\tableofcontents

\newpage

\setcounter{page}{1}

\section{Introduction}\label{sec:Introductin}
We study the 2-edge-connected spanning subgraph (2-ECSS) problem, which is a fundamental fault-tolerant network design problem. In the 2-ECSS problem, we are given a graph $G$, and the objective is to compute a connected subgraph $H$ of $G$ with the minimum number of edges such that $H$ is spanning (i.e., $V(H) = V(G)$) and $H$ is 2-edge-connected (i.e., $H$ remains connected upon the deletion of any single edge: for all edges $e\in E(H)$, $H \setminus \{e\}$ is connected), if such an $H$ exists. While computing a (1-edge) connected spanning subgraph with the minimum number of edges---a spanning tree---admits simple polynomial-time algorithms, the 2-ECSS problem is MAX-SNP hard~\cite{CL99,F98}. In particular, assuming $\mathrm{P} \neq \mathrm{NP}$, there is no PTAS for the 2-ECSS problem. Consequently, there has been a quest to obtain good polynomial-time approximations for the problem.

A $2$-approximation for the $2$-ECSS problem is easily achieved by augmenting a DFS tree by picking the highest back edges for each non-root vertex. The first non-trivial approximation achieving a ratio of $\nicefrac 3 2$ was provided by Khuller and Vishkin~\cite{KV94}. Subsequently, the approximation ratio was first improved to $\nicefrac{17}{12}$ by  \textcite{CSS01}, and later to $\nicefrac{4}{3}$ independently by  Hunkenschr\"{o}der, Vempala, and Vetta~\cite{VV00,HVV19} and \textcite{SV14} using very different techniques. Several failed attempts (see Section F of~\cite{GargGA23improved}) were made to improve the approximation ratio of $\nicefrac 4 3$, making it a natural approximation barrier. In a breakthrough, Garg, Grandoni, and Jabal Ameli~\cite{GargGA23improved} obtained an approximation ratio of $\nicefrac{118}{89}+\varepsilon<1.326$. Subsequently, Kobayashi and Noguchi~\cite{kobayashi2023approximation} observed that if a minimum $2$-edge cover that does not contain any isolated triangles can be computed in polynomial-time, then in fact the analysis of~\cite{GargGA23improved} will result in an approximation ratio of $1.3 + \varepsilon$. They were able to show that such a triangle-free $2$-edge cover is polynomial-time computable using a polynomial-time computable maximum triangle-free 2-matching, a result proved in Hartvigsen's PhD thesis~\cite{H84}, which was recently published in a journal~\cite{hartvigsen2024finding} and subsequently improved and simplified by Paluch~\cite{P23}. 
Additionally, PTASs for triangle-free $2$-matching are available~\cite{BGJ24ptas,KN24ptas} (with the proof by~\cite{KN24ptas} being only a few pages), which also result in a $(1.3+\varepsilon)$-approximation for the 2-ECSS problem.
\subsection{Our Result} In this work, we improve the approximation ratio for the 2-ECSS problem to $\nicefrac 5 4 + \varepsilon$. 

\begin{theorem}\label{thm:main}
    For all $\varepsilon>0$, there exists a deterministic polynomial-time approximation algorithm for the $2$-ECSS problem with an approximation ratio of $\nicefrac 5 4 +\varepsilon$.
\end{theorem}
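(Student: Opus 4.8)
The plan is to follow the credit framework for $2$-ECSS: first compute a cheap subgraph whose size lower-bounds $\opt$, then hand out enough credit to its connected components, and finally glue the components into a $2$-edge-connected spanning subgraph while spending only the distributed credit, so that the total number of edges stays below $(\tfrac54+\varepsilon)\opt$. Before any of this I would perform a reduction to \emph{structured} instances --- graphs of sufficiently high vertex connectivity --- by a recursive decomposition of $G$ along small vertex separators together with contraction of locally resolvable subgraphs, reassembling the pieces with an overall loss of at most a factor $1+\varepsilon$. On a base instance of this recursion $G$ has high vertex connectivity, which (as recalled above) guarantees a $4$-matching across every bipartition $(A,B)$ of $V(G)$ with $\lvert A\rvert,\lvert B\rvert\ge 10$, so from here on it suffices to achieve ratio $\tfrac54+\varepsilon$ on a structured instance.

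\textbf{Lower bound and credits.} On a structured instance, compute in polynomial time a minimum \emph{triangle-free} $2$-edge cover $D$ (using the known polynomial-time algorithm for maximum triangle-free $2$-matching). Since every feasible solution is itself a $2$-edge cover of minimum degree $2$, we get $\lvert D\rvert\le\opt$ and $\opt\ge n$. Each connected component of $D$ is bridgeless with minimum degree $2$, none is a triangle, and a component on $k$ vertices has between $k$ and $2k$ edges; call a component \emph{huge} if it has at least $10$ vertices. Now assign each component $C$ a nonnegative credit $\credit(C)$ with $\sum_C\bigl(\cost(C)+\credit(C)\bigr)\le(\tfrac54+\varepsilon)\opt$: a cycle of length $\ell$ receives credit $\ell/4$, a short complex component receives slightly less, and the bound follows from $\opt\ge\max\{\lvert D\rvert,n\}$ together with the fact that forbidding triangle components is exactly what removes the single component type for which cost-plus-credit would exceed $\tfrac54$ times its vertex count. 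The scheme is arranged so that every component carries enough credit to pay for being inserted during gluing, and huge components carry spare credit.

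\textbf{Pac-Man gluing.} If $D$ has no huge component (and $n$ is not tiny), first merge a constant number of small components spanning at least $10$ vertices into one $2$-edge-connected blob $P$, paid for by their credits. Now repeat: while $V(G)\setminus V(P)$ contains at least $10$ vertices, apply the $4$-matching property to the bipartition $(V(P),V(G)\setminus V(P))$ to obtain four independent edges between $P$ and the outside; these suffice to absorb at least one further component $C$ into a larger $2$-edge-connected $P$ --- for instance, a cycle component $C$ is re-hung on $P$ as an ear through two of the matching edges, keeping all but one of its edges, for a net increase of one edge, which $\credit(C)$ (with a bit of spare credit on $P$ when needed) covers. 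Iterating, $P$ consumes components until fewer than $10$ vertices lie outside; this bounded remnant still has many edges to the highly connected $P$ and is attached directly. The result is a $2$-edge-connected spanning subgraph with $\lvert\ALG\rvert\le\cost(D)+\sum_C\credit(C)\le(\tfrac54+\varepsilon)\opt$.

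\textbf{Main obstacle.} The hard part will be the accounting in the gluing step: each absorption must cost no more than the credit the (triangle-free) absorbed component can afford, so one has to handle carefully the cases where the four matching edges land on several distinct components rather than giving two edges into one, the cheap bootstrapping of the first huge component, and the disposal of the final constant-size remnant. A secondary difficulty is arranging the Step-1 decomposition so that it loses only a $1+\varepsilon$ factor while producing exactly the vertex connectivity the $4$-matching property needs.
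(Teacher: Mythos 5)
There is a genuine gap, and it starts earlier than the place you flag as the ``main obstacle.'' You assert that every component of the minimum triangle-free $2$-edge cover $D$ is bridgeless; that is false. Minimum degree $2$ does not exclude bridges: two cycles joined by a path form a single component of a $2$-edge cover in which every path edge is a bridge, and such \emph{complex} components genuinely occur. The paper spends two dedicated stages on exactly this, both absent from your plan: first converting $D$ into a \emph{canonical} $2$-edge cover (every small $2$EC component is a cycle $\mathcal{C}_4,\dots,\mathcal{C}_7$, pendant blocks have at least $6$ edges, non-pendant blocks at least $4$), and then a bridge-covering phase that eliminates all bridges without increasing cost; the credit scheme ($\frac14$ per bridge, $1$ per block, $1$ per complex component, on top of the cycle credits you describe) is designed precisely so that $\cost(H)\le\frac54|H|$ still holds for complex components and so that bridge covering can be paid for. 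Without these steps your inequality $\sum_C(\cost(C)+\credit(C))\le(\frac54+\varepsilon)\opt$ has no justification for bridged components, and the gluing phase cannot even begin, since it presupposes that every component is $2$EC.

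Beyond that, the two places you explicitly defer are where most of the paper's work lies, so the proposal does not yet constitute a proof. In the gluing step the problem is not only that the four matching edges may be spread over several components: even when two edges land on one $\mathcal{C}_6$ or $\mathcal{C}_7$, the two attachment vertices need not admit a Hamiltonian path of $G[V(C)]$ between them, so ``keep all but one edge of the cycle'' fails; the paper escapes this by either upgrading to a $4$-matching into that cycle, or showing that otherwise the cycle would be an $\alpha$-contractible subgraph (contradicting structuredness), plus a special routine merging a $\mathcal{C}_5$ with a $\mathcal{C}_4$ via a $9$-edge set when no suitable cycle through the huge component exists; it also needs the segment structure of the component graph and a $3$-matching lemma for trivial segments, none of which appear in your sketch. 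Similarly, the reduction to structured instances without large $3$-vertex cuts is not a generic ``recursive decomposition along small separators losing $1+\varepsilon$'': the paper classifies how an optimal solution can cross a $3$-cut into six types, builds different gadget graphs on the large side for each, and verifies the approximation bound case by case. Your skeleton coincides with the paper's, but the bridgelessness claim is wrong and the steps carrying the real proof burden are left open.
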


The previous best approximation ratio of $1.3+\varepsilon$~\cite{kobayashi2023approximation} depends on~\cite{GargGA23improved}, a long and highly involved paper consisting of a tedious case analysis, making the result difficult to verify. In this work, we address this shortcoming by providing a simpler, shorter, and modular proof of a better approximation ratio of $\nicefrac 5 4 + \varepsilon$.

\subsection{Our Techniques}
Our proof is broadly based on the well-established template for such problems (as used in some form or another in~\cite{CDGKN20, CCDZ23, GHM23, VV00, HVV19, EFKN09, KN18lp, BGJ23}). It consists of four main steps: reducing the input instance to a structured instance, computing in polynomial time an object $\mathcal{O}$ (a minimum 2-edge cover with no isolated triangles) of the structured instance that is \emph{closely related} to and costs less than an optimal desired object (minimum 2-ECSS), then converting $\mathcal{O}$ into a feasible solution (2-ECSS) by first \emph{covering} the bridges in each connected component, followed by \emph{gluing} the various connected components together while keeping the additional cost of the conversion to a small factor $c$ of the cost of $\mathcal{O}$. This results in a $(1+c)$-approximation. We will elaborate on these steps in detail in the following sections.

In brief, for the first three steps, we closely follow~\cite{GargGA23improved}, which are uncomplicated parts of their paper, along with~\cite{kobayashi2023approximation} for the second step, which helps in eliminating isolated triangles, an important fact without which the gluing would be cumbersome. Our main insight in this part is to make the input even more structured than in~\cite{GargGA23improved} by getting rid of all \emph{large} 3-vertex cuts. This allows us to obtain a \emph{4-matching lemma}, which states that for any vertex partition of a structured input into two parts consisting of at least 10 vertices each, there exists a matching of size 4 going between the two parts. This lemma is then exploited in the gluing step. 

In the gluing step, we first glue a few components to obtain a \emph{huge} component (consisting of at least 10 vertices). Then, with the aid of the 4-matching lemma, we are repeatedly able to glue a huge component with the other components, while keeping $c$ to be \nicefrac{1}{4}. 
This step is reminiscent of the popular Pac-Man game, where a Pac-Man (a huge component) eats up all the dots (other components) as it moves through a maze (\emph{component graph}). An important advantage of our approach is that the gluing step, which has traditionally been considered the most complex and intricate part of the process (as seen in~\cite{GargGA23improved}), is now notably simplified, making it as straightforward as the other steps in the proof. Like in~\cite{GargGA23improved}, we lose an additional $\varepsilon$ approximation factor in the reduction step itself, giving us an approximation ratio of $1 + \nicefrac{1}{4} + \varepsilon = \nicefrac{5}{4} + \varepsilon$.

Most of the proofs for the first three steps have been revisited from~\cite{GargGA23improved} or are slight extensions, and they have been moved to the appendix for completeness. The primary new contribution in the appendix is the preprocessing step involving the elimination of large 3-vertex cuts.

In summary, our technique separates the technical complications of the gluing used in \cite{GargGA23improved} into two modular and independent components, the reduction to graphs without large 3-vertex cuts and the Pac-Man gluing.  We believe that the simplicity and modularity of our approach makes it applicable in and relevant for future work. This streamlined approach transforms the gluing step, once the most challenging part, into an intuitive and manageable procedure.

\subsection{Additional Related Work}
\paragraph{Survivable Network Design. }
The 2-ECSS problem is closely related to augmentation problems, where we are given a graph having a specific structure (e.g., spanning tree, matching, forest, cactus, or disjoint paths) along with a set of additional edges called links. The task is to add the minimum number of links to the structure so that the resulting graph becomes spanning and 2-edge connected. For the (spanning) tree augmentation problem, several better than $2$-approximations are known~\cite{A19,CTZ21,CG18,CG18a,CN13,EFKN09,FGKS18,GKZ18,KN16simple,KN18lp,N03,N21,TraubZ25}, with 1.393 as the currently best known approximation ratio
by \textcite{CTZ21}. For the forest augmentation problem, the first non-trivial approximation was obtained by \textcite{GJT22} with a ratio of 1.9973. For the matching augmentation problem the current best approximation ratio is $\nicefrac{13}8$~\cite{GHM23} by \citeauthor{GHM23}, while other non-trivial approximations are also known~\cite{BDS22,CDGKN20,CCDZ23}. 
All the above problems have a natural weighted version, and a general result of Jain~\cite{J01} provides a $2$-approximation for them. For the weighted 2-ECSS problem, obtaining an approximation ratio better than 2 is a major open problem. For the weighted tree augmentation problem the current best approximation ratio is $1.5+\varepsilon$ by Traub and Zenklusen~\cite{TraubZ25}. The k-ECSS problem for $k\geq 2$ is another natural generalization of the 2-ECSS problem (see for e.g.,~\cite{CT00,GG12}).

\paragraph{Graphic TSP. }
Another closely related problem is the Traveling Salesperson Problem (TSP). 
In metric TSP, one is given an undirected complete graph with weights on the edges that satisfy the triangle-inequality, and the goal is to find a tour through all vertices of minimum total weight.
In a recent breakthrough~\cite{karlin2021slightly}, Karlin, Klein, and Oveis Gharan improved the long-standing best approximation factor of $1.5$ by Christofides~\cite{christofides2022worst} to a $(1.5 - c)$-approximation, where $c > 10^{-36}$.
The algorithm makes use of the Help-Karp LP-relaxation of the problem, which has an integrality gap of $\nicefrac{4}{3}$, indicating
a potential answer for the best-possible approximation guarantee of metric TSP.
The example of this integrality gap is actually an instance of graphic TSP, which is a special case of metric TSP. 
Therein, we are given an undirected unweighted graph $G$ and one has to find an Eulerian multigraph while minimizing the number of edges.
Seb\"{o} and Vygen~\cite{SV14} proved the currently best approximation guarantee of $1.4$ for graphic TSP. They use very similar techniques in the $1.4$-approximation for graphic TSP and the $\nicefrac{4}{3}$-approximation for $2$-ECSS, which highlights the close connection between the two problems. Therefore, our new techniques for 2-ECSS could potentially contribute to further advancements in graphic TSP.

\paragraph{Concurrent Work.} Shortly after the first draft of this work was made publicly available, Bosch-Calvo, Grandoni, and Jabal Ameli~\cite{bosch20245} presented a different $\nicefrac{5}{4}$-approximation. While both works build on the framework of~\cite{GargGA23improved}, the approaches diverge substantially in their key innovations. In particular, \cite{bosch20245} removes the $\varepsilon$ additive term in the preprocessing phase by exploiting an observation used in earlier works~\cite{CDGKN20, CCDZ23, GHM23}.

In contrast, in our approach the $\varepsilon$ term is not eliminated. Instead, it is deliberately used to remove large 3-vertex cuts in the preprocessing phase, which is crucial for establishing the 4-matching lemma. Combined with our Pac-Man-style gluing procedure, this leads to a proof that is modular, conceptually simple, and straightforward to verify—rather than relying on the long, extensive, and intricate case-analysis-based gluing employed in~\cite{bosch20245}.

Subsequently, the two lines of work were merged in a joint paper~\cite{BGGHJL25}, obtaining a $\nicefrac{5}{4}$-approximation where our Pac-Man gluing technique is combined with the preprocessing of~\cite{bosch20245}, which significantly reduces the complexity of the gluing step. However, the 4-matching lemma from this work was not leveraged in the joint paper; incorporating it would further streamline the gluing as compared to~\cite{BGGHJL25}, though it introduces an additional $\varepsilon$ additive term in the approximation ratio.

\subsection{Organization of the Paper}
The paper proceeds as follows.
In \cref{sec:preliminaries} 
we start with some preliminaries consisting of some basic definitions and notations. 
Subsequently, we tackle the four main steps of our approach in the following four sections:
reduction to structured graphs in \cref{sec:structured-graphs}, obtaining a canonical 2-edge cover in \cref{sec:canonical},  bridge covering in \cref{sec:bridge-covering}, and the gluing step, which forms the core of this work, in
\Cref{sec:gluing}. For completeness, the appendix contains detailed proofs of the first three steps, most of which have been revisited from~\cite{GargGA23improved} or are mild generalizations. The main new contribution in the appendix is the elimination of large 3-vertex cuts in the preprocessing. 
Finally, we conclude in \Cref{sec:conclusion}, summarizing our contributions and discussing potential avenues for future research.

\section{Preliminaries}\label{sec:preliminaries}
We use standard graph theory notation. For a graph $G$, $V(G)$ and $E(G)$ refers to its vertex and edge sets, respectively. We use $\deg_G(v)$ to represent the degree of a vertex $v$ in the graph $G$ (when $G$ is clear from the context, we omit the subscript $G$). A pendant vertex in a graph is a vertex that is adjacent to exactly one other vertex. 
For a graph $G$, and a vertex set $S\subseteq V(G)$, $G[S]$ refers to the subgraph of $G$ induced on the vertex set $S$. We use components of a graph to refer to its maximal connected components. 
A cut vertex in a graph refers to a vertex in the graph whose deletion results in increasing the number of components of the graph. 
A $k$-vertex cut in a graph for $k\geq 1$ refers to a set of $k$ vertices such that deleting them results in increasing the number of components of the graph. 
A graph that does not contain any cut vertex is called 2-vertex connected (2VC). 
Similarly, a bridge in a graph refers to an edge in the graph whose deletion results in increasing the number of components in the graph. A connected graph without bridges is called 2-edge connected (in short, 2EC). 
We say a subgraph $H$ of a graph $G$ is spanning if $V(H)=V(G)$.

A spanning subgraph $H$ of a graph $G$ is called a 2-edge cover of $G$ if the degree of each vertex in $H$ is at least 2. A cycle refers to a simple cycle. 
We use $\mathcal{C}_i$ to represent a cycle on $i$ vertices, and $\cThree$ to refer to a triangle. 
We say a 2-edge cover is triangle-free if none of its components is a triangle.

Given a graph $G$ and a vertex set $S\subseteq V(G)$, $G|S$ denotes the contracted graph obtained from contracting the vertex set $S$ into a single vertex. Graph contraction may give rise to parallel edges and self-loops. The edges of $G$ and $G|S$ are in one-to-one correspondence. For a graph $G$ and a subgraph $H$, we use $G|H$ to denote $G|V(H)$.

For a graph $G$, we use $|G|$ to denote $|E(G)|$.
For a 2EC graph $G$ we use $\OPT(G)$ to denote a 2-ECSS of $G$ with the minimum number of edges. We define $\opt(G)\coloneq|\OPT(G)|$ (when $G$ is clear from the context we may just say $\opt$ and $\OPT$ instead). For a graph $G$, a subgraph $H$ is called a 2-ECSS of $G$ if it is 2EC and spanning.

\section{Reduction to Structured Graphs}\label{sec:structured-graphs}
In our approximation algorithm for the 2-ECSS problem, 
we will assume that the input is 2EC. Otherwise, it has no feasible solutions, which is easy to check: delete each edge and check if the graph is connected. 
Note that the input graph may contain parallel edges or self loops. 
In~\cite{GargGA23improved}, structured graphs were defined, and it was shown that, for each small enough $\varepsilon>0$, if the 2-ECSS problem can be approximated in polynomial time on structured graphs with an approximation ratio $\alpha\geq\nicefrac 6 5$, then it can be approximated in polynomial time on general graphs with only a $O(\varepsilon)$ loss in approximation, i.e., with an approximation ratio of $\alpha + O(\varepsilon)$.
In this work, we define structured graphs to be even more restricted than in~\cite{GargGA23improved}, and yet prove a similar approximation-preserving reduction that maintains  an approximation ratio of $\alpha + O(\varepsilon)$.

Informally, a structured graph is a simple graph that is 2VC with sufficiently many vertices and does not contain any contractible subgraphs, irrelevant edges, non-isolating 2-vertex cuts, or large 3-vertex cuts. Forbidding large 3-vertex cuts in the definition is the only difference compared to the definition in~\cite{GargGA23improved}. We now  define the various terms appearing in the definition of structured graphs before defining structured graphs formally.

\begin{definition}[$\alpha$-contractible subgraph~\cite{GargGA23improved}]
    Let $\alpha \geq 1$ be a fixed constant. A 2EC subgraph~$C$ of a 2EC graph $G$ is $\alpha$-\emph{contractible} if every 2-ECSS of $G$ contains at least $\frac{1}{\alpha}\abs{E(C)}$ edges with both endpoints in $V(C)$.
\end{definition}

The intuition behind defining $\alpha$-contractible subgraphs is the following: While designing an $\alpha$-approximation algorithm for the 2-ECSS problem, if we can find an $\alpha$-contractible subgraph $C$ in polynomial time, we can contract $C$ into a single vertex and find an $\alpha$-approximate solution on $G|C$, and add the edges of $C$ to get an $\alpha$-approximate solution for $G$. The approximation guarantee holds since 
$\alpha \cdot \opt \geq \alpha \cdot \opt(G|C) + |E(C)|$, which is true since $\opt(G) \geq \opt(G|C) + |\OPT[V(C)]|$ and $\OPT[V(C)]$ needs to contain at least $\nicefrac 1 \alpha \cdot \abs{E(C)}$ edges of $G[V(C)]$. Thus, in our approximation-preserving reduction all constant-sized $\alpha$-contractible subgraphs will be identified and handled.

\begin{definition}[irrelevant edge~\cite{GargGA23improved}]
    Given a graph $G$ and an edge $e = uv \in E(G)$, we say that $e$ is \emph{irrelevant} if $\{ u, v\}$ is a 2-vertex cut of $G$.
\end{definition}

\begin{definition}[non-isolating 2-vertex cut~\cite{GargGA23improved}]
    Given a graph $G$, and a 2-vertex cut $\{u, v\}$ of~$G$, we say that $\{u, v\}$ is \emph{isolating} if $G \setminus \{u, v\}$ has exactly two connected components, one of which is of size 1.
    Otherwise, it is \emph{non-isolating}.
\end{definition}

Note that getting rid of non-isolating cuts means that $G$ can be assumed to be almost 3-vertex-connected. 
In this work, we introduce the definitions of small and large 3-vertex cuts.

\begin{definition}[small and large 3-vertex cuts]
Given a graph $G$ and a 3-vertex cut $\{u, v, w\} \subseteq V(G)$ of $G$, we say that $\{u, v, w\}$ is \emph{small} if $G \setminus \{u, v, w\}$ has exactly two connected components such that one of them has at most 6 vertices. Otherwise, we call it \emph{large}.
\end{definition}

We are now ready to define structured graphs formally.

\begin{definition}
   [($\alpha, \varepsilon)$-structured graph]\label{def:structured}
    Given $\alpha \geq 1$ and $\varepsilon > 0$, a graph $G$ is $(\alpha, \varepsilon)$-\emph{structured} if it is simple, 2VC, it contains at least $\frac 4 \varepsilon$ vertices, and it does not contain
    \begin{enumerate}
        \item $\alpha$-contractible subgraphs of size at most $\frac 4 \varepsilon$,
        \item irrelevant edges,
        \item non-isolating 2-vertex cuts, and
        \item large 3-vertex cuts.
    \end{enumerate}
\end{definition}

In~\cite{GargGA23improved}, it was shown how to get rid of parallel edges, self loops, irrelevant edges, cut-vertices, small $\alpha$-contractible subgraphs, and non-isolating 2-vertex cuts via an approximation-preserving reduction. We show that we can also get rid of large 3-vertex cuts by generalizing their proof for non-isolating 2-vertex cuts in various ways. This is a significant part of this work. 
In \cref{app:prep}, we prove the following approximation-preserving reduction.
This proof is a bit long, but it admits a systematic case analysis. After understanding the main ideas through the first few cases, the remaining cases can be left as a simple exercise (though for completeness, we cover all the cases, including the ones proved in~\cite{GargGA23improved} for Properties 1-3).
\begin{restatable}{lemma}{lemmaReduction}\label{lem:reduction-to-structured}
    For all $\alpha \geq \frac{5}{4}$ and $\varepsilon \in (0, \frac{1}{24}]$, if there exists a deterministic polynomial-time $\alpha$-approximation algorithm for $2$-ECSS on ($\alpha, \varepsilon)$-structured graphs, then there exists a deterministic polynomial-time $(\alpha + 4 \varepsilon)$-approximation algorithm for $2$-ECSS.
\end{restatable}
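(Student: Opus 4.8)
The plan is to prove \Cref{lem:reduction-to-structured} by a standard ``reduce-or-recurse'' argument: given an arbitrary 2EC instance $G$ (we may assume 2-edge-connectedness, since otherwise there is nothing to solve), we repeatedly apply a sequence of local reductions, each of which either produces a smaller instance on which we recurse, or certifies that $G$ already satisfies one more clause of \Cref{def:structured}. The reductions for parallel edges, self-loops, cut-vertices, irrelevant edges, small $\alpha$-contractible subgraphs, and non-isolating $2$-vertex cuts are exactly those of~\cite{GargGA23improved}, so the real work is the new clause~4: eliminating \emph{large} $3$-vertex cuts. We must also re-examine the $\varepsilon$-bookkeeping, since we now claim an $(\alpha+4\varepsilon)$ rather than an $(\alpha+2\varepsilon)$ loss; the factor-$4$ will come from charging two separate sources of additive loss (the ``few vertices'' base case forcing $|V(G)|\ge 4/\varepsilon$, and the per-reduction slack), each at rate $2\varepsilon$.

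For the large $3$-vertex cut case, let $\{u,v,w\}$ be a large $3$-vertex cut, so $G\setminus\{u,v,w\}$ splits into parts each of which, together with $\{u,v,w\}$, induces a graph on at least $10$ vertices (this is what ``large'' buys us: both sides are genuinely big). Following the template for the non-isolating $2$-vertex-cut reduction of~\cite{GargGA23improved}, I would pick one side $A$, form the two instances $G_1 = G[A\cup\{u,v,w\}]$ augmented by a constant-size gadget on $\{u,v,w\}$ that simulates the connectivity demands imposed by the other side, and $G_2 = (G|A)$ with a symmetric gadget, recurse on each, and stitch the two approximate solutions together across $\{u,v,w\}$. The key inequality to establish is the analogue of $\opt(G)\ge \opt(G_1)+\opt(G_2) - O(1)$, i.e.\ that an optimum $2$-ECSS of $G$ restricted to each side, plus a bounded number of ``interface'' edges, is feasible for the corresponding gadgeted instance, and conversely that feasible solutions of $G_1,G_2$ glue to a feasible solution of $G$ with only $O(1)$ extra edges. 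Because only $\le 3$ edges of any $2$-ECSS cross the cut on one designated ``thin'' direction, these $O(1)$ terms are genuinely constant; the subtlety is choosing the gadget so that it is itself $\alpha$-contractible (or absorbable into a contractible piece) and so that the recursion strictly decreases $|V(G)|$ — this is why we only fire the rule when \emph{both} sides have $\ge 10$ vertices, guaranteeing progress and preventing the gadget vertices from re-creating a forbidden structure. A systematic case analysis on how the $\le 3$ crossing edges distribute among $u,v,w$ (and whether some of $u,v,w$ are themselves redundant) handles the remaining bookkeeping; as the authors note, after the first few cases the rest are mechanical.

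Finally I would assemble the pieces: define a potential $\Phi(G) = |V(G)|$, argue each reduction strictly decreases $\Phi$ while the ``no reduction applies'' case is precisely ``$G$ is $(\alpha,\varepsilon)$-structured or has $< 4/\varepsilon$ vertices'', and handle the small-instance base case by brute force (an instance with $O(1/\varepsilon)$ vertices is solved optimally in $\mathrm{poly}(1/\varepsilon)$ time for fixed $\varepsilon$, or folded into a constant additive loss). Then a telescoping argument over the polynomially many reduction steps shows the accumulated multiplicative error stays at $\alpha$ while the accumulated additive error is absorbed into a $4\varepsilon\cdot\opt(G)$ term, using $\opt(G)\ge |V(G)|\ge 4/\varepsilon$ at the top level to convert each $O(1)$ additive loss into an $O(\varepsilon)\cdot\opt$ loss. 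The main obstacle, as anticipated, is the large $3$-vertex-cut reduction: designing the $3$-terminal interface gadget so that it simultaneously (i) preserves optimum value up to an additive constant in both directions, (ii) is contractible/harmless so it does not pollute the recursive instances with new forbidden configurations, and (iii) keeps the crossing-edge count bounded — the $2$-vertex-cut analogue in~\cite{GargGA23improved} is already delicate, and moving from a $2$-terminal to a $3$-terminal separator roughly squares the number of interface patterns one must check.
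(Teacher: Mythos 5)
Your high-level skeleton (recurse on the reductions of~\cite{GargGA23improved}, add one new rule for large $3$-vertex cuts, brute-force small instances, absorb additive losses via $\opt(G)\geq \abs{V(G)}\geq \frac{4}{\varepsilon}$) matches the paper, but the proposal leaves a genuine gap exactly where the lemma's difficulty lies: the treatment of a large $3$-vertex cut whose smaller side has only constantly many vertices. Your plan is uniform -- gadget both sides, recurse on both, and stitch with $O(1)$ edges, absorbing the additive loss ``at the top level''. This accounting does not go through. The recursion produces polynomially many splits, so a per-split $O(1)$ loss cannot be charged once against $4\varepsilon\cdot\opt(G)$; one needs an inductive invariant of the form $\reduce(G)\leq\alpha\cdot\opt(G)+4\varepsilon\abs{V(G)}-4$, and under that invariant a side with $O(1)$ vertices contributes essentially no $\varepsilon$-slack, so a constant stitching cost is not paid for. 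That is precisely why \Cref{alg:3-vertex cuts} splits into two regimes: when $\abs{V_1}>\frac{2}{\varepsilon}-4$ both sides are recursed on (with $\{u,v,w\}$ simply contracted, no gadget) and the $\leq 4$ stitching edges are covered by the $\varepsilon$-slack of the two large sides; when $\abs{V_1}$ is constant, the small side is \emph{not} recursed on at all, but solved exactly by enumerating minimum solutions of each interface type ($\typA,\typBi,\typBii,\typCi,\typCii,\typCiii$), the recursion is run only on the big side with a type-dependent dummy gadget, and the stitching cost is offset by type-comparison inequalities (e.g., $\opt_1^{\typA}\geq\alpha\cdot\opt_1^{\min}+1$ because a cheap type-$\typA$ solution would make $G_1$ $\alpha$-contractible, which was already excluded; this is where $\alpha\geq\frac54$ and $\opt_1^{\min}\geq 8$ enter). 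Your proposal neither specifies the gadgets nor supplies any mechanism replacing this case analysis, and it explicitly defers ``the main obstacle'' rather than resolving it.

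Two further inaccuracies feed this gap. First, ``only $\leq 3$ edges of any $2$-ECSS cross the cut'' is not meaningful for a \emph{vertex} cut: no edges run between $V_1$ and $V_2$ at all, and the interface is the pattern in which the restricted solution's components attach to $u,v,w$ (the paper's six types, with subcases for $\typCii$), not a bounded set of crossing edges; also an optimal solution can have unboundedly many edges incident to $u,v,w$ on one side. Second, ``both sides $\geq 10$ vertices guarantees progress'' conflates termination with approximation bookkeeping: $10$ vertices suffice for the $4$-matching lemma later in the paper, but they are far below the $\frac{2}{\varepsilon}-4$ threshold needed to recurse on both sides of the cut, which is the crux of why a dedicated small-side argument is unavoidable.
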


When $\alpha$ and $\varepsilon$ are clear from the context, we simply write contractible and structured instead of $\alpha$-contractible and $(\alpha, \varepsilon)$-structured, respectively.

\smallskip\noindent\textbf{Benefits of structured graphs.}
Given the above reduction, we need to design approximation algorithms only for structured graphs. How do we exploit the fact that the input graph is structured? We will make generous use of the fact that there are no small contractible subgraphs. Furthermore, it was shown in~\cite{GargGA23improved} that certain
3-matchings exist in a structured graph, which will be very useful.

\begin{lemma}[$3$-Matching Lemma~\cite{GargGA23improved}]\label{lem:3-matching}
Let $G = (V, E)$ be a $2$VC simple graph without irrelevant edges and without non-isolating 2-vertex cuts. 
Consider any partition $(V_1, V_2)$ of $V$ such that for each $i \in \{1, 2\}$, $|V_i| \geq 3$, and if $|V_i| = 3$, then $G[V_i]$ is a triangle.
Then, there exists a matching of size $3$ between $V_1$ and $V_2$.
\end{lemma}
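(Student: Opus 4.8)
The plan is to argue by contradiction via a König/Hall-type argument: if no matching of size $3$ exists between $V_1$ and $V_2$, then by König's theorem the bipartite "crossing" graph $B$ (with parts $V_1,V_2$ and the edges of $G$ going between them) has a vertex cover $X$ of size at most $2$. I will then show that such a small vertex cover, together with the hypotheses (2VC, no irrelevant edges, no non-isolating 2-cuts, and the size/triangle condition on each part), forces a contradiction. The key observation is that the only crossing edges of $G$ are those incident to $X$; hence $X$ separates $V_1 \setminus X$ from $V_2 \setminus X$, so $X$ is a vertex cut of $G$ of size $\le 2$ (provided both sides remain nonempty, which I'll need to check using $|V_i|\ge 3$).

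First I would dispose of the case $|X|\le 1$: since $G$ is 2VC it has no cut vertex, and it has more than two vertices, so a crossing set $X$ of size $\le 1$ cannot separate the two nonempty sides $V_1\setminus X$ and $V_2\setminus X$ — contradiction. So $|X|=2$, say $X=\{a,b\}$, and $\{a,b\}$ is a 2-vertex cut of $G$. Now I split on where $a,b$ lie relative to the partition. Because $\{a,b\}$ is not a non-isolating 2-vertex cut, $G\setminus\{a,b\}$ has exactly two components, one of which — call it $\{z\}$ — is a single vertex; the other component $K$ contains all remaining vertices. The crossing edges all touch $a$ or $b$, so after removing $a,b$ the sets $V_1\setminus\{a,b\}$ and $V_2\setminus\{a,b\}$ lie in different components; thus one of these sets is exactly $\{z\}$ and the other is exactly $V(K)$. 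Without loss of generality $V_1\setminus\{a,b\}=\{z\}$, so $|V_1|\le 3$ and in fact $|V_1|=3$ with $V_1=\{a,b,z\}$, which by hypothesis means $G[V_1]$ is a triangle, so $ab\in E(G)$ and $z$ is adjacent to both $a$ and $b$.

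Finally I derive the contradiction from this configuration. The edge $ab$ is an edge whose endpoints form a 2-vertex cut of $G$ (indeed $\{a,b\}$ is a cut as established), so $ab$ is an \emph{irrelevant edge} — contradicting the hypothesis that $G$ has no irrelevant edges. (If one prefers to avoid the case where $ab$ might be a needed edge, note that $z$ has degree exactly $2$ with both neighbors $a,b$, and $\{a,b\}$ isolates $z$; this is exactly the forbidden structure, and since $G$ is simple and 2VC with $\ge 4$ vertices the component $K$ is nonempty, so $ab$ genuinely is irrelevant.) This completes the contradiction, so a $3$-matching between $V_1$ and $V_2$ must exist.

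The main obstacle I anticipate is the bookkeeping in the $|X|=2$ case: carefully verifying that both $V_1\setminus X$ and $V_2\setminus X$ are nonempty so that $X$ really is a cut (this uses $|V_i|\ge 3$ crucially, since if some $V_i$ had only $2$ vertices both in $X$ it would be empty after removal and $X$ need not be a cut), and then matching up the "isolating 2-cut" dichotomy with the partition to pin down that one side is a single vertex contained in a triangle. Everything after that is immediate from the "no irrelevant edges" assumption.
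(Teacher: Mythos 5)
Your proof is correct. Note that the paper does not prove this lemma at all -- it is imported from~\cite{GargGA23improved} -- but your argument is exactly the K\"onig-based template the paper itself uses for \Cref{lem:4-matching}: a missing $3$-matching yields a vertex cover $X$ of the crossing edges with $\abs{X}\leq 2$, hence a small vertex cut, and your subsequent case analysis (ruling out $\abs{X}\leq 1$ by 2VC, then using the isolating-cut dichotomy, the $\abs{V_i}\geq 3$ bound, and the triangle hypothesis to exhibit the irrelevant edge $ab$) correctly discharges the extra hypotheses that the $3$-matching version needs.
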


We further make use of the non-existence of large $3$-vertex cuts in structured graphs to establish a $4$-matching-lemma, which is going to be very useful in the gluing step. In fact, with similar techniques, we can also exclude large $k$-vertex cuts (for any constant $k$), where both sides of the partition have $\Omega(\frac{k}{1- \alpha})$ many vertices, leading to a similar $k$-Matching Lemma. However, we do not see an easy way to exploit this fact in improving the approximation ratio with the current techniques.

\begin{lemma}[4-Matching Lemma]\label{lem:4-matching}
Let $G$ be a  graph without large 3-vertex cuts.
Consider any partition $(V_1, V_2)$ of $V(G)$ such that for each $i \in \{1, 2\}$, $\abs{V_i} \geq 10$. 
Then, there exists a matching of size $4$ between $V_1$ and $V_2$ in $G$.
\end{lemma}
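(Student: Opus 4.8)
The plan is to argue by contradiction and convert the absence of a size-$4$ crossing matching into a small vertex cut of $G$ via K\"onig's theorem; that cut will then turn out to be a \emph{large} $3$-vertex cut, contradicting the hypothesis. So fix a partition $(V_1,V_2)$ with $\abs{V_1},\abs{V_2}\ge 10$ and suppose, for contradiction, that there is no matching of size $4$ between $V_1$ and $V_2$.

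First I would pass to the bipartite ``crossing'' graph. Let $H$ be the bipartite graph on color classes $V_1,V_2$ whose edge set consists exactly of the edges of $G$ with one endpoint in each part; a size-$4$ matching between $V_1$ and $V_2$ in $G$ is precisely a size-$4$ matching in $H$. By assumption $\nu(H)\le 3$, so by K\"onig's theorem $H$ has a vertex cover $C\subseteq V_1\cup V_2$ with $\abs{C}\le 3$. Unpacking this, $G$ has no edge joining $V_1\setminus C$ to $V_2\setminus C$. Since $G$ is structured, hence $2$VC and in particular connected, and since $V_1\setminus C$ and $V_2\setminus C$ are nonempty (as $\abs{C}\le 3<10$) with no edge between them, we must have $\abs{C}\ge 2$. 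Now enlarge $C$ arbitrarily to a set $\hat C$ of exactly three vertices; deleting more vertices cannot create crossing edges, so still no edge of $G$ joins $V_1\setminus\hat C$ to $V_2\setminus\hat C$, and $\abs{V_i\setminus\hat C}\ge \abs{V_i}-3\ge 7$ for each $i$. In particular $G\setminus\hat C$ is disconnected, so $\hat C$ is a $3$-vertex cut of $G$.

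It remains to show $\hat C$ is large. No connected component of $G\setminus\hat C$ can contain vertices of both $V_1$ and $V_2$, since all crossing edges have been destroyed; hence the components of $G\setminus\hat C$ split into those contained in $V_1\setminus\hat C$ and those contained in $V_2\setminus\hat C$, and each of these two families is nonempty. If $G\setminus\hat C$ has at least three components, then $\hat C$ is large by definition. Otherwise it has exactly two components, which are then forced to be $V_1\setminus\hat C$ and $V_2\setminus\hat C$ themselves (they are disjoint, each contained in one side, and their union is all of $V(G\setminus\hat C)$), each of size at least $7>6$; so $\hat C$ is large in this case as well. Either way we contradict the assumption that $G$ has no large $3$-vertex cut, which proves the lemma.

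The argument is short, and I expect the only delicate points to be bookkeeping rather than ideas: (i) promoting the K\"onig cover $C$ to an honest $3$-vertex cut $\hat C$, which is where connectivity of $G$ (from structuredness) is used and which is also why one pads to exactly three vertices before invoking the ``large $3$-vertex cut'' hypothesis; and (ii) tracking sizes, since after deleting the at most three cover vertices each side must still exceed the threshold $6$ appearing in the definition of a small $3$-vertex cut, which is exactly why the lemma assumes $10$ vertices in each part. Everything else is immediate from the definitions.
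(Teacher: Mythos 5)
Your proof is correct and takes essentially the same route as the paper's: apply the K\"onig--Egerv\'ary theorem to the bipartite crossing graph to get a vertex cover of size at most $3$, observe that it separates $V_1$ from $V_2$, and conclude that it yields a large $3$-vertex cut, contradicting the hypothesis. The only differences are bookkeeping that the paper treats more tersely (padding the cover to exactly three vertices and explicitly checking the component structure to certify largeness).
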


\begin{proof}
Consider the bipartite graph $F$ induced by the edges with exactly one endpoint in $V_1$ (and the other in $V_2$). 
For contradiction, we assume that a maximum matching $M$ in $F$ has size $\abs{M} \leq 3$. 
Then, by the K{\"o}nig-Egerv\'{a}ry Theorem (see, e.g.,~\cite{schrijver2003combinatorial}), there exists a vertex cover $C$ of $F$ of size $\abs{M}$.
Since $C$ is a vertex cover of $F$, there are no edges in $F$ (hence in $G$) between $U_1 \coloneqq V_1 \setminus C$ and $U_2 \coloneqq V_2 \setminus C$. 
Therefore, $C$ is a 3-vertex cut in $G$, which is large because $\abs{V_1}, \abs{V_2} \geq 10$ and $\abs{C} \leq 3$ imply $\abs{U_1}, \abs{U_2} \geq 7$; this is a contradiction. \end{proof}

\section{Canonical 2-Edge Cover}\label{sec:canonical}
It was shown in~\cite{kobayashi2023approximation} that a minimum triangle-free 2-edge cover can be computed in polynomial time using a polynomial-time algorithm for computing a maximum triangle-free 2-matching, e.g.~\cite{hartvigsen2024finding, P23}.

\begin{lemma}[Proposition 7 in~\cite{kobayashi2023approximation}]
\label{lem:triangle-free-2-edge-cover}
For a graph $G = (V, E)$, we can compute a triangle-free 2-edge cover of $G$ with
minimum cardinality in polynomial time if one exists.
\end{lemma}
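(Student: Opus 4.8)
The plan is to reduce the problem to computing a \emph{maximum triangle-free $2$-matching}---a largest subgraph of $G$ with maximum degree at most $2$ and no triangle component---which is computable in polynomial time by Hartvigsen~\cite{H84} (later simplified by Paluch~\cite{P23}). The bridge is a Gallai-type min-max identity: \emph{if $G$ has a triangle-free $2$-edge cover at all, then the minimum size of one equals $2\abs{V(G)}-\mu(G)$, where $\mu(G)$ is the maximum size of a triangle-free $2$-matching of $G$.} Existence is trivial to test: $G$ has a triangle-free $2$-edge cover iff $G$ has minimum degree at least $2$ and no connected component of $G$ is a triangle (a connected non-triangle graph with minimum degree at least $2$ is itself such a cover, and one argues componentwise in general; conversely a triangle component forces its only $2$-edge cover to be that triangle, and a vertex of degree below $2$ rules out any $2$-edge cover). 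So the algorithm is: test this condition; if it holds, compute a maximum triangle-free $2$-matching $M$ and convert it, as described below, into a triangle-free $2$-edge cover of size exactly $2\abs{V(G)}-\abs{M}$, which by the identity is minimum.

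One inequality, $2\abs{V(G)}-\mu(G)\le(\text{min triangle-free $2$-edge cover})$, comes from extracting a large triangle-free $2$-matching from a minimum triangle-free $2$-edge cover $H$: as long as some vertex has degree at least $3$, delete an incident edge whose removal does not split off a triangle component (if every edge incident to such a $v$ would split off a triangle, first delete an internal edge of one such triangle, which reduces the degree of its attachment vertex instead and is always safe); since every deletion destroys a unit of degree-excess, one removes at most $\sum_v(\deg_H(v)-2)=2\abs{E(H)}-2\abs{V(G)}$ edges, leaving a triangle-free $2$-matching of size at least $\abs{E(H)}-(2\abs{E(H)}-2\abs{V(G)})=2\abs{V(G)}-\abs{E(H)}$. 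The other inequality is the construction: each vertex of $M$ has a \emph{deficiency} $2-\deg_M(v)\in\{0,1,2\}$, summing to $2\abs{V(G)}-2\abs{M}$, and the task is to add a set $A$ of exactly that many edges of $G$ with $M\cup A$ a triangle-free $2$-edge cover; then $\abs{E(M\cup A)}=\abs{M}+(2\abs{V(G)}-2\abs{M})=2\abs{V(G)}-\abs{M}$.

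The crux---and the main obstacle---is proving that such a repair set $A$ exists. The natural move is to add, from each deficient vertex, an edge of $G$ to a \emph{saturated} neighbor (one of degree $2$ in the current subgraph); that neighbor becomes degree $3$, hence cannot lie on a newly formed triangle component, and since $M$ has no triangle component the subgraph stays triangle-free while spending exactly one edge per unit of deficiency. A deficient vertex may, however, fail to have an available saturated neighbor; one shows---crucially using that $M$ is \emph{maximum} and that a triangle-free $2$-edge cover exists---that this forces $v$ into a tiny component of $M$ (an isolated vertex, an isolated edge, or a $P_3$ whose three vertices span a triangle of $G$), and that these few components can be repaired within the budget of $2\abs{V(G)}-2\abs{M}$ added edges and without isolating a triangle, by stitching them to one another or by completing a forced triangle and anchoring it to the rest of the graph through an extra neighbor. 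Granting this, the identity and hence the lemma follow, and the algorithm is polynomial since its only non-elementary step is the maximum triangle-free $2$-matching computation~\cite{H84,P23}; this reproves Proposition~7 of~\cite{kobayashi2023approximation}.
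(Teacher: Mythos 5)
You should first note that the paper does not prove this statement at all: it is imported verbatim as Proposition~7 of~\cite{kobayashi2023approximation}, so the only meaningful comparison is with that cited proof, whose route your proposal follows exactly (reduce to a maximum triangle-free $2$-matching, computable by Hartvigsen~\cite{H84} / Paluch~\cite{P23}, and establish the Gallai-type identity $\min$ cover $=2|V(G)|-\mu(G)$). Your existence criterion is correct, and your deletion direction is essentially complete: in a simple graph, removing an edge can create a triangle component only if the edge is a bridge with a triangle on its far side, and your fallback of deleting a triangle edge incident to the attachment vertex both preserves triangle-freeness and removes a unit of degree excess, so the counting goes through.

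The gap is in the construction direction, which you yourself identify as the crux and then do not prove: the structural claim that a deficient vertex without a usable saturated neighbour must sit in one of a few small configurations is asserted (``one shows\dots'', ``Granting this\dots'') rather than established, and as stated it is also imprecise. Under maximality of $M$ and minimum degree $2$, an isolated vertex of $M$, and likewise an endpoint of an isolated $M$-edge, always has a saturated neighbour reachable by an edge of $E(G)\setminus M$ (otherwise adding the edge to a deficient neighbour would enlarge $M$ without creating a triangle component); the only genuinely forced configuration is a $P_3$ component $x-w-y$ in which the unique edge of $E(G)\setminus M$ at $x$ (or $y$) is $xy$, where one must add $xy$ and additionally anchor $w$ to a vertex outside $\{x,w,y\}$ (such an edge exists because the component of $G$ is not a triangle), and verify that the two added edges are paid for by the two units of deficiency. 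One must also rule out interactions between additions, e.g.\ an isolated vertex joined to both endpoints of an isolated $M$-edge would close a triangle, but that configuration contradicts maximality of $M$. Spelling out this case analysis, together with the observation that hooking a deficient vertex to a saturated one can never create a triangle component, is precisely the content of the cited Proposition~7; without it your write-up is a correct plan for the proof rather than a proof.
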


Notice that for a structured graph $G$, a 2-ECSS of $G$ is also a triangle-free 2-edge cover because $G$ contains at least $\nicefrac{4}{\varepsilon}$ vertices, and thus, a 2-ECSS of $G$ cannot be a triangle.
Thus, if $H$ is a minimum triangle-free 2-edge cover of our input $G$, then $\abs{H}\leq \opt(G)$. 
Roughly speaking, we will compute $H$ using the above lemma, and then transform $H$ into a 2-ECSS of $G$ by adding at most $\nicefrac 1 4\cdot\abs{H}$ edges to it. Thus, we would get a 2-ECSS with at most $\nicefrac 5 4 \cdot\opt(G)$, proving our main result.
In reality, we might add some more edges and delete some edges, so that the net total of additional edges will be at most $\nicefrac 1 4\cdot\abs H$.

We will first convert $H$ into a \emph{canonical} triangle-free 2-edge cover without changing the number of edges in it. 
To define canonical, we first need the following definitions pertaining to a 2-edge cover.
\begin{definition}\label{def:complex}
Let $H$ be a $2$-edge cover of a structured graph $G$.
We call
$H$ \emph{bridgeless} if it contains no bridges, i.e., all the components of $H$ are 2EC. 
A component of $H$ is \emph{complex} if it contains a bridge.  
Given a complex component $C$, a maximal\footnote{w.r.t.\ the subgraph relationship} 2EC subgraph of $H$ containing at least 3 (as $G$ has no parallel edges) vertices are \emph{blocks}. 
Any vertex of $C$ not contained in a block is called \emph{lonely}.
A block $B$ of some complex component $C$ of $H$ is called \emph{pendant} if $C \setminus V(B)$ is connected. 
Otherwise, it is \emph{non-pendant}.
\end{definition}

Note that by contracting each block of a complex component to a distinct node, we obtain a tree whose leaves correspond to pendant blocks. We are now ready to define canonical 2-edge covers.

\begin{definition}[canonical]\label{def:canonicalD2}
A $2$-edge cover $H$ is \emph{canonical} if the following properties are satisfied:
\begin{itemize}
    \item Each non-complex component of $H$ is either a $\mathcal C_i$, for $4 \leq i \leq 7$, or contains at least $8$ edges.
    \item For every complex component, each of its pendant blocks contains at least $6$ edges and at least $6$ vertices and each of its non-pendant blocks contains at least $4$ edges and at least $4$ vertices.
\end{itemize}
\end{definition}

Note that a canonical 2-edge cover is also a triangle-free 2-edge cover.
We establish the following lemma in~\cref{app:canonical} that shows how to convert a minimum triangle-free 2-edge cover of a structured graph into a minimum canonical 2-edge cover in polynomial time.

\begin{restatable}{lemma}{lemmaCanonicalMain}
\label{lem:canonical-cover:main}
    Given a structured graph $G$, in polynomial time we can compute a canonical $2$-edge cover $H$ of $G$, where the number of edges in $H$ is at most (and hence equal to) the number of edges in a minimum triangle-free $2$-edge cover.
\end{restatable}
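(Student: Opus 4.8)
The plan is to start from a minimum triangle-free 2-edge cover $H$ of the structured graph $G$ (which exists by \cref{lem:triangle-free-2-edge-cover}, since a 2-ECSS is in particular a triangle-free 2-edge cover, so one exists) and to perform a sequence of local exchange operations that never increase $|E(H)|$ until the two canonical properties hold. Since each step preserves the number of edges, and $H$ is minimum among triangle-free 2-edge covers, the result is also minimum. The key invariant to maintain throughout is that $H$ remains a triangle-free 2-edge cover of $G$; whenever we delete an edge we must re-certify that all degrees stay $\geq 2$ and that no new triangle component is created, and whenever we add an edge it must be an edge of $G$.

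First I would handle the non-complex (bridgeless) components. A bridgeless component is 2-edge connected, hence has $\geq 3$ edges; triangle-freeness rules out exactly $3$. If such a component $C$ is a cycle $\mathcal C_i$ with $i \in \{4,5,6,7\}$ or has $\geq 8$ edges, it is already fine. The remaining bad cases are small 2EC non-cycle components on few vertices, and these should be eliminated by arguing they are $\alpha$-contractible of size $\leq \tfrac 2\varepsilon$ (with $\alpha = \tfrac 54$), contradicting the fact that $G$ is structured — or, more precisely, such a configuration cannot appear in a \emph{minimum} triangle-free 2-edge cover of a structured graph because one could re-route through $G$ to save an edge, since structuredness forbids small contractible subgraphs and irrelevant edges. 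The cleanest route is: enumerate the constant-size possibilities and show each either is a $\mathcal C_i$ with $4\le i\le 7$, has $\ge 8$ edges, or can be locally improved/replaced; in particular an 8-or-more-edge non-cycle component is allowed, so the only truly forbidden small 2EC components are those with $5,6,7$ edges that are not cycles (e.g. $\mathcal C_4$ plus a chord, a theta graph, etc.), and these have bounded size and must be shown contractible or reducible.

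Next I would handle complex components, block by block. By \cref{def:bridgeless, block, complex}, contracting the blocks of a complex component $C$ yields a tree with leaves = pendant blocks. For each pendant block $B$ with fewer than $6$ edges, and each non-pendant block $B$ with fewer than $4$ edges, I would apply a local surgery: a small block together with the bridge(s) attaching it to the rest of $C$ forms a constant-size subgraph, and the goal is to either absorb it, reconnect it more cheaply using an edge of $G$ (available because $G$ is 2VC and has no irrelevant edges / non-isolating 2-vertex cuts, so there are ``escape'' edges leaving any small vertex set), or show the whole configuration is a small $\alpha$-contractible subgraph, contradicting structuredness. The natural potential to drive termination is the pair $(\text{number of edges}, \text{number of bridges})$ under lexicographic order, or a ``small-component count'' measure, ensuring the process stops in polynomially many steps; since each operation is on a constant-size neighborhood, each step is polynomial-time and the number of steps is $O(|E(G)|)$ or so.

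The main obstacle I expect is the second, complex-component step: making the local exchanges \emph{simultaneously} preserve 2-edge-cover-ness (all degrees $\ge 2$), triangle-freeness, bridgelessness-structure bookkeeping, and edge-count — while ensuring that fixing one bad block does not create another bad block elsewhere, so that a clean monotone potential actually decreases. Handling pendant blocks that are $\mathcal C_3$, $\mathcal C_4$, or $\mathcal C_5$, or bridges with a pendant vertex at the end, and the interaction with the structural assumptions on $G$ (which guarantee enough edges in $G$ to redirect), is where the real case analysis lives; I would organize it by the size/type of the offending block and the degree of its attachment vertices, leaning on the $3$-Matching Lemma (\cref{lem:3-matching}) or direct use of ``no irrelevant edges'' / ``no non-isolating 2-vertex cut'' to produce the replacement edge in $G$, exactly as is deferred to \cref{app:canonical}.
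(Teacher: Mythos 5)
Your overall strategy is the same as the paper's: start from a minimum triangle-free $2$-edge cover (via \cref{lem:triangle-free-2-edge-cover}) and repeatedly apply constant-size edge exchanges that never increase the edge count, using the structuredness of $G$ (no small contractible subgraphs, no parallel edges, no irrelevant edges) to argue that any non-canonical configuration admits such an exchange. However, two of the specific claims you rely on are not right as stated, and they hide the part of the argument that actually carries the lemma. First, it is not true that a small non-cycle 2EC component ``cannot appear in a \emph{minimum} triangle-free 2-edge cover because one could re-route through $G$ to save an edge'': in several of the offending configurations no edge can be saved at all, and the exchange that removes the bad structure has \emph{equal} size (e.g.\ swapping one edge of the component for an edge leaving it). Minimality of the edge count therefore does not by itself exclude these components; one needs equal-size swaps guided by secondary progress criteria.

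Second, and consequently, your proposed potential --- lexicographic (number of edges, number of bridges), or a vague ``small-component count'' --- is too coarse to justify those equal-size swaps. The paper's local search accepts an exchange of at most two edges if it decreases the number of edges, or (at equal edge count) the number of connected components, or the number of bridges, or the number of \emph{cut vertices inside 2EC components}. The last two criteria are not cosmetic: in the case of a 2EC component with $4$--$7$ edges that has a cut vertex, the only available exchange may neither save an edge, nor merge components, nor touch any bridge --- it merely eliminates the cut vertex --- and in the pendant-block case the decisive swaps are justified by the component- or bridge-count dropping. Without these criteria your process has no reason to terminate at a canonical cover, and the termination/progress argument is exactly the obstacle you flag but do not resolve. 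Finally, the case analysis itself (small non-cycle 2EC components with a cut vertex, $3$-vertex-connected ones, ones with a $2$-vertex cut; pendant blocks with $3$--$5$ edges; triangle non-pendant blocks) is the substantive content of the lemma, and your proposal defers it rather than carrying it out; note also that the paper's analysis does not need the $3$-Matching Lemma here, only the absence of parallel edges, irrelevant edges, and small contractible subgraphs.
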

This definition of canonical and the proof of the above lemma are similar to the ones in~\cite{GargGA23improved}, except that now components with $7$ edges have to be a $\cSeven$.

\section{Bridge Covering and Credits}
\label{sec:bridge-covering}
We now have a structured graph $G$ and a minimum canonical 2-edge cover $H$ of $G$ such that $|H|\leq \opt(G)$. We will show how to convert $H$ into a 2-ECSS of $G$ by adding a net total of at most $\nicefrac 1 4\cdot\abs H$ edges to it. There are two reasons why $H$ is not already a 2-ECSS of $G$: First, $H$ might have multiple components. Second, some of these components might have bridges. We will first modify $H$ so that the resulting components do not have any bridges. Then, we will glue the various resulting 2EC components together to get a desired 2-ECSS, which we defer to the next section. 
Throughout this transformation, we keep the property that the intermediate solution is a canonical 2-edge cover.

To keep track of how many edges we are adding, we will define a credit invariant that is satisfied throughout the transformation and a cost function. 
We have a total budget of $\nicefrac 1 4 \cdot\abs H$, which we will distribute all over $H$ as credits of different kinds.
The cost of $H$ is defined keeping in mind that the cost of a single edge is 1. 
Then, $\cost(H)$ denotes the total number of edges in $H$ plus the total credits in $H$. 
Thus, initially the $\cost(H)$ will be at most $\nicefrac 5 4 \cdot\opt$, which will follow from Lemmas~\ref{lem:canonical-cover:main} and~\ref{credit-invariant-start}. 
At each step of our transformation, we aim to ensure that the cost never increases, so that the final solution will have cost at most $\nicefrac 5 4\cdot\opt$, and we will be done.
We now define our credit invariant and the cost function, which is also used in the next section.

\begin{definition}[credits]
    \label{def:credit}
    Let $H$ be a $2$-edge cover of $G$. We keep the following credit invariant for blocks, components and bridges:
    \begin{itemize}
        \item Each 2EC component $C$ of $H$ that is a $\mathcal C_i$, for $4 \leq i \leq 7$, receives credit $ \credit(C) = \frac{1}{4}|E(C)|$.
        \item Each 2EC component $C$ of $H$ that contains 8 or more edges receives credit $ \credit(C) = 2$.
        \item Each block $B$ of a complex component $C$ of $H$ receives a credit $\credit(B) =1$.
\item Each bridge $e$ of a complex component $C$ of $H$ receives credit $\credit(e) = \frac{1}{4}$.
        \item Each complex component $C$ of $H$ receives a component credit $\credit(C)=1$.
    \end{itemize}
\end{definition}

\begin{definition}[cost]
    \label{def:cost}
    The \emph{cost} of a $2$-edge cover $H$ of $G$ is defined as $\cost(H) = |H| + \credit(H)$. Here $\credit(H)$ denotes the sum total of all credits in $H$.
\end{definition}

We first show that we can distribute our budget of $\nicefrac 1 4\cdot \abs{H}$  over our minimum canonical triangle-free 2-edge cover to satisfy the credit invariant. In other words, we assume the credits are there as required by the credit invariant, and show that $\cost(H)\leq \nicefrac 5 4\cdot\abs H$. 

\begin{restatable}{lemma}{lemmaInitialCredit}\label{credit-invariant-start}
    Let $H$ be a canonical $2$-edge cover of a graph $G$. Then, we have $\cost(H) \leq \frac{5}{4} |H|$. 
\end{restatable}
\begin{proof}
    We show that $\cost(C) \leq \frac{5}{4} |C|$ for each component $C$ in $H$. 
    This then implies that $\cost(H) \leq \frac{5}{4} |H|$.
    
    First, consider a 2EC component $C$ that is a $\mathcal C_i$, for $4 \leq i \leq 7$.
    Since $C$ has exactly $i$ edges, $\cost(C) = |C| + \frac{1}{4}|E(C)| = \frac{5}{4} |C|$.
    Second, consider a 2EC component $C$ with at least 8 edges. 
    Note that $\frac{1}{4}|E(C)| \geq 2$. Hence, we have that $\cost(C) = |C| + 2 \leq |C| + \frac{1}{4}|E(C)| = \frac{5}{4} |C|$.

    Third, consider a complex component $C$. 
Each bridge $e$ of the complex component receives credit $\credit(e) = \frac{1}{4}$.
Each block receives a credit of $1$. Finally, the component credit of $C$ is 1. 
    Note that since $H$ is canonical, each pendant block has at least 6 edges, and non-pendant blocks have at least 4 edges. Moreover, there are at least 2 pendant blocks in $C$. 
    Let $c_p$ be the number of pendant blocks of $C$, $c_n$ be the number of non-pendant blocks of $C$, and $c_b$ be the number of bridges in $C$.

    Note that $\abs C\geq c_b+6c_p+4c_n$.
    Therefore, $\frac 1 4\abs C \geq \frac 1 4 c_b+\frac 3 2 c_p + c_n $. Since $c_p\geq 2$ implies $\frac 3 2 c_p \geq c_p + 1$, we have $\frac 1 4\abs C \geq\frac 1 4 c_b+ c_p + c_n+1$. Now, we plug this fact in the cost calculation:
    \[
    \cost(C) = |C| + \credit(C) = |C| + \frac{1}{4} c_b + c_p + c_n + 1 \leq |C| + \frac{1}{4} |C| \leq \frac{5}{4} |C|.
    \]
\end{proof}

We are now ready to state our main lemma for bridge covering, which essentially is already proved in~\cite{GargGA23improved}. 
Its proof is deferred to \cref{app:bridge-covering}.

\begin{restatable}{lemma}{lemmaBridgeCoveringMain}
    \label{lem:bridge-covering-start2}
    There is a polynomial-time algorithm that takes as input a canonical  2-edge cover $H$ of a structured graph $G$ and outputs a bridgeless canonical 2-edge cover $H'$ of $G$ such that $\cost(H') \leq \cost(H)$.
\end{restatable}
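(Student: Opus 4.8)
The plan is to prove \Cref{lem:bridge-covering-start2} by iteratively eliminating bridges from $H$ one complex component at a time, while carefully tracking credits so that the cost never increases. I would process a single complex component $C$ at a time; since different components are handled independently and the operations below never merge distinct components, it suffices to show that any complex component can be turned into a bridgeless canonical component (or a collection thereof) without increasing cost. Throughout, the key accounting fact is the one already isolated in \Cref{credit-invariant-start}: a complex component $C$ carries $\credit(C) = \frac14 c_b + c_p + c_n + 1$ worth of credit, i.e.\ $\frac14$ per bridge, $1$ per block, and $1$ extra component credit, and this total is at most $\frac14|C|$ by canonicity. The goal of each step is to add edges of $G$ (and possibly delete some) so that two or more blocks of $C$ merge into a larger 2EC subgraph, paying for the new edges using the freed-up block credits, bridge credits, and — when the component becomes fully bridgeless — the component credit.

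The core of the argument: take a complex component $C$ and look at its block-tree. Pick a pendant block $B$; it is joined to the rest of $C$ by a bridge $e = xy$ with $x \in V(B)$. Because $G$ is 2EC and has no irrelevant edges nor non-isolating 2-vertex cuts, there must be an edge of $G$ leaving $V(B)$ other than $e$ (in fact the 3-Matching Lemma, \Cref{lem:3-matching}, applied to the partition $(V(B), V(G)\setminus V(B))$ when $|V(B)| \geq 3$, guarantees several such edges when $B$ is big enough). Following such an edge $f$ lets us form a cycle through $B$ and some neighboring blocks/bridges, absorbing them into a single new 2EC block. The credit released — one unit from each block that got absorbed beyond the first, plus $\frac14$ from each bridge on the path that disappears — pays for the at most that-many new edges added; canonicity of the resulting block (at least $6$ edges if it stays pendant, at least $4$ if non-pendant) is maintained because the blocks we merge already had $\geq 4$ or $\geq 6$ edges, and merging only increases edge counts. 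When the last bridge of $C$ is covered, $C$ becomes a single 2EC component with $\geq 8$ edges (it had at least two pendant blocks, hence $\geq 12$ edges originally), so it now needs only credit $2$, which is covered by the two surviving block credits (or one block credit plus the component credit), and the remaining freed credit is discarded — cost strictly does not increase. The canonical invariant "non-complex component with $\leq 7$ edges is a $\mathcal C_i$" is not an issue here since the merged component has $\geq 8$ edges.

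One subtlety I would address explicitly: the edge $f$ leaving $V(B)$ might land in a vertex of $C$ that is deep in another block, or in a different part of $C$, creating a cycle that swallows an arbitrary sub-path of the block-tree; one must argue that the number of new edges added is at most the number of internal tree-edges (bridges) plus the number of internal blocks absorbed, which is exactly what the released credits cover. Another subtlety: $f$ might go to a vertex not in $C$ at all, i.e.\ into a different 2EC component $C'$ or a different complex component — but then that is really the gluing step, not bridge covering, so in the bridge-covering phase I would only use edges of $G$ with both endpoints in $V(C)$; the existence of enough such edges (so that every pendant block can be "reconnected" internally) is exactly where structuredness is used, via the matching lemmas and the absence of small contractible subgraphs. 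I expect this — ensuring that internal edges suffice to cover all bridges of a complex component, and that the matching/cut structure of a structured graph forces them to exist — to be the main obstacle, and it is precisely the part that \Cref{lem:bridge-covering-start2} attributes to \cite{GargGA23improved}; accordingly, in \Cref{app:bridge-covering} I would import their bridge-covering procedure essentially verbatim and only verify that (i) the new, slightly stronger notion of canonical (components with exactly $7$ edges must be a $\mathcal C_7$) is preserved — which holds because their procedure never produces a non-cycle component with fewer than $8$ edges — and (ii) the credit scheme here matches theirs, so their cost-nonincreasing guarantee transfers directly.
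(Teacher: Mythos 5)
Your final plan---apply the bridge-covering procedure of \cite{GargGA23improved} (their Lemma~2.10, proved in their Appendix~D.1) iteratively, each application strictly reducing the number of bridges without increasing cost, and only verify that the credit accounting and the (slightly stronger) notion of canonical carry over---is exactly what the paper does in \cref{app:bridge-covering} via \cref{lem:bridge-covering-main-iterative}. So the route you end up on coincides with the paper's.

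However, the self-contained sketch that precedes this fallback rests on a premise that is false and would break if you actually pursued it: you declare that in the bridge-covering phase you ``would only use edges of $G$ with both endpoints in $V(C)$'' and that the operations ``never merge distinct components,'' claiming structuredness forces enough internal edges to exist. It does not: a pendant block $B$ of a complex component may have every $G$-edge leaving $V(B)$ (other than its bridge) end in a \emph{different} component of $H$; nothing in \cref{def:structured} excludes this, and \cref{lem:3-matching} applied to $(V(B), V(G)\setminus V(B))$ only yields edges leaving $B$ with no control over where they land. The imported procedure of \cite{GargGA23improved} therefore \emph{does} merge components during bridge covering, and this is precisely why the paper's appendix stresses, as the crucial adaptation point, that every component of $H$ carries credit at least $1$ (a nontrivial issue in \cite{GargGA23improved} because of triangle components, and automatic here thanks to triangle-freeness: a $\mathcal C_i$ with $4\le i\le 7$ has credit $i/4\ge 1$, and larger or complex components have credit at least $1$). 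So when you ``verify that the credit scheme matches,'' the concrete check is this per-component lower bound of $1$, not merely the new $7$-edge/$\mathcal C_7$ canonicity condition; with that check added (and your observation that merged components have at least $8$ edges, so canonicity in the new sense is preserved), your argument matches the paper's.
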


We will feed our minimum canonical 2-edge cover $H$ to the algorithm in the above lemma to obtain a bridgeless canonical 2-edge cover $H'$, and $\cost(H')\leq\cost(H)\leq \nicefrac 5 4\cdot\opt(G)$. Each component of $H'$ now is 2EC and either has $8$ or more edges, in which case it has a credit of $2$, or it is a cycle with $i$ edges ($\mathcal C_i$) for $4\leq i\leq 7$ with $\nicefrac 1 4\cdot i$ credits. We will use these credits  to glue the 2EC components of $H'$ together to obtain a desired 2-ECSS of $G$, as explained in the next section.

\section{Gluing}
\label{sec:gluing}
We have a structured graph $G$ as input and a bridgeless canonical 2-edge cover $H$ of $G$ with $\cost(H)\leq \nicefrac 5 4 \cdot \opt(G)$ such that $H$ satisfies the credit invariant:  $2$ in components with at least $8$ edges, and $\nicefrac i 4$ in components that are cycles with $i$ edges, for $4\leq i\leq 7$.
Now, in the final step of our algorithm, we glue the components of $H$ together (while maintaining the credit-invariant) using the credits available and a marginal additional cost of $3$. Specifically, we show the following.

\begin{lemma}
    \label{lem:gluing:main}  
    Given a structured graph $G$ and a bridgeless canonical 2-edge cover $H$ of $G$, we can transform $H$ in polynomial time into a 2-ECSS $H'$ of $G$ such that $\cost(H') \leq \cost(H) + 3$.
\end{lemma}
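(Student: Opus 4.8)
The plan is to treat the components of $H$ as vertices of an auxiliary \emph{component graph} and to glue them together by repeatedly attaching components to one distinguished large component, the ``Pac-Man''. Concretely, call a 2EC component \emph{huge} if it has at least $10$ vertices; every huge component carries credit $2$, while a cycle $\mathcal C_i$ carries credit $\nicefrac i4\le \nicefrac74<2$. The first phase is a bootstrapping step: if every component of $H$ is small (a $\mathcal C_i$ with $4\le i\le 7$), we glue a few of them together to create one huge component. Since $G$ is structured, $G$ is 2VC and the $3$-Matching Lemma (\cref{lem:3-matching}) applies to any bipartition with at least $3$ vertices per side (each singleton side being impossible here since all components have $\ge 4$ vertices); using a $3$-matching between one cycle and the union of the others, together with a short counting of the credits released, we merge $O(1)$ cycles into a single 2EC subgraph on $\ge 10$ vertices, paying the marginal additional cost of at most $3$ once and for all. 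After this phase there is at least one huge component, and this is the only place the ``$+3$'' is spent.

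The second phase is the Pac-Man loop. As long as $H$ has more than one component, let $P$ be a huge component and let $Q$ be any other component. Apply the $4$-Matching Lemma (\cref{lem:4-matching}) to the bipartition $(V(P), V(G)\setminus V(P))$: both sides have at least $10$ vertices (the left by hugeness of $P$, the right because $Q$ has $\ge 4$ vertices and, if $Q$ were the only remaining component of size $<10$, we can instead include more components or invoke that $|V(G)|\ge 4/\varepsilon$ is large — more carefully, we always pick the partition so that the non-$P$ side has $\ge 10$ vertices, which holds unless exactly one small component remains, a case handled separately below). Among the $4$ matching edges, at least $2$ have their non-$P$ endpoints in a single component $Q$ adjacent to $P$, or else they touch $\ge 2$ distinct components; either way we pick a component $Q$ with two edges $e_1=p_1q_1, e_2=p_2q_2$ from $P$ to $Q$ (if $Q$ is a cycle we may need the two endpoints in $Q$ to be chosen so the cycle plus the two edges is 2EC — this is automatic since adding any two vertex-disjoint chords/pendant edges between a cycle and a 2EC graph yields a 2EC graph). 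We add $e_1, e_2$, forming a new 2EC component $P'$ on $V(P)\cup V(Q)$, which is huge. The cost bookkeeping: we added $2$ edges, costing $2$; we removed the credit $\credit(Q)\ge\nicefrac 44=1$ of $Q$ and we may also reuse part of the $2$ credits formerly on $P$; the new huge component $P'$ needs credit $2$. So we must show $2 \text{ (new edges)} + 2 \text{ (credit of }P') \le \credit(P) + \credit(Q) + (\text{credits freed})$, i.e. $4 \le 2 + \credit(Q) + \dots$; since $\credit(Q)\ge 1$ this leaves a deficit of at most $1$, which we absorb by being slightly cleverer — e.g. when $Q$ is a cycle $\mathcal C_i$ with $i\ge 5$ its credit is $\ge \nicefrac54$, and when $Q=\mathcal C_4$ we instead merge $Q$ \emph{and} one more component, or we note that after swallowing $Q$ the graph $P'$ has $\ge 14$ vertices so we can afford to also drop down the ``$+3$'' slack. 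I will set the constants in the definition of ``huge'' and the bootstrapping bound precisely so that each gluing step is cost-non-increasing; the one-time slack of $3$ is exactly what pays for the first huge component and any $\mathcal C_4$ corner cases.

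Finally, the loop terminates after at most $|V(G)|$ iterations since each step reduces the number of components by $1$, and polynomial time is clear as each step only requires computing a maximum matching in a bipartite graph (the Kőnig–Egerváry argument in \cref{lem:4-matching} is constructive). When a single component remains it is huge and spans $V(G)$, hence it is a 2-ECSS $H'$ of $G$, and by construction $\cost(H')\le \cost(H)+3$.

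\medskip

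The step I expect to be the main obstacle is the \emph{cost accounting of a single Pac-Man bite}: adding two edges costs $2$, the new huge component demands credit $2$, but the swallowed small component releases only $\nicefrac i4\in[1,\nicefrac74]$ credit, so a naive count is short by up to $1$. Making this genuinely cost-non-increasing requires either (a) choosing the definition of ``huge'' large enough that $P$ already carries surplus, or (b) proving one can always bite off at least $2$ units of credit per $2$ edges added — e.g. swallow two small components at once using a $4$-matching (two disjoint pairs of edges would cost $4$ and free $\ge 2$ credit, still tight), or (c) observing that the very first huge component is built with the $+3$ budget and thereafter a more careful argument (bridges are absent, so no block/bridge credits interfere) shows the amortized cost works out. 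Pinning down exactly which small cases (notably a lone surviving $\mathcal C_4$, or the start configuration of all $\mathcal C_4$'s) consume the $+3$, and verifying the credit invariant is re-established on $P'$ after each bite, is where the real care lies.
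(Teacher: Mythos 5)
There is a genuine gap, and it is exactly the one you flag at the end but do not close: the cost accounting of a single ``bite''. Adding two matching edges to absorb a component $Q$ changes the cost by $2 + \credit(P') - \credit(P) - \credit(Q) = 2 - \credit(Q)$, which is strictly positive ($\ge \nicefrac14$, and equal to $1$ for a $\mathcal{C}_4$) for \emph{every} non-large component, and there can be $\Omega(|V(G)|)$ such components; a one-time slack of $3$ cannot absorb a per-step deficit. None of your proposed repairs works numerically: swallowing two small components with a $4$-matching costs $4$ and frees at most $\nicefrac74+\nicefrac74$, a deficit of at least $\nicefrac12$; enlarging the threshold for ``huge'' does not help because a huge component still carries only credit $2$; and the bridgeless structure does not create extra credit. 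There is also a structural slip: the four edges guaranteed by \cref{lem:4-matching} may land in four distinct components, so you cannot in general ``pick a component $Q$ with two edges'' from that matching. The paper's proof avoids the deficit by never paying two net edges for a small cycle: when two incident edges land on adjacent vertices of a $\mathcal{C}_4/\mathcal{C}_5$ it \emph{deletes} the connecting cycle edge, or it replaces all of $E(C_A)$ by a Hamiltonian path between the two attachment vertices (so a cycle $K$ in the component graph through the huge node costs $|K|$ edges but frees $|K|-1$ component credits plus one edge of $C_A$), or it merges a $\mathcal{C}_5$ with a neighboring $\mathcal{C}_4$ via \cref{lem:gluing:c4-c5}; for segments containing only large, $\mathcal{C}_6$ or $\mathcal{C}_7$ components it adds a component-graph cycle of length at least $3$ in which every swallowed component carries credit at least $\nicefrac32$; and in the residual configurations where none of these moves exists, it shows $C_A$ would be an $\alpha$-contractible subgraph, contradicting that $G$ is structured. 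These edge-deletion/Hamiltonian-path/contractibility arguments (organized via trivial vs.\ non-trivial segments of the component graph, \cref{lem:gluing:trivial-segment,lem:gluing:non-trivial-segment,lem:gluing:cycle-through-large}) are the actual content of the lemma, and they are absent from your proposal, so as written it does not establish $\cost(H') \le \cost(H) + 3$.
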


Using the above lemma, the proof of~\cref{thm:main} follows easily: 
\begin{proof}[Proof of \Cref{thm:main}]
For a structured graph $G$, we first compute a triangle-free 2-edge cover using \Cref{lem:triangle-free-2-edge-cover}, which can be turned into a canonical 2-edge cover $H$ without increasing the number of edges (\Cref{lem:canonical-cover:main}).
As observed before, we have $|H| \leq \opt$.
By \Cref{credit-invariant-start}, $\cost(H) \leq \frac{5}{4}|H|$ and therefore we have that $\cost(H) \leq \frac{5}{4} \opt$.
By \Cref{lem:bridge-covering-start2}, we obtain a bridgeless canonical 2-edge cover $H'$ such that $\cost(H') \leq \cost(H) \leq \frac{5}{4}\opt$.
Finally, we apply \Cref{lem:gluing:main} to turn $H'$ into a feasible solution $H''$ such that $\cost(H'') \leq \frac{5}{4}\opt + 3$.
Now $\credit(H'')=2$ as it will be a single 2EC component with more than $8$ edges. 
Thus, $\abs{H''} \leq \cost(H') + 3 -2 \leq \frac 5 4 \cdot \opt(G) + 1 \leq (\frac 5 4 + \frac {\varepsilon} {4})\cdot\opt(G)$, where the last inequality follows from $\opt(G)\geq\abs{V(G)}\geq \frac 4 \varepsilon$, which holds for structured graphs $G$. 
Now, using the approximation preserving reduction (\cref{lem:reduction-to-structured}) for general graphs $G$, we incur an additional loss of $4\varepsilon$. Thus, for any $\varepsilon' > 0$, we get an approximation ratio of 
$\frac 5 4 + \frac{\varepsilon}{4} + 4\varepsilon =  \frac 5 4 + \frac {17} 4\varepsilon = \frac 5 4  + \varepsilon'$ by setting $\varepsilon = \frac{4}{17} \varepsilon'$, which proves~\cref{thm:main}.
\end{proof}

It remains to prove \Cref{lem:gluing:main}. 
To this end, we first introduce some definitions. In what follows, unless otherwise mentioned, $G$ is a structured graph and $H$ is a bridgeless canonical 2-edge cover of~$G$.

\begin{definition}[large and huge]
We say that a component $C$ of $H$ is \emph{large} if $|V(C)| \geq 8$ and we say that it is \emph{huge} if $|V(C)| \geq 10$.
\end{definition}

Notice that huge components are also large and must have 2 credits. The benefit of having a huge component is that the 4-matching lemma can be applied to it to conclude that each huge component has a matching of size $4$ going out of it (whenever there are at least 10 vertices outside the huge component).

\begin{definition}[component graph]
The \emph{component graph} $\hG_H$ 
    w.r.t.\ $G$ and $H$ is the multigraph obtained from  $G$ by contracting the vertices of each 2EC component of $H$ into a single node. 
    We call the vertices of $\hG_H$ \emph{nodes} to distinguish them from the vertices in $G$. There is a one-to-one correspondence between the edges of $\hG_H$ and $G$. Furthermore, there is a one-to-one correspondence between the nodes of $\hG_H$ and the components of $H$. We will denote the component corresponding to a node $A$ by $C_A$. \end{definition}

Although not strictly required, to remove any ambiguities in our presentation later while dealing with cycles, we get rid of all the self-loops that appear in the component graph, and call the resulting graph the component graph. Now,
observe that the component graph $\hat G_H$ is 2EC as $G$ is 2EC (as contracting a set of vertices in a 2EC graph results in a 2EC graph; also deleting self loops from a 2EC graph results in a 2EC graph).

\begin{definition}[non-trivial and trivial segments]
A \emph{non-trivial segment}\footnote{In the graph theory literature, blocks often refer to maximal 2VC subgraphs, but in this work we use the term blocks to refer to maximal 2EC subgraphs in line with previous works on 2-edge connectivity. Thus, 2VC blocks consisting of at least 3 vertices and non-trivial segments defined above are precisely the same objects.} $S$ of $\hG_H$ is a maximal 2-node-connected subgraph of $\hG_H$ consisting of at least 3 nodes. A \emph{trivial segment} is a single node of $\hG_H$ that is not part of any non-trivial segment.
For a segment $S$, $V(S)$ refers to the set of all the vertices in the various components corresponding to the nodes contained in $S$.
\end{definition}

Notice that we can break the component graph into segments such that two segments can potentially overlap only at a cut node (follows from Algorithm 4.1.23 in~\cite{West2001}; alternatively see~\cite{KorteV2002}).

Since a trivial segment $A$ is not part of any cycles of length at least 3 in $\hG_H$ (otherwise $A$ would be part of a non-trivial segment), $A$ has to be either a cut node of $\hG_H$ or a pendant node (neighbor of exactly one other node) of $\hG_H$. 
In either case, we can apply the 3-matching lemma (\cref{lem:3-matching}) to argue that there is a matching of size 3 between $V(C_A)$ and $V(C_B)$, where node $B$ is a neighbor of node $A$ in $\hG_H$. The condition of the lemma is satisfied by partitioning $V(G)$ into two parts such that one part has vertices from $V(C_A)$ and the other part has vertices from $V(C_B)$ and the only edges going across the parts are between $V(C_A)$ and $V(C_B)$. We have the following propositions.
\begin{proposition}\label{prop:gluing:cut-pendent}
    Let $A$ be a trivial segment of $\hG_H$. Then, $A$ is either a cut node of $\hG_H$ or it is a pendant node of $\hG_H$.
\end{proposition}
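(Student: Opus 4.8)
The plan is to prove \Cref{prop:gluing:cut-pendent} directly from the definition of trivial segments together with basic structural facts about 2-node-connected blocks. First I would recall that $\hG_H$ is 2EC (as noted just before the proposition), and in particular connected, so every node has at least one neighbor. The goal is to show that a trivial segment node $A$ is either a cut node or a pendant node, i.e., it cannot be an ``ordinary'' node with at least two neighbors whose removal leaves $\hG_H$ connected.

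The key step is the following dichotomy. Suppose $A$ is not a pendant node, so $A$ has at least two neighbors in $\hG_H$. I would argue that $A$ must then be a cut node. Consider the block decomposition of $\hG_H$ (the decomposition into maximal 2-connected subgraphs and bridge-edges). Every node of $\hG_H$ lies in at least one block, and a node lies in two or more blocks exactly when it is a cut node. If $A$ lies in a block $B$ that is 2-connected and has at least three nodes, then $B$ is a non-trivial segment containing $A$, contradicting that $A$ is a trivial segment (which by definition is not part of any non-trivial segment). Hence every block containing $A$ is either a single bridge-edge $\{A, X\}$ or — vacuously excluded — a larger 2-connected block. So all blocks at $A$ are bridges. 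If $A$ had only one such block, it would be a pendant node, contradicting our assumption; therefore $A$ lies in at least two distinct bridge-blocks, which means $A$ is a cut node. This establishes the claim.

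An alternative, perhaps cleaner, phrasing avoids invoking block decomposition machinery explicitly: if $A$ is neither a cut node nor a pendant node, then $A$ has at least two neighbors and $\hG_H \setminus \{A\}$ is connected; I would then show this forces $A$ into a cycle of length at least $3$ in $\hG_H$, hence into a 2-node-connected subgraph on at least three nodes, hence into a non-trivial segment — contradiction. Concretely, pick two distinct neighbors $X, Y$ of $A$; since $\hG_H \setminus \{A\}$ is connected, there is an $X$–$Y$ path $P$ avoiding $A$, and $P$ together with the edges $AX$ and $AY$ forms a closed walk through $A$ containing at least three nodes, from which one extracts a cycle of length $\geq 3$ through $A$. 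Then $A$ belongs to a maximal 2-node-connected subgraph with at least $3$ nodes, i.e., a non-trivial segment, contradicting the triviality of the segment $A$.

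I do not anticipate a genuine obstacle here; this is a routine structural observation. The only mild subtlety is making sure the multigraph and self-loop conventions do not cause trouble: self-loops have been removed from $\hG_H$, and a pair of parallel edges between $A$ and a single neighbor $X$ would by itself constitute a cycle of length $2$ only — not length $\geq 3$ — so parallel edges do not accidentally create a non-trivial segment, and the argument that a node with $\geq 2$ \emph{distinct} neighbors lying on a path-avoiding-$A$ yields a cycle of length $\geq 3$ goes through unchanged. I would present the second (cycle-based) argument in the writeup since it matches the language (``not part of any cycles of length at least 3'') already used in the paragraph preceding the proposition.
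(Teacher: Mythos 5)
Your proposal is correct, and the cycle-based argument you say you would present is essentially the paper's own justification: the paper observes in the paragraph preceding the proposition that a trivial segment $A$ lies on no cycle of length at least $3$ in $\hG_H$ (else it would belong to a non-trivial segment), and hence must be a cut node or a pendant node. Your contrapositive phrasing — two distinct neighbors plus connectivity of $\hG_H \setminus \{A\}$ yields such a cycle — is the same argument, and your handling of parallel edges and self-loops matches the paper's conventions.
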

\begin{proposition}\label{prop:gluing:trivial-3-matching}
    Let $A$ and $B$ be adjacent nodes in $\hG_H$ such that $A$ is a trivial segment. Then, there exists a matching of size $3$ going between $V(C_A)$ and $V(C_B)$ in $G$.
\end{proposition}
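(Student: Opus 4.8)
\textbf{Proof proposal for Proposition~\ref{prop:gluing:trivial-3-matching}.}

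The plan is to reduce this to an application of the $3$-Matching Lemma (\cref{lem:3-matching}), whose hypotheses need a partition $(V_1,V_2)$ of $V(G)$ in which only edges between $V_1$ and $V_2$ of a specific kind can cross, and in which each part either has at least $4$ vertices or is a triangle. First I would recall that $G$ is $2$VC, simple, has no irrelevant edges and no non-isolating $2$-vertex cuts, so \cref{lem:3-matching} does apply to $G$. The natural partition to try is $V_1 = V(C_A)$ and $V_2 = V(G)\setminus V(C_A)$: because $A$ is a trivial segment, by \cref{prop:gluing:cut-pendent} $A$ is either a pendant node or a cut node of $\hG_H$. If $A$ is a pendant node with unique neighbor $B$, then every edge of $G$ leaving $V(C_A)$ goes to $V(C_B)\subseteq V_2$, so the only edges crossing $(V_1,V_2)$ are between $V(C_A)$ and $V(C_B)$; any size-$3$ matching across the partition is then automatically a matching between $V(C_A)$ and $V(C_B)$, which is what we want.

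If instead $A$ is a cut node, $\hG_H\setminus A$ splits into components; I would pick the connected component of $\hG_H\setminus A$ that contains $B$, call the union of its nodes' vertices $X$, and set $V_2' = X$, $V_1' = V(G)\setminus X$ (so $V_1'\supseteq V(C_A)$). Now edges of $G$ from $X$ to the rest can only be incident to $V(C_A)$ on the $V_1'$ side (since $A$ is the only node separating $X$ from the remaining nodes), and within $X$ the only node adjacent to $A$ along this cut need not be unique — but here I would instead go back to the cleaner partition $V_1=V(C_A)$, $V_2=V(G)\setminus V(C_A)$ and argue directly: a matching of size $3$ across this partition has all three edges incident to $V(C_A)$; I then need to force their other endpoints into a single $V(C_B)$. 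The subtlety is that from a cut node $A$ there may be several neighbors, so a priori the $3$-matching guaranteed by \cref{lem:3-matching} could spread across two or three different neighboring components. To handle this, I would apply \cref{lem:3-matching} separately for each neighbor: fix a neighbor $B$ of $A$, let $Y$ be the union of vertices of all nodes in the connected component of $\hG_H\setminus A$ containing $B$, and apply the lemma to the partition $(V(G)\setminus Y,\; Y)$. Every crossing edge has its endpoint in $V(G)\setminus Y$ lying in $V(C_A)$ (again because $A$ is the separator), and its endpoint in $Y$ — but this still need not lie in $V(C_B)$ if the component of $\hG_H\setminus A$ containing $B$ has more than one node.

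This is the step I expect to be the main obstacle: cleanly pinning the far endpoints of the matching into exactly one component $V(C_B)$. The resolution is to choose $B$ to be a neighbor of $A$ such that $A$ and $B$ together bound a ``small'' side: since $A$ is a cut node of a $2$EC graph $\hG_H$, take any pendant node or leaf block on one side; more simply, pick $B$ so that $V(C_A)\cup V(C_B) =: W$ is such that $G[V(G)\setminus W]$ is still nonempty and the only $G$-edges leaving $V(C_B)$ go to $V(C_A)$ or back into $V(C_B)$ — this holds when $B$ is a pendant node of $\hG_H$ (possible because any tree with a cut has a leaf), or, when $B$ is itself inside a non-trivial segment, one uses that in $\hG_H$ every edge from $V(C_B)$ to the ``$A$-side'' must pass through $A$. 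Then apply \cref{lem:3-matching} with $V_1 = V(C_B)$, $V_2 = V(G)\setminus V(C_B)$: all crossing edges are incident to $V(C_B)$, and by the separation property their other endpoints lie in $V(C_A)$, giving the desired $3$-matching between $V(C_A)$ and $V(C_B)$. I would finally double-check the size/triangle hypothesis of \cref{lem:3-matching}: since $H$ is canonical and bridgeless, every component has at least $4$ edges, hence at least $4$ vertices if it is a cycle $\mathcal C_i$ with $i\ge 4$, or at least $4$ vertices if it has $\ge 8$ edges; thus both $V(C_A)$ (or $V(C_B)$) and its complement have $\ge 4$ vertices, or if exactly $3$ the relevant induced subgraph is a triangle — but canonicity already rules out $3$-vertex components, so the ``$|V_i|\ge 3$ and triangle if equal to $3$'' precondition is met with room to spare, completing the argument.
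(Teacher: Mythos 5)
There is a genuine gap, and it is exactly at the step you flag as ``the main obstacle.'' The missing idea is to actually use what it means for $A$ to be a \emph{trivial segment}: $A$ lies on no cycle of length at least $3$ in $\hG_H$. Consequently, $A$ has \emph{exactly one} neighbor in each connected component of $\hG_H \setminus A$ --- if two neighbors $B,B'$ of $A$ lay in the same component of $\hG_H\setminus A$, a $B$--$B'$ path there together with the edges $AB$ and $AB'$ would form a cycle of length at least $3$ through $A$, putting $A$ inside a non-trivial segment. With this observation, the partition you already wrote down works for the given $B$: let $Y$ be the union of the vertex sets $V(C_D)$ over all nodes $D$ in the component of $\hG_H\setminus A$ containing $B$, and take the partition $(V(G)\setminus Y,\,Y)$. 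Every crossing edge of $G$ is incident to $V(C_A)$ on one side (since $A$ separates $Y$ from the rest) and to $V(C_B)$ on the other side (since $B$ is $A$'s unique neighbor in that component), so the $3$-matching from \cref{lem:3-matching} is automatically a matching between $V(C_A)$ and $V(C_B)$; the size/triangle hypothesis is satisfied as you note, since every component of the canonical bridgeless $H$ has at least $4$ vertices. This is the paper's argument.

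Your fallback --- choosing a convenient neighbor $B$ --- does not repair the proof, for two reasons. First, it proves a strictly weaker statement: in the proposition $B$ is an \emph{arbitrary given} neighbor of $A$, and the paper later invokes it with a prescribed $B$ (in Case~2.2 of the proof of \cref{lem:gluing:cycle-through-large}, the trivial segment $D$ must be matched to the specific adjacent node $A$ containing $v_2$), so ``some neighbor admits a $3$-matching'' would not suffice there. Second, the chosen-$B$ argument is itself unsound as written: a pendant node ``on one side'' of the cut node $A$ need not be adjacent to $A$ at all, and when $B$ lies inside a non-trivial segment, the crossing edges in the partition $(V(C_B),\,V(G)\setminus V(C_B))$ may run from $V(C_B)$ to other components of that segment rather than to $V(C_A)$, so ``their other endpoints lie in $V(C_A)$'' is unjustified. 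Both problems disappear once you use the unique-neighbor-per-component property above.
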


Having set up the requisite definitions, we now outline the steps involved in the gluing process. First, if $H$ does not contain a huge component, then we first glue a few components together, so that we obtain a canonical bridgeless 2-edge cover $H'$ that has a huge component, while maintaining the credit invariant. 
We will show that such an $H'$ can be obtained with $\cost(H')\leq \cost(H)+3$. 
After this step, we assume $H$ has a huge component, an invariant that will be maintained at each iteration of our algorithm. 
Next, let $L$ be a huge component in $H$. 
We will distinguish two cases: either $L$ is a trivial segment of $\hG_H$, or it is part of a non-trivial segment of $\hG_H$. In either case, we show that we can make progress, i.e., we can find in polynomial time a canonical bridgeless 2-edge cover $H'$ that has fewer components than $H$ such that $H'$ contains a huge component and $\cost(H') \leq \cost(H)$. 
Repeatedly
applying the above step will result in a single bridgeless 2EC component, a 2-ECSS of $G$. Thus, if we can establish the existence of these steps,~\cref{lem:gluing:main} follows immediately. These steps are encapsulated in the following three lemmas.

\begin{lemma}
    \label{lem:gluing:obtaining-huge-component}
    If H does not contain a huge component, then we can compute in polynomial time a canonical bridgeless 2-edge cover $H'$ such that $H'$ contains a huge component and $\cost(H') \leq \cost(H) + 3$.
\end{lemma}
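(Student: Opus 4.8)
The plan is to build a huge component from scratch when none exists, paying at most the extra budget of $3$ that the lemma allows. Since $H$ is a bridgeless canonical $2$-edge cover with no huge component, every component has at most $9$ vertices; in particular every component is either a cycle $\mathcal{C}_i$ with $4 \le i \le 7$ (carrying $\nicefrac i4$ credit) or a large-but-not-huge $2$EC component with $8$ or $9$ vertices (carrying $2$ credit). I would start from an arbitrary component $C_A$ and look at its node $A$ in $\hG_H$. Using the connectivity of $\hG_H$ together with \cref{prop:gluing:trivial-3-matching} (when $A$ is a trivial segment) or the $3$-Matching Lemma inside a non-trivial segment, I can find neighboring components and a matching of size $3$ crossing between $V(C_A)$ and a neighbor $V(C_B)$. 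Gluing two $2$EC components $C_A, C_B$ along two edges of such a matching keeps the result $2$EC (one can route around either glued edge), so repeatedly absorbing neighbors grows a single $2$EC component; I keep absorbing until the merged component reaches at least $10$ vertices, which happens after gluing at most a few small components together (each component has $\ge 4$ vertices, so at most two gluings past the first component suffice, and in the worst case three components of $4$ vertices give $12 \ge 10$).

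The accounting is the heart of the argument. Each gluing of two components into one uses $2$ new edges and destroys at most $2$ component credits (the two glued components each had $\ge \nicefrac 44 = 1$ credit, and a cycle of length $i$ had $\nicefrac i4 \ge 1$); so each gluing is covered by the credits of the components being merged, up to a small deficit. Concretely, two components of total credit $c_1 + c_2$ and total edge count $e_1 + e_2$ merge into one component of $e_1 + e_2 + 2$ edges needing credit $2$ (it will be huge, hence large), so the net change in cost is $2 + 2 - c_1 - c_2$. Summing over the gluings used to assemble $k$ original components into one huge component, the telescoping gives a total additional cost of $2(k-1) + 2 - \sum_{i=1}^{k} c_i$. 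Since each of the $k$ components had credit at least $1$, this is at most $2(k-1) + 2 - k = k$. The point is that $k$ can be bounded by a small constant: I should choose the gluing order greedily so that the partially merged component crosses $10$ vertices as soon as possible, and since the merged vertex count strictly increases by at least $4$ each time we incorporate a fresh component, we reach $\ge 10$ with $k \le 3$, yielding additional cost at most $3$. (If some component already has $8$ or $9$ vertices one gluing suffices, $k=2$, cost $\le 2$.) One must also verify the result stays canonical: the new huge component is a single $2$EC block with $\ge 10$ edges, which is fine, and no complex components are created since we never introduce bridges.

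The main obstacle I anticipate is not the topology of gluing but making the credit bookkeeping tight enough: naively each gluing costs $2$ and we might fear needing many of them, but the bound $k \le 3$ is exactly what keeps us within the budget of $3$, and this requires being careful that (i) we never need to glue more than three original components, and (ii) when one of the merged components was a large (non-cycle) component with $2$ credits, those surplus credits are not wasted — indeed they only help, since $\sum c_i$ is then larger and the additional cost smaller. A secondary subtlety is ensuring the required matchings actually exist at each step: once the partial merge already has, say, $7$ or more vertices but is still not huge, I cannot yet invoke the $4$-Matching Lemma, so I must rely on \cref{prop:gluing:trivial-3-matching} / \cref{lem:3-matching}, which only need one side to be a triangle or have $\ge 3$ vertices — this is always satisfied here since merged components have $\ge 4$ vertices and any neighbor component has $\ge 4$ vertices. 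Putting these pieces together gives the lemma.
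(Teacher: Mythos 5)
There is a genuine gap in your merging step. Your plan merges two components at a time by adding exactly $2$ edges between a component $C_A$ and a \emph{single} neighboring component $C_B$, and the cost accounting ($2(k-1)+2-\sum_i c_i \le k \le 3$) relies on this "two edges per absorbed component" rate. But nothing you cite guarantees that two distinct edges between one specific pair of components exist. \cref{prop:gluing:trivial-3-matching} gives a $3$-matching from $C_A$ to a single neighbor only when $A$ is a \emph{trivial} segment of $\hG_H$, and there may be no trivial segments at all (the component graph can be one big non-trivial segment). In the general case \cref{lem:3-matching} only yields a $3$-matching between $V(C_A)$ and $V(G)\setminus V(C_A)$, and those three edges may go to three different components; adjacent nodes of $\hG_H$ may well be joined by a single edge, for every adjacent pair. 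If you then add one edge each to two different components you create bridges, so your inductive step "glue two $2$EC components along two matching edges" simply cannot be executed, and the $k\le 3$ accounting never gets off the ground.

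The paper sidesteps exactly this issue by working with cycles in the component graph rather than with pairwise matchings: since $\hG_H$ is 2EC, there is a cycle $K$ through any chosen node $A$ (possibly of length $2$ via parallel edges of the multigraph), and adding all of $E(K)$ glues \emph{all} components on $K$ into one 2EC component at once. The price is $|K|$ edges, but each of the $|K|$ absorbed components releases at least $1$ credit, so the first cycle costs at most $+2$ and produces a component with at least $8$ vertices; a second cycle through that component (whose credit is $2$) costs at most $+1$ more and yields at least $12$ vertices, hence a huge component, for a total of $+3$. If you want to salvage your pairwise-gluing scheme you would need to prove that some adjacent pair of components is always joined by at least two edges, which is false in general; otherwise you should switch to the cycle-based gluing, at which point your credit bookkeeping essentially becomes the paper's.
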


\begin{lemma}
    \label{lem:gluing:trivial-segment}
    Let $C_L$ be a huge component such that $L$ is a trivial segment of $\hG_H$ and $H\neq C_L$. Then, in polynomial time we can compute a canonical bridgeless 2-edge cover $H'$ of $G$ such that $H'$ has fewer components than $H$, $H'$ contains a huge component, and $\cost(H') \leq \cost(H)$.
   \end{lemma}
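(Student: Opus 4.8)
The plan is to reduce the lemma to a single \emph{merge claim}: for a suitable neighbor $B$ of $L$ in $\hG_H$ (one exists since $H\neq C_L$), the components $C_L$ and $C_B$ can be combined into one 2EC component by adding at most two edges of $G$ and deleting at most one edge of $C_B$, without increasing the cost. Granting the claim, let $H'$ be the cover obtained from this merge. Since $C_L$ is huge, the new component is 2EC with at least $8$ edges, hence a legal canonical component; all remaining components are untouched, $H'$ has one fewer component than $H$, and the new component still contains all of $V(C_L)$, so $H'$ is a bridgeless canonical 2-edge cover with a huge component. For the cost, the merge replaces $C_L$ (credit $2$) and $C_B$ by one component with at least $8$ edges (credit $2$), so the change in credit is $-\credit(C_B)$ and the change in cost equals (edges added) $-$ (edges deleted) $-$ $\credit(C_B)$, which the construction keeps $\leq 0$.

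First I would record the relevant structure. By \cref{prop:gluing:cut-pendent}, $L$ is a cut node or a pendant node of $\hG_H$; and since $L$ is a trivial segment, it lies on no cycle of length $\geq 3$ in $\hG_H$, while $\hG_H$ is 2EC, so each connected component $X$ of $\hG_H\setminus L$ contains exactly one node adjacent to $L$. By \cref{prop:gluing:trivial-3-matching}, for every neighbor $B$ of $L$ there is a $3$-matching in $G$ between $V(C_L)$ and $V(C_B)$; and since $H$ is bridgeless and canonical, every component of $H$ has at least $4$ vertices and each $C_B$ is either a $\mathcal{C}_i$ with $4\leq i\leq 7$ (credit $i/4$) or a 2EC component with at least $8$ edges (credit $2$). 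Now the merge. If $C_B$ has at least $8$ edges, add the two edges of a $2$-submatching of the $3$-matching: the union of two 2EC graphs and a $2$-matching between them is 2EC, and the cost change is $2-2=0$. If $C_B=\mathcal{C}_i$, I instead look for two \emph{adjacent} vertices $a,b$ of $C_B$ that are matched, necessarily to distinct $x_a,x_b\in V(C_L)$: then $C_L$, the edges $x_a a$ and $x_b b$, and the $a$--$b$ path obtained from $C_B$ by deleting the edge $ab$ together form a 2EC graph (that path closes a cycle with $x_a a$, $x_b b$, and a path inside $C_L$), and the cost change is $1-i/4\leq 0$. For $i\in\{4,5\}$ the independence number of $\mathcal{C}_i$ is $2$, so the three matched vertices already contain an adjacent pair.

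The remaining and only real case is $i\in\{6,7\}$, where the independence number is $3$, so I would instead exhibit a $4$-matching between $V(C_B)$ and $V(C_L)$ --- any four vertices of $\mathcal{C}_6$ or $\mathcal{C}_7$ contain an adjacent pair, which then yields the merge as above. Let $X$ be the component of $\hG_H\setminus L$ containing $B$ and let $W=\bigcup_{A\in X}V(C_A)$. If $|W|\geq 10$, apply the $4$-Matching Lemma (\cref{lem:4-matching}) to the partition $(W,\,V(G)\setminus W)$, both of whose parts then have at least $10$ vertices; since $L$ is adjacent to no node of $X$ other than $B$, every edge of $G$ leaving $W$ runs between $V(C_B)$ and $V(C_L)$, so the resulting matching is between $V(C_B)$ and $V(C_L)$. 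Otherwise $|W|\leq 9$, and since $|V(C_B)|\geq 6$ while every component of $H$ has at least $4$ vertices, this forces $X=\{B\}$; then $B$ is pendant with unique neighbor $L$ and all of $C_B$'s edges to the rest of $G$ go to $V(C_L)$. Here I would invoke structuredness: if the $3$-matching has no adjacent pair then its three endpoints form an independent set in $C_B$, and the cycle-vertices strictly between two consecutive matched vertices form ``gaps'' (three gaps of size $1$ for $\mathcal{C}_6$; two of size $1$ and one of size $2$ for $\mathcal{C}_7$). A size-$1$ gap vertex is adjacent in $G$ only to its two bounding matched vertices and to vertices of $C_L$; if no size-$1$ gap vertex had a neighbor in $V(C_L)$, then deleting the three matched vertices from $G$ would isolate all of them, producing at least $3$ connected components and hence a \emph{large} $3$-vertex cut --- impossible in a structured graph. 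So some size-$1$ gap vertex $v$, adjacent to matched vertices $w_i$ and $w_j$, has a neighbor $y\in V(C_L)$, and then $\{yv,\,x_i w_i\}$ --- or $\{yv,\,x_j w_j\}$ when $y=x_i$, using $x_i\neq x_j$ --- is a $2$-matching onto an adjacent pair, which completes the merge.

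It remains to check that some neighbor of $L$ is always usable. If $L$ is a pendant node with unique neighbor $B$, all of $C_L$'s outgoing edges land in $V(C_B)$, so when $C_B\in\{\mathcal{C}_6,\mathcal{C}_7\}$, either $|V(G)\setminus V(C_L)|\geq 10$ --- in which case the $4$-Matching Lemma applied to $(V(C_L),\,V(G)\setminus V(C_L))$ already gives a $4$-matching between $V(C_L)$ and $V(C_B)$ --- or $|V(G)\setminus V(C_L)|\leq 9$, which forces $\hG_H=\{L,B\}$, i.e.\ exactly the case $X=\{B\}$ treated above. If $L$ is a cut node and every neighbor $B$ satisfies $C_B\in\{\mathcal{C}_6,\mathcal{C}_7\}$, we just pick an arbitrary neighbor and run the argument of the previous paragraph. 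A single iteration of this merge produces the desired $H'$, and since all the matchings, cuts, and adjacencies involved are computable directly, the procedure runs in polynomial time. The one delicate point --- and exactly where both new ingredients, the $4$-Matching Lemma and the absence of large $3$-vertex cuts, are essential --- is securing the adjacent matched pair in the $\mathcal{C}_6$/$\mathcal{C}_7$ case; everything else is the bookkeeping already built into the credit scheme.
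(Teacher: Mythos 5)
Your high-level route is the paper's: a $3$-matching handles large components and $\cFour$/$\cFive$ via an adjacent matched pair, and for $\cSix$/$\cSeven$ you either upgrade to a $4$-matching (your partition $(W,V(G)\setminus W)$ is a clean way to make the 4-Matching Lemma applicable in the non-pendant case) or argue the pendant case by structuredness. The gap is in the pendant case ($X=\{B\}$): your claim that a size-$1$ gap vertex ``is adjacent in $G$ only to its two bounding matched vertices and to vertices of $C_L$'' is unjustified, because $G[V(C_B)]$ may contain chords, in particular chords between two gap vertices. With such chords your contradiction fails: e.g.\ for $C_B=\cSix$ with matched $v_1,v_3,v_5$ and chords $v_2v_4,v_4v_6$, deleting $\{v_1,v_3,v_5\}$ leaves the gap vertices in a single component of $3\leq 6$ vertices, i.e.\ only a \emph{small} $3$-vertex cut, which structured graphs may contain (and with two such chords the $6$-cycle need not be $\frac{5}{4}$-contractible either, since a 2-ECSS can cover $v_2,v_4,v_6$ with just $4<\frac{4}{5}\cdot 6$ inside edges). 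Worse, in that configuration all edges to $C_L$ may attach only at the independent set $\{v_1,v_3,v_5\}$, so there is no adjacent matched pair at all, and then your merge primitive (add at most two edges, delete at most one edge of $C_B$) is genuinely impossible: the two added edges must land on pairwise non-adjacent cycle vertices, and deleting any single cycle edge leaves a degree-one vertex.

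This is exactly the situation the paper's proof treats separately: if a ``nice diagonal'' (a chord between gap vertices) exists, it does not look for an adjacent matched pair but replaces all of $E(C_A)$ by a Hamiltonian path of $G[V(C_A)]$ between two matched vertices that routes through the chord (net $+1$ edge, still paid for by the released $\nicefrac{i}{4}$ credit); only when no such diagonal exists does it derive a contradiction, via $\alpha$-contractibility of $C_A$ (your large-$3$-cut argument would also work in that diagonal-free case). So to close your proof you must enlarge the merge operation to include this Hamiltonian-path rewiring (or give an equivalent treatment of chords among the unmatched vertices); as written, Case (ii) does not go through. The remaining parts --- the cost bookkeeping, the $\geq 8$-edge and $\cFour$/$\cFive$ cases, and the $4$-matching argument when $|W|\geq 10$ --- are correct.
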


\begin{lemma}
    \label{lem:gluing:non-trivial-segment}
    Let $C_L$ be a huge component such that $L$ is part of a non-trivial segment of $\hG_H$. Then, in polynomial time we can compute a canonical bridgeless 2-edge cover $H'$ of $H$ such that $H'$ has fewer components than $H$, $H'$ contains a huge component, and $\cost(H') \leq \cost(H)$.
\end{lemma}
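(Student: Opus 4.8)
The plan is to work inside the non-trivial segment $S$ of $\hG_H$ that contains the huge node $L$, and to absorb one neighboring node of $L$ within $S$ into $C_L$ in each application, producing a new bridgeless canonical $2$-edge cover with one fewer component and no increase in cost. First I would recall that $S$ is $2$-node-connected with at least $3$ nodes, so in $\hG_H[V(S)]$ there is a cycle through $L$; pick a shortest such cycle, or more conveniently pick a node $A \neq L$ in $S$ and two internally-node-disjoint paths from $L$ to $A$ inside $S$ (these exist by Menger applied to the $2$-node-connected $S$). This gives me a cycle $\mathcal{Z}$ in $\hG_H$ passing through $L$. The idea is to take the edges of $G$ corresponding to the edges of $\mathcal{Z}$ and add them to $H$: this merges all the components corresponding to nodes on $\mathcal{Z}$ into a single $2$EC component $C'$ (a cycle of blocks through $G$ becomes $2$EC), which is huge since it contains $C_L$. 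The number of components drops by $(|V(\mathcal{Z})| - 1) \geq 1$.

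Next comes the accounting. Let $\mathcal{Z}$ pass through nodes $L = A_0, A_1, \dots, A_{k-1}$ with $k \geq 2$ (so we use $k$ edges of $G$ and remove $k-1$ components besides $L$). Each absorbed component $C_{A_j}$ ($j \geq 1$) contributes at least $\nicefrac{i}{4} \cdot i \geq 1$ credit if it is a $\mathcal{C}_i$ (in fact $\geq 1$ since $i \geq 4$) and $2$ credits if it is large; also $C_L$ carries $2$ credits. We add $k$ edges at cost $k$ each, and the merged component $C'$ needs only $2$ credits (it is large). So the change in cost is $k - \big(\sum_{j \geq 1} \credit(C_{A_j}) + \credit(C_L) - 2\big) = k - \sum_{j\geq 1}\credit(C_{A_j})$. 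Since each absorbed component has credit at least $1$, this is $\leq k - (k-1) = 1$, which is not quite good enough. The fix — and here the $4$-matching lemma and the non-triviality of $S$ do the real work — is to avoid paying for a whole cycle. Instead, take a single neighbor $A$ of $L$ inside $S$. Because $S$ is $2$-node-connected and $L$ is huge (at least $10$ vertices) and the rest of $G$ has at least $10$ vertices (as $H \neq C_L$ and, one argues, the complement is large — if it were tiny we'd contradict structuredness via a small contractible subgraph or handle it directly), the $4$-Matching Lemma gives $4$ edges of $G$ between $V(C_L)$ and $V(G)\setminus V(C_L)$; route enough of these, together with the $2$-edge-connectivity of $\hG_H[V(S)]$, to merge $C_L$ with just one neighbor using the credit of that neighbor plus at most one unit of slack already present, keeping the merged block canonical (pendant/non-pendant block edge-count conditions are met because the absorbed component had $\geq 4$ edges).

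So the cleaner route I would actually write up: pick a neighbor $A$ of $L$ in $S$. If $C_A$ is a cycle $\mathcal{C}_i$, use the $2$-node-connectivity of $S$ to find a path in $\hG_H$ from $L$ to $A$ avoiding the edge $LA$... no — better: since $S$ is $2$-node-connected, $L$ has at least two neighbors $A, A'$ in $S$, and there is an $L$–$A$ path $P$ in $S - A'$... I will instead just take two parallel edges or a short cycle through $L$ and argue that among the absorbed components at least one is itself huge or the cycle has length $2$. Honestly the main obstacle is precisely this: ensuring the number of edges added is strictly less than the total credit freed, i.e.\ that we never pay for a long cycle of cheap ($\mathcal{C}_4$, credit $1$) components. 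I expect the intended argument uses that $\hG_H$ restricted to $S$ has a node of degree $\geq 2$ adjacent to $L$ so that we can merge $L$ with one neighbor using only the edge(s) strictly inside a small cycle, charging $2$ edges against the $\geq 2$ credits of $C_L$'s neighbor plus slack, or uses the $4$-matching directly between $C_L$ and one neighbor when that neighbor is itself huge. I would structure the final proof as a short case analysis on whether some neighbor of $L$ in $S$ is huge (then merge those two huge components directly, trivially affordable) versus all neighbors being small cycles (then the rich connectivity of $S$ plus the $4$-matching lemma lets us form a short cycle whose absorbed-credit exceeds its edge count), maintaining canonicity throughout by noting every block involved has at least $4$ edges and the merged huge component still has $\geq 8$ edges, hence $2$ credits suffice.
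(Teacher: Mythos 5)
Your write-up correctly diagnoses the central difficulty (a cycle through $L$ costs $|K|$ edges but may only free $|K|-1$ credits when the absorbed components are cheap cycles such as a $\cFour$), but it does not resolve it, and the fixes you sketch do not work as stated. Merging $C_L$ with a single huge/large neighbor ``directly'' is not trivially affordable: adjacency of $L$ and $A$ in $\hG_H$ guarantees only one edge between $C_L$ and $C_A$, and neither the 4-Matching Lemma (applied to the cut around $V(C_L)$) nor 2-node-connectivity of $S$ gives you two edges between those two specific components, so you still need a cycle through other nodes and you are back to the same accounting. Likewise, ``a short cycle whose absorbed-credit exceeds its edge count'' is exactly what fails when the neighbors are $\cFour$'s with credit $1$: a cycle of length $3$ absorbs $2$ credits but costs $3$ edges, and there is no ``slack already present'' to invoke, since the credit invariant must be re-established on the new huge component (it needs its $2$ credits). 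So the proposal identifies the obstacle but leaves the problematic case unproved.

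The paper closes this gap with two ideas that are absent from your proposal. First, if the segment contains no $\cFour$ or $\cFive$, then every node of $S$ carries credit at least $\tfrac{3}{2}$ (it is large, a $\cSix$, or a $\cSeven$); after discarding parallel edges of $S$ one finds a cycle $K$ through $L$ of length at least $3$, and $\tfrac{3}{2}(|K|-1)\geq |K|$ makes the cycle pay for itself --- your estimate used only ``credit $\geq 1$'' and so missed this. Second, and this is the crux, when $S$ does contain a $\cFour$/$\cFive$ node $A$, one does not just add a cycle: \cref{lem:gluing:cycle-through-large} produces a cycle $K$ through $L$ and $A$ whose two edges at $A$ land on vertices $u,v$ such that either (a) $G[V(C_A)]$ has a $u$--$v$ Hamiltonian path, so the old edges of $C_A$ are deleted and its vertices are re-threaded along the path, saving one edge ($|H'|\leq |H|+|K|-1$), or (b) an additional off-cycle component $D$ can be rewired together with $C_A$ using exactly $|E(C_A)|+|E(C_D)|$ new edges, so $D$'s credit is gained at zero marginal edge cost. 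Establishing (a)/(b) is where the 3-matching lemma inside segments, the 4-Matching Lemma, contractibility of $G$, and the $\cFour$--$\cFive$ merging proposition are actually used; none of this machinery, nor any substitute for it, appears in your argument, so the proof as proposed is incomplete.
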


As explained before, the above three lemmas immediately imply~\cref{lem:gluing:main}. All that remains is to prove the above three lemmas, which we do in the next three subsections.

\subsection{Proof of~\cref{lem:gluing:obtaining-huge-component}}
\begin{proof} 
Without loss of generality, $H$ has at least two components; if $H$ consists of a single component, it has to be huge as $\abs{V(H)}=\abs{V(G)}\geq \frac 4 \varepsilon \geq 10$. 
Let $A$ be a node in $\hG_H$.
    Let $K$ be some cycle in $\hG_H$ containing $A$, which must exist (and can be easily found) as $\hG_H$ is 2EC and has at least 2 nodes.

    Let $H_1\coloneq H\cup K$ (i.e., add the edges of the cycle $K$ to $H$). Observe that $H_1$ is a canonical bridgeless 2-edge cover with fewer components than $H$, and the newly created component, call it $C_B$, containing the vertices of $C_A$ is large ($\geq 8$ vertices: as each component of $H$ contains at least $4$ vertices and $|K|\geq 2$). Furthermore, $|H_1| = |H| + |K|$ and $\credit(H_1) \leq \credit(H) - |K| + 2$, because every component in $K$ has a credit of at least $1$ in $H$ and the newly created large component receives a credit of $2$. 
    If $H_1$ consists of only a single component, namely $C_B$, then $C_B$ must be huge and we are immediately done.

    Otherwise, arguing as before, we can find another cycle $K'$ through $B$ in $\hG_{H_1}$ and add $K'$ to $H_1$ to obtain $H_2$, i.e., $H_2 \coloneq H \cup K \cup K'$.
    Observe that $H_2$ is again a canonical bridgeless 2-edge cover and the newly formed component (containing the vertices of $B$) has at least 12 vertices, and is huge.
    We have $|H_2| = |H| + |K| + |K'|$ and $\credit(H_2) \leq 
    \credit(H_1) -(|K'|+1) + 2$ (since every component in $H_1$ has a credit of at least 1,  $\credit(B)=2$, and the newly formed component receives 2 credits). 
    Plugging the upper bound on $\credit(H_1)$ from above, we have $\credit(H_2) \leq \credit(H) -|K|+2 -(|K'|+1) + 2=\credit(H) -|K|-|K'|+3$. Thus, $\cost(H_2)=|H_2|+\credit(H_2)\leq|H| + |K| + |K'|+\credit(H) -|K|-|K'|+3=|H|+\credit(H)+3=\cost(H)+3$, which completes the proof.
\end{proof}

\subsection{Proof of~\cref{lem:gluing:trivial-segment}}

\begin{figure}[tb]
    \centering
    \begin{subfigure}[t]{0.31\textwidth}
        \centering
        \begin{tikzpicture}[scale=0.5]
            \node[comp] (l) at (0,0) {$L$};
            \node[vx] (v1) at (5,1) {};
            \node[vx] (v2) at (4,1) {};
            \node[vx] (v3) at (4,0) {};
            \node[vx] (v4) at (4,-1) {};
            \node[vx] (v5) at (5,-1) {};
            \node[vx] (v6) at (6,-1) {};
            \node[vx] (v7) at (6,1) {};

            \node at (5,0) {$A$};

            \draw[edge] (v1) -- (v2) -- (v3) -- (v4) -- (v5) -- (v6) -- (v7) -- (v1);

            \draw[ledge,buy] (l) edge[bend right=50] (v6);
            \draw[ledge,buy] (l) edge[bend left=50] (v7);
            \draw[ledge] (l) edge[bend right=10] (v4);
            \draw[ledge] (l) edge[bend left=10] (v2);
            
            \draw[edge,sell] (v6) -- (v7);

            \node[comp] (b) at (6,3) {$B$};
            \draw[ledge] (v7) edge[bend left=0] (b);

            \node[comp] (c1) at (-2,1.5) {};
            \node[comp] (c2) at (-2,-1.5) {};

             \draw[ledge] (l) edge[bend left=20] (c1);
             \draw[ledge] (l) -- (c1);
             \draw[ledge] (l) edge[bend right=20] (c1);

             \draw[ledge] (l) edge[bend left=20] (c2);
             \draw[ledge] (l) -- (c2);
             \draw[ledge] (l) edge[bend right=20] (c2);
            
        \end{tikzpicture}
        \caption{$C_A$ is a $\cSeven$ and there is a 4-matching between vertices of $C_L$ and $C_A$.}
        \label{fig:trivial-case-1}
    \end{subfigure}
    \hfill
    \begin{subfigure}[t]{0.31\textwidth}
        \centering
        \begin{tikzpicture}[scale=0.5]
            \node[comp] (l) at (0,0) {$L$};
            \node[vx] (v1) at (5,1) {};
            \node[vx] (v2) at (4,1) {};
            \node[vx] (v3) at (4,0) {};
            \node[vx] (v4) at (4,-1) {};
            \node[vx] (v5) at (5,-1) {};
            \node[vx] (v6) at (6,-1) {};
            \node[vx] (v7) at (6,1) {};

            \node at (5,0) {$A$};

            \draw[edge] (v1) -- (v2) -- (v3) -- (v4) -- (v5) -- (v6) -- (v7) -- (v1);

            \draw[ledge,buy] (l) edge[bend left=50] (v7);
            \draw[ledge] (l) edge[bend right=10] (v4);
            \draw[ledge,buy] (l) edge[bend left=10] (v2);
            \draw[ledge,buy] (v1) -- (v3);
            
            \draw[edge,sell] (v2) -- (v3);
            \draw[edge,sell] (v1) -- (v7);

            \node[comp] (c1) at (-2,1.5) {};
            \node[comp] (c2) at (-2,-1.5) {};

             \draw[ledge] (l) edge[bend left=20] (c1);
             \draw[ledge] (l) -- (c1);
             \draw[ledge] (l) edge[bend right=20] (c1);

             \draw[ledge] (l) edge[bend left=20] (c2);
             \draw[ledge] (l) -- (c2);
             \draw[ledge] (l) edge[bend right=20] (c2);
            
        \end{tikzpicture}
        \caption{$C_A$ is a $\cSeven$ and there is a nice diagonal in $C_A$.}
        \label{fig:trivial-case-2}
    \end{subfigure}
    \hfill
    \begin{subfigure}[t]{0.31\textwidth}
        \centering
        \begin{tikzpicture}[scale=0.5]
            \node[comp] (l) at (0,0) {$L$};
            \node[vx] (v1) at (5,1) {};
            \node[vx] (v2) at (4,1) {};
            \node[vx] (v3) at (4,0) {};
            \node[vx] (v4) at (4,-1) {};
            \node[vx] (v5) at (5,-1) {};
            \node[vx] (v6) at (6,-1) {};
            \node[vx] (v7) at (6,1) {};

            \node at (6.5,0) {$A$};

            \draw[edge] (v1) -- (v2) -- (v3) -- (v4) -- (v5) -- (v6) -- (v7) -- (v1);

            \draw[ledge] (l) edge[bend left=50] (v7);
            \draw[ledge] (l) edge[bend right=10] (v4);
            \draw[ledge] (l) edge[bend left=10] (v2);
           
            \draw[ledge,lightblue] (v6) -- (v1)  (v6) -- (v3);
           \draw[edge,lightblue] (v2) -- (v3)
            (v1) -- (v7)
            (v4) -- (v5) -- (v6)
            ;

            \node[comp] (c1) at (-2,1.5) {};
            \node[comp] (c2) at (-2,-1.5) {};

             \draw[ledge] (l) edge[bend left=20] (c1);
             \draw[ledge] (l) -- (c1);
             \draw[ledge] (l) edge[bend right=20] (c1);

             \draw[ledge] (l) edge[bend left=20] (c2);
             \draw[ledge] (l) -- (c2);
             \draw[ledge] (l) edge[bend right=20] (c2);
            
        \end{tikzpicture}
         \caption{$C_A$ is a $\cSeven$ and $C_A$ is contractible as every solution must buy at least $6$ edges of $G[V(C_A)]$ (e.g., the edges in blue).}
         \label{fig:trivial-case-3}
    \end{subfigure}
    \caption{Illustrations of different cases considered in the proof of \Cref{lem:gluing:trivial-segment}. 
    Nodes of $\hG_H$ are depicted by a circle with black border, and vertices of $G$ are depicted by a black, filled circle.
    Edges of $H$ are black and solid, edges of $G \setminus H$ are gray and dashed, edges of $G \setminus H$ that are added to $H$ to obtain $H'$ are red, and edges that are removed from $H$ to obtain $H'$ are green. }
    \label{fig:trivial-segment}
\end{figure}

\begin{proof}
Let $C_L$ be a huge component such that the node $L$ is a trivial segment of $\hG_H$ and $H\neq C_L$. Our goal is to construct a canonical bridgeless 2-edge-cover $H'$ of $G$ with fewer components compared to $H$ such that $H'$ contains a huge component and $\cost(H')\leq\cost(H)$.

Since $H\neq C_L$, $\hG_H$ has at least 2 nodes, and as $\hG_H$ is 2EC, $L$ has a neighbor, say $A$.
By \Cref{prop:gluing:trivial-3-matching}, we know that there must be a 3-matching $M_3$ between $C_L$ and $C_A$.
First, note that if $C_A$ is a large component, we simply add any two edges of $M_3$ to $H$ to obtain $H'$, and it can be easily checked that $H'$ meets our goal as $\cost(H') - \cost(H) = (|H'| - |H|) + (\credit(H') - \credit(H))  \leq 2 + (2-2-2)\leq 0$ (we add 2 edges, $\credit(C_L)=\credit(C_A)=2$, and, in $H'$, $C_L$ and $C_A$ combine to form a single component of credit $2$). Thus, from now on, we assume $C_A$ is $\mathcal C_i$ for some $i\in\{4,5,6,7\}$.

\textbf{Condition ($\ast$):}
If there are two distinct edges between $L$ and $A$, say $e_1$ and $e_2$, that are incident to two distinct vertices in $C_A$, say $u$ and $v$, such that there is a Hamiltonian Path $P$ of $G[V(C_A)]$ from $u$ to $v$, then we set $H' \coloneq (H \setminus E(C_A)) \cup \{e_1, e_2\} \cup P$, which is clearly a canonical bridgeless 2-edge-cover of $G$ with fewer components than $H$. Furthermore, $\cost(H') - \cost(H) = (|H'| - |H|) + (\credit(H') - \credit(H))  = (2 - 1) + (2-2-\credit(C_A))\leq 0$, since we added two edges $e_1$ and $e_2$, effectively removed 1 edge as $|C_A| - |P|=1$, $\credit(C_L)=2$ and $\credit(C_A)\geq 1$ in $H$, whereas in $H'$ those combine to form a single component with credit 2. Thus, $\cost(H') \leq \cost(H)$, meeting our goal. 

We now show that Condition ($\ast$) always holds. If $C_A$ is a $\cFour$ or a $\cFive$, then in either case, there must be two distinct edges in the matching $M_3$, say $e_1$ and $e_2$, whose endpoints in $C_A$ are adjacent via an edge $e$ in $C_A$. Thus, Condition ($\ast$) holds by setting $P=E(C_A)\setminus \{e\}$. Otherwise,  $C_A$ is a $\cSix$ or $\cSeven$. If $A$ is not a pendant node in $\hG_H$ (i.e., it has other neighbors other than $A$), then, \Cref{lem:4-matching} is applicable, implying a 4-matching $M_4$ between $C_A$ and $C_L$.
This is because we can partition $V(G)$ in two parts to have $V(C_L)$ in one part and $V(C_A)\cup V(C_B)$ in another part (see~\cref{prop:gluing:cut-pendent}) such that the only edges crossing the parts are between $V(C_L)$ and $V(C_A)$, and $|V(C_A)| + |V(C_B)| \geq 6+4=10$ and $|V(C_L)| \geq 10$, because $L$ is huge. Arguing as before, there must exist edges $e_1$ and $e_2$ whose endpoints in $C_A$ are adjacent via an edge $e$ in $C_A$. Again, Condition ($\ast$) holds by setting $P= E(C_A)\setminus \{e\}$. An illustration of this case is given in \Cref{fig:trivial-case-1}.

Finally, we consider the case that $C_A$ is a $\cSix$ or $\cSeven$ and $A$ is pendant node of $\hG_H$, i.e., the only neighbor it has is $L$. Assuming that Condition ($\ast$) does not hold, we prove that $C_A$ is contractible, which contradicts the fact that $G$ is structured. Let the vertices of the cycle $C_A$ be labeled as  $v_1 - v_2 - v_3 - v_4 - v_5 - v_6 - v_1$ if $C_A$ is a $\cSix$ or $v_1 - v_2 - v_3 - v_4 - v_5 - v_6 - v_7 - v_1$ if $C_A$ is a $C_7$.
    Since Condition ($\ast$) does not hold and there is the 3-matching $M_3$ between $C_L$ and $C_A$, $C_L$ is adjacent to exactly three mutually non-adjacent vertices of $C_A$ via $M_3$, say w.l.o.g.\ $\{v_1, v_3, v_5\}$ (up to symmetry this is the only possibility in both the $\cSix$ and $\cSeven$ cases).
We say an edge in $G[V(C_A)]$ is a nice diagonal if both its endpoints are in $\{v_2,v_4,v_6\}$. 
    Therefore, the set of nice diagonals are $\{ v_2 v_4, v_4 v_6, v_6 v_1 \}$.
If there exists a nice diagonal $e$ in $G[V(C_A)]$,
    then observe that there are two distinct edges in $M_3$ that are incident to say $u$ and $v$ in $C_A$, and there is a Hamiltonian Path $P\subseteq \{e\}\cup E(C_A)$ from $u$ to $v$ in $G[V(C_A)]$. For example, if the diagonal $v_2 v_4$ exists in $G$ and $C_A$ is $\cSix$, then the desired Hamiltonian path is $v_3 -v_2-v_4 -v_5 -v_6- v_1$ (for $\cSix$, up to symmetry, this is the only case; the $\cSeven$ case is left to the reader to verify, a simple exercise). Hence, Condition ($\ast$) holds, a contradiction. An illustration of this case is given in \Cref{fig:trivial-case-2}.
    Otherwise, if none of these nice diagonals are in $G$, we can conclude that $C_A$ is contractible: The three sets of edges of $G$, namely, edges incident on $v_2$, edges incident on $v_4$, and edges incident on $v_6$ are all inside $G[V(C_A)]$ and mutually disjoint. 
    As any 2-ECSS of $G$ must contain at least $2$ edges incident on each vertex, it has to contain at least $6$ edges from $G[V(C_A)]$, and hence $C_A$ is contractible. An illustration of this case is given in \Cref{fig:trivial-case-3}.
\end{proof}

\subsection{Proof of~\cref{lem:gluing:non-trivial-segment}}

We will make use of the following lemma, which we will repeatedly apply to glue a huge component with all the $\cFour$ and $\cFive$ components 
(and potentially also larger components)
in a non-trivial segment.

\begin{lemma}
    \label{lem:gluing:cycle-through-large}
    Let $S$ be a non-trivial segment of $\hG_H$ containing nodes $L$ and $A$ such that $C_L$ is huge and $C_A$ is a $\cFour$ or $\cFive$. 
    We can compute in polynomial time a cycle $K$ in $\hG_H$ that passes through the nodes $L$ and $A$, where the two edges of the cycle $K$ entering the node $A$ are incident on two distinct vertices $u$ and $v$ of $C_A$ in $H$, and at least one of the following two conditions holds.
    \begin{enumerate}[(a)]
        \item We can find a Hamiltonian Path from $u$ to $v$ in $G[V(C_A)]$. 
        \item We can find a node $D$ of $\hG_H$ that is not in the cycle $K$ and a set of edges $F \subseteq E(G)$ disjoint to the edges of the cycle $K$ such that $|F| = |E(C_A)| + |E(C_D)|$ and $F \cup \{uv\}$ forms a 2-ECSS of $G[V(C_A)\cup V(C_D)]$.
    \end{enumerate}
\end{lemma}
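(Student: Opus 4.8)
The plan is to exploit two-node-connectivity of the segment $S$ to route a cycle through both $L$ and $A$, and then argue structurally about the two edges hitting the small cycle $C_A$. Since $L$ and $A$ lie in the same non-trivial segment $S$, which is $2$-node-connected, there exist two node-disjoint paths between $L$ and $A$ in $\hG_H$ (Menger), and concatenating them yields a cycle $K$ through $L$ and $A$; such a cycle is found in polynomial time via standard flow techniques. Let $u,v \in V(C_A)$ be the endpoints in $C_A$ of the two edges of $K$ incident on the node $A$. Note $u \neq v$: the two edges of $K$ at $A$ are distinct, and if they shared an endpoint in $C_A$ we could reroute to avoid the issue — more carefully, we may choose $K$ so that its two edges at $A$ land on distinct vertices of $C_A$ (if every cycle through $L$ and $A$ uses only one vertex of $C_A$, that single vertex would be a cut-like obstruction we can rule out using the $3$-matching between adjacent components guaranteed by the structure of $\hG_H$, or simply by picking the two Menger paths to enter $A$ through different edges, which is possible unless $A$ has a single edge to the rest of $S$, contradicting $2$-node-connectedness of $S$).

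Next I would split into cases on whether $G[V(C_A)]$ contains a Hamiltonian path from $u$ to $v$. If it does, condition~(a) holds and we are done. The substantive case is when no such Hamiltonian path exists. Since $C_A$ is a $\cFour$ or $\cFive$, the cycle $C_A$ itself minus one edge is a Hamiltonian path between the two endpoints of that edge; so if $u$ and $v$ are adjacent on $C_A$ we are in case~(a) immediately. Thus we may assume $u,v$ are the two non-adjacent vertices of a $\cFour$, or a non-adjacent pair in a $\cFive$. In the $\cFive$ case, for any non-adjacent pair $u,v$ the cycle $C_A$ offers a Hamiltonian $u$–$v$ path along the longer arc (length $3$), so actually case~(a) always holds for $\cFive$; hence the only genuinely remaining situation is $C_A = \cFour$ with $u,v$ the diagonal pair. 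Here I would invoke the absence of small contractible subgraphs and of large $3$-vertex cuts: I want to find a node $D$ of $\hG_H$ off the cycle $K$ such that $C_A \cup C_D$ together with the single edge $uv$ (from $G$, using that $\{u,v\}$ is a $2$-vertex cut candidate — if $uv \in E(G)$ it is an irrelevant edge, so actually $uv \notin E(G)$; rather $uv$ denotes a "virtual" edge we are allowed because the two cycle edges at $A$ act as the two connections) admit a $2$-ECSS of $G[V(C_A)\cup V(C_D)]$ using exactly $|E(C_A)|+|E(C_D)|$ edges.

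The heart of the argument — and what I expect to be the main obstacle — is producing the node $D$ and the edge set $F$ in case~(b). The idea is: because $C_A$ is not contractible (as $G$ is structured and $|E(C_A)| \le 5 \le 2/\varepsilon$), there must be enough edges leaving $V(C_A)$ that the "local" optimum on $C_A$ together with some neighbor is cheaper than keeping $C_A$ intact plus linking; concretely, non-contractibility of a $\cFour$ means some $2$-ECSS uses at most $3$ edges inside $V(C_A)$, which forces at least $5$ edges of $G$ leaving $V(C_A)$ (two at each of at least ... ) — these edges reach various nodes, and by an averaging/pigeonhole argument one neighbor node $D$ (which we can pick outside $K$, using that $K$ has a bounded number of nodes and $S$ is richly connected, or by rerouting $K$ to avoid $D$) receives two of them, say at vertices $a,b \in V(C_A)$. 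Then the two "entry" edges at $A$ into $K$, the two edges to $D$, together with a spanning structure of $C_D$ and a Hamiltonian path of $C_A$ between the appropriate vertices, can be assembled into $F$ with $F \cup \{uv\}$ a $2$-ECSS of $G[V(C_A) \cup V(C_D)]$ on exactly $|E(C_A)| + |E(C_D)|$ edges. Making the bookkeeping work — ensuring $D \notin V(K)$, ensuring $F \cap E(K) = \emptyset$, and ensuring the edge count is exactly (not merely at most) $|E(C_A)| + |E(C_D)|$ so that no credit is lost — is the delicate part; I would handle it by a short exhaustive check over the few possible adjacency patterns of a non-contractible $\cFour$, mirroring the case analysis already used in the proof of \Cref{lem:gluing:trivial-segment}.
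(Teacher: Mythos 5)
There is a genuine gap, and it sits at the center of your case analysis. You claim that for a $\cFive$ and any non-adjacent pair $u,v$ the cycle itself ``offers a Hamiltonian $u$--$v$ path along the longer arc (length $3$)''. This is false: the longer arc between non-adjacent vertices of a $5$-cycle visits only four of the five vertices, so it is not a Hamiltonian path of $G[V(C_A)]$. Without chords, a Hamiltonian $u$--$v$ path in a chordless $\cFive$ exists only when $u$ and $v$ are adjacent on the cycle. Consequently your conclusion that ``the only genuinely remaining situation is $C_A=\cFour$ with $u,v$ the diagonal pair'' is wrong, and the entire hard part of the lemma is skipped. In the paper, the difficult cases are exactly the $\cFive$ configurations: one must use the segment $3$-matching (\cref{lem:gluing:3-matching-segment}) to know that three vertices of $C_A$ see other nodes of $S$, then reroute the cycle through auxiliary nodes $A'$, $B$ (using $2$-node-connectivity of $S$) to force the two entries into $C_A$ to be adjacent, and in the residual configuration (only $v_1,v_3,v_4$ have edges into $S$) invoke non-contractibility to force the chord $v_2v_5$, the $4$-matching lemma to push back into the ``three consecutive vertices'' case, and finally \cref{lem:gluing:c4-c5} for the one situation where a trivial-segment $\cFour$ neighbor $D$ exists — which is the \emph{only} place condition (b) is ever used. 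Your proposal never engages with any of this.

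Two further points. First, your assertion that one may always choose $K$ so that its two edges at $A$ land on distinct vertices of $C_A$ is not justified; the paper devotes considerable effort (Cases 1.2.1, 1.2.2, 2.1, 2.2) to the possibility that every available pair of internally node-disjoint $L$--$A$ paths enters $C_A$ at the same vertex or at a non-adjacent pair, resolving it by splicing in a path $P'$ from a neighbor $A'$ of another vertex of $C_A$, not by a generic Menger argument. Second, your sketch of condition (b) for a $\cFour$ $C_A$ with an arbitrary neighbor $D$ does not work as stated: a Hamiltonian path of a $\cFour$ plus two matching edges to $C_D$ plus $E(C_D)$ already has $|E(C_D)|+5>|E(C_A)|+|E(C_D)|$ edges, so the exact edge count required by (b) fails, and the averaging/pigeonhole step producing two edges into a single node $D$ off the cycle is not established. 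The paper sidesteps this entirely because (b) is only needed when $C_A$ is a $\cFive$ and $C_D$ is a $\cFour$ hanging off the cut node $A$, where the $9$-edge set is supplied by \cref{lem:gluing:c4-c5}.
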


Using the above lemma, we prove~\cref{lem:gluing:non-trivial-segment}, and then prove~\cref{lem:gluing:cycle-through-large}.

\begin{proof}[Proof of \Cref{lem:gluing:non-trivial-segment}]
    We are given a non-trivial segment $S$ of $\hG_H$ which has a node $L$ such that $C_L$ is a huge component. 
    Our goal is to construct a canonical bridgeless 2-edge-cover $H'$ of $G$ with fewer components compared to $H$ such that $H'$ contains a huge component and $\cost(H')\leq\cost(H)$.

    If $S$ also contains a node $A$ corresponding to a $\cFour$ or a $\cFive$, then we apply \Cref{lem:gluing:cycle-through-large} to obtain a cycle $K$ through $L$ and $A$ such that $K$ is incident to the vertices $u$ and $v$ of $C_A$. Now either Case (a) or Case (b) in the lemma statement holds.
    
    If Case~(a) holds, then there is a Hamiltonian Path $P$ from $u$ to $v$ in $G[V(C_A)]$, and we set $H' \coloneq (H \setminus E(C_A)) \cup K \cup P$.
    Notice that $H'$ is a canonical bridgeless 2-edge cover and has fewer components compared to $H$; in $H'$ we have created a cycle using $K$ and $P$ that passes through some 2EC components (including $L$) and all the vertices of $C_A$ using the Hamiltonian path $P$.
    Now, observe that $|H'| \leq |H| + |K|-1$, and as each component of $H$ has a credit of at least 1, the large component corresponding to $L$ has credit $2$, and the newly created huge component in $H'$ receives a credit of 2, we have $\credit(H') \leq \credit(H) - (|K| - 1 + 2) + 2=\credit(H)-|K|+1$. Therefore, $\cost(H')=|H'| + \credit(H') \leq  |H| +  |K|-1 +\credit(H)-|K|+1=|H|+\credit(H)=\cost(H)$.
    
    Otherwise, Case~(b) holds, and there exist a node $D$ which is not in the cycle $K$, and a set of edges $F$ disjoint to $K$, such that $|F| = |E(C_A)| + |E(C_D)|$ and $F \cup \{uv\}$ forms a 2-ECSS of $G[V(C_A) \cup V(C_D)]$. Here, we set $H' \coloneq (H \setminus (E(C_A) \cup E(C_D))) \cup K \cup F$. Like before, $H'$ is bridgeless and has fewer components compared to $H$ for the following reason. From the condition of Case~(b), the vertices of $C_A$ and $C_D$ remain 2EC in $H'$ with the edges of $F$ and an additional edge $uv$; in $H'$ the work of the edge $uv$ is instead done by a $u-v$ path induced by the cycle $K$. Also, the cycle $K$ 2-edge connects the various 2EC components (including $L$) with the vertices of $C_A$ and $C_D$ with the help of $F$. Thus, no bridges are introduced.   
    Furthermore, observe that $|H'| \leq |H| + |K|$ as $|F| =  |E(C_A)| + |E(C_D)|$, and since each component in $H$ has at least a credit of 1, the huge component corresponding to $L$ has credit $2$, $C_D$ is not a component of $H'$ and was not on $K$, and the newly created huge component receives credit 2,
    we have $\credit(H') \leq \credit(H) - (|K| - 1 + 2 + 1)  + 2=\credit(H) - |K|$. Therefore, $\cost(H')=|H'| + \credit(H') \leq  |H| + |K| +\credit(H) - |K|= |H| + \credit(H) = \cost(H)$.

    Hence, we can assume that $S$ does not contain a node corresponding to a $\cFour$ or a $\cFive$.
    Thus, all nodes of $S$ correspond to components that are either large, a $\cSix$ or a $\cSeven$. Temporarily, we drop the parallel edges from the segment $S$ in the component graph $\hG_H$ keeping precisely one edge between adjacent nodes in $S$. Observe that $S$ continues to be 2-node connected even after dropping the parallel edges. Now, the smallest cycle through $L$ in $S$ in the component graph has length at least $3$ as there are no parallel edges.
    Compute any cycle $K$ of length at least $3$ through $L$ in the segment $S$ and set $H' \coloneq H \cup K$. $H'$ is clearly a canonical bridgeless 2-edge cover of $G$ with fewer components.  We have $|H'| = |H| + |K|$ and since
    all components incident on the cycle $K$ have a credit of at least $\frac 3 2$ and $L$ has a credit of $2$, and they merge into a single huge component in $H'$ which receives credit 2, we have $\credit(H') \leq \credit(H) - (\frac{3}{2}(|K|-1) + 2) + 2=\credit(H) - \frac{3}{2}|K|+\frac 3 2 \leq \credit(H) - |K|$, as $|K| \geq 3$ implies $\frac {|K|}2 \geq \frac3 2$. Combining the above facts, we have $\cost(H')=|H'| + \credit(H') 
    \leq|H| + |K| + \credit(H) -|K| \leq |H| + \credit(H)=\cost(H)$.
\end{proof}

\subsubsection{Proof of Lemma~\ref{lem:gluing:cycle-through-large}}
Only the proof of \Cref{lem:gluing:cycle-through-large} is left.
We will make use of the following results in our proof.

\begin{proposition}[Theorem 1 in~\cite{kawarabayashi2008improved}]
\label{prop:cycle-through-specified-edges}
Given a graph $G$ and an edge set $F \subseteq E(G)$ of constant size, we can find a cycle in $G$ that contains all edges of $F$ in polynomial
time if one exists.
\end{proposition}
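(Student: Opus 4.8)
The plan is to reduce the task to a constant number of instances of the vertex-disjoint paths problem, which is solvable in polynomial time whenever the number of terminal pairs is bounded by a constant (Robertson and Seymour's graph-minors algorithm, together with subsequent improvements). Write $F=\{e_1,\dots,e_k\}$, where $k=|F|$ is a constant; if $k\le 1$ the claim is an immediate bridge/connectivity check, so assume $k\ge 2$. First I would subdivide each edge $e_i=s_i t_i$ of $F$ once, inserting a fresh vertex $m_i$ so that $e_i$ becomes the length-two path $s_i\,m_i\,t_i$; call the resulting graph $G^+$. Since $m_i$ has degree exactly $2$ in $G^+$, a cycle of $G^+$ passes through $m_i$ if and only if it uses both edges $s_i m_i$ and $m_i t_i$, i.e., if and only if the corresponding cycle of $G$ uses $e_i$. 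Hence $G$ has a cycle through all of $F$ if and only if $G^+$ has a cycle through the vertex set $M=\{m_1,\dots,m_k\}$, and the two cycles are converted into each other in polynomial time. So it suffices to find, in polynomial time, a cycle of $G^+$ through $k$ prescribed vertices.

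For that, I would guess the combinatorial shape of such a cycle. Any cycle through $M$ meets the vertices of $M$ in some cyclic order $m_{\pi(1)},\dots,m_{\pi(k)}$, and there are only constantly many such orders, so enumerate them. For a fixed order, also guess, for each $m_{\pi(i)}$, which of its two incident edges the cycle uses to enter it and which to leave it (again only constantly many further choices), and then split $m_{\pi(i)}$ into two vertices $a_i$ (keeping the ``enter'' edge) and $b_i$ (keeping the ``leave'' edge). In the resulting graph, a cycle of $G^+$ realizing the guessed shape corresponds exactly to a family of $k$ pairwise vertex-disjoint paths $Q_1,\dots,Q_k$, where $Q_i$ runs from $b_i$ to $a_{i+1}$ (indices modulo $k$); gluing these paths to the two-edge pieces through the $m$'s and undoing the splitting and the subdivision yields the desired cycle of $G$. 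Because the vertices $a_1,b_1,\dots,a_k,b_k$ are $2k$ pairwise distinct vertices of the split graph and each is an endpoint of exactly one $Q_i$, requiring the $Q_i$ to be pairwise vertex-disjoint is precisely an instance of the vertex-disjoint paths problem with $k$ terminal pairs, which we solve in polynomial time. If every guess yields an infeasible instance, then no cycle through $F$ exists, since any such cycle would realize one of the finitely many guesses. The overall running time is a constant (depending only on $k$) times the cost of one disjoint-paths computation, hence polynomial.

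The only delicate point, and the one I would spend the write-up on, is the correctness of the reduction: one must check that a cycle through $F$ corresponds bijectively to one of the guessed disjoint-paths configurations. Here the subdivision step does the real work: after it, any two consecutive elements $m_{\pi(i)},m_{\pi(i+1)}$ of the cyclic order are distinct vertices of $G^+$ joined through at least one original vertex, so there are no degenerate length-zero connecting ``paths'' to special-case; this in particular absorbs the cases where two edges of $F$ share an endpoint and where two edges of $F$ are parallel (which produce the length-two cycle). Moreover, no $Q_i$ can reuse an edge of $F$ or an endpoint of an $F$-edge belonging to a different connecting path, since all those endpoints become terminals and vertex-disjointness forbids it. Everything genuinely hard here, namely the polynomial-time solvability for fixed $k$, is delegated to the disjoint-paths algorithm; the contribution of~\cite{kawarabayashi2008improved} is to obtain this more directly and with a better running time, but the reduction above already establishes the statement as worded.
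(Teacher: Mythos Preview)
The paper does not prove this proposition; it is quoted verbatim as an external result (Theorem~1 of~\cite{kawarabayashi2008improved}) and used as a black box. So there is no ``paper's own proof'' to compare against, and your task was really to supply any correct polynomial-time argument.

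Your reduction to the vertex-disjoint paths problem is correct and is in fact the folklore route to this statement. The subdivision step is exactly the right trick: it turns ``cycle through prescribed edges'' into ``cycle through prescribed degree-$2$ vertices,'' after which guessing the cyclic order and splitting each subdivision vertex into an in-copy and an out-copy yields a clean $k$-pair disjoint-paths instance. Your handling of the boundary cases (shared endpoints among the $e_i$, parallel edges, $k\le 1$) is sound; in particular, when three or more edges of $F$ meet at a common vertex the instance is correctly infeasible for every guess.

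One small wording issue in your last paragraph: you write that ``all those endpoints become terminals.'' The original endpoints $s_i,t_i$ of the $F$-edges do \emph{not} become terminals; only the split copies $a_i,b_i$ of the subdivision vertices do. What actually enforces the constraint you want is that each $a_i,b_i$ has degree~$1$ in the split graph, so the two halves of a subdivided $e_j$ can only appear as the first or last edge of the unique path that has $a_j$ or $b_j$ as an endpoint. The conclusion stands, but tighten the justification when you write it up.

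As you note yourself, the cited reference obtains the result more directly and with a better dependence on $k$; your reduction nonetheless establishes the proposition exactly as stated and is entirely adequate for how the paper uses it.
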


\begin{proposition}[Lemma D.8 in~\cite{GargGA23improved}; reformulated]\label{lem:gluing:c4-c5}
    Let $G$ be a graph and $H$ be a subgraph of $G$.
    Let $C$ and $C'$ be two 2EC components of $H$ such that $C$ is a $\cFive$ and $C'$ is a $\cFour$, and there is a $3$-matching between $C$ and $C'$ in $G$. 
    Then, for any two distinct vertices $x,y \in V(C)$ there exists a set $F$ of $9$ edges in $G$ such that $F \cup \{xy\}$ induces a 2-ECSS on $V(C) \cup V(C')$ in $G$.
\end{proposition}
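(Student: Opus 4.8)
The plan is to reduce the statement to a small self-contained problem about the $5$-cycle $C$, and then settle that by a short case analysis; the only mildly painful point will be one degenerate configuration of the matched vertices.

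\textbf{Setup.} Write $C$ for the given $\cFive$ and $C'$ for the $\cFour$, and let the $3$-matching be $\{a_1b_1,a_2b_2,a_3b_3\}$ with $a_i\in V(C)$ and $b_i\in V(C')$; let $d^{*}$ be the unique vertex of $C'$ missed by the matching. Since deleting one vertex from a $4$-cycle leaves a path on the remaining three, I relabel so that $b_2$ is the vertex of $C'$ opposite $d^{*}$ (hence $b_1,b_3$ are the two $C'$-neighbours of $d^{*}$). The key observation is that $C'$ admits a spanning traversal with both endpoints in $V(C)$: for $s\in\{1,3\}$ the path
\[
  R_s\;:\qquad a_s \;-\; b_s \;-\; d^{*} \;-\; b_{4-s}\;-\;b_2\;-\;a_2
\]
uses three edges of $C'$ and the two matching edges $a_sb_s,\,a_2b_2$, visits every vertex of $C'$, and has endpoints $a_2$ and $a_s$. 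Hence it suffices to find $s\in\{1,3\}$ together with a set $F_C$ of four edges of $C$ (that is, a Hamiltonian path of $C$) such that $F_C\cup\{xy\}\cup\{a_2a_s\}$ induces a $2$-edge-connected multigraph on $V(C)$: then $F:=F_C\cup E(R_s)$ has $4+5=9$ edges, and $F\cup\{xy\}$ is obtained from that $2$-edge-connected graph on $V(C)$ by replacing the virtual edge $a_2a_s$ with the internally disjoint path $R_s$ (whose internal vertices are exactly $V(C')$). By the standard fact that replacing an edge by an internally disjoint path preserves $2$-edge-connectivity, $F\cup\{xy\}$ is then $2$-edge-connected and spans $V(C)\cup V(C')$.

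\textbf{The reduced problem.} I choose $F_C$ and $s$ by cases on the position of $x,y$ around $C$, assuming $x\ne y$. If $xy\in E(C)$, delete the edge $e:=xy$; then $F_C\cup\{xy\}=E(C)$ is a $5$-cycle, so $F_C\cup\{xy\}\cup\{a_2a_s\}$ is $2$-edge-connected for either $s$, and $F\cup\{xy\}$ is the $5$-cycle $C$ with the ear $R_s$ attached at $a_2,a_s$. If $xy\notin E(C)$ but $a_2$ is $C$-adjacent to $a_1$ or $a_3$, say $a_2a_s\in E(C)$ (so $a_2a_s\ne xy$ since $xy\notin E(C)$), delete $e:=a_2a_s$; then $F_C\cup\{a_2a_s\}=E(C)$ is a $5$-cycle, adding the chord $xy$ keeps it $2$-edge-connected, and $F\cup\{xy\}$ is a Hamiltonian cycle on all nine vertices (namely $C$ with the edge $a_2a_s$ stretched through $C'$) plus the chord $xy$. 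The only remaining configuration is that $xy$ is an honest diagonal of $C$ and $a_2$ is $C$-adjacent to neither $a_1$ nor $a_3$; since the two non-neighbours of $a_2$ in $C_5$ are themselves adjacent, this forces, up to rotation and reflection, $a_2=c_1$, $\{a_1,a_3\}=\{c_3,c_4\}$, and the two unmatched vertices of $C$ to be $c_2,c_5$, where $c_1c_2c_3c_4c_5$ is the cyclic order. Here $\{x,y\}$ is one of the five diagonals $\{c_1,c_3\},\{c_1,c_4\},\{c_2,c_4\},\{c_3,c_5\},\{c_2,c_5\}$, and for each I will exhibit an explicit deleted edge and choice of $s$ so that $F_C\cup\{xy\}\cup\{a_2a_s\}$ is $2$-edge-connected (in each subcase it turns out to be a cycle with one ear attached, or two triangles sharing a vertex), after which the $R_s$-substitution finishes as above.

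\textbf{Main obstacle.} The delicate point is exactly this last configuration: both virtual edges are genuine diagonals of the $5$-cycle, the matched vertices $a_1,a_3$ are pinned opposite $a_2$, and the unmatched vertices $c_2,c_5$ threaten to end up as degree-$1$ endpoints of the Hamiltonian path $F_C$. I handle this by using the freedom in choosing $s\in\{1,3\}$ and deleting an edge of $C$ both of whose endpoints are covered by one of the two diagonals $xy$ and $a_2a_s$, so that the two diagonals jointly ``span'' the Hamiltonian path $F_C$. Verifying the five subcases is the only computational part, and it is routine: each reduces to the observation that a cycle with an ear attached, or two triangles glued at a vertex, is $2$-edge-connected. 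Everything else follows from the spanning traversal $R_s$ and the edge-replacement principle.
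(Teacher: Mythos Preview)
The paper does not actually prove this proposition; it is quoted as a black box from Lemma~D.8 of~\cite{GargGA23improved}. So there is nothing in the present paper to compare your argument against.

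That said, your proof is correct and self-contained. Routing the $\cFour$ via the spanning traversal $R_s$ (three edges of $C'$ plus two matching edges) and thereby collapsing the whole $\cFour$ side to a single virtual chord $a_2a_s$ on $V(C)$ is clean; the key point that you retain a choice $s\in\{1,3\}$, i.e.\ two candidate virtual chords, is exactly what makes the residual $5$-vertex problem tractable. Your Cases~1 and~2 dispose of everything except the configuration $a_2=c_1$, $\{a_1,a_3\}=\{c_3,c_4\}$, and in that configuration each of the five possible diagonals $\{x,y\}$ indeed admits a deleted edge and a choice of $s$ making $F_C\cup\{xy\}\cup\{a_2a_s\}$ $2$-edge-connected (for instance, for $\{x,y\}=\{c_2,c_5\}$ take $a_s=c_3$ and delete $c_1c_2$, yielding the $4$-cycle $c_2c_3c_4c_5$ with the ear $c_5\text{--}c_1\text{--}c_3$). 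The only cosmetic gap is that you announce but do not write out these five two-line checks; since each is immediate, this is not a real issue.
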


Furthermore, the following simple lemma showing the existence of certain $3$-matchings within a non-trivial segment will also be used in the proof.

\begin{lemma}
    \label{lem:gluing:3-matching-segment}
    Let $S$ be a non-trivial segment of $\hG_H$ and $A$ be a node in $S$. Then, there exists a matching of size 3 in $G$ between $V(C_A)$ and $V(S) \setminus V(C_A)$.
\end{lemma}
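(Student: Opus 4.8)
\textbf{Proof plan for Lemma~\ref{lem:gluing:3-matching-segment}.}
The goal is to find a 3-matching in $G$ between $V(C_A)$ and $V(S)\setminus V(C_A)$, where $S$ is a non-trivial segment and $A\in S$. My plan is to set up a vertex bipartition of $V(G)$ that isolates $V(C_A)$ on one side and then invoke the 3-Matching Lemma (\cref{lem:3-matching}). Concretely, let $V_1 \coloneqq V(C_A)$ and $V_2 \coloneqq V(G)\setminus V(C_A)$. Since $C_A$ is a 2EC component of $H$ (a canonical 2-edge cover), it has at least $4$ vertices, so $|V_1|\ge 4>3$ and the size-3 exceptional clause of \cref{lem:3-matching} is vacuous for $V_1$. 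For $V_2$, I need $|V_2|\ge 3$, or $|V_2|=3$ with $G[V_2]$ a triangle; since $G$ is structured it has at least $\tfrac4\varepsilon\ge 10$ vertices, so $|V_2|=|V(G)|-|V(C_A)|$ — this is where a small subtlety appears (see below). Granting the hypotheses, \cref{lem:3-matching} yields a matching of size $3$ between $V(C_A)$ and $V(G)\setminus V(C_A)$ in $G$.

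\textbf{From a global 3-matching to one inside the segment.}
A matching between $V(C_A)$ and $V(G)\setminus V(C_A)$ need not land inside $V(S)$; its $V_2$-endpoints could lie in components outside $S$. To fix this, I would argue using the structure of the component graph $\hG_H$. In $\hG_H$, the segment $S$ is a maximal 2-node-connected subgraph containing the node $A$, and $S$ attaches to the rest of $\hG_H$ only through cut nodes of $\hG_H$. Any edge of $G$ leaving $V(C_A)$ corresponds to an edge of $\hG_H$ incident to node $A$; since $A$ has degree at least $2$ in the 2EC graph $\hG_H$ and lies in the non-trivial segment $S$, all of $A$'s incident edges in $\hG_H$ that are not bridges stay within $S$ — but $\hG_H$ has no bridges (it is 2EC), so \emph{every} edge of $\hG_H$ incident to $A$ has its other endpoint in $S$. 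Hence every edge of $G$ with exactly one endpoint in $V(C_A)$ has its other endpoint in $V(S)\setminus V(C_A)$. Therefore the 3-matching produced above automatically lies between $V(C_A)$ and $V(S)\setminus V(C_A)$, as required. (Equivalently: contract $V(G)\setminus V(S)$ and all of $V(S)\setminus V(C_A)$ appropriately, or simply observe $N_G(V(C_A))\subseteq V(S)$.)

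\textbf{Checking the hypothesis $|V(G)\setminus V(C_A)|\ge 3$ and the main obstacle.}
The one routine point to nail down is that $V_2=V(G)\setminus V(C_A)$ is nonempty and in fact large enough. Since $S$ is a non-trivial segment it contains at least $3$ nodes, so besides $A$ there are at least two other nodes $B_1,B_2$ in $S$, each corresponding to a component with at least $4$ vertices; thus $|V(S)\setminus V(C_A)|\ge 8\ge 3$, and a fortiori $|V(G)\setminus V(C_A)|\ge 3$, so \cref{lem:3-matching} applies with no exceptional case. I expect the only genuinely delicate step to be the argument that the $V_2$-side endpoints of the matching can be taken inside $V(S)$ — i.e., the claim $N_G(V(C_A))\subseteq V(S)$. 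This follows cleanly from $\hG_H$ being 2EC (no bridges) together with the definition of a non-trivial segment as a maximal 2-node-connected block: any neighbor of $A$ in $\hG_H$ shares a cycle with $A$ and hence lies in the same block $S$. Once this is observed, the lemma is immediate from \cref{lem:3-matching}; there is no heavy computation involved, and the matching is found in polynomial time since \cref{lem:3-matching}'s proof (via König–Egerváry, as in \cref{lem:4-matching}) is constructive.
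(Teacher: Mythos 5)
There is a genuine gap, and it sits exactly at the step you flag as delicate. Your claim that every edge of $\hG_H$ incident to $A$ has its other endpoint in $S$ (equivalently, $N_G(V(C_A))\subseteq V(S)$) is false when $A$ is a cut node of $\hG_H$. A cut node can belong to several blocks: for instance, $\hG_H$ could consist of two triangles sharing the node $A$, so that $A$ lies in two non-trivial segments and has neighbors outside $S$; or a node $D$ could be attached to $A$ by two parallel edges, in which case $D$ is a trivial segment adjacent to $A$ but not in $S$. The absence of bridges in $\hG_H$ only guarantees that each edge incident to $A$ lies on \emph{some} cycle and hence in \emph{some} block, not that this block is $S$ — "shares a cycle with $A$" does not pin down which maximal 2-node-connected subgraph the neighbor lands in, and a cycle of length 2 (parallel edges) places it in no non-trivial segment at all. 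Consequently, a 3-matching between $V(C_A)$ and $V(G)\setminus V(C_A)$ obtained from \cref{lem:3-matching} with your partition may have two or even all three of its outside endpoints in components not belonging to $S$, and the lemma's conclusion does not follow.

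The paper's proof splits into the two cases you implicitly merged. If $A$ is not a cut node of $\hG_H$, your argument is exactly right: there are no edges from $V(C_A)$ to vertices of components outside $S$, and the partition $(V(C_A), V(G)\setminus V(C_A))$ together with \cref{lem:3-matching} finishes the proof. If $A$ is a cut node, one instead uses that $V(C_A)$ is a separator of $G$, with $V(S)\setminus V(C_A)$ lying on one side (possibly with further vertices), and applies \cref{lem:3-matching} to the partition that puts that side in one part and $V(C_A)$ together with all remaining vertices in the other; then every crossing edge runs between $V(C_A)$ and (the vertices of components of nodes of) $S$, so the resulting 3-matching lands where it is needed. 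Your other checks (that $|V(C_A)|\geq 4$, that the other side has at least 3 vertices, and that $G$ satisfies the hypotheses of \cref{lem:3-matching}) are fine; the missing ingredient is this second case.
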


\begin{proof}
If $A$ is not a cut node in $\hG_H$, then, by the definition of a segment, there are no edges between $V(C_A)$ and the vertices of components which are not contained in $S$. Thus, in this case the claim follows from \Cref{lem:3-matching} using $(V(C_A),V(G)\setminus V(C_A))$ as the partition of $V(G)$.
Otherwise, $A$ is a cut node in $\hG_H$, which implies $C_A$ is a separator of $G$ with the vertices of $V(S)\setminus V(C_A)$  contained on the same side of the separator. Consequently, in this case, our claim follows by applying~\Cref{lem:3-matching} on a partition of $V(G)$ into two parts: one side of the separator that contains the vertices of $V(S)\setminus V(C_A)$  and the other part contains the vertices from $V(C_A)$ along with the other side of the separator such that the only edges crossing the two parts are between $V(S)\setminus V(C_A)$ and $V(C_A)$.
\end{proof}

We are now ready to prove \Cref{lem:gluing:cycle-through-large}.

\begin{proof}[Proof of \Cref{lem:gluing:cycle-through-large}]
We are given a non-trivial segment $S$ of $G_H$ that contains a huge node $L$ and a node $A$ such that $C_A$ is either a $\cFour$ or a $\cFive$. 
We need to construct in polynomial time a cycle $K$ in $G_H$ through $L$ and $A$, such that the two cycle edges incident on $C_A$ are incident on two distinct vertices $u$ and $v$ in $C_A$ such that either (a) we can find a $u-v$ Hamiltonian path in $G[V(C_A)]$ or (b) we can find a node $D$ outside the cycle $K$ and a set of edges $F$ disjoint to $E(K)$ such that $|F| = |E(C_A)| + |E(C_D)|$ and $F \cup \{uv\}$ forms a 2-ECSS of $G[V(C_A)\cup V(C_D)]$.

In the proof, we will show the existence of the cycle $K$ and a desired Hamiltonian path or a desired $D$ and $F$ with $|F|=9$. The polynomial-time constructions for these objects then easily follow; $K$ from the algorithm in~\cref{prop:cycle-through-specified-edges} by feeding it all possible $4$ edges, $2$ incident on $L$ and $2$ incident on $A$, and then the Hamiltonian path or $F$ can be found by brute-force enumeration over a constant domain.

The cycle $C_A$ will be labeled as $v_1 - v_2 - v_3 - v_4 - v_1$ if it is a $\cFour$ or $v_1 - v_2 - v_3 - v_4 - v_5 - v_1$ if it is a $\cFive$.
Since $S$ is a segment containing $L$ and $A$, it is 2-node-connected, and thus, there exist two internally node-disjoint paths in $S$ between $L$ and $A$.
Moreover, \Cref{lem:gluing:3-matching-segment} implies that 
at least three distinct vertices of $V(C_A)$ have incident edges in $G$ from three distinct vertices in $V(S)\setminus V(C_A)$. We consider a general condition first that allows us to find certain cycles in $S$ between the nodes $L$ and $A$. This condition will show up multiple times in our analysis.

  \textbf{Condition ($\star$)} Let $P_1$ and $P_2$ be two internally node-disjoint paths from $L$ to $A$ that are incident on vertices $x$ and $y$ of $C_A$, respectively. Additionally, let $z$ be a vertex in $C_A$ that is adjacent to some node $A'\neq A$ in the segment $S$. We can now find the following cycles in $S$ that pass through the nodes $L$ and $A$. Note that $x$, $y$, and $z$ may not be all distinct.
  \begin{itemize}
\item Since $A'$ is part of the segment $S$, there are two internally node-disjoint paths from $A'$ to $L$. At least one of these paths will not include the node $A$, say $P'$. 
      \item \textbf{$(\star 1$)} If $P'$ is node-disjoint from $P_1$, except at $L$, then there is a cycle that intersects $C_A$ at $x$ and $z$: $z\myrightarrow{}A'\myrightarrow{P'}L\myrightarrow{P_1}x$. Similarly, if $P'$ is node-disjoint from $P_2$, except at $L$, then there is a cycle that intersects $C_A$ at $y$ and $z$.
      \item Otherwise, if $P'$ is not node-disjoint from $P_1$, let $B$ be the first node when traversing $P'$ from $A'$ to $L$ that intersects $P_1$ or $P_2$. Note that $B$ cannot be on both $P_1$ and $P_2$ as they are internally node-disjoint.  
      \item \textbf{$(\star 2)$} If $B$ is on $P_1$, then we have the cycle that intersects $C_A$ at $y$ and $z$: $z\myrightarrow{}A'\myrightarrow{P'}B\myrightarrow{P_1}L\myrightarrow{P_2}y$. If $B$ is on $P_2$, similarly we get a cycle that intersects $C_A$ at $x$ and $z$.
      \item No matter which case we are in, there is always a cycle intersecting $C_A$ at $z$ and ($x$ or $y$).
  \end{itemize}

As mentioned above, \Cref{lem:gluing:3-matching-segment} implies that 
at least three distinct vertices of $V(C_A)$ have incident edges in $G$ from three distinct vertices in $V(S)\setminus V(C_A)$.
We consider two exhaustive cases based on which set  of vertices of $C_A$ has such incident edges: Either (\textbf{Case 1}) each of $v_1,v_2,v_3$  or (\textbf{Case 2}) each of $v_1,v_3,v_4$ has an incident edge from the vertices of $V(S)\setminus V(C_A)$. 
Observe up to symmetry these are the only cases,  and for $C_A$ being a $\cFour$ considering only Case~1 is sufficient. Thus, in Case~2, we will assume $C_A$ is a $\cFive$.

\textbf{Case 1:} $v_1$, $v_2$, and $v_3$ have incident edges from vertices of $V(S)\setminus V(C_A)$.

\textbf{Case 1.1:}
There exists two internally node-disjoint paths $P_1$ and $P_2$ from $A$ to $L$ that are incident to distinct vertices $u$ and $v$ of $C_A$ such that there exists a Hamiltonian Path between $u$ and $v$ in $G[V(C_A)]$. Then, $K=P_1\cup P_2$ forms a cycle in $\hG_H$ through $A$ and $L$ as $P_1$ and $P_2$ are internally node-disjoint, which satisfies the requirements of (a) in the lemma statement. \newcommand{\dist}{\mathsf{dist_{C_A}}}

\textbf{Case 1.2:}
Otherwise, if we are not in Case 1.1, there must exist two internally node-disjoint paths $P_1$ and $P_2$ from $A$ to $L$ such that $P_1, P_2$ are incident to $V(C_A)$ at vertices $u_1, u_2 \in V(C_A)$, respectively ($u_1 = u_2$ is possible, but $u_1$ is not a neighbor of $u_2$). 
Since, $C_A$ is a cycle consisting of at most 5 vertices, we can always find $w_i\in \{v_1,v_2,v_3\}$ adjacent to $u_i$ in $C_A$ for each $i\in \{1,2\}$. If possible, set $w_1=w_2$.
In particular, if $u_1=u_2$, we can always set $w_1=w_2\in\{v_1,v_2,v_3\}$. Furthermore, if $C_A=\cFour$, $u_1\neq u_2$, and $u_1$ not adjacent to $u_2$ (otherwise we would be in Case 1.1) implies they are diagonally opposite, and we can again set $w_1=w_2\in \{v_1,v_2,v_3\}$.
Consider two subcases:

\textbf{Case 1.2.1: $u_1 = u_2$ or $w_1 = w_2$.}
This implies that both $u_1$ and $u_2$ are adjacent to $w_1$. 
Since $w_1\in\{v_1,v_2,v_3\}$ it is adjacent to a node $A'\neq A$ in $S$. 
We now apply Condition ($\star$) by setting $x= u_1$, $y=v_1$, and $z=w_1$ to find a cycle through $L$ and $A$ that intersects $C_A$ at $w_1$ and ($u_1$ or $u_2$), and since $w_1$ is adjacent to both $u_1$ and $u_2$, we can find a desired Hamiltonian path in $G[V(C_A)]$ satisfying condition (a) of the lemma statement.

\textbf{Case 1.2.2: $u_1 \neq u_2$ and $w_1 \neq w_2$.}
Since we are not in the previous case, we only need to consider $C_A=C_5$ and $\dist(u_1=u_2)=2$. Observe that in this case we can always set $w_1$ and $w_2$ to be adjacent vertices.
Recall that $w_i\in \{v_1,v_2,v_3\}$ for each $i\in \{1,2\}$ and since we are in Case 1, both $w_1$ and $w_2$ have incident edges from vertices of $V(S) \setminus V(C_A)$.
Let $A_{i} \neq A$ be the node in $S$ that is adjacent to $w_i$, for each $i \in \{1,2\}$.
Since $S$ is 2-node connected, there must be a path $P'_i$ in $S$ from $A_i$ to $L$ which does not contain $A$, for $i \in \{1,2\}$.

If $P'_1$ is node-disjoint from $P_1$ or if $P'_{2}$ is node-disjoint from $P_2$, then we can find a desired cycle using Condition ($\star 1$) that intersects $C_A$ at adjacent vertices $u_i$ and $w_i$ for some $i\in\{1,2\}$ which have a Hamiltonian path between them in $C_A$. 
Otherwise, for each $i\in\{1,2\}$, let $B_i$ be the first node on $P'_i$ when traversing $P'_i$ from $A_i$ to $L$ that is also on either $P_1$ or on $P_2$.
Now if $B_1$ is on $P_2$ or $B_2$ is on $P_1$, we reach the same conclusion using Condition ($\star 2$).

Otherwise, the paths $P'_1$ and $P'_2$ are such that $B_1$ is on $P_1$ and $B_2$ is on $P_2$, and, thus,
the subpaths $A_1 \myrightarrow{P'_1} B_1$ and $A_2 \myrightarrow{P'_2} B_2$ 
must be node-disjoint; if not, let $t$ be the first node on $A_1 \myrightarrow{P'_1} B_1$ that is also on $A_2 \myrightarrow{P'_2} B_2$. We will modify $P'_1$ to be $A_1 \myrightarrow{P'_1} t \myrightarrow{P'_2} B_2 \myrightarrow {P'_2}L$, which now has its first intersecting node with $P_1$ or $P_2$ to be $B_2$, which is on $P_2$, and we are back to the case considered above.
But then there exist two-node disjoint paths from $A$ to $L$ in $\hG_H$ that are incident to $w_1$ and $w_2$: $L \myrightarrow{P_i} B_i \myrightarrow{P'_i} A_i - w_i$ for $i \in \{1,2\}$, and there is a Hamiltonian Path from $w_1$ to $w_2$ in $G[V(C_A)]$ as $w_1$ and $w_2$ are adjacent, meaning that we are actually in Case~1.1.

\textbf{Case 2:}
$v_1$, $v_3$, and $v_4$ have incident edges from vertices of $V(S)\setminus V(C_A)$.
As discussed above, this case can only apply if $A$ corresponds to a $\cFive$. Further, note that neither $v_2$ nor $v_5$ can have an incident edge to a node in $S$ other than $A$, as otherwise we are also in Case~1 by simply relabeling the $v_i$'s cyclically so that $v_1,v_2,v_3$ have such incident edges.
We have two subcases based on whether $v_2$ or $v_5$ has an incident edge to a component outside of segment $S$. 

\textbf{Case 2.1:}
Neither $v_2$ nor $v_5$ has incident edges to nodes of $\hG_H$ other than $A$.
Then, there must be an edge between $v_2$ and $v_5$ in $G$, as otherwise $C_A$ is contractible:
This is true since then the set of edges incident to $v_2$ and the set of edges incident to $v_5$ are all inside $G[V(C_A)]$ and mutually disjoint. Since any 2-ECSS of $G$ must include 2 edges incident on each vertex, it has to contain at least 4 edges from $G[V(C_A)]$, and hence $C_A$ is contractible.
Like in Case 1.1, if there exists internally node-disjoint paths $P_1$ and $P_2$  that are incident to vertices $u$ and $v$ in $C_A$ such that there exists a Hamiltonian Path between $u$ and $v$ in $G[V(C_A)]$, then $K =P_1 \cup P_2$ forms the desired cycle in $\hG_H$ through $A$ and $L$ satisfying the requirements of Condition (a) in the lemma statement. As a result, observe that both $P_1$ and $P_2$ must be incident to the same vertex $v \in \{v_1, v_3, v_4\}$.
This is true since neither $P_1$  nor $P_2$ can be incident on $v_2$ or $v_5$. 
Furthermore, the edge $v_2v_5$ exists helping in forming the required Hamiltonian paths; a straightforward check. For example, $v_1 - v_2 - v_5 - v_4 - v_3$ is a Hamiltonian $v_1$-$v_3$-path.
Now, If $v = v_1$ (a similar argument works if $v = v_3$ or $v=v_4$), let $A'\neq A$ be the node adjacent to $v_3$ ($v_1$ if $v = v_3$ or $v=v_4$). Now from Condition ($\star$), by setting $x=y=v$, and $z=v_3$, we get the desired cycle passing through $A$ and $L$ that intersects $C_A$ at $v_1$ and $v_3$ having a Hamiltonian path between them, satisfying Condition (a) of the lemma statement.

\textbf{Case 2.2:}
There is an edge between $v_2$ (or $v_5$ by symmetry) and a node $D$ of $\hG_H$ that is a trivial segment $S'$ or lies in a non-trivial segment $S'$, and $S' \neq S$. Note that $A$ must be a cut node in $\hG_H$ in this case. 
Observe that
$|V(S)| - |V(C_A)| \geq 10$, because $S$ contains the node $L$ ($C_L$ is huge with 10 vertices) other than $A$.
If $|V(S') \cup V(C_A)| \geq 10$, \Cref{lem:4-matching} guarantees the existence of a $4$-matching $M_4$ between $V(S) \setminus V(C_A)$ and $V(C_A)$ (as $C_A$ is a separator, $V(G)$ can be partitioned in 2 parts with $V(C_A)\cup V(S')$ on one side and $V(S)\setminus V(C_A)$ on the other, so that the only edges crossing are between $V(C_A)$ and $V(S)\setminus V(C_A)$).
But this means that we are actually in Case~1 as some three consecutive vertices of $C_A$ must be incident to the other nodes of $S$. 

Thus, the only possibility left is $|V(S') \cup V(C_A)| \leq 9$. Since $A$ corresponds to a $\cFive$ as we are in Case~2, we conclude that $S'$ is a trivial segment that corresponds to a single node $D$ where $C_D$ is a $\cFour$. 
Note that $|E(C_A)| + |E(C_D)| = 9$.
Since $A$ is a cut node and $D$ is a trivial segment, \Cref{prop:gluing:trivial-3-matching} implies that there exists a $3$-matching $M_3$ between $V(C_A)$ and $V(C_D)$ in $G$. 
Therefore, \Cref{lem:gluing:c4-c5} is applicable to  $A$ and $D$ with any two distinct vertices $u$ and $v$ in $C_A$,
which implies that there exists a set $F$ of edges of $G$ with $|F| = 9$ such that $F \cup \{uv\}$ induces a 2-ECSS of $G[V(C_A) \cup V(C_D)]$. If $P_1$ and $P_2$, the two internally node-disjoint paths between $L$ and $A$, are incident on two distinct vertices in $C_A$, then we set those two distinct vertices to $u$ and $v$, set $K:=P_1 \cup P_2$, which together with $F$ satisfies the Condition (b) in the lemma statement.
Otherwise, $P_1$ and $P_2$ are incident to the same vertex of $V(C_A)$, which we call $u$, and let $A' \neq A$ be a node adjacent to $v$, where $v \in\{v_1,v_3,v_4\}\setminus\{u\}$. 
Now from Condition ($\star$), there exists a cycle in $S$ passing through $A$ and $L$ that intersects $C_A$ at $u$ and $v$. 
This cycle together with $F$ satisfies Condition (b) in the lemma statement.
\end{proof}

\section{Conclusion}
\label{sec:conclusion}

In this work, we presented a new approach to the 2-edge-connected spanning subgraph (2-ECSS) problem, achieving an improved polynomial-time approximation ratio of 
$\nicefrac54 + \varepsilon$. 
This result follows a long line of research efforts aimed at improving the approximation ratio, building on prior methods while simplifying the complexity of the algorithm. Our method significantly streamlined the gluing step, which has traditionally been one of the most complex parts of such algorithms. By introducing a preprocessing phase to eliminate large 3-vertex cuts and leveraging a novel 4-matching lemma, we modularized the solution into manageable components. This led to a method that is not only simpler but also more intuitive than previous approaches.

The modularity and clarity of our approach opens new avenues for future research, providing a solid foundation for adaptations and further advancements. The combination of the 4-matching lemma and the Pac-Man-inspired gluing strategy we proposed could also prove useful in tackling other network design problems that are hindered by complex structural challenges.

\printbibliography

\newpage

\appendix

\section{Reduction to Structured Graphs}\label{app:prep}

This section is dedicated to the proof of \Cref{lem:reduction-to-structured}, which we first restate.

\lemmaReduction*

We assume throughout this section that $\alpha \geq \frac{5}{4}$ and $\varepsilon \in (0, \frac{1}{24}]$.
Let $\ALG$ be a polynomial-time algorithm that, given an $(\alpha,\varepsilon)$-structured graph $G' = (V',E')$, returns an $\alpha$-approximate 2EC spanning subgraph $\ALG(G') \subseteq E'$ of $G'$.
In the following, we define a reduction algorithm which recursively modifies and partitions the input graph $G$ until we obtain $(\alpha,\varepsilon)$-structured graphs, then solves these with $\ALG$, and finally puts the resulting solutions together to an overall solution for $G$. 
In particular, our reduction checks for structures that are forbidden in $(\alpha,\varepsilon)$-structured graphs. These are graphs of constant size (w.r.t.\ $\varepsilon$), 
parallel edges and self loops,
1-vertex cuts, small $\alpha$-contractible subgraphs (w.r.t.\ $\varepsilon$), irrelevant edges, non-isolating 2-vertex cuts and large 3-vertex cuts (cf.\ \Cref{def:structured}). 
For all structures except large 3-vertex cuts, we can use the reduction presented in~\cite{GargGA23improved}. The main contribution of this section is to show how to reduce to graphs without large 3-vertex cuts, given that all other substructures have already been removed.

This section is structured as follows. In \cref{sec:reduce-defs} we first introduce definitions to classify solutions regarding 2-vertex cuts and 3-vertex cuts, and then give the formal description of the reduction algorithm, which is split into three algorithm environments, where \cref{alg:reduce} is the main algorithm, which calls \cref{alg:2-vertex-cuts,alg:3-vertex-cuts} as subroutines.
In \cref{sec:reduce-auxiliary}, we give auxiliary results, mainly regarding the removal of 2-vertex cuts and 3-vertex cuts. Finally, in \cref{sec:reduce-proofs} we combine these auxiliary results together and give the main proof of \cref{lem:reduction-to-structured}.

\subsection{Definitions and the Algorithm}\label{sec:reduce-defs}

We will heavily use the following two facts on 2EC graphs with contracted components. We will not refer to them explicitly.

\begin{fact}\label{fact:contract-2ec}
    Let $G$ be a 2EC graph and $W \subseteq V(G)$. Then $G | W$ is 2EC.
\end{fact}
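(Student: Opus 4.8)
The plan is to use exactly the characterization of $2$-edge-connectedness recorded in \cref{sec:preliminaries}: a graph is $2$EC precisely when it is connected and bridgeless. So I would verify these two properties of $G|W$ separately, using that $G$ is $2$EC.

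First, connectivity. Contracting a vertex set in a connected graph yields a connected graph: every $u$--$v$ path in $G$ maps to a $u'$--$v'$ walk in $G|W$ (where $u',v'$ are the images of $u,v$), so all nodes of $G|W$ remain mutually reachable. Since $G$ is $2$EC it is connected, hence $G|W$ is connected.

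Second, bridgelessness, which I would argue by contradiction. Suppose $e$ is a bridge of $G|W$. Under the one-to-one correspondence between the edges of $G$ and of $G|W$, $e$ names an edge of $G$ as well. A self-loop is never a bridge, so in $G|W$ the edge $e$ joins two distinct nodes; in particular at most one endpoint of $e$ in $G$ lies in $W$. Now deleting an edge and contracting a vertex set are independent operations under the edge correspondence, giving the identity $(G|W)\setminus e = (G\setminus e)|W$. Since $G$ is $2$EC, $e$ is not a bridge of $G$, so $G\setminus e$ is connected; contracting $W$ then keeps it connected, so $(G\setminus e)|W$ is connected, contradicting that $e$ is a bridge of $G|W$. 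Hence $G|W$ has no bridge, and therefore $G|W$ is connected and bridgeless, i.e., $2$EC.

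There is no real obstacle here: the only points needing a moment's care are the commutation identity $(G|W)\setminus e=(G\setminus e)|W$ and the remark that a bridge cannot be a self-loop (so that identity is invoked only where it is patently valid). The degenerate case $W=V(G)$, where $G|W$ is a single node possibly carrying self-loops, is covered vacuously, since such a graph is connected and has no bridge.
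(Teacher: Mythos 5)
Your proof is correct: contraction preserves connectivity, and the commutation identity $(G|W)\setminus e=(G\setminus e)|W$ together with the observation that a bridge is never a self-loop shows any bridge of $G|W$ would yield a bridge of $G$. The paper states this as a fact without proof (it is used implicitly throughout), and your argument is exactly the standard justification the authors rely on, so there is nothing to add.
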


\begin{fact}\label{fact:decontract}
    Let $H$ be a 2EC subgraph of a 2EC graph $G$, and $S$ and $S'$ be 2EC spanning subgraphs of $H$ and $G|H$, respectively. Then $S \cup S'$ is a 2EC spanning subgraph of $G$, where we interpret $S'$ as edges of $G$ after decontracting $V(H)$.
\end{fact}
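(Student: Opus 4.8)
The plan is to verify the two defining components of ``2EC spanning subgraph of $G$'' separately and directly from the definitions, the bulk of the work being the 2-edge-connectivity. \emph{Spanning} is immediate: every vertex of $G$ lies in $V(H)$ or in $V(G)\setminus V(H)$; vertices of the first kind are covered by $S$ since $S$ spans $H$, while vertices of the second kind are precisely the non-super-nodes of $G|H$, all covered by $S'$ since $S'$ spans $G|H$ (and incidences in $G|H$ become incidences in $G$ after decontraction). Hence $S\cup S'$ covers $V(G)$.

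For 2-edge-connectivity I would use the characterization that a multigraph is 2EC iff it is connected and stays connected after the removal of any single edge (noting that self-loops, which contraction may create and which may therefore lie in $S'$, are irrelevant to connectivity and are never bridges). The argument rests on two observations. First, since $S$ is 2EC, for every edge $e$ of $S$ the graph $S\setminus e$ is still connected and spans $V(H)$, so $V(H)$ stays internally connected even after deleting at most one edge of $S$. Second, since $S'$ is 2EC in $G|H$, for every edge $e'$ of $S'$ the graph $S'\setminus e'$ is still connected in $G|H$; decontracting, this says that using only edges of $S'\setminus e'$ every vertex of $V(G)\setminus V(H)$ is joined by a path to \emph{some} vertex of $V(H)$ --- the key point being that a path ending at the super-node of $G|H$ becomes, after decontraction, a path ending at a genuine vertex of $V(H)$.

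Combining the two observations shows $S\cup S'$ is connected: $S$ ties all of $V(H)$ together and $S'$ hangs every outside vertex off $V(H)$. Now take any edge $e\in S\cup S'$. If $e\in S$, apply the first observation to $S\setminus e$ and the second to $S'$ to conclude $(S\setminus e)\cup S'$ is connected; if $e\in S'$, apply the first observation to $S$ and the second to $S'\setminus e$ to conclude $S\cup(S'\setminus e)$ is connected. In either case $e$ is not a bridge, so $S\cup S'$ is 2EC; together with the spanning property this proves the claim. The degenerate case $G=H$ (where $G|H$ is a single node) is immediate, as then $S$ itself is the desired subgraph.

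The only part I expect to need care is the bookkeeping around decontraction: making precise that ``connected to the super-node in $G|H$'' translates into ``connected to $V(H)$ in $G$'', and checking that any self-loops surviving the contraction can be ignored. Everything else is a routine unwinding of the definitions of 2-edge-connectivity and of graph contraction recalled in the preliminaries, so I would keep the written proof to a couple of lines.
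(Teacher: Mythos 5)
Your argument is correct: the spanning part is immediate, and your bridge-freeness case analysis (delete an edge of $S$ and use 2-edge-connectivity of $S$ on $V(H)$, or delete an edge of $S'$ and use 2-edge-connectivity of $S'$ in $G|H$, with the decontraction observation that a path to the super-node yields a path to some vertex of $V(H)$) is exactly the routine verification this statement requires. The paper states this as a fact without proof, so there is no alternative argument to compare against; your write-up, including the remarks on self-loops and the degenerate case $G = H$, fills that gap correctly.
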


\subsubsection{Definitions for Non-Isolating 2-Vertex Cuts}

Let $\{u,v\}$ be a $2$-vertex cut in a graph $G$. 
Note that that we can find a partition $(V_1,V_2)$ of $V \setminus \{u,v\}$ such that $V_1 \neq \emptyset \neq V_2$ and there are no edges between $V_1$ and $V_2$. Once we delete $\{u,v\}$ from $G$, the graph breaks into various components such that each is either part of $V_1$ or $V_2$. Similarly, $\OPT(G)$ breaks into various components once we remove $\{u,v\}$ such that each component is either part of $V_1$ or $V_2$. In the following definition, we consider different types of graphs that match $H \coloneq \OPT(G)[V_1 \cup \{u,v\}]$ (or analogously $\OPT(G)[V_2 \cup \{u,v\}]$).

\begin{definition}[2-vertex cut solution types]\label{def:2vc-types}
Let $H$ be a graph and $u,v \in V(H)$. Let $H'$ be obtained by contracting each 2EC component $C$ of $H$ into a single super-node $C$, and let $C(x)$ be the corresponding super-node of the 2EC component of $H$ that contains $x \in V(H)$. We define the following types of $H'$ with respect to $\{u,v\}$:
\begin{description}
\item[Type $\typA$:] $H'$ is composed of a single super-node $H' = C(u) = C(v)$.
\item[Type $\typB$:] $H'$ is a $C(u) - C(v)$ path of length at least $1$.
\item[Type $\typC$:] $H'$ consists of two isolated super-nodes $C(u)$ and $C(v)$.
\end{description}
Let $\typesTwoVC \coloneq \{\typA,\typB,\typC \}$ be the set of 2-vertex cut optimal solution types.
\end{definition}

\begin{definition}[Order on types]\label{def:2vc-ties}
$\typA \succ \typB \succ \typC$.
\end{definition}

\subsubsection{Definitions for Large 3-Vertex Cuts}

Similarly to 2-vertex cuts and \Cref{def:2vc-types}, we give in the following definition types that match the subgraph of an optimal solution that is induced by the partition corresponding to a large $3$-vertex cut $\{u,v,w\}$ in $G$. Naturally, here are more patterns that can occur compared to 2-vertex cuts.

\begin{definition}[3-vertex cut solution types]
Let $H$ be a graph and $u,v,w \in V(H)$. Let $H'$ be obtained by contracting each 2EC component $C$ of $H$ into a single super-node $C$, and let $C(x)$ be the corresponding super-node of the 2EC component of $H$ that contains $x \in V(H)$. We define the following types of $H'$ with respect to $\{u,v,w\}$:
\begin{description}
\item[Type $\typA$:] $H'$ is composed of a single super-node $H' = C(u) = C(v) = C(w)$.
\item[Type $\typBi$:] $H'$ is a $C_1 - C_k$ path of length at least $1$ such that $u,v,w \in C_1 \cup C_k$ and $\abs{\{u,v,w\} \cap C_i} \leq 2$ for $i \in \{1,k\}$. We assume w.l.o.g.\ that $C_1 = C(u) = C(v)$.
\item[Type $\typBii$:] $H'$ is composed of two isolated super-nodes $C_1$ and $C_2$, where $\abs{\{u,v,w\} \cap C_i} \leq 2$ for $i \in \{1,2\}$. We assume w.l.o.g.\ that $C_1 = C(u) = C(v)$.
\item[Type $\typCi$:] $H'$ is a tree and $u$, $v$, and $w$ are in distinct super-nodes. \item[Type $\typCii$:] $H'$ is composed of a $C_1-C_k$ path of length at least $1$ and an isolated super-node $C_\ell$ such that $\abs{C_i \cap \{u,v,w\}} = 1$ for $i \in \{1,k,\ell\}$.
\item[Type $\typCiii$:] $H'$ is composed of the three isolated super-nodes $C(u), C(v)$ and $C(w)$.
\end{description}
Let $\typesThreeVC \coloneq \{\typA,\typBi,\typBii,\typCi,\typCii,\typCiii\}$ be the set of 3-vertex cut optimal solution types.
\end{definition}

\begin{definition}[Order on types]\label{def:3vc-ties}
$\typA \succ \typBi \succ \typBii \succ \typCi \succ \typCii \succ \typCiii$.
\end{definition}

The following proposition lists compatible solution types of both sides of a 3-vertex cut $\{u,v,w\}$. The proof is straightforward and, thus, omitted. 
For example, if $\OPT_1$ (using the notation introduced below) is of type $\typCiii$, it is composed of three distinct 2EC components $C(u)$, $C(v)$, and $C(w)$. Therefore, $u$, $v$, and $w$ must be in the same 2EC connected component in $\OPT_2$ as otherwise $\OPT$ cannot be a 2-ECSS, and thus, $\OPT_2$ must be of type $\typA$.

\begin{proposition}[feasible type combinations]\label{lemma:3vc-type-combinations}
Let $\{u,v,w\}$ be a large 3-vertex cut in a graph $G$, and $(V_1,V_2)$ be a partition of $V \setminus \{u,v,w\}$ such that $7 \leq \abs{V_1} \leq \abs{V_2}$ and there are no edges between $V_1$ and $V_2$ in $G$. 
Let $G_1 = G[V_1 \cup \{u,v,w\}]$, $G_2 = G[V_2 \cup \{u,v,w\}] \setminus \{ uv, vw, uw \}$ and, for a fixed optimal solution $\OPT(G)$, let $\OPT_i = \OPT(G) \cap E(G_i)$ for $i \in \{1,2\}$. Then, with respect to $\{u,v,w\}$, if
\begin{enumerate}[(a)]
    \item $\OPT_1$ is of type $\typA$, $\OPT_2$ must be of type $t \in \{\typA,\typBi,\typBii,\typCi,\typCii,\typCiii\} = \typesThreeVC$.
    \item $\OPT_1$ is of type $\typBi$, $\OPT_2$ must be of type $t \in \{\typA,\typBi,\typBii,\typCi,\typCii\}$.
    \item $\OPT_1$ is of type $\typBii$, $\OPT_2$ must be of type $t \in \{\typA,\typBi,\typBii\}$.
    \item $\OPT_1$ is of type $\typCi$, $\OPT_2$ must be of type $t \in \{\typA,\typBi,\typCi\}$.
    \item $\OPT_1$ is of type $\typCii$, $\OPT_2$ must be of type $t \in \{\typA,\typBi\}$.
    \item $\OPT_1$ is of type $\typCiii$, $\OPT_2$ must be of type $\typA$.
\end{enumerate}
\end{proposition}

Observe that according to the above definition we have $\abs{\OPT(G)} = \abs{\OPT_1} + \abs{\OPT_2}$.

\subsubsection{The Full Algorithm}

We are now ready to state our main algorithm $\Reduce$ (\Cref{alg:reduce}). 
Let $\reduce(G) \coloneq |\Reduce(G)|$ denote the size of the solution computed by $\Reduce$ for a given 2-ECSS instance $G$.
The reductions applied in \Cref{reduce:1vc,reduce:loop,reduce:contractible,reduce:irrelevant,reduce:2vc} (here and in the following, we do not distinguish between the line where the actual reduction is computed and the line in which the condition for the reduction is checked) are identical to the reductions given in \cite{GargGA23improved}. 
Specifically, \Cref{alg:2-vertex-cuts}, which is called in \cref{reduce:2vc}, reduces  non-isolating 2-vertex cuts and was first given in \cite{GargGA23improved}. We emphasize that we present it slightly different, but without changing its actual behavior.
When \Cref{alg:reduce} reaches \Cref{reduce:3vc} it calls \Cref{alg:3-vertex-cuts} to handle a large 3-vertex cut $\{u,v,w\}$. 

Whenever the reduction returns edges $K$ of contracted graphs $G|H$, i.e., $K=\Reduce(G|H)$, we interpret these as the corresponding edges after decontracting $H$, e.g., in \Cref{alg:reduce} in \Cref{reduce:contractible} or in \Cref{alg:3-vertex-cuts} in \Cref{reduce-3vc:both-large:recurse}.

Hence, from now on let us focus on \Cref{alg:3-vertex-cuts}.
In \Cref{app:prep:helper-3vc} we prove that the claimed sets of edges $F$ in \Cref{alg:3-vertex-cuts} exist and are of constant size.
The computations done in \Cref{reduce-3vc:compute-opt-types} (and in \cref{reduce-2vc:compute-opt-types} in \cref{alg:2-vertex-cuts}) can be done as explained in \cite{GargGA23improved}. 
That is, we first compute via enumeration minimum-size subgraphs $\overline{\OPT}_1^t$ of $G_1$ of every type $t \in \typesThreeVC$ if such a solution exists, because $G_1$ is of constant size in this case. 
If $\overline{\OPT}_1^t$ does not exist, $\OPT_1^t$ also cannot exist. 
Otherwise, that is, $\overline{\OPT}_1^t$ exists, we check whether $G_2$ is a graph that admits a solution that is compatible with $\overline{\OPT}_1^t$ according to \Cref{lemma:3vc-type-combinations}, and, if that is the case, we set $\OPT_1^t \coloneq \overline{\OPT}_1^t$.

Finally, note that if the algorithm reaches \Cref{reduce:call-alg} in \Cref{alg:reduce}, then $G$ must be $(\alpha,\varepsilon)$-structured.

\begin{algorithm}[tb]
\small
\DontPrintSemicolon
\caption{ $\Reduce(G)$: reduction to $(\alpha,\varepsilon)$-structured graphs}\label{alg:reduce}
\KwIn{2EC graph $G=(V,E)$.}
\If{$|V(G)| \leq \frac{4}{\varepsilon}$}{
compute $\OPT(G)$ via enumeration and \Return $\OPT(G)$\label{reduce:bruteforce}
}
\If{$G$ has a 1-vertex cut $\{v\}$\label{reduce:1vc}}{
let $(V_1,V_2)$, be a partition of $V \setminus \{v\}$ such that $V_1 \neq \emptyset \neq V_2$ and there are no edges between $V_1$ and~$V_2$. \;
\Return $\Reduce(G[V_1 \cup \{v\}]) \cup \Reduce(G[V_2 \cup \{v\}])$. \;
}
\If{$G$ contains a self loop or a parallel edge $e$\label{reduce:loop}}{
\Return $\Reduce(G \setminus \{e\})$. \;
}
\If{$G$ contains an $\alpha$-contractible subgraph $H$ with $|V(H)| \leq \frac{4}{\varepsilon}$\label{reduce:contractible}}{
\Return $H \cup \Reduce(G \mid H)$.
}
\If{$G$ contains an irrelevant edge $e$\label{reduce:irrelevant}}{
\Return $\Reduce(G \setminus \{e\})$. \;
}
\If{$G$ contains a non-isolating 2-vertex cut $\{u,v\}$\label{reduce:2vc}}{
Execute \Cref{alg:2-vertex-cuts} for $\{u,v\}$. \;
}
\If{$G$ contains a large 3-vertex cut $\{u,v,w\}$\label{reduce:3vc}}{
Execute \Cref{alg:3-vertex-cuts} for $\{u,v,w\}$.
}
\Return $\ALG(G)$.\label{reduce:call-alg}
\end{algorithm}

\begin{algorithm}
\small    
\DontPrintSemicolon
\caption{Remove a non-isolating 2-vertex cut \cite{GargGA23improved}}\label{alg:2-vertex-cuts}
\KwIn{A 2EC graph $G=(V,E)$ without cut vertices and without $\alpha$-contractible subgraphs with at most $\frac{4}{\varepsilon}$ vertices. A non-isolating 2-vertex cut $\{u,v\}$ in $G$.}
Let $(V_1,V_2)$ be a partition of $V \setminus \{u,v\}$ such that $2 \leq \abs{V_1} \leq \abs{V_2}$ and there are no edges between $V_1$ and $V_2$ in $G$. \;
Let $G_1 \coloneq G[V_1 \cup \{u,v\}]$ and $G_2 \coloneq G[V_2 \cup \{u,v\}] \setminus \{uv\}$.\;
Let $G'_i \coloneq G_i | \{u,v\}$ for $i \in \{1,2\}$.\;
\If{$\abs{V_1} > \frac{2}{\varepsilon} - 4$}{
\label{reduce-2vc:both-large}
Let $H'_i \coloneq \Reduce(G'_i)$ for $i \in \{1,2\}$.\label{reduce-2vc:both-large:recurse} \;
Let $F' \subseteq E$ be a minimum-size edge set such that $H' \coloneq H'_1 \cup H'_2 \cup F'$ is 2EC. 
\Return $H'$\;
}
\Else{\label{reduce-2vc:one-small}
Let $\OPT_1^t$ be the minimum-size subgraphs of $G_1$ for every type $t \in \typesTwoVC$ (w.r.t.\ $\{u,v\}$ and $G_1$) that belong to some 2EC spanning subgraph of $G$, if exists. Let $\OPT_1^{\min}$ be the existing subgraph of minimum size among all types in $\typesTwoVC$, where ties are broken according to \Cref{def:2vc-ties} (that is, $t' \in \typesTwoVC$ is preferred over $t \in \typesTwoVC$ if $t' \succ t$), and $t^{\min} \in \typesTwoVC$ the corresponding type. Let $\opt_1^t \coloneq |\OPT_1^t|$ for every $t \in \typesTwoVC$ and $\opt_1^{\min} \coloneq |\OPT_1^{\min}|$. \label{reduce-2vc:compute-opt-types} \;
\If{$t^{\min} = \typB$}{\label{reduce-2vc:b}
    Let $G_2^{\typB} \coloneq (V(G_2) \cup \{w\},E(G_2) \cup \{uw,vw\})$ and $H_2^{\typB} \coloneq \Reduce(G_2^{\typB}) \setminus \{uw,vw\}$, where $w$ is a dummy vertex and $uw, vw$ are dummy edges.  \;
$H^{\typB} \coloneq \OPT_1^{\typB} \cup H_2^{\typB}$
    \Return $H^{\typB}$. \;
}
\ElseIf{$t^{\min} = \typC$}{\label{reduce-2vc:c}
    Let $G_2^{\typC} \coloneq (V(G_2),E(G_2) \cup \{uv\})$ and $H_2^{\typC} \coloneq \Reduce(G_2^{\typC}) \setminus \{uv\}$.  \;
    Let $F^{\typC} \subseteq E
    $ be a min-size edge set s.t.\ $H^{\typC} \coloneq \OPT_1^{\typC} \cup H_2^{\typC} \cup F^{\typC}$ is 2EC. 
    \Return $H^{\typC}$. \;
}
}
\end{algorithm}

\begin{algorithm}
\small    \DontPrintSemicolon
\caption{Remove a large 3-vertex cut}\label{alg:3-vertex-cuts}
\KwIn{A 2EC graph $G=(V,E)$ without cut vertices, non-isolating 2-vertex cuts, or $\alpha$-contractible subgraphs with at most $\frac{4}{\varepsilon}$ vertices. A large 3-vertex cut $\{u,v,w\}$ in $G$.}
Let $(V_1,V_2)$ be a partition of $V \setminus \{u,v,w\}$ such that $7 \leq \abs{V_1} \leq \abs{V_2}$ and there are no edges between $V_1$ and $V_2$ in $G$. \;
Let $G_1 \coloneq G[V_1 \cup \{u,v,w\}]$ and $G_2 \coloneq G[V_2 \cup \{u,v,w\}] \setminus \{uv, vw, uw \}$.\;
Let $G'_i \coloneq G_i | \{u,v,w\}$ for $i \in \{1,2\}$.\;
\If{$\abs{V_1} > \frac{2}{\varepsilon} - 4$}{
\label{reduce-3vc:both-large}
Let $H'_i \coloneq \Reduce(G'_i)$ for $i \in \{1,2\}$. \label{reduce-3vc:both-large:recurse} \;
Let $F' \subseteq E$ be a minimum-size edge set such that $H' \coloneq H'_1 \cup H'_2 \cup F'$ is 2EC. 
\Return $H'$\;
}
\Else{
\label{reduce-3vc:one-small}
Let $\OPT_1^t$ be the minimum-size subgraphs of $G_1$ for every type $t \in \typesThreeVC$ (w.r.t.\ $\{u,v,w\}$ and $G_1$) that belong to some 2EC spanning subgraph of $G$, if exists. Let $\OPT_1^{\min}$ be the existing subgraph of minimum size among all types in $\typesThreeVC$, where ties are broken according to \Cref{def:3vc-ties} (that is, $t' \in \typesThreeVC$ is preferred over $t \in \typesThreeVC$ if $t' \succ t$), and $t^{\min} \in \typesThreeVC$ the corresponding type. Let $\opt_1^t \coloneq |\OPT_1^t|$ for every $t \in \typesThreeVC$ and $\opt_1^{\min} \coloneq |\OPT_1^{\min}|$. \label{reduce-3vc:compute-opt-types} \;
\If(\tcp*[h]{also if $t^{\min} = \typBi$}){$\OPT_1^{\typBi}$ exists and $\opt_1^{\typBi} \leq \opt_1^{\min} + 1$}{
\label{reduce-3vc:b1}
Let $G_2^{\typBi} \coloneq G_2'$ and $H_2^{\typBi} \coloneq \Reduce(G_2^{\typBi})$.  \;
Let $F^{\typBi} \subseteq E
$ be a min-size edge set s.t.\ $H^{\typBi} \coloneq \OPT_1^{\typBi} \cup H_2^{\typBi} \cup F^{\typBi}$ is 2EC. 
\Return $H^{\typBi}$. \;
}
\ElseIf{$t^{\min} = \typBii$}{
\label{reduce-3vc:b2}
Let $G_2^{\typBii} \coloneq G_2'$ and $H_2^{\typBii} \coloneq \Reduce(G_2^{\typBii})$. \;
Let $F^{\typBii} \subseteq E
$ be a min-size edge set s.t.\ $H^{\typBii} \coloneq \OPT_1^{\typBii} \cup H_2^{\typBii} \cup F^{\typBii}$ is 2EC.  
\Return $H^{\typBii}$. \;
}
\ElseIf{$t^{\min} = \typCi$}{
\label{reduce-3vc:c1}
Let $G_2^{\typCi} \coloneq G_2'$ and $H_2^{\typCi} \coloneq \Reduce(G_2^{\typCi})$. \;
Let $F^{\typCi} \subseteq E
$ be a min-size edge set s.t.\ $H^{\typCi} \coloneq \OPT_1^{\typCi} \cup H_2^{\typCi} \cup F^{\typCi}$ is 2EC.  
\Return $H^{\typCi}$. \;
}
\ElseIf(\tcp*[h]{from now on $y, z$ are dummy vertices with incident dummy edges}){$t^{\min} = \typCii$}{
\label{reduce-3vc:c2-general}
    \If{Every $\OPT_1^{\typCii}$ solution contains a $C(u)-C(v)$ path.\label{reduce-3vc:c2-i}}{
        Let $G_2^{\typCii} \coloneq (V(G_2) \cup \{y\}, E(G_2) \cup \{uy,vy,vw\})$. \;
        Let $H_2^{\typCii} \coloneq \Reduce(G_2^{\typCii}) \setminus \{uy,vy,vw\}$. \;
        Let $F^{\typCii} \subseteq E
        $ be a min-size edge set s.t.\ $H^{\typCii} \coloneq \OPT_1^{\typCii} \cup H_2^{\typCii} \cup F^{\typCii}$ is 2EC.
        \Return $H^{\typCii}$. \;
    }
    \ElseIf{Every $\OPT_1^{\typCii}$ solution contains either a $C(u)-C(v)$ or a $C(v)-C(w)$ path.\label{reduce-3vc:c2-ii}}{
        Let $G_2^{\typCii} \coloneq (V(G_2) \cup \{y,z\}, E(G_2) \cup \{uy,vz,zy,wy\})$. \;
        Let $H_2^{\typCii} \coloneq \Reduce(G_2^{\typCii}) \setminus \{uy,vz,zy,wy\}$. \;
        Let $F^{\typCii} \subseteq E
        $ be a min-size edge set and $\OPT_1^{\typCii}$ be s.t.\ $H^{\typCii} \coloneq \OPT_1^{\typCii} \cup H_2^{\typCii} \cup F^{\typCii}$ is 2EC. \label{reduce-3vc:c2-ii-sol-constr}
        \Return $H^{\typCii}$. \;
    }
    \Else(\tcp*[h]{all paths are possible between $C(u)$, $C(v)$, and $C(w)$}){\label{reduce-3vc:c2-iii}
        Let $G_2^{\typCii} \coloneq (V(G_2) \cup \{y\}, E(G_2) \cup \{uy,vy,wy\})$. \;
        Let $H_2^{\typCii} \coloneq \Reduce(G_2^{\typCii}) \setminus \{uy,vy,wy\}$. \;
        Let $F^{\typCii} \subseteq E
        $ be a min-size edge set and $\OPT_1^{\typCii}$ be s.t.\ $H^{\typCii} \coloneq \OPT_1^{\typCii} \cup H_2^{\typCii} \cup F^{\typCii}$ is 2EC. \label{reduce-3vc:c2-iii-sol-constr}
        \Return $H^{\typCii}$. \;
    }   
}
\ElseIf{$t^{\min} = \typCiii$}{
\label{reduce-3vc:c3}
Let $\OPT_1^{\typCiii}$ be a \typCiii solution such that there is an edge $e^\typCiii_{uv}$ in $G_1$ between $C(u)$ and $C(v)$ and there is an edge $e^\typCiii_{vw}$ in $G_1$ between $C(v)$ and $C(w)$. \label{reduce-3vc:c3-select-solution} \; 
Let $G_2^{\typCiii} \coloneq (V(G_2), E(G_2) \cup \{uv,uv,vw,vw\})$ and $H_2^{\typCiii} \coloneq \Reduce(G_2^{\typCiii}) \setminus \{uv,uv,vw,vw\}$. \;
Let $F^{\typCiii} \subseteq E
$ be a min-size edge set s.t.\ $H^{\typCiii} \coloneq \OPT_1^{\typCiii} \cup H_2^{\typCiii} \cup F^{\typCiii}$ is 2EC.  \label{reduce-3vc:c3-sol-constr}
\Return $H^{\typCiii}$. \;
}
}
\end{algorithm}

\subsection{Auxiliary Lemmas}\label{sec:reduce-auxiliary}

The following two lemmas are standard results, and can be found in, e.g., \cite{HVV19,GargGA23improved}.

\begin{lemma}\label{lemma:cut-vertex-egal}
    Let $G$ be a 2EC graph, $v$ be a cut vertex of $G$, and $(V_1,V_2), V_1 \neq \emptyset \neq V_2$, be a partition of $V \setminus \{v\}$ such that there are no edges between $V_1$ and $V_2$. Then, $\opt(G) = \opt(G[V_1 \cup \{v\}]) + \opt(G[V_2 \cup \{v\}])$.
\end{lemma}

\begin{lemma}\label{lemma:self-loops-parallel-edges}
    Let $G$ be a 2VC multigraph with $|V(G)| \geq 3$. Then, there exists a minimum size 2EC spanning subgraph of $G$ that is simple, that is, it contains no self loops or parallel edges.
\end{lemma}

Next, we replicate a lemma of \cite{GargGA23improved} that
shows that irrelevant edges are irrelevant for obtaining an optimal 2-ECSS.

\begin{lemma}[Lemma 2.1 in \cite{GargGA23improved}]
    \label{lemma:irrelevant-edge}
    Let $e = uv$ be an irrelevant edge of a 2VC simple graph $G = (V, E)$. Then there exists a minimum-size 2EC spanning subgraph of $G$ not containing $e$.
\end{lemma}

\begin{proof}
Assume by contradiction that all optimal 2EC spanning subgraphs of $G$ contain $e$, and let $\OPT$ be one such solution. Define $\OPT' \coloneq \OPT \setminus \{e\}$.
Clearly, $\OPT'$ cannot be 2EC as it would contradict the optimality of $\OPT$. Let $(V_1,V_2)$ be any partition of $V \setminus \{u,v\}$, $V_1 \neq \emptyset \neq V_2$, such that there are no edges between $V_1$ and $V_2$, which must exist since $\{u, v\}$ is a 2-vertex cut. 
Notice that at least one of $\OPT'_1 \coloneq \OPT'[V_1 \cup \{u,v\}]$ and $\OPT'_2 \coloneq \OPT'[V_2 \cup \{u,v\}]$, say $\OPT'_2$, needs to be connected, as otherwise $\OPT$ would not be 2EC. 
We also have that $\OPT'_1$ is disconnected, as otherwise $\OPT'$ would be 2EC. 
More precisely, $\OPT'_1$ consists of exactly two connected components $\OPT'_1(u)$ and $\OPT'_1(v)$ containing $u$ and $v$, respectively. Assume w.l.o.g.\ that $|V(\OPT'_1(u))| \geq 2$. 
Observe that there must exist an edge $f$ between $V(\OPT'_1(u))$ and $V(\OPT'_1(v))$. Otherwise, $u$ would be a 1-vertex cut separating $V(\OPT'_1(u)) \setminus \{u\}$ from $V \setminus (V(\OPT'_1(u)) \cup \{u\})$.
Thus, $\OPT'' \coloneq \OPT' \cup \{f\}$ is an optimal 2EC spanning subgraph of $G$ not containing $e$, a contradiction.
\end{proof}

The next lemma shows that for both, 3-vertex cuts and 2-vertex cuts, type $\typA$ solutions are more expensive than the cheapest solution.

\begin{lemma}\label{lemma:type-A-high-cost}
    If \Cref{alg:2-vertex-cuts} (\Cref{alg:3-vertex-cuts}) reaches \Cref{reduce-2vc:one-small} (\Cref{reduce-3vc:one-small}) and $\OPT_1^{\typA}$ exists, then $\opt_1^{\typA} \geq \alpha \cdot \opt_1^{\min} + 1$.
\end{lemma}
    
\begin{proof}
    If $\OPT_1^{\typA}$ exists and $\opt_1^{\typA} \leq \alpha \cdot \opt_1^{\min}$ then $\OPT_1^{\typA}$ is 2EC and $\alpha$-contractible. 
    However, since we assume that we reached \Cref{reduce-2vc:one-small} (\Cref{reduce-3vc:one-small}), we have that $|V_1| \leq \frac{2}{\varepsilon} - 4 \leq \frac{4}{\varepsilon}$. This is a contradiction, because we would have contracted $G_1$ in \Cref{reduce:contractible} of \Cref{alg:reduce} earlier. Thus, it must be that $\opt_1^{\typA} \geq \alpha \cdot \opt_1^{\min} + 1$. \end{proof}

\subsubsection{Auxiliary Lemmas for \Cref{alg:2-vertex-cuts}}\label{app:prep:helper-2vc}

\begin{proposition}\label{prop:properties-for-2vc-alg}
    If \Cref{alg:reduce} executes \Cref{alg:2-vertex-cuts}, then $G$ contains no self loops, parallel edges, irrelevant edges, or $\alpha$-contractible subgraphs with at most $\frac{4}{\varepsilon}$ vertices.
\end{proposition}

\begin{lemma}[Lemma A.4 in \cite{GargGA23improved}]
    \label{lemma:2vc-type-combinations}
    Let $G$ be the graph when \Cref{alg:reduce} executes \Cref{alg:2-vertex-cuts}, $\{u,v\}$ be a non-isolating 2-vertex cut of $G$, $(V_1,V_2)$ be a partition of $V \setminus \{u,v\}$ such that $V_1 \neq \emptyset \neq V_2$ and there are no edges between $V_1$ and $V_2$, and $H$ be a 2EC spanning subgraph of $G$. Let $G_i = G[V_i \cup \{u,v\}]$ and $H_i = E(G_i) \cap H$ for $i \in \{1,2\}$. The following statements are true.
    \begin{enumerate}[1.]
        \item Both $H_1$ and $H_2$ are of a type in $\typesTwoVC$ with respect to $\{u,v\}$. Furthermore, if one of the two is of type $\typC$, then the other must be of type $\typA$.
        \item If $H_i$, for an $i \in \{1,2\}$, is of type $\typC$, then there exists one edge $f \in E(G_i)$ such that $H_i' \coloneq H_i \cup \{f\}$ is of type $\typB$. As a consequence, there exists a 2EC spanning subgraph $H'$ where $H_i' \coloneq H' \cap E(G_i)$ is of type $\typA$ or $\typB$.
    \end{enumerate}
\end{lemma}

\begin{proof}
    First note that $uv \notin E(G)$ since there are no irrelevant edges by \Cref{prop:properties-for-2vc-alg}.
    We first prove the first claim. We prove it for $H_1$, the other case being symmetric. 
    First notice that if $H_1$ contains a connected component not containing $u$ nor $v$, $H$ would be disconnected. Hence, $H_1$ consists of one connected component or two connected components $H_1(u)$ and $H_1(v)$ containing $u$ and $v$, respectively.

    Suppose first that $H_1$ consists of one connected component. Let us contract the 2EC components of $H_1$, hence obtaining a tree $T$. If $T$ consists of a single vertex, $H_1$ is of type $\typA$. Otherwise, consider the path $P$ (possibly of length $0$)
    between the super nodes resulting from the contraction of the 2EC components $C(u)$ and $C(v)$ containing $u$ and $v$, respectively. Assume by contradiction that $T$ contains an edge $e$ not in $P$. Then $e$ does not belong to any cycle of $H$, contradicting the fact that $H$ is 2EC. Thus, $H_1$ is of type $\typB$ (in particular $P$ has length at least 1 since $H_1$ is not 2EC).

    Assume now that $H_1$ consists of 2 connected components $H_1(u)$ and $H_1(v)$. Let $T_u$ and $T_v$ be the two trees obtained by contracting the 2EC components of $H_1(u)$ and $H_1(v)$, respectively. By the same argument as before, the 2-edge connectivity of $H$ implies that these two trees contain no edge. Hence, $H_1$ is of type $\typC$.

    The second part of the first claim follows easily since if $H_1$ is of type $\typC$ and $H_2$ is not of type $\typA$, then $H$ would either be disconnected or it would contain at least one bridge edge, because $uv \notin E(G)$.

    We now move to the second claim of the lemma. There must exist an edge $f$ in $G_i$ between the two connected components of $H_i$ since otherwise at least one of $u$ or $v$ would be a cut vertex. 
    Clearly, $H'_i \coloneq H_i \cup \{f\}$ satisfies the claim. For the second part of the claim, consider any 2EC spanning 
    subgraph $H'$ and let $H'_i \coloneq H' \cap E(G_i)$.
    The claim holds if there exists one such $H'$ where $H'_i$ is of type $\typA$ or $\typB$, hence assume that this is not the case. Hence, $H'_i$ is of type $\typC$ by the first part of this lemma. The same argument as above implies the existence of an edge $f$ in $G_i$ such that $H''_i \coloneq H'_i \cup \{f\}$ is of type $\typB$, implying the claim for $H'' \coloneq H' \cup \{f\}$.
\end{proof}

\begin{lemma}\label{lemma:2vc-opt-1-large}
    In \Cref{alg:2-vertex-cuts} it holds that $\opt_1^{\min} \geq 3$. 
\end{lemma}

\begin{proof}
    Consider any feasible 2-ECSS solution $H_1$ for $G_1$, and let $H'_1 = H_1 \setminus \{u,v\}$ be the corresponding solution for $G_1 | \{u,v\}$.  
    Since $G_1 | \{u,v\}$ contains at least $3$ vertices and $H'_1$ is a feasible 2-ECSS solution, $H'_1$ must contain at least $3$ edges. 
    Thus, $|H_1| \geq |H'_1| \geq 3$.
\end{proof}

We have an immediate corollary from \Cref{lemma:type-A-high-cost} and $\alpha > 1$. 

\begin{corollary}\label{lemma:2vc-type-A-condition}
    If \Cref{alg:2-vertex-cuts} reaches \Cref{reduce-2vc:one-small} and $\OPT_1^{\typA}$ exists, then $\opt_1^{\typA} \geq \opt_1^{\min} + 2$. In particular, $t^{\min} \neq \typA$.
\end{corollary}

\begin{lemma}[Claim 1 in \cite{GargGA23improved}]\label{lemma:2vc-only-b-and-c}
    If \Cref{alg:2-vertex-cuts} is executed on instance $G$, there exists an optimal solution $\opt(G)$ such that $\OPT_1$ is of type $\typB$ or $\typC$.
\end{lemma}

\begin{proof}
For the sake of contradiction, consider any optimal solution $\OPT(G)$ and assume that $\OPT_1$ is of type $\typA$.
\cref{lemma:2vc-type-combinations} implies that every feasible solution must use at least $\opt_1^{\min}$ edges from $G_1$. Moreover, $\opt_1^{\typA} > \alpha \cdot \min\{\opt_1^\typB,\opt_1^\typC\}$ by \Cref{lemma:type-A-high-cost}.
If $\opt_1^\typA \geq \opt_1^\typB + 1$, then we obtain an alternative optimum solution with the desired property by taking $\OPT_1^\typB \cup \OPT_2$, and, in case $\OPT_2$ is of type $\typC$, by adding one edge $f$ between the connected components of $\OPT_2$ whose existence is guaranteed by \cref{lemma:2vc-type-combinations}. Otherwise, that is, $\opt_1^\typA \leq \opt_1^\typB$, we must have $\opt_1^\typA > \alpha \cdot \opt_1^\typC$. Specifically, $\opt_1^\typA \geq \opt_1^\typC + 1 \geq 4$ by \cref{lemma:2vc-opt-1-large}. \cref{lemma:2vc-type-combinations} also implies $\opt_1^\typB \leq \opt_1^\typC + 1$, which together gives $\opt_1^\typA = \opt_1^\typB = \opt_1^\typC + 1$. Then, $\opt_1^\typA > \alpha \cdot \opt_1^\typC = \alpha(\opt_1^\typA - 1)$ implies $\opt_1^\typA \leq 5$ since $\alpha \geq \frac{6}{5}$. We established that $\opt_1^\typA \in \{4,5\}$.

If $\opt_1^\typA = 4$, $\OPT_1^\typA$ has to be a $4$-cycle $C =u - a -v -b- u$. Notice that the edge $ab$ cannot exist since otherwise $\opt_1^\typB = 3$ due to the path $u - a - b - v$. Then every feasible solution must include the 4 edges of $C$ to guarantee degree at least 2 on $a$ and $b$. But this makes $C$ an $\alpha$-contractible subgraph on $4 \leq \frac{4}{\varepsilon}$ vertices, which is a contradiction to \cref{prop:properties-for-2vc-alg}.

If $\opt_1^\typA = 5$, the minimality of $\OPT_1^\typA$ implies that $\OPT_1^\typA$ is a $5$-cycle, say $C = u - a - v - b - c -u$. Observe that the edge $ab$ cannot exist, since otherwise $\{va,ab,bc,cu\}$ would be a type $\typB$ solution in $G_1$ of size $4$, contradicting $\opt_1^\typB = \opt_1^\typB = 5$. A symmetric construction shows that edge $ac$ cannot exist. Hence, every feasible solution restricted to $G_1$ must include the edges $\{au,av\}$ and furthermore at least $3$ more edge incident to $b$ and $c$, so at least $5$ edges altogether. This implies that $C$ is an $\alpha$-constractible subgraph of size $5 \leq \frac{4}{\varepsilon}$ vertices, which is a contradiction to \cref{prop:properties-for-2vc-alg}.
\end{proof}

\begin{lemma}[Part of the proof of Lemma A.6 in \cite{GargGA23improved}]\label{lemma:2vc-both-large-F}
    If the condition in \Cref{reduce-2vc:both-large} in \Cref{alg:2-vertex-cuts} applies and both, $H'_1$ and $H'_2$ are 2EC spanning subgraphs of $G'_1$ and $G'_2$, respectively, then there exist edges $F'$ with $|F'| \leq 2$ such that $H'$ is a 2EC spanning  subgraph of $G$.
\end{lemma}
\begin{proof}
First note that $H'_i$, $i \in \{1,2\}$, must be of type $\typA$, $\typB$, or $\typC$. If one of them is of type $\typA$ or they are both of type $\typB$, then $H'_1 \cup H'_2$ is 2EC. 
If one of them is of type $\typB$, say $H'_1$, and the other of type $\typC$, say $H'_2$, then let $H'_2(u)$ and $H'_2(v)$ be the two 2EC components of $H'_2$ containing $u$ and $v$, respectively. By \cref{lemma:2vc-type-combinations}, there must exist an edge $f \in E(G_2)$ between $H'_2(u)$ and $H'_2(v)$. Hence, $H'_1(u) \cup H'_2(v) \cup \{f\}$ is 2EC. 
The remaining case is that $H'_1$ and $H'_2$ are both of type $\typC$. In this case, $H'_1 \cup H'_2$ consists of precisely two 2EC components $H'(u)$ and $H'(v)$ containing $u$ and $v$, respectively. 
Since $G$ is 2EC, there must exist two edges $f$ and $g$ between $H'(u)$ and $H'(v)$ such that $H'_1 \cup H'_2 \cup \{f,g\}$ is 2EC. Thus, in all cases a desired set $F^\typC$ of size at most 2 exists such that $H'$ is a 2-ECSS of $G$.
\end{proof}

\begin{lemma}[Part of the proof of Lemma A.6 in \cite{GargGA23improved}]\label{lemma:2vc-b-F}
    If the condition in \Cref{reduce-2vc:b} in \Cref{alg:2-vertex-cuts} applies and $\Reduce(G_2^{\typB})$ is a 2EC spanning subgraph of $G_2^{\typB}$, then  $H^\typB$ is a 2EC spanning subgraph of~$G$.
\end{lemma}
\begin{proof}
Note that $\Reduce(G_2^{\typB})$ must contain the two dummy edges $wv$ and $wu$, which induce a type $\typB$ graph. Hence, by \cref{lemma:2vc-type-combinations}, $H^{\typB}_2$ is of type $\typA$ or $\typB$.
Thus, $H^{\typB} = \OPT_1^\typB \cup H^\typB$ is a 2-ECSS of $G$.
\end{proof}

\begin{lemma}[Part of the proof of Lemma A.6 in \cite{GargGA23improved}]\label{lemma:2vc-c-F}
    If the condition in \Cref{reduce-2vc:c} in \Cref{alg:2-vertex-cuts} applies and $\Reduce(G_2^{\typC})$ is a 2EC spanning subgraph of $G_2^{\typC}$, then there exists an edge set $F^{\typC}$ with $|F^{\typC}| \leq 1$ such that $H^\typC$ is a 2EC spanning subgraph of $G$.
\end{lemma}

\begin{proof}
Note that $H_2^\typC$ must be of type $\typA$ or $\typB$. 
\cref{lemma:2vc-type-combinations} guarantees the existence of an edge $f \in E(G_2)$ such that $\OPT_1^{\typC} \cup \{f\}$ is of type $\typB$ for $G_1$. Thus, $\OPT_1^{\typC} \cup H_2^\typC \cup \{f\}$ is a 2-ECSS of $G$.
\end{proof}

Finally, we prove that the reductions of \cref{alg:2-vertex-cuts} preserve the approximation factor.

\begin{lemma}[Part of the proof of Lemma A.7 in \cite{GargGA23improved}]\label{lemma:2vc-approx}
    Let $G$ be a 2-ECSS instance. If every recursive call to $\Reduce(G')$ in \Cref{alg:2-vertex-cuts} on input $G$ satisfies $\reduce(G') \leq \alpha \cdot \opt(G') + 4\varepsilon \cdot |V(G')| - 4$, then it holds that $\reduce(G) \leq \alpha \cdot \opt(G) + 4\varepsilon \cdot |V(G)| - 4$.
\end{lemma}

\begin{proof}
Let $\OPT_i \coloneq \OPT(G) \cap E(G_i)$ and $\opt_i \coloneq \abs{\OPT_i}$ for $i \in \{1,2\}$.
Note that $\opt(G) = \opt_1 + \opt_2$.
For the remainder of this proof, all line references (if not specified differently) refer to \Cref{alg:2-vertex-cuts}. We use in every case that $\opt_1 \geq 2$ (cf. \Cref{lemma:2vc-opt-1-large}).
We distinguish in the following which reduction the algorithm uses.

\begin{description}
    \item[(\Cref{reduce-2vc:both-large}: $|V_1| > \frac{2}{\varepsilon} - 4$)]
Note that for $i \in \{1,2\}$ we have that $\OPT_i$ is a feasible solution for $G_i'$ and therefore $\opt_i \geq \opt(G_i')$.
Using the assumption of the lemma, \Cref{lemma:2vc-both-large-F} and $\abs{V(G_1')} + \abs{V(G_2')} = \abs{V(G)}$, we conclude that
\begin{align*}
    \reduce(G) = \abs{H'_1} + \abs{H'_2} + \abs{F'} &\leq \alpha \cdot (\opt_1 + \opt_2) + 4\varepsilon(\abs{V(G_1')} + \abs{V(G_2')}) -8 + 2 \\ 
    &\leq \alpha \cdot \opt(G) + 4\varepsilon \cdot (\abs{V(G)}) - 4 \ .
\end{align*}

\item[(\Cref{reduce-2vc:b}: $t^{\min} = \typB$)]
Note that $\opt_2(G_2^{\typB}) \leq \opt_2 + 2$, because we can turn $\OPT_2$, which we can assume to be of type $\typA$ or $\typB$ by \Cref{lemma:2vc-only-b-and-c} and \Cref{lemma:2vc-type-combinations}, using both dummy edges $\{uw,vw\}$ into a solution for $G_2^{\typB}$. Thus,
\begin{align*} 
    \reduce(G) = \abs{H^{\typB}} &\leq \opt_1^{\typB} + \abs{H^{\typB}}
    \leq \opt_1 + (\alpha \cdot \opt_2(G_2^{\typB}) + 4 \varepsilon |V(G)| - 4) - 2 \\
    &\leq \opt_1 + (\alpha \cdot (\opt_2+2) + 4 \varepsilon |V(G)| - 4) - 2 \\
    &\leq \alpha \opt(G) + 4 \varepsilon |V(G)| - 4 +
    (\alpha - 1)(2 - \opt_1) \leq \alpha \opt(G) + 4 \varepsilon |V(G)| - 4 \ .
\end{align*}

\item[(\Cref{reduce-2vc:c}: $t^{\min} = \typC$)]
We distinguish the following cases depending on which type combination of \Cref{lemma:2vc-type-combinations} is present for $(\OPT_1,\OPT_2)$. We exclude the case that $\OPT_1$ is of type $\typA$ due to \Cref{lemma:2vc-only-b-and-c}. Moreover, \cref{lemma:2vc-c-F} gives that $|F^{\typC}| \leq 1$.

\begin{enumerate}[(a)]

    \item $(\typB, \{\typA,\typB\})$.
    In this case, $\opt_1^{\typB}$ exists but $t^{\min} = \typC$. Thus, the definition of \Cref{alg:2-vertex-cuts} and \Cref{def:2vc-ties} imply that 
    $\opt_1^{\typC} \leq \opt^{\typB}_1 - 1 = \opt_1 - 1$. Moreover, $\opt_2(G_2^{\typC}) \leq \opt_2 + 1$, because we can turn $\OPT_2$ with the dummy edge $\{uv\}$ into a solution for $G_2^{\typC}$. Thus,
    \begin{align*} 
    \reduce(G) = \abs{H^{\typC}} &\leq \opt_1^{\typC} + \abs{H^{\typC}} + \abs{F^\typC}\\
    &\leq (\opt_1 - 1) + (\alpha \cdot (\opt_2 + 1) + 4 \varepsilon |V(G)| - 4) - 1 + 1  \\
    &\leq \alpha \opt(G) + 4 \varepsilon |V(G)| - 4 +
    (\alpha - 1)(1 - \opt_1) \leq \alpha \opt(G) + 4 \varepsilon |V(G)| - 4 \ .
    \end{align*}

    \item $(\typC, \typA)$.
    In this case, $\opt_1 = \opt_1^{\typC}$ and $\OPT_2$ is a feasible solution for $G_2^{\typC}$. Thus,
    \begin{align*} 
    \reduce(G) = \abs{H^{\typC}} &\leq \opt_1^{\typC} + \abs{H^{\typC}} + \abs{F^\typC} \\
    &\leq \opt_1 + (\alpha \cdot \opt_2 + 4 \varepsilon |V(G)| - 4) - 1 + 1 \leq \alpha \opt(G) + 4 \varepsilon |V(G)| - 4 \ .
    \end{align*}
\end{enumerate}
\end{description}

This completes the proof of the lemma.
\end{proof}

\subsubsection{Auxiliary Lemmas for \Cref{alg:3-vertex-cuts}}\label{app:prep:helper-3vc}

\begin{proposition}\label{prop:properties-for-3vc-alg}
    If \Cref{alg:reduce} executes \Cref{alg:3-vertex-cuts}, then $G$ contains no non-isolating 2-vertex cuts, self loops, parallel edges, irrelevant edges, or $\alpha$-contractible subgraphs with at most $\frac{4}{\varepsilon}$ vertices.
\end{proposition}

\begin{lemma}
    Let $G$ be the graph when \Cref{alg:reduce} executes \Cref{alg:3-vertex-cuts}, $\{u,v,w\}$ be a large 3-vertex cut of $G$, $(V_1,V_2)$ be partition of $V \setminus \{u,v\}$ such that $V_1 \neq \emptyset \neq V_2$ and there are no edges between $V_1$ and $V_2$, and $H$ be a 2EC spanning subgraph of $G$. Let $G_i = G[V_i \cup \{u,v,w\}]$ and $H_i = E(G_i) \cap H$ for $i \in \{1,2\}$. Then, $H_1$ and $H_2$ are of a type in $\typesThreeVC$ with respect to $\{u,v,w\}$ and $G_1$ and $G_2$, respectively.
\end{lemma}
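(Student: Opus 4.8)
The claim is that, given a 2EC spanning subgraph $H$ of $G$ and the partition $(V_1,V_2)$ of $V \setminus \{u,v,w\}$ induced by the large 3-vertex cut $\{u,v,w\}$, each ``half'' $H_i$ (the edges of $H$ lying inside $G_i$) falls into one of the six solution types in $\cT$. The plan is to analyze the structure of $H_i$ via its 2EC components after contraction, exploiting only two facts: that $H$ is 2EC and spanning, and that the only edges of $H$ crossing between $V_1 \cup \{u,v,w\}$ and $V_2$ must pass through $\{u,v,w\}$ (there are no $V_1$–$V_2$ edges in $G$). First I would fix $i$, say $i=1$, let $H' = H_1$ with each maximal 2EC component contracted, and recall that $H'$ is a forest — indeed every edge of $H_1$ not inside a 2EC block is a bridge of $H_1$, so contracting blocks yields a forest (this is exactly the block-tree structure from \Cref{def:bridgeless, block, complex}, restricted to a subgraph). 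So $H'$ is a forest on super-nodes, and I must show its shape, together with the placement of $C(u),C(v),C(w)$, matches one of $\typA,\typBi,\typBii,\typCi,\typCii,\typCiii$.

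The key structural input is that $H$ is 2-edge-connected and spans $V_1$. Consider any vertex $x \in V_1$ and its 2EC component $C(x)$ in $H_1$; since $H$ is 2EC, $x$ must be connected to the rest of $H$, and in fact every bridge of $H_1$ would, if it were a bridge of $H$, violate 2-edge-connectivity — so any edge of $H_1$ that is a bridge within $H_1$ must be ``repaired'' by a path of $H$ leaving $V_1$ entirely, hence passing through two of $u,v,w$. This severely limits the forest $H'$: a super-node of $H'$ containing none of $u,v,w$ cannot be a leaf of $H'$ unless it is actually 2EC-connected through $\{u,v,w\}$, which is impossible since all exits go through the cut. Concretely, I would argue that every leaf of the forest $H'$ must contain at least one of $u,v,w$, and more generally every super-node of degree $\le 1$ in $H'$ contains a cut vertex; pushing this, the forest $H'$ can have at most three ``free ends'' and they are exactly occupied by (the components of) $u$, $v$, $w$. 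A short case analysis on how many of $u,v,w$ lie in a common super-node, and whether $H'$ is connected (a tree/path) or has two or three components, then enumerates precisely the six types: all three together gives $\typA$; a path with the three split $2{+}1$ across the two ends gives $\typBi$; two components split $2{+}1$ gives $\typBii$; a tree with all three in distinct super-nodes gives $\typCi$; a path-plus-isolated-node with a $1{+}1{+}1$ split gives $\typCii$; three isolated super-nodes gives $\typCiii$. The ``w.l.o.g.'' normalizations (e.g.\ $C_1 = C(u) = C(v)$) are just relabelings.

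The main obstacle is making rigorous the claim that the forest $H'$ cannot have a leaf super-node free of $\{u,v,w\}$, and that it cannot ``branch'' except through the cut — i.e.\ translating 2-edge-connectivity of the global $H$ into a constraint on the local $H_1$. The cleanest way is: suppose $H'$ had a leaf super-node $C$ with $C \cap \{u,v,w\} = \emptyset$; the unique bridge $e$ of $H_1$ incident to $C$ is then a bridge of $H$ as well, because any cycle of $H$ through $e$ would have to re-enter $C$, and every path from $C$ to $V(H)\setminus V(C)$ inside $H$ lies inside $H_1$ (no $V_1$–$V_2$ edges) and hence inside the block-tree must cross $e$ — contradiction. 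A symmetric but slightly more careful argument handles super-nodes of degree $2$ that are free of the cut but where the two incident bridges both need external repair: such a node forces two vertex-disjoint external paths, using up ``capacity'' at $\{u,v,w\}$, and a counting/menger-type argument bounds the total number of such demands by the $3$ cut vertices, yielding the at-most-three-ends conclusion. Once that bound is in hand, the enumeration of shapes is routine and I would present it as a compact table rather than prose. I expect the whole argument to take about a page, with the external-repair lemma being the only genuinely delicate point; everything else is bookkeeping on the block-tree of $H_1$.
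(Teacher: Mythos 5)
Your proposal is correct and follows essentially the same route as the paper: contract the maximal 2EC components of $H_i$ into a forest, use the absence of $V_1$--$V_2$ edges to argue that a leaf (or isolated) super-node avoiding $\{u,v,w\}$ would yield a bridge of $H$ (or disconnect it), and then enumerate the shapes --- the paper states the same fact as ``every edge of the contracted tree lies on a path between the super-nodes of $u$, $v$, $w$,'' which for trees is equivalent to your all-leaves-are-terminals condition. The only remark is that your anticipated degree-2/Menger ``capacity'' argument is unnecessary: once every leaf contains a distinct vertex of $\{u,v,w\}$, the at-most-three-ends bound and the case enumeration follow immediately.
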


\begin{proof}
    Let $i \in \{1,2\}$. First, observe that $H_i$ consist of either one, two or three connected components, and each component contains at least one of $\{u,v,w\}$, as otherwise $H$ would be disconnected. Let $H_i'$ be obtained by contracting each 2EC component $C$ of $H_i$ into a single super-node $C$, and let $C_i(x)$ be the corresponding super-node of the 2EC component of $H$ that contains $x \in V(H_i)$.

    First, assume that $H_i$ is a single connected component. 
    In this case $H_i'$ is a tree. If $H_i'$ is composed of a single vertex, $H_i$ is of type $\typA$. Otherwise, let $F$ be the set of edges of $H_i'$ that are on any simple path between $C_i(u)$, $C_i(v)$, and $C_i(w)$. If $H_i'$ contains an edge $e \in E(H_i') \setminus F$, then $e$ cannot belong to any cycle of $H$, because it does not belong to a cycle in $H_i$ and is not on a simple path between any two vertices of the cut $\{u,v,w\}$, contradicting that $H$ is 2EC. Thus, $H_i$ is of type $\typCi$ if $C_i(u)$, $C_i(v)$, and $C_i(w)$  are distinct, and of type $\typBi$ otherwise.

     Second, assume that $H_i$ is composed of two connected components.
     Let $H_i'(v)$ and $H_i'(w)$ denote the two trees in $H_i'$ such that w.l.o.g.\ $u$ and $v$ are in $H_i'(v)$ and $w$ is in $H_i'(w)$. 
     First note that $H_i'(w)$ must be a single vertex. Otherwise, there is an edge $e$ in $H_i'(w)$ that cannot be part of any cycle in $H$, contradicting that $H$ is 2EC.
     If $H_i'(v)$ is also a single vertex, then $H_i$ is of type $\typBii$.
     Otherwise, let $F$ be the set of edges of $H_i'(v)$ on the simple path between $C_i(u)$ and $C_i(v)$. 
     If $H_i'(v)$ contains an edge $e \in E(H_i'(v)) \setminus F$, then $e$ cannot belong to any cycle of $H$, contradicting that $H$ is 2EC. Thus, $H_i$ is of type $\typCii$.
     
     Finally, assume that $H_i$ is composed of three connected components.
     Let $H_i'(u)$, $H_i'(v)$, and $H_i'(w)$ denote the trees of $H_i'$ that contain $u$, $v$, and $w$, respectively. 
     Using the same argument as before, we can show that each tree is composed of a single vertex, and, thus, $H_i$ is of type $\typCiii$.
\end{proof}

\begin{lemma}\label{lemma:3vc-opt-1-large}
   In \Cref{alg:3-vertex-cuts} it holds that $\opt_1^{\min} \geq 8$.
\end{lemma}

\begin{proof}
Consider any feasible 2-ECSS solution $H_1$ for $G_1$, and let $H'_1 = H_1 \setminus \{u,v,w\}$ be the corresponding solution for $G_1 | \{u,v,w\}$.  
Since $G_1 | \{u,v,w\}$ contains at least $8$ vertices and $H'_1$ is a feasible 2-ECSS solution, $H'_1$ must contain at least $8$ edges. 
Thus, $|H_1| \geq |H'_1| \geq 8$.
\end{proof}

We have an immediate corollary from the above lemma, \cref{lemma:type-A-high-cost}, and $\alpha \geq \frac{5}{4}$.

\begin{corollary}\label{lemma:3vc-type-A-condition}
If \Cref{alg:3-vertex-cuts} reaches \Cref{reduce-3vc:one-small} and $\OPT_1^{\typA}$ exists, then $\opt_1^{\typA}  \geq \opt_1^{\min} + 3$. In particular, $t^{\min} \neq \typA$.
\end{corollary}

\begin{lemma}\label{lemma:3vc-spanning-tree}
     Let $G = (V,E)$ be a (multi-)graph that does not contain cut vertices or non-isolating 2-vertex cuts, and let $\{u,v,w\}$ be a large 3-vertex cut of $G$. 
     Let $(V_1,V_2)$ be a partition of $V \setminus \{u,v,w\}$ such that $7 \leq \abs{V_1} \leq \abs{V_2}$ and there are no edges between $V_1$ and $V_2$ in $G$, and let $G_1 \coloneq G[V_i \cup \{u,v,w\}]$ and $G_2 \coloneq G[V_i \cup \{u,v,w\}] \setminus \{uv, vw, uw\}$.
     For $i \in \{1,2\}$, let $H_i \subseteq E(G_i)$, let $H_i(z)$ be the connected component of $H_i$ that contains $z \in V(H_i)$ and let $H_i$ be such that $V(H_i) = V(H_i(u)) \cup V(H_i(v)) \cup V(H_i(w))$, i.e., each vertex $y \in V(H_i)$ is connected to $u$, $v$, or $w$ in $H_i$. Then, the following holds.
     \begin{enumerate}[1)]
         \item For $x \in \{u,v,w\}$ and $i \in \{1,2\}$, if $H_i(x) \neq H_i$, then there exists an edge $e \in E(G_i) \setminus H_i$ between the vertices of $H_i(x)$ and the vertices of $H_i \setminus H_i(x)$.
         \item For $i \in \{1,2\}$, if $H_i(u)$, $H_i(v)$ and $H_i(w)$ are all pairwise vertex-disjoint, then there exist edges $F \subseteq E(G_i) \setminus H_i$ such that $|F| \geq 2$ and $F \subseteq \{e_{uv},e_{vw},e_{uw}\}$ where $e_{xy}$ is between the vertices of $H_i(x)$ and the vertices of $H_i(y)$.
     \end{enumerate}
\end{lemma}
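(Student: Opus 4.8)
The plan is to prove part~1) by contradiction and then deduce part~2) from it by a short counting argument. Throughout I will use the standing (and, for all applications of this lemma, harmless) assumption that $V(H_i) = V(G_i) = V_i \cup \{u,v,w\}$, treating any of $u,v,w$ not incident to an $H_i$-edge as a singleton component $H_i(\cdot)$; the fully general case reduces to this one by replacing $A$ below with the union of the components of $G_i \setminus \{u,v,w\}$ meeting it.

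For part~1), fix $i \in \{1,2\}$ and $x \in \{u,v,w\}$ with $H_i(x) \neq H_i$, write $\{u,v,w\} = \{x,y,z\}$, and set $A \coloneq V(H_i(x))$ and $B \coloneq V(H_i) \setminus A$, so $B \neq \emptyset$. Assume for contradiction that no edge of $E(G_i) \setminus H_i$ runs between $A$ and $B$; since $A$ is a connected component of $H_i$ there is also no $H_i$-edge between $A$ and $B$, hence \emph{no} $G_i$-edge between $A$ and $B$. I record two observations. First, $A$ cannot contain all of $u,v,w$: otherwise $H_i(x)=H_i(y)=H_i(z)$ and the covering hypothesis forces $B=\emptyset$. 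Second, writing $A' \coloneq A \setminus \{u,v,w\} \subseteq V_i$ and $B' \coloneq B \setminus \{u,v,w\} \subseteq V_i$, the sets $A'$ and $B'$ partition $V_i$, and there is no $G$-edge between them (such an edge would be a $G_i$-edge joining $A$ to $B$); combined with the absence of $V_1$--$V_2$ edges, this gives $N_G(A') \subseteq A' \cup (A \cap \{u,v,w\})$ and symmetrically $N_G(B') \subseteq B' \cup (B \cap \{u,v,w\})$.

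Now I would run a short case analysis on $A \cap \{u,v,w\}$, which by the first observation is either $\{x\}$ or a two-element set, say $\{x,y\}$ (so $z \in B$). If $A \cap \{u,v,w\} = \{x\}$: when $A' \neq \emptyset$, $N_G(A') \subseteq A' \cup \{x\}$ makes $x$ a cut node (separating $A'$ from the nonempty set $B' \cup V_{3-i}$); when $A' = \emptyset$ we have $A = \{x\}$, so $x$ has no $G_i$-edge, whence $\{y,z\}$ separates $V_i$ from the rest of $G$, a $2$-vertex cut whose two sides have at least $7$ and at least $8$ vertices and is therefore non-isolating. If $A \cap \{u,v,w\} = \{x,y\}$ then $B \cap \{u,v,w\} = \{z\}$, so $N_G(B') \subseteq B' \cup \{z\}$: if both $A',B'$ are nonempty this makes $z$ a cut node; if $B' = \emptyset$ then $z$ has no $G_i$-edge and again $\{x,y\}$ separates $V_i$ from the rest (non-isolating by the size bound); if $A' = \emptyset$ then $A=\{x,y\}$, which for $i=2$ is impossible since $xy \notin E(G_2)$, while for $i=1$ we get $B' = V_1$ and $N_G(V_1) \subseteq V_1 \cup \{z\}$, so $z$ is a cut node. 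In every case we contradict the hypotheses on $G$, which proves part~1).

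For part~2), assume $H_i(u),H_i(v),H_i(w)$ are pairwise vertex-disjoint and write $P_x \coloneq V(H_i(x))$ for $x \in \{u,v,w\}$. Applying part~1) once for each $x$ gives an edge $f_x \in E(G_i) \setminus H_i$ from $P_x$ to $P_y \cup P_z$; thus each $f_x$ joins $P_x$ to exactly one of the other two parts, i.e.\ realizes one of the pairs $\{u,v\},\{v,w\},\{u,w\}$, and crucially $f_x$ never realizes the pair \emph{not} containing $x$. Hence $f_u,f_v,f_w$ cannot all realize the same pair, so at least two distinct pairs occur; selecting one edge per occurring pair yields the desired $F \subseteq \{e_{uv},e_{vw},e_{uw}\}$ with $\abs{F} \geq 2$ and $F \cap H_i = \emptyset$.

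I expect the main obstacle to be the bookkeeping in the case analysis of part~1), in particular the degenerate subcases where $A'$ or $B'$ is empty (so $A$ or $B$ consists only of cut vertices) and the corresponding need to rule out that the $2$-vertex cut produced is \emph{isolating}; this is exactly where the hypothesis $7 \le \abs{V_1} \le \abs{V_2}$ enters, since it guarantees that both sides of any cut we construct have at least two vertices. A secondary nuisance is the treatment of cut vertices $u,v,w$ with no incident $H_i$-edge, handled uniformly by the singleton-component convention mentioned at the outset.
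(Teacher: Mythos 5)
Your proof is correct and takes essentially the same route as the paper: assume the connecting edge does not exist and derive a contradiction with the absence of cut nodes and of non-isolating 2-vertex cuts, then obtain part~2) by applying part~1) to the three pairwise-disjoint components. Your case analysis is in fact somewhat more explicit than the paper's two-case sketch (e.g., when $H_i(x)$ contains two of $\{u,v,w\}$ you correctly switch the separator to the third cut vertex), and your standing convention $V(H_i)=V(G_i)$ matches how the lemma is actually invoked.
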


\begin{proof}
    We prove the first claim, and note that the second claim follows from applying the first claim twice. 
    Let $i \in \{1,2\}$, $x \in \{u,v,w\}$ and $H_i(x) \neq H_i$.
    If $H_i(x)$ is composed of the single vertex $x$ such that there is no edge between $x$ and $H_i \setminus H_i(x)$, then $\{u,v,w\} \setminus \{x\}$ must be a non-isolating two-vertex cut in $G$, a contradiction. Otherwise, that is, $H_i(x)$ consists of at least two vertices, and there is no edge between $H_i(x)$ and $H_i \setminus H_i(x)$ then $x$ is a cut vertex in $G$ (separating $V(H_i(x)) \setminus \{x\}$ from $V \setminus V(H_i(x))$), a contradiction.
\end{proof}

\begin{lemma}\label{lemma:3vc-make-2ec}
    Let $H \subseteq E$ be such that $H$ is connected and contains a spanning subgraph of $G$, and let $H_i \coloneq H \cap E(G_i)$ for $i \in \{1,2\}$.
    If for all $i \in \{1,2\}$ every edge $e \in H_i$ is part of a 2EC component of $H_i$ or lies on an $x-y$ path in $H_i$ where $x,y \in \{u,v,w\}$ such that $x \neq y$ and there exists an $x-y$ path in $H_{j}$ where $j=1$ if $i=2$ and $j=2$ if $i=1$, then $H$ is a 2EC spanning subgraph of $G$.
\end{lemma}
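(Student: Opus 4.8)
The plan is to reduce everything to the single claim that $H$ contains no bridge. Indeed, $H$ is assumed connected and to contain a spanning subgraph of $G$, so $V(H) = V(G)$; once we know $H$ is also bridgeless, it is by definition a 2EC spanning subgraph of $G$, which is exactly what we want.

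The first step is the structural observation that $E(G_1)$ and $E(G_2)$ are disjoint. By the choice of the partition $(V_1,V_2)$ there are no edges of $G$ between $V_1$ and $V_2$, so every edge of $G$ has both endpoints in $V_1 \cup \{u,v,w\}$ or both endpoints in $V_2 \cup \{u,v,w\}$; an edge could lie in both $E(G_1)$ and $E(G_2)$ only if both its endpoints are in $\{u,v,w\}$, but the only such edges $uv,vw,uw$ were deleted from $G_2$, and $G$ is simple at this stage of the reduction (\Cref{prop:properties-for-alg}). Hence $H = H_1 \cup H_2$ with $E(H_1) \cap E(H_2) = \emptyset$; in particular any walk contained in $H_1$ is edge-disjoint from any walk contained in $H_2$.

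The second step treats an arbitrary edge $e \in H$. By the previous paragraph we may assume $e \in H_1$; the case $e \in H_2$ is symmetric, since the hypothesis is stated symmetrically for $i \in \{1,2\}$. Apply the hypothesis to $e$: either (i) $e$ belongs to a 2EC component of $H_1$, or (ii) $e$ lies on an $x-y$ path $Q \subseteq H_1$ with $x,y \in \{u,v,w\}$, $x \neq y$, such that there is also an $x-y$ path $Q' \subseteq H_2$. In case (i), $e$ lies on a cycle of $H_1 \subseteq H$, so $e$ is not a bridge of $H$. In case (ii), write $e = ab$. Since $Q$ is a simple path it traverses $e$ exactly once, so $Q$ minus $e$ is the disjoint union of two subpaths, one joining $a$ to one of $x,y$ and the other joining $b$ to the other. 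Concatenating these two subpaths with $Q'$ — which avoids $e$ because $E(Q') \subseteq E(G_2)$ is disjoint from $e \in E(G_1)$ — yields a walk from $a$ to $b$ in $H \setminus \{e\}$, so $e$ is not a bridge of $H$. As $e$ was arbitrary, $H$ is bridgeless, completing the proof.

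The argument is short and I do not anticipate a genuine obstacle. The only place that needs a little care is the edge-disjointness of $E(G_1)$ and $E(G_2)$ used in the first step: without it, the path $Q$ in $H_1$ and the path $Q'$ in $H_2$ in case (ii) could share the very edge $e$, and the replacement walk would then fail to avoid $e$.
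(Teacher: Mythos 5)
Your proof is correct and follows essentially the same route as the paper: for each edge, either it lies in a 2EC component of its side, or the $x$-$y$ path through it in $H_i$ and the $x$-$y$ path in $H_j$ are edge-disjoint (since $uv,vw,uw$ are removed from $G_2$), yielding a cycle through the edge, and combined with connectivity and spanning this gives 2-edge-connectivity. Your bridgelessness phrasing is equivalent to the paper's "every edge lies in a 2EC component," and your explicit justification of $E(G_1)\cap E(G_2)=\emptyset$ only makes explicit what the paper leaves implicit.
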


\begin{proof}
    We show that every edge $e \in H$ is part of a 2EC component of $H$, which proves the lemma because $H$ is connected and spanning.
    If for all $i \in \{1,2\}$ every edge $e \in H_i$ is part of a 2EC component of $H_i$, it is also part of a 2EC component of $H$. Otherwise, for all $i \in \{1,2\}$, if $e$ lies on an $x-y$ path in $H_i$ where $x,y \in \{u,v,w\}$ such that $x \neq y$ and there exists an $x-y$ path in $H_{j}$ where $j=1$ if $i=2$ and $j=2$ if $i=1$, then $e$ is on a cycle of $H$ as both paths are edge-disjoint.
\end{proof}

\begin{lemma}\label{lemma:3vc-both-large-F}
    If the condition in \Cref{reduce-3vc:both-large} in \Cref{alg:3-vertex-cuts} applies and both, $H'_1$ and $H'_2$ are 2EC spanning subgraphs of $G'_1$ and $G'_2$, respectively, then there exist edges $F'$ with $|F'| \leq 4$ such that $H'$ is a 2EC spanning subgraph of $G$.
\end{lemma}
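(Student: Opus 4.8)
The plan is to show that we can 2-edge-connect $H' \coloneq H'_1 \cup H'_2$ by adding at most $4$ edges of $G$, using the structural information about $G$ coming from \Cref{prop:properties-for-alg} (no cut nodes, no non-isolating 2-vertex cuts) and the fact that $\{u,v,w\}$ is a \emph{large} $3$-vertex cut, so both $V_1$ and $V_2$ are sizeable. First I would recall that $H'_i$ is obtained from the contracted graph $G'_i = G_i|\{u,v,w\}$, so in $G'_i$ the three cut vertices $u,v,w$ are identified into a single node $z_i$; a 2EC spanning subgraph $H'_i$ of $G'_i$ therefore connects, and in fact 2-edge-connects, all of $V_i$ to $z_i$. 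Decontracting $z_i$ back to $\{u,v,w\}$, the edges of $H'_i$ become edges of $G_i$, and every vertex of $V_i$ is joined to the set $\{u,v,w\}$; moreover every edge of $H'_i$ lies on a cycle of $H'_i$ passing through $z_i$, i.e.\ on a $x$--$y$ path in $H_i \coloneq H'_i$ (viewed in $G_i$) with $x,y\in\{u,v,w\}$, possibly $x=y$. So $H' = H'_1 \cup H'_2$ is a connected spanning subgraph of $G$ (it spans since $V = V_1 \cup V_2 \cup \{u,v,w\}$ and each side reaches the cut), and the only obstruction to 2-edge-connectivity is that some edges of $H'_i$ lie on an $x$--$y$ path with $x\neq y$ for which there is no matching $x$--$y$ path on the other side.

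The key step is then to invoke \Cref{lemma:3vc-make-2ec}: it suffices to add to each $H'_i$ a constant number of edges of $G_i$ so that, for the resulting $H_i$, every $x$--$y$ path used on side $i$ is matched by an $x$--$y$ path on side $j$. Here I would use \Cref{lemma:3vc-spanning-tree}. Since $H'_i$ contracts $\{u,v,w\}$ to one node, after decontracting the components $H_i(u), H_i(v), H_i(w)$ need not be distinct, but in any case $V(H_i) = V(H_i(u)) \cup V(H_i(v)) \cup V(H_i(w))$; applying part~1) of \Cref{lemma:3vc-spanning-tree} repeatedly, whenever two of these components are distinct there is an edge of $G_i \setminus H_i$ joining them. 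So for each $i$ we can add at most $2$ such edges to $H_i$ to merge $H_i(u), H_i(v), H_i(w)$ into a single connected piece that contains all three of $u,v,w$ and is, together with the $x$--$y$ paths already present, 2-edge-connected as soon as the other side also connects $u,v,w$. Concretely, after adding at most two edges on side $1$ we have a $u$--$v$, a $v$--$w$ (hence a $u$--$w$) path inside $H_1$; similarly on side $2$ — but we must be careful not to overspend, since $2+2 = 4$ is exactly the budget. The cleaner accounting is: if $H'_i$ already makes all of $u,v,w$ lie in one 2EC block (after decontraction this can fail), no edges are needed there; otherwise at most two suffice, and one checks that in the worst case one side needs two and the other needs at most two, but typically fewer because a single 2EC-ing edge on the smaller side plus the paths on the larger side already close all cycles — I would do the bookkeeping by cases on how many of $H_i(u),H_i(v),H_i(w)$ coincide.

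The main obstacle I anticipate is precisely this tightness of the constant $4$: one must argue that we never need three edges on one side, and that the two sides' needs do not both hit the maximum simultaneously in a way that overshoots. The resolution is that on the side with $|V_i| \geq 7$ the decontracted $H'_i$ is quite richly connected — it is a 2EC spanning subgraph of $G'_i$, so it contains at least two edge-disjoint $z_i$-paths through $V_i$, which after decontraction already provide two edge-disjoint paths among $\{u,v,w\}$ as long as $H'_i$ uses at least two of the parallel "slots" at $z_i$; the degenerate case where $H'_i$ routes everything through a single vertex of $\{u,v,w\}$ is handled by adding the $\leq 2$ edges from \Cref{lemma:3vc-spanning-tree} on that side, and then the other side needs $0$. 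After this case analysis, set $F'$ to be the union of the edges added on the two sides, verify $|F'| \le 4$, and invoke \Cref{lemma:3vc-make-2ec} to conclude that $H' = H'_1 \cup H'_2 \cup F'$ is a 2EC spanning subgraph of $G$.
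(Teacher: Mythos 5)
Your proposal is correct and follows essentially the same route as the paper's proof: decontract $\{u,v,w\}$, use \Cref{lemma:3vc-spanning-tree} to add at most two edges $F_i$ per side so that each $H'_i \cup F_i$ is connected (hence all paths between distinct vertices of $\{u,v,w\}$ exist on both sides), and then conclude via \Cref{lemma:3vc-make-2ec}. Your worry about "overspending" is unnecessary: two edges per side already gives $|F'| \leq 4$, which is exactly the stated bound, so no case analysis on how many of $H_i(u), H_i(v), H_i(w)$ coincide is needed.
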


\begin{proof}
Let $i \in \{1,2\}$, and let $H'_i(u)$, $H'_i(v)$, and $H'_i(w)$ be the connected components of $H'_i$ after decontracting $\{u,v,w\}$ that contain $u$, $v$ and $w$, respectively.
    By \Cref{lemma:3vc-spanning-tree}, there exist edges $F_i \subseteq E(G_i)$ such that $|F_i| \leq 2$ and $H'_i \cup F_i$ is connected in $G_i$.
    Since $H'_i$ is spanning in $G_i$, setting $F \coloneq F_1 \cup F_2$ implies that $|F| \leq 4$ and that $H' = H'_1 \cup H'_2 \cup F$ is spanning and connected in $G$. 
    Moreover, observe that for $i \in \{1,2\}$ each edge $e \in H'_i \cup F_i$ is either in a 2EC component of $H'_i \cup F_i$ or lies on an $x-y$ path in $H'_i \cup F_i$, where $x, y \in \{u, v, w\}$, $x \neq y$. Thus, \Cref{lemma:3vc-make-2ec} implies that $H'$ is a 2EC spanning subgraph of $G$. 
\end{proof}

\begin{lemma}\label{lemma:3vc-b1-F}
    If the condition in \Cref{reduce-3vc:b1} in \Cref{alg:3-vertex-cuts} applies and $\Reduce(G_2^{\typBi})$ is a 2EC spanning subgraph of $G_2^{\typBi}$, then there exist edges $F^{\typBi}$ with $|F^{\typBi}| \leq 1$ such that $H^\typBi$ is a 2EC spanning subgraph of $G$.
\end{lemma}

\begin{proof}
Assume w.l.o.g.\ that $u$ and $v$ belong to the same 2EC component of $\OPT_1^{\typBi}$. 
    Let $H_1^{\typBi} \coloneq \OPT_1^{\typBi}$ and $C_i^{\typBi}(x)$ be the 2EC component of $H_i^{\typBi}$ that contains $x$, where $x \in \{u, v, w\}$ and $i \in \{1, 2\}$.  
    Since $C_1^{\typBi}(u) = C_1^{\typBi}(v)$, we have that in $\OPT_1^{\typBi} \cup H_2^{\typBi}$, the vertices of $C_1^{\typBi}(u)$, $C_2^{\typBi}(u)$, and $C_2^{\typBi}(v)$ are in the same 2EC component.
    Furthermore, in $\OPT_1^{\typBi}$, there is a connection between $C_1^{\typBi}(u)$ and $C_1^{\typBi}(w)$.
    If $w$ and $v$ are already connected in $H_2^{\typBi}$ then we can set $F^{\typBi} \coloneq \emptyset$ and see that this satisfies the lemma, since $H^{\typBi}$ contains a spanning tree and every edge in $H^{\typBi}$ is either in the 2EC component containing $u$ and $v$, in a 2EC component containing $w$ or on a $w-v$ path in $G_i$, $i \in \{1, 2\}$.
    Otherwise, if $w$ and $v$ are not in the same connected component in $H_2^{\typBi}$, by \Cref{lemma:3vc-spanning-tree}, there must be an edge $e$ from $C_2^{\typBi}(w)$ to either $C_2^{\typBi}(v)$ or $C_2^{\typBi}(u)$ in $E(G_2) \setminus H^{\typBi}$.
    Now it can be easily verified that $F^{\typBi} \coloneq \{ e \}$ satisfies the lemma, since $H^{\typBi}$ contains a spanning tree and every edge in $H^{\typBi}$ is either in the 2EC component containing $u$ and $v$, in a 2EC component containing $w$ or on a $w-v$ path in $G_i$, $i \in \{1, 2\}$.
\end{proof}

\begin{lemma}\label{lemma:3vc-b2-F}
    If the condition in \Cref{reduce-3vc:b2} in \Cref{alg:3-vertex-cuts} applies and $\Reduce(G_2^{\typBii})$ is a 2EC spanning subgraph of $G_2^{\typBii}$, then there exist edges $F^{\typBii}$ with $|F^{\typBii}| \leq 2$ such that $H^\typBii$ is a 2EC spanning subgraph of $G$.
\end{lemma}

\begin{proof}
    The proof is similar to the previous one.
Assume w.l.o.g.\ that $u$ and $v$ belong to the same 2EC component of $\OPT_1^{\typBii}$. 
    Let $H_1^{\typBii} \coloneq \OPT_1^{\typBii}$ and $C_i^{\typBii}(x)$ be the 2EC component of $H_i^{\typBii}$ that contains $x$, where $x \in \{u, v, w\}$ and $i \in \{1, 2\}$.  
    Since $C_1^{\typBii}(u) = C_1^{\typBii}(v)$, we have that in $\OPT_1^{\typBii} \cup H_2^{\typBii}$, the vertices of $C_1^{\typBii}(u)$, $C_2^{\typBii}(u)$ and $C_2^{\typBii}(v)$ are in the same 2EC component.
    Furthermore, in $\OPT_1^{\typBii}$, there is no connection between $C_1^{\typBi}(u)$ and $C_1^{\typBii}(w)$, since it is of type~$\typBii$.
    Hence, by \Cref{lemma:3vc-spanning-tree}, there must be an edge $e_1 \in E(G_1) \setminus \OPT_1^{\typBii}$.
    Note that $\OPT_1^{\typBii} \cup \{e_1\}$ is now a solution of type $\typBi$ for $G_1$, and hence we can do the same as in the proof of the previous lemma and observe that $\abs{F^{\typBii}} \leq 2$.
\end{proof}

\begin{lemma}\label{lemma:3vc-c1-F}
    If the condition in \Cref{reduce-3vc:c1} in \Cref{alg:3-vertex-cuts} applies and $\Reduce(G_2^{\typCi})$ is a 2EC spanning subgraph of $G_2^{\typCi}$, then there exist edges $F^{\typCi}$ with $|F^{\typCi}| \leq 2$ such that $H^\typCi$ is a 2EC spanning subgraph of $G$.
\end{lemma}

\begin{proof}
    Let $C_2^{\typCi}(x)$ be the 2EC component of $H_2^{\typCi}$ that contains $x$, for $x \in \{u,v,w\}$. 
Note that $\OPT_1^{\typCi}$ is connected and spanning for $G_1$ and that each edge $e \in \OPT_1^{\typCi}$ is either in some 2EC component or it is on a $y-z$-path in $\OPT_1^{\typCi}$ for $y, z \in \{u,v,w\}$, $y \neq z$.
We consider three cases. 
    If $H_2^{\typCi}$ is connected, then we claim that $F^{\typCi} \coloneq \emptyset$ satisfies the lemma.
    In this case every edge of $H_2^{\typCi}$ is part of a 2EC component of $H_2^{\typCi}$, or it is on a $y-z$-path in $H_2^{\typCi}$ for $y, z \in \{u,v,w\}$, $y \neq z$.
    Hence, \Cref{lemma:3vc-make-2ec} implies that $H_2^{\typCi} \cup \OPT_1^{\typCi}$ is a 2EC spanning subgraph of $G$.
    
    Next, assume that $H_2^{\typCi}$ is composed of two connected components, say w.l.o.g.\ one containing $u$ and $v$ and the other containing $w$, and observe that each component is 2EC. Then \Cref{lemma:3vc-spanning-tree} guarantees the existence of an edge $e$ in $G_2$ between $C_2^{\typCi}(w)$ and $C_2^{\typCi}(u)$ such that $H_2^{\typCi} \cup \{e\}$ is connected.
    Thus, we can analogously to the previous case show that 
    $\OPT_1^{\typCi} \cup H_2^{\typCi} \cup F^{\typCi}$ with $F^{\typCi} \coloneq \{e\}$ is a 2EC spanning subgraph of $G$, since in $ H_2^{\typCi} \cup F^{\typCi}$ each edge is either in some 2EC component or it is on a $y-z$-path in $ H_2^{\typCi} \cup F^{\typCi}$ for $y, z \in \{u,v,w\}$, $y \neq z$ (hence \Cref{lemma:3vc-make-2ec} holds).
    In the remaining case $H_2^{\typCi}$ is composed of three connected components, each containing exactly one vertex of $\{u, v, w\}$.
    Observe that each component is 2EC.
    Hence, \Cref{lemma:3vc-spanning-tree} guarantees the existence of at least two edges $e_1$ and $e_2$ in $G_2$ between two pairs of connected components of $H_2^{\typCi}$ such that $H_2^{\typCi} \cup \{e_1,e_2\}$ is connected.
    Thus, we can analogously to the previous cases show that $\OPT_1^{\typCi} \cup H_2^{\typCi} \cup F^{\typCi}$ with $F^{\typCi} \coloneq \{e_1,e_2\}$ is a 2EC spanning subgraph of $G$, since in $H_2^{\typCi} \cup F^{\typCi}$ each edge is either in some 2EC component or it is on a $y-z$-path in $ H_2^{\typCi} \cup F^{\typCi}$ for $y, z \in \{u,v,w\}$, $y \neq z$ (hence \Cref{lemma:3vc-make-2ec} holds). 
\end{proof}

\begin{lemma}\label{lemma:3vc-c2-i-F}
    If the condition in \Cref{reduce-3vc:c2-i} in \Cref{alg:3-vertex-cuts} applies and $\Reduce(G_2^{\typCii})$ is a 2EC spanning subgraph of $G_2^{\typCii}$, then there exist edges $F^{\typCii}$ with $|F^{\typCii}| \leq 1$ such that $H^\typCii$ is a 2EC spanning subgraph of $G$ and $|F^{\typCii}| - |\Reduce(G_2^{\typCii})| + |H_2^{\typCii}| \leq - 2$.
\end{lemma}

\begin{proof}
    Fix an optimal \typCii solution $\OPT_1^{\typCii}$ for $G_1$. Note that both dummy edges $uy$ and $vy$ must be part of $\Reduce(G_2^{\typCii})$ as otherwise $\Reduce(G_2^{\typCii})$ cannot be feasible for $G_2^{\typCii}$. By the same reason, $H_2^{\typCii}$ must be connected and spanning for $G_2$.
    
    We first consider the case where $vw \in \Reduce(G_2^{\typCii})$. 
    Let $C_1^{\typCii}(w)$ be the 2EC component of $\OPT_1^{\typCii}$ that contains $w$. Since $C_1^{\typCii}(w)$ is isolated in $\OPT_1^{\typCii}$, \Cref{lemma:3vc-spanning-tree} implies that there exists an edge $e$ in $G_1$ between $C_1^{\typCii}(w)$ and $V(G_1) \setminus C_1^{\typCii}(w)$. 
    Then, $\OPT_1^{\typCii} \cup \{e\}$ is connected and spanning for $G_1$, and every edge in $\OPT_1^{\typCii} \cup \{e\}$ is part of a 2EC component of $\OPT_1^{\typCii} \cup \{e\}$ or lies on a path in $\OPT_1^{\typCii} \cup \{e\}$ between any two distinct vertices of $\{u,v,w\}$ between which is a path in $H_2^{\typCii}$. 
    Similarly, every edge in $H_2^{\typCii}$ is part of a 2EC component of $H_2^{\typCii}$ or lies on a path in $H_2^{\typCii}$ between any two distinct vertices of $\{u,v,w\}$ between which is a path in $\OPT_1^{\typCii} \cup \{e\}$.
    Thus, $\OPT_1^{\typCii} \cup H_2^{\typCii} \cup F^{\typCii}$ with $F^{\typCii} \coloneq \{e\}$ is a 2EC spanning subgraph of $G$ by \Cref{lemma:3vc-make-2ec}.  Moreover, $|F^{\typCii}| - |\Reduce(G_2^{\typCii})| + |H_2^{\typCii}| = 1 - 3 = -2$.
    
    If $vw \notin \Reduce(G_2^{\typCii})$, then every edge in $\OPT_1^{\typCii}$ is part of a 2EC component of $\OPT_1^{\typCii}$ or lies on a path in $\OPT_1^{\typCii}$ between $u$ and $v$ between which is also a path in $H_2^{\typCii}$. 
    Further, every edge in $H_2^{\typCii}$ is part of a 2EC component of $H_2^{\typCii}$ or lies on a path in $H_2^{\typCii}$ between $u$ and $v$, and there is a path between $u$ and $v$ in $\OPT_1^{\typCii}$. 
    Note that there cannot be an edge in $H_2^{\typCii}$ that is on a path between $w$ and $u$ or $v$ but not in a 2EC component, because then $\Reduce(G_2^{\typCii})$ cannot be feasible for $G_2^{\typCii}$ as $vw \notin \Reduce(G_2^{\typCii})$.
    Thus, $\OPT_1^{\typCii} \cup H_2^{\typCii}$ is a 2EC spanning subgraph of $G$ by \Cref{lemma:3vc-make-2ec}. 
    We have that $|F^{\typCii}| - |\Reduce(G_2^{\typCii})| + |H_2^{\typCii}| = 0 - 2 = -2$.
\end{proof}

\begin{lemma}\label{lemma:3vc-c2-ii-F}
    If the condition in \Cref{reduce-3vc:c2-ii} in \Cref{alg:3-vertex-cuts} applies and $\Reduce(G_2^{\typCii})$ is a 2EC spanning subgraph of $G_2^{\typCii}$, then there exist edges $F^{\typCii}$ with $|F^{\typCii}| \leq 1$ and $\OPT_1^{\typCii}$ such that $H^\typCii$ is a 2EC spanning subgraph of $G$ and $|F^{\typCii}| - |\Reduce(G_2^{\typCii})| + |H_2^{\typCii}| \leq - 3$. 
\end{lemma}

\begin{proof}
    Let $D = \{uy,vz,zy,wy\}$. 
    First observe that $H_2^{\typCii}$ must be connected and spanning for $G_2$. 
    Note that $vz$ and $zy$ must be part of $\Reduce(G_2^{\typCii})$ to make $z$ incident to 2 edges, and additionally $vy$ or $wy$ must be part of $\Reduce(G_2^{\typCii})$ to make $y$ incident to at least 2 edges. 
    Thus, we can distinguish the following cases.

    If $\Reduce(G_2^{\typCii}) \cap D = \{uy,vz,zy\}$ (or by symmetry  $\Reduce(G_2^{\typCii}) \cap D = \{wy,vz,zy\}$), let $\OPT_1^{\typCii}$ be an optimal $\typCii$ solution for $G_1$ that contains a $C_1^{\typCii}(u)-C_1^{\typCii}(v)$ path, where $C_1^{\typCii}(x)$ denotes the 2EC component of $\OPT_1^{\typCii}$ that contains $x$, for $x \in \{u,v,w\}$.
    Note that every edge of $\OPT_1^{\typCii}$ is part of a 2EC component of $\OPT_1^{\typCii}$ or lies on a path in $\OPT_1^{\typCii}$ between $u$ and $v$ between which is also a path in $H_2^{\typCii}$.
    Further, every edge in $H_2^{\typCii}$ is part of a 2EC component of $H_2^{\typCii}$ or lies on a path in $H_2^{\typCii}$ between $u$ and $v$, and there is a path between $u$ and $v$ in $\OPT_1^{\typCii}$. Note that there cannot be an edge in $H_2^{\typCii}$ that is on a path between $w$ and $u$ or $v$ but not in a 2EC component, because then $\Reduce(G_2^{\typCii})$ cannot be feasible for $G_2^{\typCii}$ as $wy \notin \Reduce(G_2^{\typCii})$.
    Thus, $\OPT_1^{\typCii} \cup H_2^{\typCii}$ is a 2EC spanning subgraph of $G$ by \Cref{lemma:3vc-make-2ec}. We have that $|F^{\typCii}| - |\Reduce(G_2^{\typCii})| + |H_2^{\typCii}| = 0-3 = -3$.
    
    Otherwise, that is, $\Reduce(G_2^{\typCii}) \cap D = \{vy,vz,zy,wy\}$,
    let $\OPT_1^{\typCii}$ be an optimal \typCii solution for $G_1$ that contains a $C_1^{\typCii}(u)-C_1^{\typCii}(v)$ path of $\OPT_1^{\typCii}$, where $C_1^{\typCii}(x)$ denotes the 2EC component of $\OPT_1^{\typCii}$ that contains $x$, for $x \in \{u,v,w\}$.
    \Cref{lemma:3vc-spanning-tree} implies that there exists an edge $e$ in $G_1$ between $C_1^{\typCii}(w)$ and the connected component containing $V(G_1) \setminus V(C_1^{\typCii}(w))$. Therefore, $\OPT_1^{\typCii} \cup \{e\}$ is connected. 
    Thus, every edge in $\OPT_1^{\typCii} \cup \{e\}$ (resp.\ $H_2^{\typCii}$) is part of a 2EC component of $\OPT_1^{\typCii} \cup \{e\}$ ($H_2^{\typCii}$) or lies on a path in $\OPT_1^{\typCii} \cup \{e\}$  ($H_2^{\typCii}$) between any two distinct vertices of $\{u,v,w\}$ between which is a path in $H_2^{\typCii}$ ($\OPT_1^{\typCii} \cup \{e\}$).
    We conclude using \Cref{lemma:3vc-make-2ec} that $H_2^{\typCii} \cup \OPT_1^{\typCii} \cup F^{\typCii}$ with $F^{\typCii} \coloneq \{e\}$ is a 2EC spanning subgraph of $G$. We have that $|F^{\typCii}| - |\Reduce(G_2^{\typCii})| + |H_2^{\typCii}| = 1-4=-3$.
\end{proof}

\begin{lemma}\label{lemma:3vc-c2-iii-F}
    If the condition in \Cref{reduce-3vc:c2-iii} in \Cref{alg:3-vertex-cuts} applies  and $\Reduce(G_2^{\typCii})$ is a 2EC spanning subgraph of $G_2^{\typCii}$, then there exist edges $F^{\typCii}$ with $|F^{\typCii}| \leq 1$ and $\OPT_1^{\typCii}$ such that $H^\typCii$ is a 2EC spanning subgraph of $G$ and $|F^{\typCii}| - |\Reduce(G_2^{\typCii})| + |H_2^{\typCii}| \leq - 2$. 
\end{lemma}

\begin{proof}
     Let $D = \{uy,vy,wy\}$. 
     First observe that $H_2^{\typCii}$ must be connected and spanning for $G_2$. 
     Note that $|D \cap \Reduce(G_2^{\typCii})| \geq 2$ as $y$ needs to be 2EC in $\Reduce(G_2^{\typCii})$.
    
     If $|D \cap \Reduce(G_2^{\typCii})| = 2$, let w.l.o.g.\ by symmetry $D \cap \Reduce(G_2^{\typCii}) = \{uy,vy\}$ and $\OPT_1^{\typCii}$ be an optimal $\typCii$ solution for $G_1$ that contains a $C_1^{\typCii}(u)-C_1^{\typCii}(v)$ path, where $C_i^{\typCii}(x)$ denotes the 2EC component of $\OPT_1^{\typCii}$ that contains $x$, for $x \in \{u,v,w\}$.
     Note that every edge of $\OPT_1^{\typCii}$ is part of a 2EC component of $\OPT_1^{\typCii}$ or lies on a path in $\OPT_1^{\typCii}$ between $u$ and $v$ between which is also a path in $H_2^{\typCii}$.
     Further, every edge in $H_2^{\typCii}$ is part of a 2EC component of $H_2^{\typCii}$ or lies on a path in $H_2^{\typCii}$ between $u$ and $v$, and there is a path between $u$ and $v$ in $\OPT_1^{\typCii}$. Note that there cannot be an edge in $H_2^{\typCii}$ that is on a path between $w$ and $u$ or $v$ but not in a 2EC component, because then $\Reduce(G_2^{\typCii})$ cannot be feasible for $G_2^{\typCii}$ as $wy \notin \Reduce(G_2^{\typCii})$.
     Thus, $\OPT_1^{\typCii} \cup H_2^{\typCii}$ is a 2EC spanning subgraph of $G$ by \Cref{lemma:3vc-make-2ec}. 
     We have that $|F^{\typCii}| - |\Reduce(G_2^{\typCii})| + |H_2^{\typCii}| = 0-2=-2$.

     Otherwise, that is, $\Reduce(G_2^{\typCii}) \cap D = \{uy,vy,wy\}$,
     let $\OPT_1^{\typCii}$ be an optimal \typCii solution for $G_1$ that contains a $C_1^{\typCii}(u)-C_1^{\typCii}(v)$ path, where $C_i^{\typCii}(x)$ denotes the 2EC component of $\OPT_1^{\typCii}$ that contains $x$, for $x \in \{u,v,w\}$.
     \Cref{lemma:3vc-spanning-tree} implies that there exists an edge $e$ in $G_1$ between $C_1^{\typCii}(w)$ and the connected component containing $V(G_1) \setminus V(C_1^{\typCii}(w))$.
     Therefore, $\OPT_1^{\typCii} \cup \{e\}$ is connected. 
     Thus, every edge in $\OPT_1^{\typCii} \cup \{e\}$ (resp.\ $H_2^{\typCii}$) is part of a 2EC component of $\OPT_1^{\typCii} \cup \{e\}$ ($H_2^{\typCii}$) or lies on a path in $\OPT_1^{\typCii} \cup \{e\}$  ($H_2^{\typCii}$) between any two distinct vertices of $\{u,v,w\}$ between which is a path in $H_2^{\typCii}$ ($\OPT_1^{\typCii} \cup \{e\}$).
     We conclude using \Cref{lemma:3vc-make-2ec} that $H_2^{\typCii} \cup \OPT_1^{\typCii} \cup F^{\typCii}$ with $F^{\typCii} \coloneq \{e\}$ is a 2EC spanning subgraph of $G$. We have that $|F^{\typCii}| - |\Reduce(G_2^{\typCii})| + |H_2^{\typCii}| = 1-3=-2$.
\end{proof}

\begin{lemma}\label{lemma:3vc-c3-F}
    If the condition in \Cref{reduce-3vc:c3} in \Cref{alg:3-vertex-cuts} applies and $\Reduce(G_2^{\typCiii})$ is a 2EC spanning subgraph of $G_2^{\typCiii}$, then there exist edges $F^{\typCiii}$ with $|F^{\typCiii}| \leq 4$ such that $H^\typCiii$ is a 2EC spanning subgraph of $G$ and $|F^{\typCiii}| - |\Reduce(G_2^{\typCiii})| + |H_2^{\typCiii}| \leq 0$. 
\end{lemma}

\begin{proof}
    First note that the claimed $\typCiii$ solution in \Cref{reduce-3vc:c3-select-solution} is guaranteed by \Cref{lemma:3vc-spanning-tree} (among renaming $u$, $v$ and $w$).
    Let $C_2^{\typCiii}(x)$ be the 2EC component of $H_2^{\typCiii}$ that contains $x$, for $x \in \{u,v,w\}$.
We have 4 main cases.

    \textbf{Case 1:}
    $|\Reduce(G_2^{\typCiii}) \cap \{uv,vw\}| = \ell \leq 2$. First observe that $H_2^{\typCiii}$ is connected as otherwise $\Reduce(G_2^{\typCiii})$ cannot be feasible for $G_2^{\typCiii}$. Moreover, we can select $\ell$ edges $F^{\typCiii} \subseteq \{e^\typCiii_{uv},e^\typCiii_{vw}\}$ such that 
    $\OPT_1^{\typCiii} \cup F^{\typCiii}$ is connected.
    Further, note that every edge in $\OPT_1^{\typCiii} \cup F^{\typCiii}$ (resp.\ $H_2^{\typCiii}$) is part of a 2EC component of $\OPT_1^{\typCiii} \cup F^{\typCiii}$ ($H_2^{\typCiii}$) or lies on a path in $\OPT_1^{\typCiii} \cup F^{\typCiii}$  ($H_2^{\typCiii}$) between any two distinct vertices of $\{u,v,w\}$ between which is a path in $H_2^{\typCiii}$ ($\OPT_1^{\typCiii} \cup F^{\typCiii}$).
    Thus, \Cref{lemma:3vc-make-2ec} implies that $H_2^{\typCiii} \cup \OPT_1^{\typCiii} \cup F^{\typCiii}$ is a 2EC spanning subgraph of $G$.
    We have that $|F^{\typCiii}| - |\Reduce(G_2^{\typCiii})| + |H_2^{\typCiii}| = \ell - \ell = 0$.

    \textbf{Case 2:}
    $|\Reduce(G_2^{\typCiii}) \cap \{uv,uv\}| = 2$ (the case $|\Reduce(G_2^{\typCiii}) \cap \{vw,vw\}| = 2$ is symmetric).
    We have that $C_2^{\typCiii}(v) = C_2^{\typCiii}(w)$, as otherwise $\Reduce(G_2^{\typCiii})$ cannot be feasible for $G_2^{\typCiii}$. 
    If $H_2^{\typCiii}$ is 2EC, then clearly $H_2^{\typCiii} \cup \OPT_2^\typCiii$ is a 2EC spanning subgraph of $G$, and $|F^{\typCiii}| - |\Reduce(G_2^{\typCiii})| + |H_2^{\typCiii}| = 0 - 2 \leq 0$.

    If $C_2^{\typCiii}(u)$ is connected to $C_2^{\typCiii}(v)$ in $H_2^{\typCiii}$, then $H_2^\typCiii$ is connected. 
    Further, $\OPT_1^\typCiii \cup F^\typCiii$ with $F^\typCiii \coloneq \{e^\typCiii_{uv}\}$ is connected.
    Every edge in $\OPT_1^{\typCiii} \cup F^{\typCiii}$ is part of a 2EC component of $\OPT_1^{\typCiii} \cup F^{\typCiii}$ or lies on a path in $\OPT_1^{\typCiii} \cup F^{\typCiii}$ between $u$ and $v$, which are also connected in $H_2^{\typCiii}$. Note that there cannot be an edge in $\OPT_1^{\typCiii} \cup F^{\typCiii}$ on a path between $w$ and $u$ or $v$ that is not in a 2EC component by the choice of the $\typCiii$ solution $\OPT_1^{\typCiii}$.
    Moreover, every edge in $H_2^{\typCiii}$ is part of a 2EC component of $H_2^{\typCiii}$ or lies on a path in $H_2^{\typCiii}$ between $u$ and $v$ or $w$, which are also connected in $\OPT_1^{\typCiii} \cup F^{\typCiii}$. Note that there cannot be an edge in $H_2^{\typCiii}$ on a path between $v$ and $w$ that is not in a 2EC component, because $C_2^{\typCiii}(v) = C_2^{\typCiii}(w)$ is 2EC.
    Thus, \Cref{lemma:3vc-make-2ec} implies that $H_2^{\typCiii} \cup \OPT_1^{\typCiii} \cup F^{\typCiii}$ is a 2EC spanning subgraph of $G$, and $|F^{\typCiii}| - |\Reduce(G_2^{\typCiii})| + |H_2^{\typCiii}| = 1 - 2 \leq 0$.

    Otherwise, that is, $C_2^{\typCiii}(u)$ is not connected to $C_2^{\typCiii}(v)$ in $H_2^{\typCiii}$,
    by \Cref{lemma:3vc-spanning-tree} there must exist an edge $e$ in $G_2$ between $C_2^{\typCiii}(u)$ and $C_2^{\typCiii}(v)$ such that $H_2^{\typCiii} \cup \{e\}$ is connected. 
    Thus, we can analogously to the previous case show that 
    $H_2^{\typCiii} \cup \OPT_1^{\typCiii} \cup F^{\typCiii}$ with $F^{\typCiii} \coloneq \{e, e^\typCiii_{uv}\}$ is a 2EC spanning subgraph of $G$.
    We have that $|F^{\typCiii}| - |\Reduce(G_2^{\typCiii})| + |H_2^{\typCiii}| = 2 - 2 = 0$.

    \textbf{Case 3:}
    $|\Reduce(G_2^{\typCiii}) \cap \{uv,uv,vw\}| = 3$ (the case $|\Reduce(G_2^{\typCiii}) \cap \{uv,vw,vw\}| = 3$ is symmetric).
    Note that in this case $C_2^{\typCiii}(u)$ is connected to $C_2^{\typCiii}(v)$ in $H_2^{\typCiii}$ as otherwise $u$ and $w$ cannot be 2EC in $\Reduce(G_2^{\typCiii})$. 
    We have two cases.

    If $C_2^{\typCiii}(v)$ is connected to $C_2^{\typCiii}(w)$ in $H_2^{\typCiii}$, then $H_2^{\typCiii}$ is connected. 
    Further, $\OPT_1^{\typCiii} \cup F^{\typCiii}$ with $F^{\typCiii} \coloneq \{e^{\typCiii}_{uv},e^{\typCiii}_{vw}\}$ is connected and spanning for $G_1$.
    Thus, every edge in $\OPT_1^{\typCiii} \cup F^{\typCiii}$ (resp.\ $H_2^{\typCiii}$) is part of a 2EC component of $\OPT_1^{\typCiii} \cup F^{\typCiii}$ ($H_2^{\typCiii}$) or lies on a path in $\OPT_1^{\typCiii} \cup F^{\typCiii}$  ($H_2^{\typCiii}$) between any two distinct vertices of $\{u,v,w\}$ between which is a path in $H_2^{\typCiii}$ ($\OPT_1^{\typCiii} \cup F^{\typCiii}$).
    We conclude via \Cref{lemma:3vc-make-2ec} that $H_2^{\typCiii} \cup \OPT_1^{\typCiii} \cup F^{\typCiii}$ is a 2EC spanning subgraph of $G$, and  $|F^{\typCiii}| - |\Reduce(G_2^{\typCiii})| + |H_2^{\typCiii}| = 2-3\leq 0$.
   
    Otherwise, that is, $C_2^{\typCiii}(v)$ is not connected to $C_2^{\typCiii}(w)$ in $H_2^{\typCiii}$, \Cref{lemma:3vc-spanning-tree} guarantees the existence of an edge $e$ in $G_2$ between the vertices of $V(C_2^{\typCiii}(w))$ and the vertices of $V(H_2^{\typCiii}) \setminus V(C_2^{\typCiii}(w))$. 
    Therefore, $H_2^{\typCiii} \cup \{e\}$ is connected.
    Hence, we can analogously to the previous case show that $H_2^{\typCiii} \cup \OPT_1^{\typCiii} \cup F^{\typCiii}$ with $F^{\typCiii} \coloneq \{e,e^{\typCiii}_{uv},e^{\typCiii}_{vw}\}$ is a 2EC spanning subgraph of $G$. We have that $|F^{\typCiii}| - |\Reduce(G_2^{\typCiii})| + |H_2^{\typCiii}| = 3-3 = 0$.

    \textbf{Case 4:}
    $|\Reduce(G_2^{\typCiii}) \cap \{uv,uv,vw,vw\}| = 4$. We distinguish three cases.
    
    First, assume that $H_2^{\typCiii}$ is connected. We have that $\OPT_1^{\typCiii} \cup F^{\typCiii}$ with $F^{\typCiii} \coloneq \{e^{\typCiii}_{uv},e^{\typCiii}_{vw}\}$ is connected and spanning for $G_1$. 
    Thus, every edge in $\OPT_1^{\typCiii} \cup F^{\typCiii}$ (resp.\ $H_2^{\typCiii}$) is part of a 2EC component of $\OPT_1^{\typCiii} \cup F^{\typCiii}$ ($H_2^{\typCiii}$) or lies on a path in $\OPT_1^{\typCiii} \cup F^{\typCiii}$  ($H_2^{\typCiii}$) between any two distinct vertices of $\{u,v,w\}$ between which is a path in $H_2^{\typCiii}$ ($\OPT_1^{\typCiii} \cup F^{\typCiii}$).
    We conclude via \Cref{lemma:3vc-make-2ec} that $H_2^{\typCiii} \cup \OPT_1^{\typCiii} \cup F^{\typCiii}$ is a 2EC spanning subgraph of $G$, and  $|F^{\typCiii}| - |\Reduce(G_2^{\typCiii})| + |H_2^{\typCiii}| = 2-4\leq 0$.
    
    If $H_2^{\typCiii}$ is composed of two connected components, assume w.l.o.g.\ that $C_2^{\typCiii}(u)$ is connected to $C_2^{\typCiii}(v)$ in $H_2^{\typCiii}$ and the other connected component in $H_2^{\typCiii}$ is $C_2^{\typCiii}(w)$. 
    In this case, \Cref{lemma:3vc-spanning-tree} guarantees that there is an edge $e$ in $E(G_2) \setminus C_2^{\typCiii}(w)$ between the vertices of $C_2^{\typCiii}(w)$ and the vertices of $V(H_2^{\typCiii}) \setminus V(C_2^{\typCiii}(w))$. 
    Therefore, $H_2^{\typCiii} \cup \{e\}$ is connected.
    Hence, we can analogously to the previous case show that $H_2^{\typCiii} \cup \OPT_1^{\typCiii} \cup F^{\typCiii}$ with $F^{\typCiii} \coloneq \{e,e^{\typCiii}_{uv},e^{\typCiii}_{vw}\}$ is a 2EC spanning subgraph of $G$, and $|F^{\typCiii}| - |\Reduce(G_2^{\typCiii})| + |H_2^{\typCiii}| = 3-4\leq 0$.
    
    Otherwise, that is, $C_2^{\typCiii}(u)$, $C_2^{\typCiii}(v)$ and $C_2^{\typCiii}(w)$ are all pairwise disjoint, \Cref{lemma:3vc-spanning-tree} guarantees the existence of two edges $F \subseteq E(G_2)$ such that $H_2^{\typCiii} \cup F$ is connected.
    Hence, we can analogously to the previous case(s) show that $H_2^{\typCiii} \cup \OPT_1^{\typCiii} \cup F^{\typCiii}$
    with $F^{\typCiii} \coloneq F \cup \{e^{\typCiii}_{uv},e^{\typCiii}_{vw}\}$ is a 2EC spanning subgraph of $G$. We have that $|F^{\typCiii}| - |\Reduce(G_2^{\typCiii})| + |H_2^{\typCiii}| = 4-4 = 0$.
\end{proof}

Finally, we prove that the reductions of \cref{alg:3-vertex-cuts} preserve the approximation factor.
While the proof is quite long, it is of the same spirit of the analogue for \cref{alg:2-vertex-cuts} (cf.\ \cref{lemma:2vc-approx}) and easy to verify: We consider different cases depending on the reduction the algorithm uses and, if \Cref{alg:3-vertex-cuts} is executed, we distinguish for some cases further which solution types the actual optimal solution uses for both sides of the cut. For each case, we can easily derive bounds for the different parts of the solution that our algorithm produces, and put them together to obtain the overall bound, which is stated below.

\begin{lemma}\label{lemma:3vc-approx}
    Let $G$ be a 2-ECSS instance. If every recursive call to $\Reduce(G')$ in \Cref{alg:3-vertex-cuts} on input $G$ satisfies $\reduce(G') \leq \alpha \cdot \opt(G') + 4\varepsilon \cdot |V(G')| - 4$, then it holds that $\reduce(G) \leq \alpha \cdot \opt(G) + 4\varepsilon \cdot |V(G)| - 4$.
\end{lemma}

\begin{proof}
    Let $\OPT_i \coloneq \OPT(G) \cap E(G_i)$ and $\opt_i \coloneq \abs{\OPT_i}$ for $i \in \{1,2\}$.
    Note that $\opt(G) = \opt_1 + \opt_2$.
    For the remainder of this proof, all line references (if not specified differently) refer to \Cref{alg:3-vertex-cuts}. Note that \Cref{lemma:3vc-opt-1-large} gives $\opt_1 \geq \opt_1^{\min} \geq 8$. Furthermore, throughout the proof we require $\alpha \geq \frac 54$. We distinguish in the following which reduction the algorithm uses.

    \begin{description}
        \item[(\Cref{reduce-3vc:both-large}: $|V_1| > \frac{2}{\varepsilon} - 4$)] 
    Note that for $i \in \{1,2\}$, we have that $\OPT_i$ is a feasible solution for $G_i'$ and therefore $\opt_i \geq \opt(G_i')$.
    Using the induction hypothesis, \Cref{lemma:3vc-both-large-F} and $\abs{V(G_1)} + \abs{V(G_2)} = \abs{V_1} + \abs{V_2} + 2 = \abs{V(G)} -1$, we conclude that
    \[
    \reduce(G) = \abs{H'_1} + \abs{H'_2} + \abs{F'} \leq \alpha \cdot (\opt_1 + \opt_2) + 4\varepsilon(\abs{V_1} + \abs{V_2} + 2) -8 + 4 \leq \alpha \cdot \opt(G) + 4\varepsilon \cdot (\abs{V(G)}) - 4 \ .
    \]
    
        \item[(\Cref{reduce-3vc:b1}: $\OPT_1^{\typBi}$ exists and $\opt_1^{\typBi} \leq \opt_1^{\min} + 1$)]
        Note that $\opt(G_2^{\typBi}) \leq \opt_2$, because $u$, $v$, and $w$ are contracted into a single vertex in $G_2^{\typBi}$. 
        Thus, $\opt(G) = \opt_2 + \opt_1 \geq \opt(G_2^{\typBi}) + \opt_1$. 
        Moreover, $\opt_1^{\typBi} \leq \opt_1^{\min} + 1 \leq \opt_1 + 1$, and $\abs{F^{\typBi}} \leq 1$ by \Cref{lemma:3vc-b1-F}. Therefore, we conclude
        \begin{align*}
            \reduce(G) = \abs{H^{\typBi}} &\leq \opt_1^{\typBi} + \abs{H_2^{\typBi}} + \abs{F^{\typBi}}
            \leq (\opt_1 + 1) + (\alpha \cdot \opt_2(G_2^{\typBi})  + 4\varepsilon\abs{V(G_2^{\typBi})} - 4) + 1 \\
            &\leq \alpha \cdot \opt(G) + 4\varepsilon\abs{V(G)} - 4 - (\alpha - 1)\opt_1 + 2 \leq \alpha \cdot \opt(G) + 4\varepsilon\abs{V(G)} - 4 \ .
        \end{align*}
    
        \item[(\Cref{reduce-3vc:b2,reduce-3vc:c1}: $t^{\min} = \typBii$ or $t^{\min} = \typCi$)]
        Let $\tau \coloneq t^{\min}$.
        Note that $\opt(G_2^{\tau}) \leq \opt_2$, because $u$, $v$, and $w$ are contracted into a single vertex in $G_2^{\tau}$. Thus, $\opt(G) \geq \opt_2 + \opt_1(G_1^{\tau}) \geq \opt(G_2^{\tau}) + \opt_1(G_1^{\tau})$. Moreover, $\abs{F^{\tau}} \leq 2$ by \Cref{lemma:3vc-b2-F,lemma:3vc-c1-F}. Therefore, we conclude
        \begin{align*}
            \reduce(G) = \abs{H^{\tau}} &\leq \opt_1^{\tau} + \abs{H_2^{\tau}} + \abs{F^{\tau}}
            \leq \opt^{\tau}_1 + (\alpha \cdot \opt_2(G_2^{\tau})  + 4\varepsilon\abs{V(G_2^{\tau})} - 4) + 2 \\
            &\leq \alpha \cdot \opt(G) + 4\varepsilon\abs{V(G)} - 4 - (\alpha - 1)\opt^{\tau}_1 + 2 \leq \alpha \cdot \opt(G) + 4\varepsilon\abs{V(G)} - 4 \ .
        \end{align*}

\item[(\Cref{reduce-3vc:c2-general}: $t^{\min} = \typCii$)]
        In \Cref{alg:3-vertex-cuts} we have the following $\typCii$ subtypes, which use different reductions~$G_2^{\typCii}$ for the recursion (\Cref{reduce-3vc:c2-i,reduce-3vc:c2-ii,reduce-3vc:c2-iii}).
        \begin{description}
            \item[Subtype $\typCiisubba$ (\Cref{reduce-3vc:c2-i}):] \emph{Every $\OPT_1^{\typCii}$ solution contains a $C(u)-C(v)$ path.} In this case 
            the definition of \Cref{alg:3-vertex-cuts}, \Cref{lemma:3vc-c2-i-F}, and the induction hypothesis gives
            \begin{align}
            \reduce(G) = |H^{\typCii}| &\leq |\OPT_1^{\typCii}| + \abs{H_2^{\typCii}} + \abs{F^{\typCii}} \notag \leq \opt_1^{\typCii} + \reduce(G_2^{\typCii}) - 2 \\
            &\leq \opt_1^{\typCii} + \alpha \cdot \opt(G_2^{\typCii}) + 4\varepsilon\abs{V(G_2^{\typCii})} - 4 - 2 \ . \label{eq:3vc-c2-i}
            \end{align}
            \item[Subtype $\typCiisubb$ (\Cref{reduce-3vc:c2-ii}):] \emph{Every $\OPT_1^{\typCii}$ solution contains either a $C(u)-C(v)$ path or a $C(v)-C(w)$ path.} In this case 
            the definition of \Cref{alg:3-vertex-cuts}, \Cref{lemma:3vc-c2-ii-F}, and the induction hypothesis gives
            \begin{align}
            \reduce(G) = |H^{\typCii}| &\leq |\OPT_1^{\typCii}| + \abs{H_2^{\typCii}} + \abs{F^{\typCii}} \notag \leq \opt_1^{\typCii} + \reduce(G_2^{\typCii}) - 3 \\
            &\leq \opt_1^{\typCii} + \alpha \cdot \opt(G_2^{\typCii}) + 4\varepsilon\abs{V(G_2^{\typCii})} - 4 - 3 \ . \label{eq:3vc-c2-ii}
            \end{align} 
            \item[Subtype $\typCiisubc$ (\Cref{reduce-3vc:c2-iii}):] \emph{Every $\OPT_1^{\typCii}$ solution contains some path between $C(u)$, $C(v)$, and $C(w)$.} In this case 
            the definition of \Cref{alg:3-vertex-cuts}, \Cref{lemma:3vc-c2-iii-F}, and the induction hypothesis gives
            \begin{align}
            \reduce(G) = |H^{\typCii}| &\leq |\OPT_1^{\typCii}| + \abs{H_2^{\typCii}} + \abs{F^{\typCii}} \notag \leq \opt_1^{\typCii} + \reduce(G_2^{\typCii}) - 2 \\
            &\leq \opt_1^{\typCii} + \alpha \cdot \opt(G_2^{\typCii}) + 4\varepsilon\abs{V(G_2^{\typCii})} - 4 - 2 \ . \label{eq:3vc-c2-iii}
            \end{align}
        \end{description}
        We distinguish the following cases depending on which type combination of \Cref{lemma:3vc-type-combinations} is present for $(\OPT_1,\OPT_2)$.
        \begin{enumerate}[(a)]
            \item $(\typA, \{\typA,\typBi,\typBii,\typCi,\typCii,\typCiii\})$. Since $\OPT_1^{\typA}$ exists, \Cref{lemma:3vc-type-A-condition} gives $\opt_1 = \opt_1^{\typA} \geq \opt_1^{\typCii} + 3$.
            \begin{description}
            \item[Subtype \typCiisubba:] We have $\opt(G_2^{\typCii}) \leq \opt_2 + 5$ because we can turn $\OPT_2$ with the dummy edges $\{uy,vy,vw\}$ and at most two more edges, which are guaranteed by \Cref{lemma:3vc-spanning-tree} if required, into a solution for $G_2^{\typCii}$. Thus, \eqref{eq:3vc-c2-i} gives
            \begin{align*}
            \reduce(G) &\leq (\opt_1 - 3) + (\alpha \cdot (\opt_2 + 5) + 4\varepsilon\abs{V(G_2^{\typCii})} - 4 - 2 \\
                &\leq \alpha \cdot \opt(G) + 4\varepsilon\abs{V(G)} - 4 + (\alpha - 1)(5 - \opt_1) \leq \alpha \cdot \opt(G) + 4\varepsilon\abs{V(G)} - 4  \ .
            \end{align*} 
            \item[Subtype \typCiisubb:] We have $\opt(G_2^{\typCii}) \leq \opt_2 + 6$ because we can turn $\OPT_2$ with the dummy edges $\{uy,vz,zy,wy\}$ and at most two more edges, which are guaranteed by \Cref{lemma:3vc-spanning-tree} if required, into a solution for $G_2^{\typCii}$. Thus, \eqref{eq:3vc-c2-ii} gives
            \begin{align*}
            \reduce(G) &\leq (\opt_1 - 3) + (\alpha \cdot (\opt_2 + 6) + 4\varepsilon\abs{V(G_2^{\typCii})} - 4 - 3 \\
                &\leq \alpha \cdot \opt(G) + 4\varepsilon\abs{V(G)} - 4 + (\alpha - 1)(6 - \opt_1) \leq \alpha \cdot \opt(G) + 4\varepsilon\abs{V(G)} - 4  \ .
            \end{align*} 
            \item[Subtype \typCiisubc:] We have $\opt(G_2^{\typCii}) \leq \opt_2 + 5$ because we can turn $\OPT_2$ with the dummy edges $\{uy,vy,wy\}$ and at most two more edges, which are guaranteed by \Cref{lemma:3vc-spanning-tree} if required, into a solution for $G_2^{\typCii}$. Thus, \eqref{eq:3vc-c2-iii} gives
            \begin{align*}
            \reduce(G) &\leq (\opt_1 - 3) + (\alpha \cdot (\opt_2 + 5) + 4\varepsilon\abs{V(G_2^{\typCii})} - 4 - 2 \\
                &\leq \alpha \cdot \opt(G) + 4\varepsilon\abs{V(G)} - 4 + (\alpha - 1)(5 - \opt_1) \leq  \alpha \cdot \opt(G) + 4\varepsilon\abs{V(G)} - 4  \ .
            \end{align*}  
            \end{description}
    
            \item $(\typBi, \{\typA,\typBi,\typBii,\typCi,\typCii\})$.
            Since $\OPT_1^{\typBi}$ exists but the condition in \Cref{reduce-3vc:b1} did not apply, we conclude that $\opt_1^{\typBi} \geq \opt_1^{\min} + 2 = \opt_1^{\typCii} + 2$.
            \begin{description}
            \item[Subtype \typCiisubba:] 
            We have $\opt(G_2^{\typCii}) \leq \opt_2 + 4$ because we can turn $\OPT_2$ with the dummy edges $\{uy,vy, vw\}$ and at most one more edge, which is guaranteed by \Cref{lemma:3vc-spanning-tree} if required, into a solution for $G_2^{\typCii}$.
            Thus, continuing \eqref{eq:3vc-c2-i} gives
            \begin{align*}
            \reduce(G) &\leq \opt_1 - 2 + \alpha \cdot (\opt_2 + 4) + 4\varepsilon\abs{V(G_2^{\typCii})} - 4 - 2  \\
            &\leq \alpha \cdot \opt(G) + 4\varepsilon\abs{V(G)} - 4 + (\alpha - 1)(4 - \opt_1) \leq \alpha \cdot \opt(G) + 4\varepsilon\abs{V(G)} - 4 \ .
            \end{align*} 
            \item[Subtype \typCiisubb:] We have $\opt(G_2^{\typCii}) \leq \opt_2 + 5$ because we can turn $\OPT_2$ with the dummy edges $\{uy,vz,zy,wy\}$ and at most one more edge, which is guaranteed by \Cref{lemma:3vc-spanning-tree} if required, into a solution for $G_2^{\typCii}$.
            Thus,  \eqref{eq:3vc-c2-ii} gives
            \begin{align*}
            \reduce(G) &\leq \opt_1 - 2 + \alpha \cdot (\opt_2 + 5) + 4\varepsilon\abs{V(G_2^{\typCii})} - 4 - 3  \\
            &\leq \alpha \cdot \opt(G) + 4\varepsilon\abs{V(G)} - 4 + (\alpha - 1)(5 - \opt_1) \leq \alpha \cdot \opt(G) + 4\varepsilon\abs{V(G)} - 4 \ .
            \end{align*} 
            \item[Subtype \typCiisubc:] We have $\opt(G_2^{\typCii}) \leq \opt_2 + 4$ because we can turn $\OPT_2$ with the dummy edges $\{uy,vy,wy\}$ and at most one more edge, which is guaranteed by \Cref{lemma:3vc-spanning-tree} if required, into a solution for $G_2^{\typCii}$.
            Thus, \eqref{eq:3vc-c2-iii} gives
            \begin{align*}
            \reduce(G) &\leq \opt_1 - 2 + \alpha \cdot (\opt_2 + 4) + 4\varepsilon\abs{V(G_2^{\typCii})} - 4 - 2  \\
            &\leq \alpha \cdot \opt(G) + 4\varepsilon\abs{V(G)} - 4 + (\alpha - 1)(4 - \opt_1) \leq \alpha \cdot \opt(G) + 4\varepsilon\abs{V(G)} - 4 \ .
            \end{align*} 
            \end{description}
    
            \item $(\typBii, \{\typA,\typBi,\typBii\})$. In this case $\OPT_1^{\typBii}$ exists but $t^{\min} = \typCii$. Thus, the definition of \Cref{alg:3-vertex-cuts} and \Cref{def:3vc-ties} imply that $\opt_1^{\typCii} \leq \opt^{\typBii}_1 - 1 = \opt_1 - 1$. 
            Furthermore, in this case at least two vertices of $\{u, v, w\}$ are in a 2EC component in $\OPT_2$.
            \begin{description}
            \item[Subtype \typCiisubba:] We have $\opt(G_2^{\typCii}) \leq \opt_2 + 3$ because we can turn $\OPT_2$ with two of the dummy edges $\{uy,vy,vw\}$ and at most one more edge, which is guaranteed by \Cref{lemma:3vc-spanning-tree} if required, into a solution for $G_2^{\typCii}$.
            Thus, \eqref{eq:3vc-c2-i} gives
            \begin{align*}
            \reduce(G) &\leq \opt_1 - 1 + \alpha \cdot (\opt_2 + 3) + 4\varepsilon\abs{V(G_2^{\typCii})} - 4 - 2  \\
            &\leq \alpha \cdot \opt(G) + 4\varepsilon\abs{V(G)} - 4 + (\alpha - 1)(3 - \opt_1) \leq \alpha \cdot \opt(G) + 4\varepsilon\abs{V(G)} - 4 \ .
            \end{align*} 
            \item[Subtype \typCiisubb:] We have $\opt(G_2^{\typCii}) \leq \opt_2 + 4$ because we we can turn $\OPT_2$ with the dummy edges $\{vz,zy \}$, one of the dummy edges $\{uy, wy\}$, and at most one more edge, which is guaranteed by \Cref{lemma:3vc-spanning-tree} if required, into a solution for $G_2^{\typCii}$.
            Thus, \eqref{eq:3vc-c2-ii} gives
            \begin{align*}
            \reduce(G) &\leq \opt_1 - 1 + \alpha \cdot (\opt_2 + 4) + 4\varepsilon\abs{V(G_2^{\typCii})} - 4 - 3  \\
            &\leq \alpha \cdot \opt(G) + 4\varepsilon\abs{V(G)} - 4 + (\alpha - 1)(4 - \opt_1) \leq \alpha \cdot \opt(G) + 4\varepsilon\abs{V(G)} - 4 \ .
            \end{align*} 
            \item[Subtype \typCiisubc:] We have $\opt(G_2^{\typCii}) \leq \opt_2 + 3$ because we can turn $\OPT_2$ with two of the dummy edges $\{uy,vy,wy\}$ and at most one more edge, which is guaranteed by \Cref{lemma:3vc-spanning-tree} if required, into a solution for $G_2^{\typCii}$.
            Thus, \eqref{eq:3vc-c2-iii} gives
            \begin{align*}
            \reduce(G) &\leq \opt_1 - 1 + \alpha \cdot (\opt_2 + 3) + 4\varepsilon\abs{V(G_2^{\typCii})} - 4 - 2  \\
            &\leq \alpha \cdot \opt(G) + 4\varepsilon\abs{V(G)} - 4 + (\alpha - 1)(3 - \opt_1) \leq \alpha \cdot \opt(G) + 4\varepsilon\abs{V(G)} - 4 \ .
            \end{align*} 
            \end{description}

            \item $(\typCi, \{\typA,\typBi,\typCi\})$. 
            In this case $\opt_1^{\typCi}$ exists but $t^{\min} = \typCii$. Thus, the definition of \Cref{alg:3-vertex-cuts} and \Cref{def:3vc-ties} imply that $\opt_1^{\typCii} \leq \opt^{\typCi}_1 - 1 = \opt_1 - 1$.
            \begin{description}
            \item[Subtype \typCiisubba:] We have $\opt(G_2^{\typCii}) \leq \opt_2 + 3$ because we can turn $\OPT_2$ with the dummy edges $\{uy,vy,vw\}$ into a solution for $G_2^{\typCii}$.
            Thus, continuing \eqref{eq:3vc-c2-i} gives
            \begin{align*}
            \reduce(G) &\leq \opt_1 - 1 + \alpha \cdot (\opt_2 + 3) + 4\varepsilon\abs{V(G_2^{\typCii})} - 4 - 2  \\
            &\leq \alpha \cdot \opt(G) + 4\varepsilon\abs{V(G)} - 4 + (\alpha - 1)(3 - \opt_1) \leq \alpha \cdot \opt(G) + 4\varepsilon\abs{V(G)} - 4 \ .
            \end{align*} 
            \item[Subtype \typCiisubb:] We have $\opt(G_2^{\typCii}) \leq \opt_2 + 4$ because we can turn $\OPT_2$ with the dummy edges $\{uy,vz,zy,wy\}$ into a solution for $G_2^{\typCii}$.
            Thus, continuing \eqref{eq:3vc-c2-ii} gives
            \begin{align*}
            \reduce(G) &\leq \opt_1 - 1 + \alpha \cdot (\opt_2 + 4) + 4\varepsilon\abs{V(G_2^{\typCii})} - 4 - 3  \\
            &\leq \alpha \cdot \opt(G) + 4\varepsilon\abs{V(G)} - 4 + (\alpha - 1)(4 - \opt_1) \leq \alpha \cdot \opt(G) + 4\varepsilon\abs{V(G)} - 4 \ .
            \end{align*} 
            \item[Subtype \typCiisubc:] We have $\opt(G_2^{\typCii}) \leq \opt_2 + 3$ because we can turn $\OPT_2$ with the dummy edges $\{uy,vy,wy\}$ into a solution for $G_2^{\typCii}$.
            Thus, continuing \eqref{eq:3vc-c2-iii} gives
            \begin{align*}
            \reduce(G) &\leq \opt_1 - 1 + \alpha \cdot (\opt_2 + 3) + 4\varepsilon\abs{V(G_2^{\typCii})} - 4 - 2  \\
            &\leq \alpha \cdot \opt(G) + 4\varepsilon\abs{V(G)} - 4 + (\alpha - 1)(3 - \opt_1) \leq \alpha \cdot \opt(G) + 4\varepsilon\abs{V(G)} - 4 \ .
            \end{align*} 
            \end{description}
            
          \item $(\typCii, \{\typA,\typBi\})$. We have $\opt_1^{\typCii} = \opt_1$.
          \begin{description}
            \item[Subtype \typCiisubba:] We have $\opt(G_2^{\typCii}) \leq \opt_2 + 3$ because we can turn $\OPT_2$ with the dummy  edges $\{uy,vy,vw\}$ into a solution for $G_2^{\typCii}$.
            Thus, continuing \eqref{eq:3vc-c2-i} gives
            \begin{align*}
            \reduce(G) &\leq \opt_1 + \alpha \cdot (\opt_2 + 3) + 4\varepsilon\abs{V(G_2^{\typCii})} - 4 - 2  \\
            &\leq \alpha \cdot \opt(G) + 4\varepsilon\abs{V(G)} - 4 + (\alpha - 1)(2 - \opt_1) + \alpha \\
            &\leq \alpha \cdot \opt(G) + 4\varepsilon\abs{V(G)} - 4 \ .
            \end{align*} 
            \item[Subtype \typCiisubb:] We have $\opt(G_2^{\typCii}) \leq \opt_2 + 4$ because we can turn $\OPT_2$ with the dummy edges $\{uy,vz,zy,wy\}$ into a solution for $G_2^{\typCii}$.
            Thus, continuing \eqref{eq:3vc-c2-ii} gives
            \begin{align*}
            \reduce(G) &\leq \opt_1 + \alpha \cdot (\opt_2 + 4) + 4\varepsilon\abs{V(G_2^{\typCii})} - 4 - 3  \\
            &\leq \alpha \cdot \opt(G) + 4\varepsilon\abs{V(G)} - 4 + (\alpha - 1)(3 - \opt_1) + \alpha \\
            &\leq \alpha \cdot \opt(G) + 4\varepsilon\abs{V(G)} - 4 \ .
            \end{align*} 
           \item[Subtype \typCiisubc:] We have $\opt(G_2^{\typCii}) \leq \opt_2 + 3$ because we can turn $\OPT_2$ with the dummy edges $\{uy,vy,wy\}$ into a solution for $G_2^{\typCii}$.
            Thus, continuing \eqref{eq:3vc-c2-iii} gives
            \begin{align*}
            \reduce(G) &\leq \opt_1 + \alpha \cdot (\opt_2 + 3) + 4\varepsilon\abs{V(G_2^{\typCii})} - 4 - 2  \\
            &\leq \alpha \cdot \opt(G) + 4\varepsilon\abs{V(G)} - 4 + (\alpha - 1)(2 - \opt_1) + \alpha \\ 
            &\leq \alpha \cdot \opt(G) + 4\varepsilon\abs{V(G)} - 4 \ .
            \end{align*} 
            \end{description}
    
            \item $(\typCiii, \typA)$. We have $\opt_1^{\typCii} = \opt_1^{\min} \leq \opt_1^{\typCiii} = \opt_1$.
            \begin{description}
            \item[Subtype \typCiisubba:] We have $\opt(G_2^{\typCii}) \leq \opt_2 + 2$ because we can turn $\OPT_2$ with the dummy edges $\{uy,vy\}$ into a solution for $G_2^{\typCii}$.
            Thus, continuing \eqref{eq:3vc-c2-i} gives
            \begin{align*}
            \reduce(G) &\leq \opt_1 + \alpha \cdot (\opt_2 + 2) + 4\varepsilon\abs{V(G_2^{\typCii})} - 4 - 2  \\
            &\leq \alpha \cdot \opt(G) + 4\varepsilon\abs{V(G)} - 4 + (\alpha - 1)(2 - \opt_1) \\ 
            &\leq \alpha \cdot \opt(G) + 4\varepsilon\abs{V(G)} - 4 \ .
            \end{align*} 
            \item[Subtype \typCiisubb:] We have $\opt(G_2^{\typCii}) \leq \opt_2 + 3$ because we can turn $\OPT_2$ with the dummy edges $\{uy,vz,zy\}$ into a solution for $G_2^{\typCii}$.
            Thus, continuing \eqref{eq:3vc-c2-ii} gives
            \begin{align*}
            \reduce(G) &\leq \opt_1 + \alpha \cdot (\opt_2 + 3) + 4\varepsilon\abs{V(G_2^{\typCii})} - 4 - 3  \\
            &\leq \alpha \cdot \opt(G) + 4\varepsilon\abs{V(G)} - 4 + (\alpha - 1)(3 - \opt_1) \\
            &\leq \alpha \cdot \opt(G) + 4\varepsilon\abs{V(G)} - 4 \ .
            \end{align*} 
            \item[Subtype \typCiisubc:] We have $\opt(G_2^{\typCii}) \leq \opt_2 + 2$ because we can turn $\OPT_2$ with the dummy edges $\{uy,vy\}$ into a solution for $G_2^{\typCii}$.
            Thus, continuing \eqref{eq:3vc-c2-iii} gives
            \begin{align*}
            \reduce(G) &\leq \opt_1 + \alpha \cdot (\opt_2 + 2) + 4\varepsilon\abs{V(G_2^{\typCii})} - 4 - 2  \\
            &\leq \alpha \cdot \opt(G) + 4\varepsilon\abs{V(G)} - 4 + (\alpha - 1)(2 - \opt_1) \\
            &\leq \alpha \cdot \opt(G) + 4\varepsilon\abs{V(G)} - 4 \ .
            \end{align*} 
            \end{description}
        \end{enumerate}

\item[(\Cref{reduce-3vc:c3}: $t^{\min} = \typCiii$)] 
        Let $D^{\typCiii} \coloneq \{uv,uv,vw,vw\}$ denote the set of dummy edges used for the construction of $G_2^\typCiii$.
        First observe that the definition of \Cref{alg:3-vertex-cuts}, \Cref{lemma:3vc-c3-F}, and the induction hypothesis gives
         \begin{align}
            \reduce(G) = |H^{\typCiii}| &\leq |\OPT_1^{\typCiii}| + \abs{H_2^{\typCiii}} + \abs{F^{\typCiii}} \leq \opt_1^{\typCiii} + \reduce(G_2^{\typCiii}) \notag \\ 
            &\leq \opt_1^{\typCiii} + \alpha \cdot \opt(G_2^{\typCiii}) + 4\varepsilon\abs{V(G_2^{\typCiii})} - 4 \ . \label{eq:3vc-c3}
         \end{align}
        
        We distinguish the following cases depending on which type combination of \Cref{lemma:3vc-type-combinations} is present for $(\OPT_1,\OPT_2)$.
        \begin{enumerate}[(a)]
            \item $(\typA, \{\typA,\typBi,\typBii,\typCi,\typCii,\typCiii\})$. Since $\OPT_1^{\typA}$ exists, \Cref{lemma:3vc-type-A-condition} gives $\opt_1 = \opt_1^{\typA} \geq \opt_1^{\typCiii} + 3$. Moreover, $\opt(G_2^{\typCiii}) \leq \opt_2 + 4$ because we can turn $\OPT_2$ with at most four dummy edges of $D^{\typCiii}$ into a solution for $G_2^{\typCiii}$. Thus, \eqref{eq:3vc-c3} gives
            \begin{align*}
            \reduce(G) &\leq \opt_1 - 3 + \alpha \cdot (\opt_2 + 4) + 4\varepsilon\abs{V(G_2^{\typCiii})} - 4 \\
            &\leq \alpha \cdot \opt(G) + 4\varepsilon\abs{V(G)} - 4 + (\alpha - 1)(3 - \opt_1) + \alpha \leq \alpha \cdot \opt(G) + 4\varepsilon\abs{V(G)} - 4 \ . 
            \end{align*} 
            
            \item $(\typBi, \{\typA,\typBi,\typBii,\typCi,\typCii\})$.
            Since $\OPT_1^{\typBi}$ exists but the condition in \Cref{reduce-3vc:b1} did not apply, we conclude that $\opt_1 = \opt_1^{\typBi} \geq \opt_1^{\min} + 2 = \opt_1^{\typCiii} + 2$.
             Moreover, $\opt(G_2^{\typCiii}) \leq \opt_2 + 3$ because we can turn $\OPT_2$ with at most three dummy edges of $D^{\typCiii}$ into a solution for $G_2^{\typCiii}$.  Thus, \eqref{eq:3vc-c3} gives
            \begin{align*}
                \reduce(G) &\leq \opt_1 - 2 + \alpha \cdot (\opt_2 + 3) + 4\varepsilon\abs{V(G_2^{\typCiii})} - 4 \\
                &\leq \alpha \cdot \opt(G) + 4\varepsilon\abs{V(G)} - 4 + (\alpha - 1)(2 - \opt_1) + \alpha \leq \alpha \cdot \opt(G) + 4\varepsilon\abs{V(G)} - 4 \ .
            \end{align*} 
            
            \item $(\typBii, \{\typA,\typBi,\typBii\})$. 
            In this case $\OPT_1^{\typBii}$ exists but $t^{\min} = \typCiii$. Thus, the definition of \Cref{alg:3-vertex-cuts} and \Cref{def:3vc-ties} imply that $\opt_1^{\typCiii} \leq \opt^{\typBii}_1 - 1 = \opt_1 - 1$.
            Moreover, $\opt(G_2^{\typCiii}) \leq \opt_2 + 2$ because we can turn $\OPT_2$ with at most two dummy edges of $D^{\typCiii}$ into a solution for $G_2^{\typCiii}$.  Thus, \eqref{eq:3vc-c3} gives
            \begin{align*}
            \reduce(G) &\leq \opt_1 - 1 + \alpha \cdot (\opt_2 + 2) + 4\varepsilon\abs{V(G_2^{\typCiii})} - 4  \\
            &\leq \alpha \cdot \opt(G) + 4\varepsilon\abs{V(G)} - 4 + (\alpha - 1)(1 - \opt_1) + \alpha \leq \alpha \cdot \opt(G) + 4\varepsilon\abs{V(G)} - 4 \ .
            \end{align*} 
            
            \item $(\typCi, \{\typA,\typBi,\typCi\})$. 
            In this case $\OPT_1^{\typCi}$ exists but $t^{\min} = \typCiii$. Thus, the definition of \Cref{alg:3-vertex-cuts} and \Cref{def:3vc-ties} imply that $\opt_1^{\typCiii} \leq \opt^{\typCi}_1 - 1 = \opt_1 - 1$.
            Moreover, $\opt(G_2^{\typCiii}) \leq \opt_2 + 2$ because we can turn $\OPT_2$ with at most two dummy edges of $D^{\typCiii}$ into a solution for $G_2^{\typCiii}$. Thus, \eqref{eq:3vc-c3} gives 
            \begin{align*}
            \reduce(G) &\leq \opt_1 - 1 + \alpha \cdot (\opt_2 + 2) + 4\varepsilon\abs{V(G_2^{\typCiii})} - 4 \\
            &\leq \alpha \cdot \opt(G) + 4\varepsilon\abs{V(G)} - 4 + (\alpha - 1)(1 - \opt_1) + \alpha \leq \alpha \cdot \opt(G) + 4\varepsilon\abs{V(G)} - 4 \ .
            \end{align*}

            \item $(\typCii, \{\typA,\typBi\})$. 
            In this case $\OPT_1^{\typCii}$ exists but $t^{\min} = \typCiii$. Thus, the definition of \Cref{alg:3-vertex-cuts} and \Cref{def:3vc-ties} imply that $\opt_1^{\typCiii} \leq \opt^{\typCii}_1 - 1 = \opt_1 - 1$.
            Moreover, $\opt(G_2^{\typCiii}) \leq \opt_2 + 1$ because we can turn $\OPT_2$ with at most one dummy edge of $D^{\typCiii}$ into a solution for $G_2^{\typCiii}$.  Thus, \eqref{eq:3vc-c3} gives
            \begin{align*}
            \reduce(G) &\leq \opt_1 - 1 + \alpha \cdot (\opt_2 + 1) + 4\varepsilon\abs{V(G_2^{\typCiii})} - 4 \\
            &\leq \alpha \cdot \opt(G) + 4\varepsilon\abs{V(G)} - 4 + (\alpha - 1)(1 - \opt_1) \leq \alpha \cdot \opt(G) + 4\varepsilon\abs{V(G)} - 4 \ .
            \end{align*}       
            
            \item $(\typCiii, \typA)$. Using $\opt_1^{\typCiii} = \opt_1$ and $\opt(G_2^{\typCiii}) \leq \opt_2$ as $\OPT_2$ is already feasible for $G_2^{\typCiii}$, \eqref{eq:3vc-c3} implies
            \begin{align*}
            \reduce(G) \leq \opt_1 + \alpha \cdot \opt_2 + 4\varepsilon\abs{V(G_2^{\typCiii})} - 4  \leq  \alpha \cdot \opt(G) + 4\varepsilon\abs{V(G)} - 4 \ .
            \end{align*}    
        \end{enumerate}
    \end{description}
    
    This completes the proof of the lemma.    
\end{proof}

\subsection{Proof of \Cref{lem:reduction-to-structured}}\label{sec:reduce-proofs}

In this final subsection, we prove \Cref{lem:reduction-to-structured}. The statement is implied by the following three lemmas.
We first show that $\Reduce(G)$ runs in polynomial time. The proof is similar to the corresponding proof in \cite{GargGA23improved}.

\begin{lemma}[running time]\label{lem:reduction-to-structured-runtime}
    For any 2-ECSS instance $G$, $\Reduce(G)$ runs in polynomial time in $|V(G)|$ if $\ALG(G)$ runs in polynomial time in $|V(G)|$. 
\end{lemma}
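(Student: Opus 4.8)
The plan is to follow the running-time analysis of $\Reduce$ in \cite{GargGA23improved} almost verbatim; the only genuinely new work is to verify that the call to \Cref{alg:3-vertex cuts} fits the same template. I would prove two statements: (i) a single invocation of $\Reduce$, not counting the time spent inside recursive calls or inside $\ALG$, takes polynomial time; and (ii) the recursion tree produced by $\Reduce(G)$ has polynomially many nodes. Together with the hypothesis on $\ALG$, these give the lemma. (We may assume at the outset, via a linear-time pass that keeps at most two parallel copies of every edge, that $\abs{E(G)} = O(\abs{V(G)}^2)$.)

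For (i) I would go through the \texttt{if}-blocks of \Cref{alg:reduce} and \Cref{alg:3-vertex cuts}. \Cref{reduce:bruteforce} enumerates 2EC subgraphs of a graph on $\le \frac4\varepsilon = O(1)$ vertices, hence is constant-time. Deciding whether $G$ has a $1$-vertex cut, a non-isolating $2$-vertex cut, an irrelevant edge (a $2$-vertex cut whose two vertices are adjacent), or a large $3$-vertex cut, and, if so, producing the associated partition of the vertex set and the small/large classification, are standard connectivity computations together with inspection of component sizes, all polynomial in $\abs{V(G)}$; self-loops and parallel edges are found by scanning the edge list. A constant-size $\alpha$-contractible subgraph is found by enumerating the $\abs{V(G)}^{O(1/\varepsilon)}$ vertex subsets of size $\le\frac2\varepsilon$ and testing each for $\alpha$-contractibility; that this test runs in polynomial time is shown in \cite{GargGA23improved}, which I would cite. \Cref{reduce:2vc} is exactly the reduction of \cite{GargGA23improved}, whose per-step cost is bounded there. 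Inside \Cref{alg:3-vertex cuts}: in the branch of \Cref{reduce-3vc:both-large} we only form the two contractions $G_i'$, recurse twice, and then search for a minimum-size $F'$, which by \Cref{lemma:3vc-both-large-F} satisfies $\abs{F'}\le4$ and so is found by brute force over $O(\abs{E(G)}^4)$ edge tuples. In the branch of \Cref{reduce-3vc:one-small} the graph $G_1$ has $O(1)$ vertices, so the objects $\OPT_1^t$, $\opt_1^t$, and $t^{\min}$ of \Cref{reduce-3vc:compute-opt-types} are computed by exhaustive enumeration over $G_1$ in constant time; the gadget graphs $G_2^t$ are built in linear time; exactly one recursive call is made, on some $G_2^t$; and the minimum-size sets $F^t$ have size $\le4$ by \Cref{lemma:3vc-b1-F,lemma:3vc-b2-F,lemma:3vc-c1-F,lemma:3vc-c2-i-F,lemma:3vc-c2-ii-F,lemma:3vc-c2-iii-F,lemma:3vc-c3-F}, so they too are found by brute force. \Cref{reduce:call-alg} invokes $\ALG$ once.

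For (ii) I would track the potential $\mu(G') \coloneq \abs{V(G')} + \abs{E(G')}$. First, every recursive call strictly decreases $\mu$: edge deletions drop $\mu$ by $1$; contracting a 2EC subgraph $H$ drops $\abs V$ by $\abs{V(H)} - 1 \ge 2$ while leaving $\abs E$ fixed; at a $1$-vertex cut the edges are partitioned between the two parts and each part has $\le \abs{V(G')} - 1$ vertices, so each child has $\mu \le \mu(G') - 1$; at a large $3$-vertex cut, since $\abs{V_1}, \abs{V_2} \ge 7$, the contractions $G_i'$ have $\le \abs{V(G')} - 9$ vertices with partitioned edges, and each gadget graph $G_2^t$ has $\le \abs{V_2} + O(1) \le \abs{V(G')} - 10 + O(1)$ vertices and $\le \abs{E(G')} + O(1)$ edges, so in all cases $\mu$ drops by at least $1$; for \Cref{reduce:2vc} I invoke the analogous bound from \cite{GargGA23improved}. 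This already bounds the recursion depth by $\mu(G)$. Second, the only branching steps are $1$-vertex cuts, non-isolating $2$-vertex cuts, and the ``both large'' $3$-vertex-cut case — the ``one small'' case makes a single recursive call, so is not branching — and at each such step the two children's potentials sum to at most $\mu(G') + c_0$ for a universal constant $c_0$ ($\mu_1 + \mu_2 = \mu(G') + 1$ for a $1$-vertex cut, $\mu_1 + \mu_2 \le \mu(G') - 1$ in the ``both large'' case since edges are partitioned up to $O(1)$ gadget edges, and $c_0 = O(1)$ for the $2$-vertex-cut reduction of \cite{GargGA23improved}). A short induction of the shape $L(\mu) \le A\mu - B$ then shows the recursion tree has $L(\mu(G)) = O(\mu(G))$ leaves; since every node of the tree is an ancestor of some leaf and each leaf has at most $\mu(G)+1$ ancestors, the tree has $O(\mu(G)^2)$ nodes. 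Combining with (i) gives that $\Reduce(G)$ runs in polynomial time.

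The main obstacle, beyond the bookkeeping, is making the potential argument airtight for \Cref{alg:3-vertex cuts}: the \typCii and \typCiii branches (\Cref{reduce-3vc:c2-i} and below) \emph{add} constant-size gadgets to the $G_2$-side before recursing, temporarily inflating both $\abs V$ and $\abs E$ of that side, so one must check that this inflation is strictly dominated by the at least $10$ vertices dropped from the $V_1$-side (using $\abs{V_1} \ge 7$, which itself rests on the large-cut definition and on the earlier reductions having already fired, cf.\ \Cref{prop:properties-for-alg}). A secondary, easy point is to confirm that in \Cref{reduce-3vc:one-small} the object $\OPT_1^t$ is obtained non-recursively, by enumeration on the $O(1)$-size $G_1$, so that only the single call $\Reduce(G_2^t)$ contributes to the branching.
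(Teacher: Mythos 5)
Your proposal is correct and matches the paper's proof in substance: part (i) is the same line-by-line accounting (constant-size enumeration on $G_1$, brute force over constant-size edge sets $F$ guaranteed by the helper lemmas, citing \cite{GargGA23improved} for the earlier reductions), and part (ii) is the same recursion-tree bound. The only cosmetic difference is that the paper measures subproblem size by $\sigma = |V|^2 + |E|^2$, for which the two children of a branching step satisfy $\sigma' + \sigma'' \leq \sigma$ with no additive slack, giving $T(\sigma) \leq \sigma \cdot \mathrm{poly}(\sigma)$ directly, whereas you keep the linear measure $|V|+|E|$ and absorb the $O(1)$ slack via a leaf-counting induction --- both work, though your version requires a small extra check that the base-case inequality $A\mu - B \geq 1$ holds for the smallest leaves.
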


\begin{proof}
    Let $n = |V(G)|$ and $m = |V(G)|$, and $\sigma = n^2 + m^2$ be the size of the problem. We first argue that every non-recursive step of \Cref{alg:reduce,alg:2-vertex-cuts,alg:3-vertex-cuts} can be performed in time polynomial in $\sigma$.
    Clearly, all computations performed in \Cref{alg:reduce} can be performed in polynomial time in $\sigma$. In particular, it is possible to check in polynomial-time whether $G$ contains an $\alpha$-contractible subgraph with at most $\frac{4}{\varepsilon}$ vertices: Enumerate over all subsets $W$ of vertices of the desired size; for each such $W$, one can compute in polynomial time a minimum-size 2-ECSS $\OPT'$ of $G[W]$, if any such subgraph exists. Furthermore, one can compute in polynomial time a maximum cardinality subset of edges $F$ with endpoints in $W$ such that $G \setminus F$ is 2EC. Then, it is sufficient to compare the size of $\OPT'$, if exists, with $|E(G[W]) \setminus F|$.
    
    We now consider \Cref{alg:3-vertex-cuts}.
    We already argued that the computation in \Cref{reduce-3vc:compute-opt-types} can be executed in polynomial time in $\sigma$. Moreover, determining the $\typCii$ subcases in \Cref{reduce-3vc:c2-i,reduce-3vc:c2-ii,reduce-3vc:c2-iii} and selecting a desired $\OPT_1^{\typCii}$ solution in \Cref{reduce-3vc:c2-ii-sol-constr,reduce-3vc:c2-iii-sol-constr} and $\OPT_1^{\typCiii}$ solution in \Cref{reduce-3vc:c3-select-solution} can be done in constant time, as $G_1$ is of constant size in this case.
    Finding the corresponding sets of edges $F$ in \Cref{alg:3-vertex-cuts} can also be done in polynomial time in $\sigma$ by enumeration, because \Cref{lemma:3vc-both-large-F,lemma:3vc-b1-F,lemma:3vc-b2-F,lemma:3vc-c1-F,lemma:3vc-c2-i-F,lemma:3vc-c2-ii-F,lemma:3vc-c2-iii-F,lemma:3vc-c3-F} guarantee that there exists a desired set of edges $F$ of constant size.
    The same arguments can also be used to argue that  \cref{alg:2-vertex-cuts} can be implemented in polynomial time.

    To bound the total running time including recursive steps, let $T(\sigma)$ be the running time of \Cref{alg:reduce} as a function of $\sigma$. In each step, starting with a problem of size $\sigma$, we generate either a single subproblem of size $\sigma' < \sigma$ (to easily see this in \Cref{alg:2-vertex-cuts,alg:3-vertex-cuts}, recall that $2 \leq |V_1| \leq |V_2|$ and $7 \leq |V_1| \leq |V_2|$, respectively) or two subproblems of size $\sigma'$ and $\sigma''$ such that $\sigma',\sigma'' < \sigma$ and $\sigma' + \sigma'' \leq \sigma$. Thus, $T(\sigma) \leq \max\{T(\sigma'), T(\sigma') + T(\sigma'') \} + \mathrm{poly}(\sigma)$ with $\sigma',\sigma'' < \sigma$ and $\sigma' + \sigma'' \leq \sigma$. Therefore, $T(\sigma) \leq \sigma \cdot \mathrm{poly}(\sigma)$.
\end{proof}

Next, we show that $\Reduce(G)$ returns a feasible solution to the 2-ECSS problem on $G$.

\begin{lemma}[correctness]\label{lem:reduction-to-structured-correctness}
    For any 2-ECSS instance $G$, $\Reduce(G)$ returns a 2EC spanning subgraph of $G$.
\end{lemma}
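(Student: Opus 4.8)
The plan is to prove the statement by strong induction on the size $\sigma = |V(G)|^2 + |E(G)|^2$ of the instance, the same measure used in the running‑time analysis (\Cref{lem:reduction-to-structured-runtime}): recall that every recursive call made by $\Reduce$, including the ones inside \Cref{alg:3-vertex cuts}, is on an instance of strictly smaller size. The base cases are immediate. If $|V(G)| \le \frac{4}{\varepsilon}$, then \Cref{reduce:bruteforce} returns $\OPT(G)$, which is a 2EC spanning subgraph of $G$ by definition (note that a 2EC spanning subgraph exists since the input $G$ is assumed 2EC, and more generally every recursive instance will be shown to be 2EC). If the algorithm instead reaches \Cref{reduce:call-alg}, then $G$ is $(\alpha,\varepsilon)$‑structured and $\ALG(G)$ is by assumption a (feasible) 2EC spanning subgraph of $G$.

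For the recursive branches of \Cref{alg:reduce} up to and including \Cref{reduce:2vc} — a 1‑vertex cut, a self‑loop or parallel edge, a small $\alpha$‑contractible subgraph, an irrelevant edge, a non‑isolating 2‑vertex cut — the argument is exactly as in \cite{GargGA23improved}: the recursive calls are on 2EC instances of strictly smaller size (for the contraction step this uses \Cref{fact:contract-2ec}), so by the induction hypothesis they return 2EC spanning subgraphs, and \cite{GargGA23improved} shows how these are recombined into a 2EC spanning subgraph of $G$ (for the $\alpha$‑contractible branch this recombination is \Cref{fact:decontract}). Thus it remains only to handle the genuinely new branch, where $G$ has a large 3‑vertex cut $\{u,v,w\}$ and the algorithm executes \Cref{alg:3-vertex cuts}.

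Since $\{u,v,w\}$ is a \emph{large} 3‑vertex cut, a partition $(V_1,V_2)$ with $7 \le |V_1| \le |V_2|$ and no edges between the two parts exists, so \Cref{alg:3-vertex cuts} is well defined. The crucial point is that every recursive call inside \Cref{alg:3-vertex cuts} is on a 2EC instance, so that the induction hypothesis applies to it: this must be verified for the contracted graphs $G'_1,G'_2$ and for each gadget‑augmented graph $G_2^t$ ($t \in \{\typBi,\typBii,\typCi,\typCii,\typCiii\}$), and follows by direct inspection using that $G$ has no cut vertices and no non‑isolating 2‑vertex cuts (a bridge or a separating pair of a recursive instance would pull back to a bridge, a cut vertex, or a non‑isolating 2‑vertex cut of $G$, recalling $|V_1|,|V_2| \ge 7$). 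Granting this, the induction hypothesis gives that $\Reduce(G'_i)$ and $\Reduce(G_2^t)$ are 2EC spanning subgraphs of the corresponding instances, while the constant‑size subgraphs $\OPT_1^t$ of $G_1$ are computed correctly in \Cref{reduce-3vc:compute-opt-types}; moreover $\OPT_1^{\min}$ exists (restricting any 2EC spanning subgraph of $G$ to $E(G_1)$ yields some type in $\cT$), and by \Cref{coro:type-A-never-min} we have $t^{\min} \neq \typA$, so exactly one of the branches \Cref{reduce-3vc:b1,reduce-3vc:b2,reduce-3vc:c1,reduce-3vc:c2-general,reduce-3vc:c3} fires, and inside the $\typCii$ branch the three sub‑cases of \Cref{reduce-3vc:c2-i,reduce-3vc:c2-ii,reduce-3vc:c2-iii} are exhaustive. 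Finally, in each branch the existence of the claimed minimum‑size edge set $F$ (and, where needed, of an $\OPT_1^t$ with the required path structure) that turns the assembled graph into a 2EC spanning subgraph of $G$ is precisely what \Cref{lemma:3vc-both-large-F,lemma:3vc-b1-F,lemma:3vc-b2-F,lemma:3vc-c1-F,lemma:3vc-c2-i-F,lemma:3vc-c2-ii-F,lemma:3vc-c2-iii-F,lemma:3vc-c3-F} assert; since the algorithm selects a minimum‑size such $F$, the graph it returns is a 2EC spanning subgraph of $G$, which closes the induction.

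The step I expect to be the main obstacle is exactly the 2EC‑ness check for the recursive subproblems — above all the gadget‑augmented graphs $G_2^t$, where dummy vertices and edges are attached to $\{u,v,w\}$ — because without it the induction hypothesis cannot be invoked on those calls. Establishing it amounts to carefully tracing how a hypothetical bridge, cut vertex, or separating pair of a subproblem would translate back into a forbidden structure of $G$ (using that $G$ has no cut vertices and no non‑isolating 2‑vertex cuts, and that both sides of the cut are large); the gadgets are designed precisely so that the demands the $G_1$‑side would have imposed are simulated without creating such a structure. Everything else is bookkeeping: matching each algorithmic branch to its helper $F$‑lemma and observing that the algorithm's minimum‑size choices of $F$ (and of $\OPT_1^t$) can only do at least as well as the feasible edge sets those lemmas exhibit.
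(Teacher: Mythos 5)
Your proposal follows essentially the same route as the paper: induction on the instance size with base cases at the brute-force and $\ALG$ lines, an appeal to \cite{GargGA23improved} for the reductions up to non-isolating 2-vertex cuts, and, for the large 3-vertex-cut branch, the induction hypothesis combined with \Cref{coro:type-A-never-min} and the helper $F$-lemmas (\Cref{lemma:3vc-both-large-F,lemma:3vc-b1-F,lemma:3vc-b2-F,lemma:3vc-c1-F,lemma:3vc-c2-i-F,lemma:3vc-c2-ii-F,lemma:3vc-c2-iii-F,lemma:3vc-c3-F}). You are in fact somewhat more explicit than the paper in insisting that each recursive subinstance (in particular the gadget graphs $G_2^t$) be 2EC so the induction hypothesis applies — a point the paper leaves implicit (its feasibility is most cleanly seen by turning $\OPT_2$ plus dummy edges into a solution, as in the approximation lemma) — but this does not change the argument's structure.
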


\begin{proof}
The proof is by induction over the tuple $(\abs{V(G)}, \abs{E(G)})$. The base cases are given by \Cref{reduce:bruteforce,reduce:call-alg} of \Cref{alg:reduce}.
If we enter \Cref{reduce:bruteforce}, the claim trivially holds. If we call $\ALG$ in \Cref{reduce:call-alg}, the claim follows by the correctness of $\ALG$ for $(\alpha,\varepsilon)$-structured instances.

If the condition in \Cref{reduce:1vc}
applies, let $G_i = G[V_i \cup \{v\}]$. Note that both $G_1$ and $G_2$, which must be 2EC since $G$ is 2EC, each contain at most $|V(G)| - 1$ vertices, hence by induction hypothesis $\Reduce(G_1)$ and $\Reduce(G_2)$ are 2EC spanning subgraphs of $V_1 \cup \{v\}$ and $V_2 \cup \{v\}$, respectively. Thus, their union is a 2EC spanning subgraph of $G$.
If the conditions in \Cref{reduce:loop} or \Cref{reduce:irrelevant} apply, $G' \coloneq G \setminus \{e\}$ must be 2EC by \cref{lemma:self-loops-parallel-edges,lemma:irrelevant-edge}, respectively, and $|V(G')| = |V(G)|$ and $|E(G')| = |E(G)| - 1$. Hence, by induction hypothesis $\Reduce(G')$ is a 2EC spanning subgraph of $G'$, and thus also for $G$.
If the condition in \Cref{reduce:contractible} applies, the claim follows from \Cref{fact:decontract}.

It remains to consider the cases when \cref{alg:2-vertex-cuts} or \cref{alg:3-vertex-cuts} is executed.
Observe that in these cases, $G$ is 2VC since the condition of \cref{reduce:1vc} is checked before. Moreover, note that all recursive calls to $\Reduce$ made in \cref{alg:2-vertex-cuts} and \cref{alg:3-vertex-cuts} consider graphs with strictly fewer vertices than $G$, hence the induction hypothesis applies.
If \cref{alg:reduce} executes \cref{alg:2-vertex-cuts},
the correctness follows from the induction hypothesis, \cref{lemma:2vc-type-A-condition} and \cref{lemma:2vc-both-large-F,lemma:2vc-b-F,lemma:2vc-c-F}.
If \cref{alg:reduce} executes \Cref{alg:3-vertex-cuts}, 
the correctness follows from the induction hypothesis, \Cref{lemma:2vc-type-A-condition} and \Cref{lemma:3vc-b1-F,lemma:3vc-b2-F,lemma:3vc-c1-F,lemma:3vc-c2-i-F,lemma:3vc-c2-ii-F,lemma:3vc-c2-iii-F,lemma:3vc-c3-F}.
\end{proof}

The following lemma implies $\reduce(G) \leq (\alpha + 4\varepsilon) \opt(G)$ using the trivial bound $\opt(G) \geq |V(G)|$.

\begin{lemma}[approximation guarantee]\label{lem:reduction-to-structured-apx}
For any 2-ECSS instance $G$ it holds that
    \[
    \reduce(G) \leq \begin{cases}
        \opt(G) &\quad \text{if } \abs{V(G)} \leq \frac{4}{\varepsilon} \ , \text{ and} \\
        \alpha \cdot \opt(G) + 4\varepsilon \cdot \abs{V(G)} - 4 &\quad \text{if } \abs{V(G)} > \frac{4}{\varepsilon} \ . 
    \end{cases}
    \]
\end{lemma}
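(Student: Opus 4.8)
The proof will proceed by strong induction over the lexicographic order on the pair $(\abs{V(G)}, \abs{E(G)})$, mirroring the structure of the correctness proof (\Cref{lem:reduction-to-structured-correctness}). The first case, $\abs{V(G)} \leq \frac 4\varepsilon$, is immediate: the algorithm computes $\OPT(G)$ by brute force in \Cref{reduce:bruteforce}, so $\reduce(G) = \opt(G)$. For the second case I would split on which line of \Cref{alg:reduce} fires. For \Cref{reduce:call-alg} (the structured case), $\reduce(G) = \abs{\ALG(G)} \leq \alpha \cdot \opt(G) \leq \alpha \cdot \opt(G) + 4\varepsilon\abs{V(G)} - 4$, using $\abs{V(G)} > \frac 4\varepsilon$ so that $4\varepsilon\abs{V(G)} > 4$. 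For \Cref{reduce:1vc,reduce:loop,reduce:contractable,reduce:irrelevant,reduce:2vc} I would invoke the corresponding bookkeeping from~\cite{GargGA23improved}; the only subtlety is that the target slack is now $4\varepsilon\abs{V(G)} - 4$ rather than $2\varepsilon\abs{V(G)} - 2$, so all those cases go through with room to spare (in particular the ``$-4$'' is needed to absorb the base case and the structured-graph call, exactly as ``$-2$'' was used in~\cite{GargGA23improved}). The genuinely new work is the case where \Cref{reduce:3vc} fires and we run \Cref{alg:3-vertex cuts}.

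For \Cref{alg:3-vertex cuts} I would handle the two top-level branches separately. In the ``both large'' branch (\Cref{reduce-3vc:both-large}), we recurse on $G_1'$ and $G_2'$, each of which has strictly fewer vertices, and glue with $\abs{F'} \leq 4$ by \Cref{lemma:3vc-both-large-F}. Writing $n_i = \abs{V(G_i')}$, the vertex counts satisfy $n_1 + n_2 = \abs{V(G)} + 4$ (the cut $\{u,v,w\}$ is contracted to one super-node on each side, so it is counted twice minus the overlap), and $\opt(G_1') + \opt(G_2') \leq \opt(G)$ holds because $\OPT(G)$ restricted to each side, after contracting $\{u,v,w\}$, is feasible for $G_i'$ — here one uses that each $\OPT_i$ is of some type in $\cT$ and contracting $\{u,v,w\}$ only merges components. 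By induction each recursive call costs at most $\alpha\opt(G_i') + 4\varepsilon n_i - 4$, so
\[
\reduce(G) \leq \alpha\opt(G) + 4\varepsilon(\abs{V(G)}+4) - 8 + 4 = \alpha\opt(G) + 4\varepsilon\abs{V(G)} + 16\varepsilon - 4,
\]
and since $\varepsilon \leq \frac{1}{24}$ forces $16\varepsilon < 1$... this is slightly too weak as stated, so I would instead account the $+4$ from $F'$ against the two ``$-4$'' slacks more carefully: we actually get $-4-4+4 = -4$ plus $16\varepsilon$, and need the $16\varepsilon$ to be killed. The clean fix is to observe $n_1 + n_2 \le \abs{V(G)} + 2$ is false but $n_1, n_2 < \abs{V(G)}$ and more importantly to use the bound in the form $4\varepsilon n_i - 4$ with $n_i$ the true (uncontracted) sizes $\abs{V_i} + 3$; then $n_1 + n_2 = \abs{V(G)} + 3$, giving $4\varepsilon\abs{V(G)} + 12\varepsilon - 4$, and $\varepsilon \le \frac 1{24}$ does not quite kill $12\varepsilon$ either. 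I expect the intended argument routes the extra $O(\varepsilon)$ through the fact that $\abs{V_2} \ge \frac{\abs{V(G)}}{2}$ is large, so the ``$-4$'' from the $G_2'$ recursion can be strengthened; this slack-tracking is the main obstacle and the place where I would look most carefully at how~\cite{GargGA23improved} balances the constant.

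In the ``one small'' branch (\Cref{reduce-3vc:one-small}), $G_1$ has constant size, so I would bound $\abs{\OPT_1^{t^{\min}}} = \opt_1^{\min}$ directly against the contribution of $\OPT(G)$ to the $G_1$-side: by \Cref{lemma:type-combinations} and \Cref{coro:type-A-never-min}, $\OPT_1$ has some type $t^* \neq \typA$ compatible with $\OPT_2$'s type, and $\opt_1^{\min} \leq \abs{\OPT_1^{t^*}} \leq \abs{\OPT_1}$ by minimality (with the $\typBi$ near-tie handled exactly as in the algorithm's $\typBi$ override, using \Cref{lemma:type-A-condition} to rule out the $\typA$-is-cheaper scenario). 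For the $G_2$-side, the dummy gadget ensures $\Reduce(G_2^{t^{\min}})$ recursively produces a solution whose restriction-to-$G_2$ size is $\abs{H_2^{t^{\min}}}$, and the helper lemmas \Cref{lemma:3vc-b1-F,lemma:3vc-b2-F,lemma:3vc-c1-F,lemma:3vc-c2-i-F,lemma:3vc-c2-ii-F,lemma:3vc-c2-iii-F,lemma:3vc-c3-F} each certify that $\abs{F} - \abs{\Reduce(G_2^{t})} + \abs{H_2^{t}} \leq \delta_t$ for an explicit nonpositive $\delta_t$ (the ``$-2$'', ``$-3$'', ``$0$'' in those statements) — these negative constants are precisely what pays for the dummy edges and keeps the cut-edge overhead from accumulating. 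Combining, $\reduce(G) \leq \opt_1^{\min} + \reduce(G_2^{t^{\min}}) + \delta_{t^{\min}}$, and applying the induction hypothesis to $G_2^{t^{\min}}$ (which has $\abs{V_2} + O(1) < \abs{V(G)}$ vertices since $\abs{V_1} \geq 7$ and $G_1$ is dropped) together with the fact $\opt(G_2^{t^{\min}}) + \opt_1^{\min} \leq \opt(G)$ — the latter requiring a short argument that $\OPT(G)$ yields a feasible solution for the gadget $G_2^{t^{\min}}$ of cost at most $\opt(G) - \opt_1^{\min}$, which is where the three $\typCii$ subcases and the $\typCiii$ double-edge gadget are used — closes the induction. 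The main obstacle throughout is the arithmetic of the additive constant: one must check in every one of the roughly ten subcases that the $4\varepsilon\abs{V(G)} - 4$ budget survives the $O(1)$ overhead of cut edges and dummy gadgets, which is exactly why the helper lemmas were phrased with those sharp negative constants, and why the structured-graph threshold was set at $\frac 4\varepsilon$ and $\varepsilon \le \frac 1{24}$.
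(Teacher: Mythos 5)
Your proposal gets the skeleton right (induction on $(\abs{V(G)},\abs{E(G)})$, base cases, citing \cite{GargGA23improved} for \Cref{reduce:1vc,reduce:loop,reduce:contractable,reduce:irrelevant,reduce:2vc}), but both branches of \Cref{alg:3-vertex cuts} have genuine gaps. In the branch of \Cref{reduce-3vc:both-large}, the vertex accounting is simply miscounted, and the surplus you flag as ``the main obstacle'' is an artifact of that miscount. The recursion is on the \emph{contracted} graphs $G'_i = G_i\,|\,\{u,v,w\}$, so $\abs{V(G'_1)}+\abs{V(G'_2)} = (\abs{V_1}+1)+(\abs{V_2}+1) = \abs{V(G)}-1$, not $\abs{V(G)}+4$ (nor $\abs{V(G)}+3$, which is the uncontracted count and irrelevant here). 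With the correct count, the induction hypothesis together with $\opt(G'_i)\le\opt_i$ and $\abs{F'}\le 4$ (\Cref{lemma:3vc-both-large-F}) gives $\reduce(G)\le \alpha(\opt_1+\opt_2)+4\varepsilon(\abs{V(G)}-1)-8+4\le \alpha\opt(G)+4\varepsilon\abs{V(G)}-4$, with $4\varepsilon$ to spare; no strengthening of the ``$-4$'' and no appeal to $\abs{V_2}\ge\abs{V(G)}/2$ is needed. As written, you leave this case unresolved.

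In the branch of \Cref{reduce-3vc:one-small}, the master inequality you propose, $\opt(G_2^{t^{\min}})+\opt_1^{\min}\le\opt(G)$, is false and cannot be rescued by a ``short argument'': for instance, when $\OPT_1$ is itself of type $\typCii$ one has $\opt_1^{\min}=\opt_1$, yet every feasible solution of the gadget $G_2^{\typCii}$ must pay for dummy edges covering $y$ (and $z$), so one only gets bounds of the form $\opt(G_2^{t^{\min}})\le\opt_2+c$ with $c$ between $2$ and $6$, where $c$ depends on the subtype and on the type of $\OPT_2$. Since the induction hypothesis multiplies $\opt(G_2^{t^{\min}})$ by $\alpha$, this $\alpha c$ overhead has to be absorbed, and the paper's proof does exactly that via a case analysis over the feasible type combinations of $(\OPT_1,\OPT_2)$ (\Cref{lemma:type-combinations}): each combination yields a quantitative gap $\opt_1^{t^{\min}}\le\opt_1-d$ (from \Cref{lemma:type-A-condition} for $\typA$, the $+1$ slack of the $\typBi$ override, or the tie-breaking order of \Cref{def:3vc-ties}), and every subcase closes with an inequality of the shape $(\alpha-1)(c'-\opt_1)+\mathrm{const}\le 0$, which holds only because $\opt_1\ge\opt_1^{\min}\ge 8$ (\Cref{lemma:opt-1-large}) and $\alpha\ge\frac54$. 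Your outline never invokes $\opt_1\ge 8$ and rests the whole branch on the incorrect inequality, so the helper-lemma constants $\delta_t$ alone do not suffice; filling this in requires essentially the paper's full type-combination case analysis.
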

\begin{proof}
The proof is by induction over the tuple $(\abs{V(G)}, \abs{E(G)})$. The base cases are given by \Cref{reduce:bruteforce,reduce:call-alg} of \Cref{alg:reduce}.
If we enter \Cref{reduce:bruteforce}, the stated bound trivially holds. 
If the condition in \cref{reduce:loop} or \cref{reduce:irrelevant} applies, we consider a graph $G' \coloneq G \setminus \{e\}$ with the same number of vertices as $G$ and the same optimal size of a 2-ECSS by \Cref{lemma:self-loops-parallel-edges} and \cref{lemma:irrelevant-edge}, respectively. Thus, the claim follows from the induction hypothesis. If we call $\ALG$ in \Cref{reduce:call-alg}, the claim follows because we assume that $\ALG(G)$ returns an $\alpha$-approximate solution to the 2-ECSS problem on $G$ and $G$ is in this case $(\alpha,\varepsilon)$-structured by the correctness of $\Reduce$ (cf.\ \Cref{lem:reduction-to-structured-correctness}).

If the condition in \cref{reduce:1vc} applies, recall that by \cref{lemma:cut-vertex-egal} we have $\opt(G) = \opt(G_1) + \opt(G_2)$, where $G_i \coloneq G[V_i \cup \{v\}]$ for $i \in \{1,2\}$. We distinguish a few cases depending on the sizes of $V_1$ and $V_2$. Assume w.l.o.g.\ $|V_1| \leq |V_2|$. 
If $|V_2| \leq \frac{4}{\varepsilon}-1$, then $\reduce(G) = \opt(G_1) + \opt(G_2) = \opt(G) \leq \alpha \cdot \opt(G) + 4\varepsilon |V(G)| - 4$ since $\alpha \geq 1$ and $|V(G)| \geq \frac{4}{\varepsilon}$.
Otherwise, $|V_2| \geq \frac{4}{\varepsilon}$.
If $|V_1| \leq \frac{4}{\varepsilon}-1$, then by the induction hypothesis we have $\reduce(G) \leq \opt(G_1) + \alpha \opt(G_2) + 4\varepsilon |V(G_2)| - 4 \leq\alpha \opt(G) + 4\varepsilon |V(G)| - 4$ using $\alpha \geq 1$ and $|V(G_2)| \leq |V(G)|$.
Otherwise, $|V_2| \geq |V_1| \geq \frac{4}{\varepsilon}$, and the induction hypothesis gives
$\reduce(G) \leq \alpha \opt(G_1) + \alpha \opt(G_2) + 4\varepsilon (|V(G_1)| + |V(G_2)|) - 4 \leq \alpha \opt(G) + 4\varepsilon |V(G)| + 4\varepsilon - 8$, which implies the stated bound as $4\varepsilon \leq 4$.

If the condition in \cref{reduce:contractible} applies,
we have $\opt(G) = |\OPT(G) \cap {V(H)}^2| + |\OPT(G) \setminus {V(H)}^2| \geq \frac{1}{\alpha}|H| + \opt(G|H)$ by the definition of a contractible subgraph $H$ and observing that $\OPT(G) \setminus {V(H)}^2$ induces a feasible 2EC spanning subgraph of $G|H$.
If $|V(G|H)| \leq \frac{4}{\varepsilon}$ one has $\reduce(G) = |H| + \opt(G|H) \leq \alpha \opt(G) \leq \alpha \opt(G) + 4\varepsilon|V(G)| - 4$ since $|V(G)| \geq \frac{4}{\varepsilon}$ whenever \cref{alg:reduce} reaches \cref{reduce:contractible}. 
Otherwise, we conclude that $\reduce(G) \leq |H| + \alpha \opt(G|H) + 4\varepsilon |V(G|H)| - 4 \leq \alpha \opt(G) + 4\varepsilon |V(G)| - 4$ since $|V(G)| \geq |V(G|H)|$.

Finally, if \cref{alg:reduce} executes \cref{alg:2-vertex-cuts} or
\Cref{alg:3-vertex-cuts}, we have that $|V(G)| > \frac{4}{\varepsilon}$. Thus, in these cases the statement follows from the induction hypothesis and \cref{lemma:2vc-approx} or \cref{lemma:3vc-approx}, respectively.
\end{proof}

\section{Canonical 2-Edge Cover}\label{app:canonical}

This section is dedicated to the proof of \Cref{lem:canonical-cover:main}, which we first restate.

\lemmaCanonicalMain*

\begin{proof}
We first compute a triangle-free $2$-edge cover $H'$ (which can be done in polynomial time due to \Cref{lem:triangle-free-2-edge-cover}) and show how to modify it to obtain a canonical $2$-edge cover with the same number of edges.
    Since $H'$ is a triangle-free $2$-edge cover, each component is either a 2EC component that contains at least $4$ edges, or complex. 

    If $H'$ is canonical, we are done. Otherwise, we apply the following algorithm to $H'$: 
    While there exist sets of edges $F_A \subseteq E \setminus H'$ and $F_R \subseteq H'$ with $|F_A| \leq |F_R| \leq 2$ such that $H'' \coloneq (H' \setminus F_R) \cup F_A$ is a triangle-free $2$-edge cover and either 
    \begin{itemize}
        \item 
        $|H''| < |H'|$, or
        \item  $|H''| = |H'|$ and either
        \begin{itemize}
            \item $H''$ contains strictly fewer connected components than $H'$, or
            \item the number of components of $H''$ is equal to the number of components in $H'$ and $H''$ contains strictly fewer bridges than $H'$, or
            \item the number of components of $H''$ is equal to the number of components in $H'$, the number of bridges in $H''$ is equal to the number of bridges in $H'$, and $H''$ contains strictly fewer cut vertices in 2EC components than $H'$,
        \end{itemize} 
    \end{itemize} 
    we update $H' \coloneq H''$.

    Clearly, each iteration of this algorithm can be executed in polynomial time. Moreover, there can only be polynomially many iterations.
    Thus, let $H$ be equal to $H'$ when the algorithm terminates. 
    We prove that $H$ is a canonical 2-edge cover of $G$. 
    This proves the lemma because the algorithm does not increase the number of edges compared to a minimum triangle-free $2$-edge cover of $G$. 
    To this end, we consider the following cases.

    \textbf{Case 1:} $H$ contains a 2EC component $C$ such that $4 \leq |E(C)| \leq 7$ and $C$ is not a cycle on $|E(C)|$ vertices. We distinguish three exhaustive cases, and prove in any case that Case~1 cannot occur.
    
    \textbf{Case 1.1:}
    $C$ has a cut vertex $v$. Observe that $v$ must have degree $4$ in $C$.
    Furthermore, each connected component in $C \setminus \{v\}$ contains at least $2$ vertices, as otherwise $C$ contains parallel edges, a contradiction to $G$ being structured.
    There must be a connected component $K_1$ in $C \setminus \{v\}$ such that $K_1$ contains at most 2 vertices (and hence at most 1 edge). 
    This is true as otherwise each connected component has at least 2 edges in $C \setminus \{v\}$ and since there are at least $2$ connected components in $C \setminus \{v\}$, because $v$ is a cut vertex, $C$ must contain at least $8$ edges, a contradiction.
    Assume that there is a component $K_1$ with $2$ vertices and let $u_1$ and $u_2$ be these vertices.
    Note that the three edges $\{v u_1, u_1 u_2, u_2 v\}$ must be in $C$ as otherwise $G$ is not structured since it must contain parallel edges.
    If there exists an edge $u_1 w \in E \setminus H$ (or $u_2 w \in E$ by symmetry) for $w \in V \setminus V(C)$, then we can remove $v u_1$ from $H$ and add $u_1 w$ to obtain a triangle-free $2$-edge cover $H''$ with strictly fewer components and $|H| = |H''|$, a contradiction to the assumption that the algorithm terminated. 
    If there exists an edge $u_1 w \in E \setminus H$ (or $u_2 w \in E$ by symmetry) for $w \in V(C) \setminus \{v, u_1, u_2 \}$, then we can remove $v u_1$ from $H$ and add $u_1 w$ to obtain a triangle-free $2$-edge cover $H''$ with $|H''| = |H|$ in which $v$ is not a cut vertex anymore. Since we have not created a new cut vertex (and the number of components and number of bridges stays the same), this contradicts the assumption that the algorithm terminated.
    If neither of these two cases apply, we conclude $\deg_G(u_i) = 2$ for $i=1,2$. Thus, $v - u_1 - u_2 - v$ is a contractible cycle, a contradiction to $G$ being structured.

    \textbf{Case 1.2:}
    $C$ contains no 2-vertex cut (and, thus, $C$ is 3-vertex-connected). This implies that every vertex has degree at least 3 in $C$, as otherwise the neighboring vertices of a vertex of $C$ with degree $2$ form a $2$-vertex cut.
    But then it must hold that $|V(C)| = 4$, since $|V(C)| \leq 3$ is not possible with the above degree constraint and if $|V(C)| \geq 5$, then the number of edges in $C$ is at least $8$, a contradiction.
    But if $|V(C)| = 4$ and $C$ is $3$-vertex-connected, $C$ must be a complete graph on $4$ vertices. Therefore, there is an edge $e \in E(C)$ such that $H \setminus \{e\}$ is a triangle-free 2-edge cover, a contradiction to the assumption that the algorithm terminated. 

    \textbf{Case 1.3:} $C$ contains a 2-vertex cut $\{ v_1, v_2 \}$.
    Note that each connected component of $C \setminus \{v_1, v_2 \}$ must have an edge to both $v_1$ and $v_2$, as otherwise $C$ contains a cut vertex, and we are in Case 1.1.
    Hence, there are at most $3$ connected components in $C \setminus \{v_1, v_2 \}$, as otherwise the number of edges in $C$ is at least 8.
    
    First, assume that $C \setminus \{v_1, v_2 \}$ contains exactly $3$ connected components.
    Then there can be at most one connected component in $C \setminus \{v_1, v_2 \}$ containing 2 vertices, as otherwise the total number of edges in $C$ exceeds $7$.
    In any case, each vertex $u$ of some connected component in $C \setminus \{v_1, v_2 \}$ must have an edge $e_u$ to $\{v_1, v_2 \}$ as otherwise $C$ contains a cut vertex, and we are actually in Case 1.1.
    If there is a vertex $u$ of some connected component in $C \setminus \{v_1, v_2 \}$ that has an edge $u w$ in $G$ to some vertex $w \in V \setminus V(C)$, then we can replace $e_u$ with $uw$ in $H$ and thereby decrease the number of connected components, a contradiction to the algorithm.
    Hence, assume that this is not true. 
    Assume that there is exactly one connected component $K_1$ in $C \setminus \{v_1, v_2 \}$ containing 2 vertices $x_1$ and $x_2$. 
    Note that $K_1$ must consist of the single edge $x_1 x_2$ as otherwise we have at least $8$ edges in $C$.
    Let $K_2$ and $K_3$ be the other two connected components in $C \setminus \{v_1, v_2 \}$ consisting of single vertices $y$ and $z$, respectively. By the above arguments, we have that $C$ contains the edges $\{v_1 x_1, x_1 x_2, x_2 v_2, v_1 y, y v_2, v_1 z, z v_2\}$ (up to symmetry).
    If there is an edge $x_1 y$ in $G$ (or any of $\{x_1 z, x_2 y, x_2 z, y z\}$ by symmetry), then we can remove $v_2 y$ and $x_1 v_1$ from $H$ and add $x_1 y$ to obtain a triangle-free $2$-edge cover with strictly fewer edges than $H$, a contradiction to the algorithm. Hence, assume that none of the edges exists in $G$. 
    Now the vertices in $\{y, z\}$ have degree $2$ in $G$ and the vertices in $\{x_1, x_2 \}$ are only adjacent to $\{v_1, v_2, x_1, x_2 \}$. But then $C$ is contractible, a contradiction to $G$ being structured.
    The case that there is no component in $C \setminus \{v_1, v_2 \}$ containing 2 vertices, i.e., in which each component consists of exactly a single vertex, is similar.
    This finishes the case that in $C \setminus \{v_1, v_2 \}$ there are exactly 3 connected components.

    Hence, assume that $C \setminus \{v_1, v_2 \}$ contains exactly 2 connected components $K_1$ and $K_2$ and without loss of generality assume $|E(K_1)| \geq |E(K_2)|$.
    Recall that each connected component of $C \setminus \{v_1, v_2 \}$ must have an edge to both $v_1$ and $v_2$.
    Hence, $\{v_1, v_2 \}$ is incident to at least 4 edges in $C$. Thus, $|E(K_2)| \leq |E(K_1)| \leq 3$.
    If $|E(K_1)| = 3$, we prove that $K_1$ is a path on 4 vertices. 
    If $K_1$ is not a path on $4$ vertices, then $K_1$ is a triangle on 3 vertices or a star on 4 vertices (one center and 3 leaves). 
    If $K_1$ is a triangle, there must be an edge $e$ in $C$ such that $H \setminus \{e\}$ is a triangle-free $2$-edge cover, a contradiction to the algorithm.
    If $K_1$ is a star on 4 vertices, then there must be 3 edges from $K_1$ to $\{v_1, v_2 \}$, as otherwise $H$ is not a 2-edge cover. But then $C$ contains at least 8 edges, a contradiction.
    Hence, if $|E(K_1)| = 3$, then $K_1$ is a path on 4 vertices.
    But then one leaf of $K_1$ must be adjacent to $v_1$ in $C$ and the other leaf of $K_1$ must be adjacent to $v_2$ in $C$ (otherwise, there is a cut vertex and we are in Case 1.1). Since $K_2$ is a single vertex in this case and must be incident to both $v_1$ and $v_2$, $C$ is a cycle on $7$ vertices, a contradiction.
    Hence, we have that $|E(K_2)| \leq |E(K_1)| \leq 2$. 
    Since $G$ does not contain parallel edges, $K_1$ must be a simple path on $i$ vertices for some $i \in \{1, 2, 3\}$. The same is true for $K_2$.
    By similar reasoning as before we can observe that the leaves of $K_1$ and $K_2$, respectively, must be incident to $v_1$ and $v_2$ (or, if $K_1$ or $K_2$ is a single vertex, then it must be incident to both $v_1$ and $v_2$) and therefore we observe that $C$ is a simple cycle, a contradiction.

    \textbf{Case 2:}
There is some complex component $C$ of $H$ containing a pendant block $B$ with less than 6 edges.
    First, assume there is a pendant block $B$ with exactly 3 edges.
    In this case, $B$ is a cycle $b_1 - b_2 - b_3 - b_1$ (since there are no parallel edges as $G$ is structured), and assume that $b_1$ is incident to the unique bridge $f$ in $C$.
    However, since $B$ is a triangle and $G$ is structured, there must be an edge $e = b_2 w$ incident to $b_2$ in $G$ with $w \notin \{b_1, b_3 \}$ (by invoking the 3-matching \Cref{lem:3-matching} to $V(B)$ and $V \setminus V(B)$).
    But then we can remove $b_1 b_2$ and add $e$ to $H$ to obtain $H''$ and observe that  $H''$ is a triangle-free 2-edge cover.
    If $w \in V \setminus V(C)$, the number of connected components in $H''$ is strictly less compared to $H$, and otherwise, that is, $w \in V(C)$, $H''$ has strictly fewer bridges compared to $H$ (and the number of connected component stays the same). 
    This is a contradiction to the algorithm. 

    Next, assume there is a pendant block $B$ with exactly 4 edges.
    In this case, $B$ is a cycle $b_1 - b_2 - b_3 - b_4 - b_1$ (since there are no parallel edges as $G$ is structured), and assume that $b_1$ is incident to the unique bridge $f$ in $C$.
    If $b_2$ or $b_4$ are incident to an edge $e$ incident to some $w \in V \setminus V(B)$, then we can remove either $b_1 b_2$ or $b_1 b_4$ and add $e$ to $H$ to obtain $H''$ and observe that $H''$ is a triangle-free 2-edge cover.
    If $w \in V \setminus V(C)$, the number of connected components in $H''$ is strictly less compared to $H$, and otherwise, that is, $w \in V(C)$, $H''$ has strictly fewer bridges compared to $H$ (and the number of connected component stays the same). 
    This is a contradiction to the algorithm. 
    Hence, this is not the case. But then $\{b_1, b_3 \}$ is a non-isolating 2-vertex, a contradiction to $G$ being structured.
    
    Finally, assume there is a pendant block $B$ with exactly 5 edges. 
Note that $B$ must contain either 4 or 5 vertices.
    If $B$ contains exactly $4$ vertices then $B$ must contain a simple cycle on 4 vertices with an additional edge $e$ being a chord of this cycle, which is redundant, and, thus, a contradiction to the algorithm.
Hence, assume that $B$ contains exactly $5$ vertices, i.e., $B$ is a cycle $b_1 - b_2 - b_3 - b_4 - b_5 - b_1$, and assume that $b_1$ is incident to the unique bridge $f$ in $C$.
    If either $b_2$ (or $b_5$ by symmetry) is incident to an edge $e$ with neighbor in $V \setminus V(B)$ we can remove $b_1 b_2$ from $H$ and add $e$ to obtain a triangle-free $2$-edge cover $H''$. 
    If $e$ is incident to some vertex in $V \setminus V(C)$, the number of connected components in $H''$ is strictly less compared to $H$, and otherwise, that is, $e$ is incident to some vertex in $V(C) \setminus V(B)$, $H''$ has strictly fewer bridges compared to $H$ (and the number of connected component stays the same). 
    This is a contradiction to the algorithm. Thus, in the following we can assume that $b_2$ and $b_5$ are only adjacent to vertices of $V(B)$.
    Since $G$ is structured, $b_3$ and $b_4$ both must be incident to some edge $e_3$ and $e_4$ going to some vertex in $V \setminus V(B)$, respectively, as otherwise $B$ is contractible.
    Moreover, there must be the edge $b_2 b_5$ in $G$, as otherwise $B$ is contractible.
    But then we can remove $b_2 b_3$ and $b_1 b_5$ from $H$ and add $b_2 b_5$ and $e_3$ to obtain  a triangle-free $2$-edge cover $H''$.
    If $e_3$ is incident to some vertex $V \setminus V(C)$, the number of connected components in $H''$ is strictly less compared to $H$, and otherwise, that is, $e_3$ is incident to some vertex $V(C) \setminus V(B)$, $H''$ has strictly fewer bridges compared to $H$ (and the number of connected component stays the same). 
    This is a contradiction to the algorithm.

   \textbf{Case 3:}
    There is some complex component $C$ of $H$ containing a non-pendant block $B$ that contains less than $4$ edges, that is, $B$ is a triangle on the vertices $b_1 - b_2 - b_3$ (otherwise $B$ contains parallel edges, a contradiction to $G$ being structured).
    In this case, observe that either there must be an edge $e$ in $B$ such that $H \setminus \{e\}$ is a triangle-free $2$-edge cover, a contradiction to the algorithm, or all edges in $H$ incident to $V(B)$ and $V(G) \setminus V(B)$ are incident to one vertex, say $b_1$.
    But then, applying \Cref{lem:3-matching}, we know that there must be an edge $b_2 w \in E(G) \setminus E(H)$, where $w \in V(G) \setminus V(B)$.
    But then we can remove $b_1 b_2$ from $H$ and add $b_2 w$ to obtain a triangle-free $2$-edge cover $H''$.
    If $w \in V \setminus V(C)$, the number of connected components in $H''$ is strictly less compared to $H$, and otherwise, that is, $w \ in V(C) \setminus V(B)$, $H''$ has strictly fewer bridges compared to $H$ (and the number of connected component stays the same). 
    This is a contradiction to the algorithm.
    
    Finally, observe that if $B$ contains $4$ edges but less than $4$ vertices, then there must be some parallel edge in $B$, a contradiction to the fact that $G$ is structured.
\end{proof}

\section{Bridge Covering}
\label{app:bridge-covering}

In this section, we prove \Cref{lem:bridge-covering-start2}, which is our main lemma for bridge covering.

\lemmaBridgeCoveringMain*

We prove this lemma by exhaustively applying the following lemma, which is essentially proved in~\cite{GargGA23improved} in Appendix~D.1.

\begin{lemma}[Lemma 2.10 in~\cite{GargGA23improved}]
    \label{lem:bridge-covering-main-iterative}
    Let $H$ be a canonical $2$-edge cover of some structured graph $G$ containing at least one bridge.
    There is a polynomial-time algorithm that computes a canonical 2-edge cover $H'$ of $G$ such that the number of bridges in $H'$ is strictly less than the number of bridges in $H$ and $\cost(H') \leq \cost(H)$.
\end{lemma}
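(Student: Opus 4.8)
The final statement to prove is \Cref{lem:bridge-covering-main-iterative}, the iterative bridge-covering step. Since the excerpt says this is ``essentially proved in~\cite{GargGA23improved}'', my plan is to reconstruct the argument in our credit system, emphasizing the two places where our setup differs from theirs: our canonical components that are non-complex are either $\mathcal C_i$ for $4 \le i \le 7$ (each carrying $\frac i4$ credits) or have at least $8$ edges (carrying $2$ credits), and our structured graphs additionally forbid large $3$-vertex cuts, which is irrelevant here but the $3$-matching lemma still holds. The overall scheme: pick a complex component $C$ of $H$ containing a bridge, pick a pendant block $B$ of $C$, and use the $3$-matching lemma on the partition $(V(B), V(G)\setminus V(B))$ — which is legal since $B$ is $2$EC with $\ge 6$ edges hence $\ge 4$ vertices — to obtain a $3$-matching $M$ leaving $B$. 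Using $M$ together with paths inside $C$ (through the bridge structure) and possibly other components, we reroute so as to strictly reduce the number of bridges while keeping canonicity and not increasing cost.

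\textbf{Plan of the main argument.} First I would set up the case distinction on where the endpoints of the $3$-matching $M$ land. Write $B$ as a pendant block of $C$, let $f$ be the bridge of $C$ incident to $B$ (so $f$ connects $B$ to the rest of $C$ through a single vertex $r \in V(B)$). The matching $M = \{e_1,e_2,e_3\}$ has its three $B$-endpoints distinct; at most one of them is $r$, so at least two edges, say $e_1, e_2$, have $B$-endpoints $x_1, x_2 \ne r$. Case (i): both $e_1$ and $e_2$ go to vertices of the same component $C$ (possibly to $V(C)\setminus V(B)$, or one of them could even go within $C$). Then there is a path in $C$ from the endpoint of $e_1$ to the endpoint of $e_2$ avoiding $B$ (going through the bridge structure of $C$), and $e_1 \cup (\text{this path}) \cup e_2$ together with a path inside $B$ from $x_1$ to $x_2$ creates a cycle that covers the bridge $f$, strictly decreasing the bridge count. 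We must check we can also delete an edge of $B$ if needed so that the edge count does not increase — here the canonical lower bound $|E(B)|\ge 6$ on a pendant block is exactly what gives us slack, since a $\mathcal C_i$-like cycle on $\ge 6$ vertices has a Hamiltonian path between $x_1$ and $x_2$ only in some configurations, so one generally keeps all of $B$ and instead pays with the block credit and bridge credit that are freed. Case (ii): $e_1$ and $e_2$ go to two \emph{different} components $C'$ and $C''$ (or to a different component $C'$ and to the non-block part of $C$). Then we merge $C$ with $C'$ (and possibly $C''$), which eliminates at least one bridge and at least one component; the freed component credit ($1$ per complex component, or $\ge \frac32$ per small cycle, $2$ per large) plus the block credit ($1$) plus the bridge credit ($\frac14$) covers the newly added matching edges. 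In every subcase the key accounting identity is: number of added edges $\le$ (credits freed from: bridges removed at $\frac14$ each, blocks/components merged, component credits), and canonicity is restored because any newly formed $2$EC component either still has $\ge 8$ edges or is turned into a genuine cycle $\mathcal C_i$ with $4\le i\le 7$ (using that $B$ had $\ge 6$ edges and we can shortcut it if it grew too large — or split off a clean cycle).

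\textbf{Credit bookkeeping and canonicity maintenance.} The heart of the write-up is verifying $\cost(H')\le\cost(H)$ in each case. Concretely: when we cover bridge $f$, $f$'s credit $\frac14$ is released; the pendant block $B$ either remains a block (keeping its $1$ credit) or gets absorbed into a $2$EC component — if absorbed, its $1$ block-credit is released and we must ensure the resulting component gets the right credit ($2$ if $\ge 8$ edges, $\frac i4$ if it becomes a $\mathcal C_i$). When two complex components merge, one complex-component credit ($1$) is released. The inequality $\cost(H')\le\cost(H)$ then reduces to: (added edges) $-$ (deleted edges) $\le$ (released credits) $-$ (newly required credits). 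I would tabulate this for each case as \cite{GargGA23improved} does, noting that the only substantive change is bookkeeping around $7$-edge non-complex components, which in our definition must be a $\mathcal C_7$ (carrying $\frac74$ credits rather than $2$); since $\frac74 \ge \frac32 > 1$, this is never a problem and in fact gives at least as much slack as in the previous proof. I would then conclude by saying: exhaustively applying this lemma, each application strictly reduces the bridge count, so after polynomially many steps we reach a bridgeless canonical $2$-edge cover $H'$ with $\cost(H')\le\cost(H)$, establishing \Cref{lem:bridge-covering-start2}.

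\textbf{Main obstacle.} The delicate point — and where I would spend the most care — is ensuring that \emph{canonicity} is preserved when a pendant block $B$ (with exactly $6$ or $7$ edges) is extended by a rerouting: the new $2$EC subgraph might end up with $7$ or $8$ or more edges, and we need it to be either a cycle $\mathcal C_i$ with $4\le i\le 7$ or to have $\ge 8$ edges, never a $7$-edge non-cycle. This forces us, in some subcases, to delete a chord or shortcut part of $B$, and to check such a chord/shortcut exists — which is exactly where the structured-ness of $G$ (no small $\alpha$-contractible subgraphs, no parallel edges, almost-$3$-connectivity) is invoked, mirroring Case~2 of the proof of \Cref{lem:canonical-cover:main}. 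A secondary subtlety is handling the case where the bridge endpoint $r$ itself receives matching edges or where $C$ has several bridges, but these reduce to the pendant-block case by always working with a pendant block, whose defining property ($C\setminus V(B)$ connected) guarantees the needed paths in $C\setminus V(B)$.
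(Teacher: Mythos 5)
Note first that the paper does not actually reprove this lemma: it invokes the proof of Lemma~2.10 / Appendix~D.1 of~\cite{GargGA23improved} verbatim and only verifies the two points where our setting differs, namely that every component still carries credit at least $1$ (no triangle components exist here) and that the slightly different notion of canonical (the $\cSeven$ requirement) is preserved by each covering step. Your sketch correctly identifies those two adaptation points and follows the same general strategy (3-matching out of a pendant block plus credit bookkeeping), but as a proof it has genuine gaps in exactly the places where the cited appendix does the work. Concretely, in your Case~(i) the accounting does not close: adding both matching edges $e_1,e_2$ costs $2$, while in the worst case (both land in the single block of $C\setminus V(B)$ adjacent to the bridge $f$) the released credit is only $1$ (block $B$) $+\,1$ (that block) $+\,\frac14$ ($f$) minus $1$ for the newly formed block, i.e.\ $1.25<2$; the correct move there is to add a single matching edge (which already places $f$ on a cycle, since $B$ is 2EC and $C\setminus V(B)$ is connected), or to delete edges, and choosing between these options is precisely the case analysis you defer with ``tabulate as in~\cite{GargGA23improved}.''

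Your Case~(ii) is wrong as stated: if the matching edges leave $C$ for one or two \emph{other} 2EC components, adding them does not ``eliminate at least one bridge'' --- the bridge $f$ of $C$ remains a bridge, and each edge added between previously disjoint components is itself a new bridge unless two matching edges enter the same component, in which case no bridge disappears at all and the bridge count is merely unchanged. Since the lemma demands a strict decrease in the number of bridges at non-increasing cost, this case needs the more careful argument of the cited proof, in which such merges are performed at non-increasing cost only as intermediate steps (they strictly decrease the number of components, so they terminate) until a matching edge into $V(C)\setminus V(B)$ becomes available and a bridge can actually be covered; together with the canonicity repair for the enlarged blocks this is the substance of Appendix~D.1 of~\cite{GargGA23improved}, which your proposal does not reconstruct.
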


For completeness we restate the proof of \Cref{lem:bridge-covering-main-iterative} from~\cite{GargGA23improved} here but adapt the proof to our notation wherever needed. Specifically, note that our definition of structured graphs is stronger than in \cite{GargGA23improved}, and we prove a smaller approximation guarantee, hence have a different credit value.

Let $C$ be any connected component of $H$ containing at least one bridge. Thus, $C$ is complex (cf.~\Cref{def:complex}).
Let $G_C$ be the multi-graph obtained from $G$ by contracting each block $B$ of $C$ and each connected component $C''$ of $H$ other than $C$ into a single node (cf.~\cref{def:complex} for the definition of block). 
Let $T_C$ be the tree in $G_C$ induced by the bridges of $C$: we call the vertices of $T_C$ corresponding
to blocks \emph{block nodes}, and the remaining vertices of $T_C$ \emph{lonely vertices}.
Observe that the leaves of $T_C$ are necessarily block nodes (otherwise $H$ would not be a 2-edge cover).
We refer to vertices of $G_C$ that arise from blocks or components as \emph{nodes} and refer to vertices of $G_C$ that correspond to vertices in $G$ as vertices.
If it is not clear whether a vertex of $G_C$ is a node or a vertex, we call it a vertex.

At a high level, we will transform $H$ into a new solution $H'$ containing a component $C'$ spanning the vertices of $C$ (and possibly some vertices of components from $H$ distinct from $C$). 
Furthermore, no new bridge is created and at least one bridge $e$ of $C$ is not a bridge of $C'$ (intuitively, the bridge $e$ gets covered).
During the process of \emph{covering} bridges we want to maintain that the new solution $H'$ is canonical, i.e., each 2EC component is an $i$-cycle, for $4 \leq i \leq 7$ or contains at least $8$ edges.
Furthermore, for every complex component, each of its pendant blocks contains at least $6$ edges and $6$ vertices and each of its non-pendant blocks contains at least $4$ edges and $4$ vertices.
We remark that each component of the initial (canonical) $2$-edge cover $H$ that is not 2EC contains at least $12$ vertices, and we only possibly merge together components in this stage of the construction. 
As a consequence, the new solution is also canonical.

A \emph{bridge-covering path} $P_C$ is any path in $G_C \setminus E(T_C)$ with its (distinct) endpoints $u$ and $v$ in $T_C$, and the remaining (internal) vertices outside $T_C$. 
Notice that $P_C$ might consist of a single edge, possibly parallel to some edge in $E(T_C)$. 
Augmenting $H$ along $P_C$ means adding the edges of $P_C$ to $H$, hence obtaining a new $2$-edge cover
$H'$. 
Notice that $H'$ obviously has fewer bridges than $H$: in particular all the bridges of $H$ along the $u$-$v$ path in $T_C$ are not bridges in $H'$ (we also informally say that such bridges are removed), and the bridges of $H'$ are a subset of the bridges of $H$. 
If we are able to show that we can find a bridge-covering path such that the resulting solution satisfies $\cost(H') \leq \cost(H)$, we are done: $H'$ now is canonical, has fewer bridges, and satisfies $\cost(H') \leq \cost(H)$. This satisfies the requirements of \Cref{lem:bridge-covering-main-iterative}.

Hence, it remains to analyze $\cost(H')$. 
Suppose that the distance between $u$ and $v$ in $T_C$ is $br$ (i.e., the path contains $br$ many bridges) and such a path contains $bl$ many blocks. Then the number of edges w.r.t.\ $H$ grows by $|E(P_C )|$. 
The number of credits w.r.t.\ $H$ decreases by at least $\frac{1}{4} br + bl + |E(P_C)| - 1$ since we remove $br$ bridges, $bl$ blocks, and $|E(P_C)| - 1$ components (each one having at least one credit). 
However, the number of credits also grows by 1 since we create a new block $B'$,  which needs $1$ credit, or a new 2EC component $C'$, which needs $1$ additional credit w.r.t.\ the credit of $C$.
Altogether $\cost(H) - \cost(H') \geq \frac{1}{4} br + bl - 2$. 
We say that $P_C$ is \emph{cheap} if the latter quantity is non-negative, and \emph{expensive} otherwise. 
In particular, $P_C$ is cheap if it involves at least $2$ block nodes or $1$ block node and at least $4$
bridges. 
Notice that a bridge-covering path, if at least one such path exists, can be computed in polynomial time; specifically, one can use breadth first search after truncating $T_C$.

Before proving \Cref{lem:bridge-covering-main-iterative}, we need the following two technical lemmas. 
We say that a vertex $v \in V(T_C) \setminus \{u \}$ is \emph{reachable} from $u \in V(T_C)$ if there exists a bridge-covering path between $v$ and $u$. 
Let $R(W)$ be the vertices in $V(T_C) \setminus W$ reachable from some vertex in $W \subseteq V(T_C)$, and let us use $R(u) \coloneq R(\{u\})$ for $u \in V(T_C)$. 
Notice that $v \in R(u)$ if and only if $u \in R(v)$.

\begin{lemma}
    \label{lem:bridge-covering-helper1}
    Let $e = xy \in E(T_C)$ and let $X_C$ and $Y_C$ be the two sets of vertices of the two trees obtained from $T_C$ after removing the edge $e$, where $x \in X_C$ and $y \in Y_C$. 
    Then $R(X_C)$ contains a block node or $R(X_C) \setminus \{y\}$ contains at least 2 lonely vertices.
\end{lemma}

\begin{proof}
    Let us assume that $R(X_C)$ contains only lonely vertices, as otherwise the claim holds. 
    Let $X$ be the vertices in $G_C$ that are connected to $X_C$ after removing $Y_C$, and let $Y$ be the remaining vertices in $G_C$. Let $X'$ and $Y'$ be the vertices in $G$ corresponding to $X$ and $Y$, respectively.
    Let $e'$ be the edge in $G$ that corresponds to $e$ and let $y'$ be the vertices of $Y'$ that is incident to $e'$. 
    In particular, if $y$ is a lonely vertex then $y' = y$, and otherwise $y'$ belongs to the block of $C$ corresponding to $y$. 
    
    Observe that both $X_C$ and $Y_C$ (hence $X$ and $Y$) each contain at least one (pendant) block node, hence both $X'$ and $Y'$ contain at least 6 vertices\footnote{Pendant blocks contain at least 6 vertices since $H$ is canonical, and the bridge-covering stage does not create smaller pendant blocks.}. 
    Therefore, we can apply the $3$-matching \Cref{lem:3-matching} to $X'$ and obtain a matching $M = \{ u_1' v_1', u_2' v_2', u_3' v_3' \}$ between $X'$ and $Y'$, where $u'_i \in X'$ and $v_i' \in Y'$ for $i \in \{ 1, 2, 3 \}$.
     Let $u_i$ and $v_i$ be the vertices in $G_C$ corresponding to $u'_i$ and $v'_i$, respectively.
     We remark that $v_i \in Y_C$: indeed otherwise $v_i$ would be connected to $X_C$ in $G_C \setminus Y_C$, contradicting the definition of $X$. 
     Let us show that the $v_i$'s are all distinct. 
     Assume by contradiction that $v_i = v_j$ for $i \neq j$. 
     Since $v'_i \neq v_j'$ (since $M$ is a matching) and $v'_i, v_j'$ are both associated with $v_i$, this mean that $v_i$ is a block node in $R(X_C)$, a contradiction since we assumed that $R(X_C)$ only contains lonely vertices.

    For each $u_i$ there exists a path $P_{w_i, u_i}$ in $G_C \setminus E(T_C)$ between $u_i$ and some $w_i \in X_C$ (possibly $w_i = u_i$).
    Observe that $P_{w_i, u_i} \cup \{ u_i v_i \}$  is a bridge-covering path between $w_i \in X_C$ and $v_i \in Y_C$ unless $u_i v_i = xy$ (recall that the edges $E(T_C)$ cannot be used in a bridge-covering path). 
    In particular, since the $v_i$'s are all distinct, at least two such paths are bridge-covering paths from $X_C$ to distinct (lonely) vertices of $Y_C \setminus \{ y \}$, implying $|R(X_C) \setminus \{y \}| \geq 2$, and thus proving our claim.
\end{proof}

Let $T_C(u, v)$ denote the path in $T_C$ between vertices $u$ and $v$.

\begin{lemma}
    \label{lem:bridge-covering-helper2}
    Let $b$ and $b'$ be two pendant (block) nodes of $T_C$. 
    Let $u \in R(b)$ and $u' \in R(b')$ be vertices of $V (T_C) \setminus \{b, b' \}$. 
    Suppose that $T_C(b, u)$ and $T_C(b', u')$ both contain some vertex $w$ (possibly $w = u = u'$) and $|E(T_C(b, u)) \cup E(T_C(b', u'))| \geq 4$. 
    Then in polynomial time one can find a 2-edge cover $H'$ satisfying the conditions of \Cref{lem:bridge-covering-main-iterative}.
\end{lemma}

\begin{proof}
Let $P_{bu}$ (resp. $P_{b'u'}$) be a bridge-covering path between $b$ and $u$ (resp., $b'$ and $u'$). 
Suppose that $P_{bu}$ and $P_{b'u'}$ share an internal vertex. 
Then there exists a (cheap) bridge-covering path between $b$ and $b'$, and we can find the desired $H'$ by the previous discussion (in particular, since there is a bridge-covering path containing the 2 block nodes $b$ and $b'$). 
So next assume that $P_{bu}$ and $P_{b'u'}$ are internally vertex disjoint. 
Let $H' \coloneq H \cup E(P_{bu}) \cup E(P_{b'u'})$. 
All the vertices and bridges induced by $E(P_{bu}) \cup E(P_{b'u'}) \cup E(T_C(b, u)) \cup E(T_C(b', u'))$ become part of the same block or 2EC component $C'$ of $H'$. 
Furthermore $C'$ contains at least $4$ bridges of $C$ and the two blocks $B$ and $B'$ corresponding to $b$ and $b'$, resp. 
One has $|H'| = |H| + |E(P_{bu})| + |E(P_{b'u'})|$. 
Moreover, $\credit(H') \leq \credit(H) + 1 - (|E(P_{bu})| - 1) - (|E(P_{b'u'})| - 1) - 2 - \frac{1}{4} \cdot 4$. 
In the latter inequality, the $+1$ is due to the extra credit required by $C'$, the $-2$ due to the removed blocks $B$ and $B'$ (i.e., blocks of $H$ that are not blocks of $H'$), and
the $-\frac{1}{4} \cdot 4$ due to the removed bridges. 
Altogether $\cost(H) - \cost(H') \geq \frac{1}{4} \cdot 4 + 2 -3 = 0$.
Hence, we observe that $H'$ has fewer bridges than $H$ and $H$ is indeed a canonical 2-edge cover satisfying $\cost(H') \leq \cost(H)$, proving the lemma.
\end{proof}

We are now ready to prove \Cref{lem:bridge-covering-main-iterative}.

\begin{proof}[Proof of \Cref{lem:bridge-covering-main-iterative}]
If there exists a cheap bridge-covering path $P_C$ (condition that we can check in
polynomial time), we simply augment $H$ along $P_C$ hence obtaining the desired $H'$.
Thus we next assume that no such path exists.
Let $P' = b, u_1, \ldots, u_\ell$ in $T_C$ be a longest path in $T_C$ (interpreted as a sequence of vertices). 
Notice that $b$ must be a leaf of $T_C$, hence a block node (corresponding to some leaf block $B$ of $C$). 
Let us consider $R(b)$. Since by assumption there is no cheap bridge-covering path, $R(b)$ does not contain any block node.
Hence $|R(b) \setminus \{u_1\} | \geq 2$ and $R(b)$ contains only lonely vertices by \Cref{lem:bridge-covering-helper1} (applied to $xy = bu_1$).

We next distinguish a few subcases depending on $R(b)$. 
Let $V_i$, $i \geq 1$, be the vertices in $V(T_C) \setminus V(P')$ such that their path to $b$ in $T_C$ passes through $u_i$ and not through $u_{i+1}$. 
Notice that $\{ V(P'), V_1, \ldots, V_\ell \}$ is a partition of $V(T_C)$. 
We observe that any vertex in $V_i$ is at distance at most $i$ from $u_i$ in $T_C$ as otherwise $P'$ would not be a longest path in $T_C$. 
We also observe that as usual the leaves of $T_C$ in $V_i$ are block nodes.
This implies that (a) all the vertices in $V_1$ are block nodes, and all the vertices in $V_2$ are block nodes or are lonely non-pendant vertices at distance $1$ from $u_2$ in $T_C$.

\textbf{Case 1:} There exists $u \in R(b)$ with $u \notin \{ u_1, u_2, u_3 \} \cup V_1 \cup V_2$. By definition there exists a bridge-covering path between $b$ and $u$ containing at least $4$ bridges, hence cheap.
This is excluded by the previous steps of the construction.

\textbf{Case 2:} There exists $u \in R(b)$ with $u \in V_1 \cup V_2$. 
Since $u$ is not a block node, by (a) $u$ must be a lonely non-pendant vertex in $V_2$ at distance $1$ from $u_2$. 
Furthermore, $V_2$ must contain at least one pendant block node $b'$ adjacent to $u$.
Consider $R(b')$. 
By the assumption that there are no cheap bridge-covering paths and \Cref{lem:bridge-covering-helper1} (applied to $xy = b'u$), $|R(b') \setminus \{u \}| \geq 2$. 
In particular, $R(b')$ contains at least one lonely vertex $u' \notin \{b', b, u \}$. The tuple $(b, b', u, u')$ satisfies the conditions of \Cref{lem:bridge-covering-helper2} (specifically, both $T_C(b, u)$ and $T_C(b', u')$ contain $w = u_2$), hence we can obtain the desired $H'$.

\textbf{Case 3:} $R(b) \setminus \{u_1 \} = \{u_2, u_3 \}$. 
Recall that $u_2$ and $u_3$ are lonely vertices. 
We distinguish 2 subcases:

\textbf{Case 3a:} $V_1 \cup V_2 \neq \emptyset$. 
Take any pendant (block) node $b' \in V_1 \cup V_2$, say $b' \in V_i$. 
Let $\ell'$ be the vertex adjacent to $b'$. 
By the assumption that there are no cheap bridge-covering paths and \Cref{lem:bridge-covering-helper1} (applied to $xy = b'\ell'$), $R(b') \setminus \{ \ell' \}$ has cardinality at least 2 and contains only lonely vertices.
Choose any $u' \in R(b') \setminus \{ \ell' \}$. Notice that $u' \notin V_1$ by (a) (but it could be a lonely vertex in $V_2$ other than $\ell'$).
The tuple $(b, b', u_3, u')$ satisfies the conditions of \Cref{lem:bridge-covering-helper2} (in particular both $T_C(b, u_3)$ and $T_C(b', u')$ contain $w = u_2$), hence we can compute the desired $H'$.

\textbf{Case 3b:} $V_1 \cup V_2 = \emptyset$. By \Cref{lem:bridge-covering-helper1} (applied to $xy = u_1 u_2$) the set $R( \{b, u_1 \})$ contains a block node or $R(\{ b, u_1 \} ) \setminus \{ u_2 \}$ contains at least $2$ lonely vertices. 
Suppose first that $R( \{ b, u_1 \} )$ contains a block node $b'$. 
Notice that $b' \notin R(b)$ by the assumption that there are no cheap bridge-covering paths, hence $u_1 \in R(b')$. 
Notice also that $b' \notin \{u_2, u_3 \}$ since those are lonely vertices. 
Thus, the tuple $(b, b', u_2, u_1)$ satisfies the conditions of \Cref{lem:bridge-covering-helper2} (in particular both $T_C(b, u_2)$ and $T_C(b', u_1)$ contain $w = u_2$), hence we can obtain the desired $H'$.

The remaining case is that $R(\{ b, u_1 \} ) \setminus \{ u_2 \}$ contains at least 2 lonely vertices. 
Let us choose $u' \in R(\{ b, u_1 \} ) \setminus \{ u_2 \}$ with $u' \neq u_3$. 
Let $P_{b u_2}$ (resp., $P_{u' u_1}$) be a bridge-covering path between $b$ and $u_2$ (resp., $u'$ and $u_1$). 
Notice that $P_{b u_2}$ and $P_{u' u_1}$ must be internally vertex disjoint, as otherwise $u' \in R(b)$, which is excluded since $R(b) \subseteq \{u_1, u_2, u_3 \}$ by assumption.
Consider $H' \coloneq H \cup E(P_{b u_2}) \cup E(P_{u' u_1}) \setminus \{u_1 u_2\}$. 
Notice that $H'$ has fewer bridges than $H$. 
One has $|H'| = |H| + |E(P_{b u_2})| + |E(P_{u' u_1})| -1$, where the $-1$ comes from the removal of the edge $u_1 u_2$. 
Furthermore, $\credit(H') \leq \credit(H) + 1 - (|E(P_{bu_2})| -1) - (|E(P_{u'u_1})| -1) - \frac{1}{4} \cdot 4 -1$, 
where the $+1$ comes from the extra credit needed for the block or 2EC component $C'$ containing $V(B)$, 
the final $-1$ from the removed block $B$, and the $-\frac{1}{4} \cdot 4$ from the at least $4$ bridges removed from $H$ 
(namely a bridge corresponding to $b u_1$, the edges $u_1 u_2$ and $u_2 u_3$, and one more bridge incident to $u'$). 
Altogether $\cost(H') \leq \cost(H)$.
It can be easily checked that $H'$ is also canonical and hence $H'$ satisfies the lemma statement.
\end{proof}

\end{document}